\newcommand{\indep}{\rotatebox[origin=c]{90}{$\models$}}
\newcommand{\Var}{\text{Var}}
\newcommand{\Cov}{\text{Cov}}
\newcommand{\rk}{\text{rk}}
\pgfplotsset{compat=1.5}
\title{\bfseries\Large Paper Title \vspace*{-1ex}}
\author{\large\itshape Author Name \thanks{}}
\date{\footnotesize \number\month\ $\cdot$\ \number\day\ $\cdot$\ \number\year}
\titleformat{\section}[block]{\centering\large\bfseries}{\thesection.}{0.5em}{}
\titleformat{\subsection}[block]{\flushleft\bfseries}{\thesubsection.}{0.5em}{}
\titleformat{\subsubsection}[runin]{\normalsize\itshape}{\bfseries\thesubsubsection.}{0.5em}{}[.\:]
\renewcommand{\thesubsubsection}{\arabic{section}.\arabic{subsection}.\alph{subsubsection}}
\titlespacing{\section}{0ex}{6ex}{3ex}
\titlespacing{\subsection}{0in}{3ex}{1.5ex}
\titlespacing{\subsubsection}{0mm}{2ex}{0.5em}
\renewcommand{\linespread}[1]{\setstretch{1}}
\newenvironment{keyword}{\noindent\textsc{Keywords:} }{}
\newmdtheoremenv[style=myenvs]{prop}{Proposition}[section]
\theoremstyle{definition}
\newtheorem{excont}{Example}[section]
\newtheorem*{definition}{Definition}
\newmdtheoremenv[style=myenvs]{theor}{Theorem}[section]
\newmdtheoremenv[style=myenvs]{lemma}{Lemma}[section]
\newmdtheoremenv[style=myenvs]{alg}{Algorithm}[section]
\newtheorem{remark}{Remark}[section]
\newtheorem{example}{Example}[section]
\newmdtheoremenv[style=myenvs]{cor}{Corollary}
\newtheoremstyle{named}{}{}{\itshape}{}{\bfseries}{.}{.5em}{#1 \thmnote{#3}}
\theoremstyle{named}
\newtheorem*{namedass}{Assumption}
\renewcommand{\P}{\mathbb{P}}
\newcommand{\E}{\mathbb{E}}
\title{\vspace{-2cm} Linear programming approach to partially identified econometric models\thanks{I am grateful to Andres Santos, Denis Chetverikov, Rosa Matzkin, Jinyong Hahn, Bulat Gafarov, Tim Armstrong, Kirill Ponomarev, Manu Navjeevan, Zhipeng Liao, Estefanía Saravia, Bohdan Salahub, Shuyang Sheng, as well as to all the participants of the 2024 California Econometrics Conference and the 2024 European Winter Meeting of the Econometric Society for the valuable discussions and criticisms. Earlier versions of this paper were circulated under the title `Identification and Inference under Affine Inequalities over Conditional Moments'. First draft: May 30, 2024.}}
\author{Andrei Voronin\thanks{Department of Economics, UCLA.  Email: \url{avoronin@ucla.edu}.}}
\date{March 18, 2025}
\begin{document}
\renewcommand{\abstractname}{\vspace{-\baselineskip}}
\doparttoc 
\faketableofcontents 
\parttoc
\maketitle
\begin{abstract}
Sharp bounds on partially identified parameters are often given by the values of linear programs (LPs). This paper introduces a novel estimator of the LP value. Unlike existing procedures, our estimator is $\sqrt{n}$-consistent, pointwise in the probability measure, whenever the population LP is feasible and finite. Our estimator is valid under point-identification, over-identifying constraints, and solution multiplicity. Turning to uniformity properties, we prove that the LP value cannot be uniformly consistently estimated without restricting the set of possible distributions. We then show that our estimator achieves uniform consistency under a condition that is minimal for the existence of any such estimator. We obtain computationally efficient, asymptotically normal inference procedure with exact asymptotic coverage at any fixed probability measure. To complement our estimation results, we derive LP sharp bounds in a general identification setting. We apply our findings to estimating returns to education. To that end, we propose the conditionally monotone IV assumption (cMIV) that tightens the classical monotone IV (MIV) bounds and is testable under a mild regularity condition. Under cMIV, university education in Colombia is shown to increase the average wage by at least $5.5\%$, whereas classical conditions fail to yield an informative bound. 

\vspace{0.5cm}
    \begin{keyword}
partial identification,
linear programming, bounds estimation,
stochastic programming, uniform estimation, returns to education.
    \end{keyword}
\end{abstract}

\newpage
\section{Introduction}
In many partially identified models, the sharp bounds on parameters correspond to the values of linear programs (LPs) that depend on identified functionals of the underlying probability measure. Examples include conditional moment inequalities \cite{andrews2023}, generalized IV models \cite{santos}, revealed preference restrictions \cite{klinetartari}, intersection bounds \cite{honoreadriana2006}, dynamic discrete choice panels \cite{honoretamer2006} and shape restrictions \cite{MP2000}.

In these settings, the bounds take the form $B(\P) = B(\theta_0(\P))$, where
\begin{align}\label{statement_LP}  
    B(\theta) \equiv \underset{Mx \geq c}{\min} ~ p'x,  
\end{align}  
and \( \theta_0(\mathbb{P}) \) is the true value of parameter $\theta = (p', \text{vec}(M)', c')'$, estimated via a \( \sqrt{n} \)-consistent estimator \( \hat{\theta}_n \). However, optimization problem \eqref{statement_LP} exhibits non-regular behavior, particularly when the underlying model is rich enough that some linear functionals of \( x \) are nearly or exactly point-identified over \( \Theta_I = \{x \in \mathbb{R}^d:Mx \geq c\}\). In such cases, existing estimators of the LP value $B(\P)$ are either inconsistent or rate-conservative, creating an undesirable tradeoff in empirical work: richer models provide tighter bounds but complicate their estimation.

To address this issue, we develop a novel \textit{debiased penalty function} estimator of $B(\P)$. Only assuming that the true polytope $\Theta_I$ is non-empty and contained in a known compact set, we show that our estimator is $\sqrt{n}-$consistent, pointwise in the probability measure. In contrast, the plug-in estimator is not generally consistent and may fail to exist with non-vanishing probability, while the alternative set-expansion estimator based on \citet{CHT} is rate-conservative and may fail to exist in finite samples. Figure \ref{fig_motivation} gives a preview of the comparative performance of these estimators.

\begin{figure}[h]
    \centering
    \includegraphics[width=0.9\linewidth]{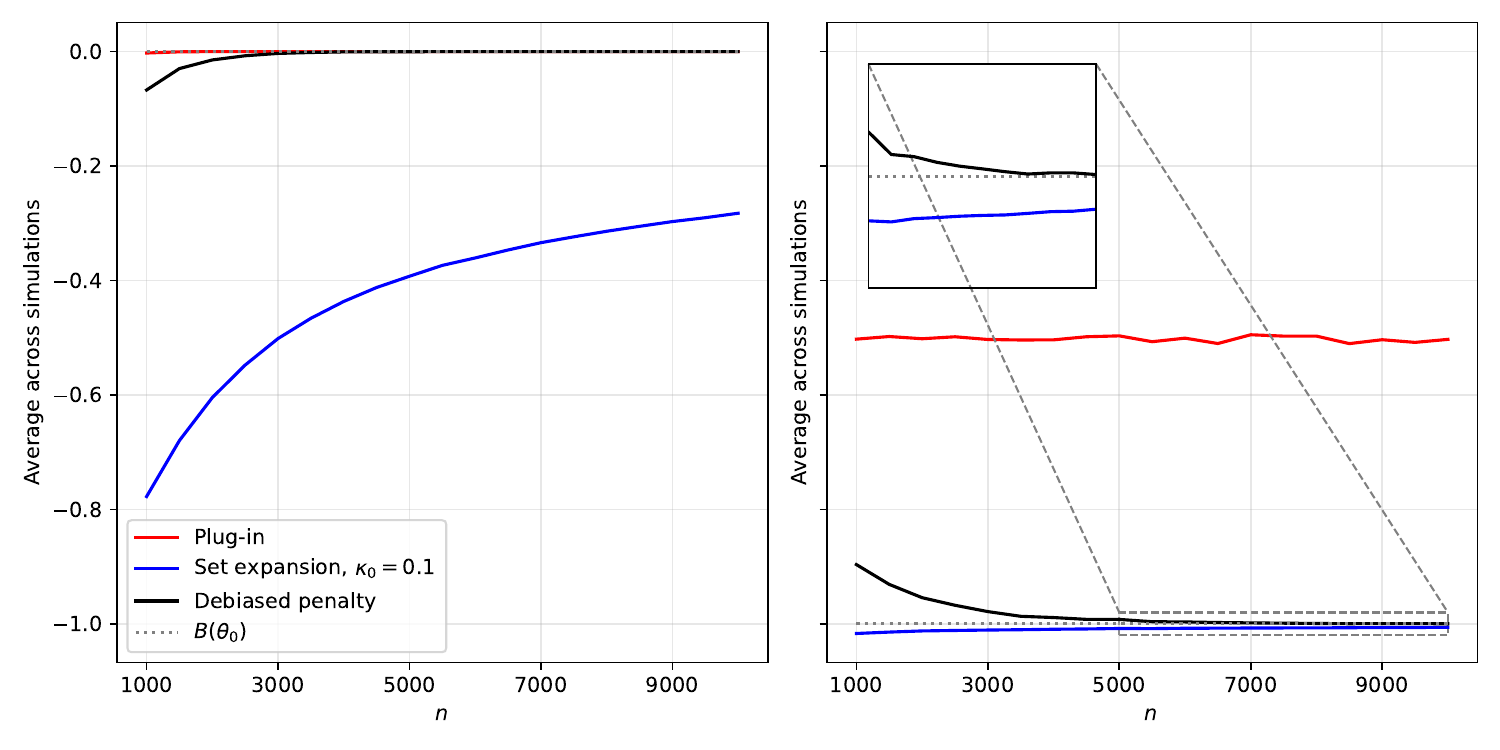}
    \caption{Comparison of estimators for two measures with true values $0$ (left) and $-1$ (right). Averages over $10^4$ simulations. The plug-in estimator is inconsistent for the right-side measure, while the set-expansion estimator is conservative for the left-side measure. The debiased penalty estimator performs well for both. See Section \ref{monte-carlo} for details.}
    \label{fig_motivation}
\end{figure}

We obtain an asymptotically normal version of our estimator via sample-splitting and construct confidence regions with exact asymptotic coverage at any fixed probability measure. By comparison, existing procedures either rely on further conditions \cite{gafarov2024simple}, or result in asymptotically conservative inference \cite{chorussel2023}. Notably, the approach most commonly used in applied research—combining plug-in estimation with bootstrap (\citet{de2017effect}, \citet{mentalhealth}, \citet{siddique}, \citet{kreider2012identifying}, \citet{pepper}, \citet{blundell2007changes})—may not provide valid confidence intervals even when the underlying model is far from point-identification. In addition, our procedure is the most computationally efficient in the existing literature\footnote{Both \citet{gafarov2024simple} (BG) and \citet{chorussel2023} (CR) rely on resampling methods, which require to compute one or multiple LPs at each bootstrap iteration. Computing a confidence interval for a LP with 32 variables takes 16.81 seconds with the approach of BG and 40.65 seconds with the approach of CR, according to the latter work. Our approach requires solving a LP once, which takes around 0.0022 seconds on average. The LPs in our application have 160 variables.}.

Turning to uniform asymptotic theory, we first establish a general impossibility result: using Le Cam's binary testing method, we show that no uniformly consistent estimator exists when the estimated functional is discontinuous in the total variation norm. This result implies that the LP value cannot be uniformly consistently estimated over the unrestricted set of distributions $\mathcal{P}$. To make progress, we introduce the `$\delta-$condition' that parametrizes $\mathcal{P}$ by restricting it to the measures at which the smallest singular value of some full-rank submatrix of constraints binding at an optimal vertex is lower-bounded by a $\delta > 0$. This condition is minimal in the sense that any measure from $\mathcal{P}$ satisfies it for some $\delta$, ensuring the family of restricted measures' sets covers $\mathcal{P}$ as $\delta$ grows small. Unlike the conditions in \citet{gafarov2024simple}, it does not exclude economically relevant \textit{problematic} cases, such as point-identification and over-identification, nor does it preclude solution multiplicity. Under the $\delta-$condition, our estimator is shown to be uniformly consistent.

To complement our estimation procedure, we derive sharp (and novel) LP bounds for a broad class of causal parameters under affine inequalities over conditional moments\footnote{Including ATE and CATE, among other typically studied parameters, see Section \ref{section_aicm}.} (AICM), potentially augmented with affine almost sure restrictions and missing data conditions. In the simplest case, AICM identifying restrictions have the form
\begin{align}\label{aicm_def}
    M^* (\mathbb{E}[Y(d)|T = t, Z = z])_{d,t,z} + b^* \geq 0 \quad  \text{and} \quad  \tilde{M} (Y(d))_d + \tilde{b} \geq 0 \text{ a.s.},
\end{align}
where $(Y(d))_{d}$ are continuous potential outcomes corresponding to the legs of treatment $T$ and $Z \in \mathbb{R}^{d_Z}$ are other covariates. Identified matrices $M^*, \tilde{M}$ and vectors $b^*, \tilde{b}$ are chosen by the researcher. In AICM models, $\theta$ from \eqref{statement_LP} is usually a function of identified conditional moments $(\mathbb{E}[Y|T = t,Z =z])_{t,z}$ and the identified joint distribution of $T,Z$, while $x$ collects relevant unobserved conditional moments. Our approach accommodates arbitrary combinations of existing `nonparametric bounds' restrictions, allows to conduct sensitivity analysis, and extends to more complex conditions where sharp bounds were previously unavailable\footnote{For example, cMIV and the mixture of all classical \citet{MP2000} conditions, see below.}. 

Finally, we apply our approach to estimating returns to education in Colombia. To that end, we first introduce a family of conditionally monotone instrumental variables assumptions (cMIV), which are nested within \eqref{aicm_def}. A variable is a cMIV if the potential outcomes are mean-monotone in it both unconditionally and within selected treatment subgroups\footnote{The collection of treatment subgroups over which monotonicity is assumed is chosen by the researcher, see Section \ref{cmiv_assumptions} for details.}. While an explicit form for the sharp bounds under cMIV may not be feasible, their LP representation follows from our general identification result for \eqref{aicm_def}. The cMIV conditions yield tighter bounds than the classical MIV assumption of \citet{MP2000}. We argue, however, that they remain unrestrictive in many applications, including ours. While empirical studies (e.g., \citet{de2017effect}) have assessed the monotonicity of observed conditional moments to justify applying MIV, such monotonicity is instead equivalent to a particular form of cMIV given that MIV holds and under a mild regularity condition. The formal test of cMIV is obtained as an extension of \citet{chetverikov_2019}. Using Saber test scores as a cMIV, we find that earning a university degree increases average wages by at least $5.5\%$ in Colombia. In contrast, the classical conditions fail to produce an informative bound.

This paper also contributes two auxiliary results. The first one is concerned with an important special case of \eqref{aicm_def} - the combination of all classical \citet{MP2000} conditions. Since this combination has the strongest identifying power among classical restrictions, it has been used in empirical work even without a formal justification, sometimes leading to incorrect bounds (see \citet{laffers}). We provide sharp bounds under continuous outcomes in this setting. Another auxiliary contribution is a novel lower bound on the $\ell_1$-deviation from a non-empty bounded polytope in terms of Euclidean distance from the polytope. It may offer insights into the behavior of $\ell_1$-penalized solutions of systems of linear inequalities studied in the control theory literature (e.g. \citet{pinar19991}). 

We briefly note the limitations of our approach. On the identification side, the absence of restrictions on treatment selection prevents us from studying more granular parameters, such as marginal treatment responses. Furthermore, our identification results are given for discrete treatment and instrument. An extension to the continuous case is feasible, but is outside the scope of this paper\footnote{Even when continuous identification results are available, in practice estimation is still carried out with discretized covariates. This is true for all empirical work referenced below.}. On the estimation side, while our estimator is pointwise $\sqrt{n}-$consistent in general, we only establish $\sqrt{n}/w_n$-uniform consistency\footnote{We show, however, that one side of the convergence happens at the uniform rate $\sqrt{n}$, see Section \ref{section_debiased_unif}.} for a slowly diverging sequence $w_n$\footnote{Theoretically, $w_n$ can diverge arbitrarily slowly. We use $w_n \propto \ln \ln n$, see Section \ref{monte-carlo} and Appendix \ref{ap_pensel}.}. We provide further evidence on the uniform rate of consistency in Appendix \ref{ap_sim}. A theoretically $\sqrt{n}-$uniformly consistent estimator follows from our analysis, but it depends on an unobserved parameter $\delta$ that is difficult to estimate, so we do not recommend using it in practice. Finally, while our inference procedure naturally extends to uniform setup under sufficient regularity conditions, exploring this is left for future work. 
\subsection*{Relationship to literature}
The strand of literature relevant to the estimation of \eqref{statement_LP} is concerned with statistical inference in the LP estimation framework. \citet{semenova2023adaptive} considers a LP with an estimated constraint vector $\hat{c}_n$ but known $M$ and $p$, while \citet{bhattacharya2009inferring} considers LPs with an estimated $\hat{p}_n$ and known $M, c$. Methods developed under a known $M$ assumption do not easily extend to the setting when $M$ is estimated. \citet{santos} construct a set-expansion estimator and prove its consistency. \citet{gafarov2024simple} develops uniform inference for a LP described by affine inequalities over unconditional moments, provided uniform Linear Independence Constraint Qualification (LICQ) and Slater's condition (SC) hold. Gafarov's conditions may be restrictive in some applications - for example, under AICM, see Section \ref{section_lp}. \citet{chorussel2023} obtain uniformly valid, yet conservative inference for the case when $\theta$ is affine in unconditional moments. Their practical procedure implicitly assumes that the SC holds\footnote{We discuss this in more detail in Section \ref{monte-carlo}.}. \citet{andrews2023} develop an inference procedure for the LP value in a special case in which SC holds. \citet{syrgkanis2017inference} develop a testing procedure for the failure of LP feasibility. We conduct Monte Carlo simulations to compare our approach with relevant existing methods in Section \ref{monte-carlo}. 

Despite the similar name, AICM approach is unrelated to the model in \citet{andrews2023} beyond producing LP bounds. Instead, it generalizes the nonparametric bounds analysis of \citet{MP2000, MP2009}. The LP sharp bounds in Theorem \ref{theor_identif} coincide with or tighten the bounds in \citet{blundell2007changes}, \citet{boes2009}, \citet{siddique}, \citet{kreider2012identifying}, \citet{de2017effect} and \citet{mentalhealth}. These studies combine the plug-in estimator $B(\hat{\theta}_n)$ with bootstrap for inference, an approach that relies on strong assumptions (see Section \ref{subsection_bad_inference}). The exact inference procedure in Algorithm \ref{alg:debiasing} could be used instead. AICM also complements the findings of \citet{santos}, who develop identification theory for generalized IV estimators and obtain bounds in the form \eqref{statement_LP}. Their method imposes a \citet{heckman1999local,heckman2005structural} selection mechanism in the binary treatment case and thus restricts them to valid IVs\footnote{Additive separability in treatment selection is equivalent to the \citet{imbensangrist} IV conditions under instrument exogeneity \cite{vytlacil2002independence}.}, but allows to accommodate arbitrary a.s. restrictions on the marginal treatment response functions and derive bounds for potentially more granular causal parameters. Even though \eqref{aicm_def} nests mean-independence conditions, AICM approach is most useful when a valid IV is not available\footnote{Thus, if one is faced with i) a binary treatment setup, ii) has a valid IV and iii) no outcomes' data is missing, the method of \citet{santos} may be used. If any of these conditions fail, AICM is an alternative.}. 

\section*{Notation}
\small
All vectors are column vectors, and $M'$ denotes the transpose of $M \in \mathbb{R}^{n\times m}$. If $A$ is a set, $A'$ stands for its complement. A collection $(x_j)_{j \in J}$ is a column vector. $2^A$ denotes the powerset of set $A$, and $[n]$ is the collection of integers from $1$ to $n \in \mathbb{N}$. $\times$ is a Cartesian product of sets, while $\otimes$ is the Kronecker product. The sign $\sqcup$ denotes a disjoint union. Signs $\land$ and $\lor$ stand for logical `and' and `or' operators respectively. For $M \in \mathbb{R}^{m\times n}$ and $A \subseteq [m]$, $M_A \in \mathbb{R}^{|A| \times n}$ is the submatrix of the rows of $M$ with indices in $A$. If $j \in  [m]$, write $M_j \equiv M'_{\{j\}}$. $\mathcal{R}(M)$ stands for the range of $M$, while $\rk{(M)}$ denotes its rank. $\sigma_d(M)$ is the $d-$th largest singular value of $M$, and $M^\dagger$ denotes the Moore-Penrose pseudoinverse of $M$. In a normed space $S$, the distance between $x \in S$ and $A \subseteq S$ is written as $d(x,A) \equiv \inf_{a \in A} ||x - a||$, and $d_H(A,B) \equiv \max\{\sup_{b \in B} d(b, A), \sup_{a \in A} d(a, B)\}$ is the Hausdorff distance between $A, B \subseteq S$. For $A \subseteq S$ the open expansion is $A^\varepsilon \equiv \{s \in S: d(s, A) < \varepsilon\}$. $Int(A)$ and $Cl(A)$ are the interior and closure of $A \subseteq \mathbb{R}^d$, while $Cone(A)$ is its conical hull. If $A$ is a matrix, $Cone(A)$ is the conical hull of its columns. $s(x, A) \equiv {\max}_{{a \in A}} ~ x'a$ for a compact $A \subseteq \mathbb{R}^d$ and $x \in \mathbb{R}^d$ is a support function. For $v = (v_j)_{j \in [d]}$, define $v^+ \equiv (\max \{v_j, 0\})_{j \in [d]}$. For $v, u \in \mathbb{R}^d$ vector inequalities $v > u$ and $v \geq u$ mean $v_i > u_i ~ \forall i \in [d]$ and $v_i \geq u_i ~ \forall i \in [d]$ respectively. $\iota_d \in \mathbb{R}^d$ is a vector of ones, $I_d \in \mathbb{R}^{d\times d}$ is the identity matrix, and the subscript is dropped occasionally. Operator $\mathbb{E}_\P$ is the expectation under a measure $\P$, and the subscript is dropped whenever it does not cause confusion. The statement $w_n \to \infty$ w.p.a.1 means that $\forall M > 0,$ $ \lim _{n\to\infty} \mathbb{P}[w_n > M] = 1$. We adopt the convention $\inf \emptyset = +\infty$, and $\sup \emptyset = - \infty$. 

\section{LP estimation framework}
\label{section_lp}
\normalsize

In many partial identification settings, bounds on the parameters of interest can be characterized as the values of linear programs (see \citet{review_lp} for a review). Readers who prefer to first see an identification framework leading to such bounds may refer to Section \ref{section_aicm}, where LP sharp bounds are derived for a general class of AICM models. This section focuses on the estimation theory for such problems. The LP value function is given by
\begin{align}\label{lp_init}
    B(\theta) \equiv \underset{M x \geq c}{\inf} p'x, 
\end{align}
where $M \in \mathbb{R}^{q\times d}$, $c \in \mathbb{R}^q$ and $p \in \mathbb{R}^d$. The vector $\theta \equiv (p', \text{vec}(M)', c')'$ collects parameters of the LP. The estimable value of these parameters at a fixed true measure is denoted by $\theta_0 \in \mathbb{R}^S$, with $S = qd + q + d$. The value of interest is therefore $B(\theta_0)$. Note that \eqref{lp_init} does not rule out equality constraints, as $Ax = b \iff Ax\geq b \land -Ax \geq -b$.

We denote the constraint set by $\Theta_I(\theta) \equiv \{x \in \mathbb{R}^{d}| Mx \geq c \}$ and omit the argument when $\theta_0$ is concerned. In the context of Section \ref{section_aicm} and other existing applications (e.g. \citet{santos}), the set $\Theta_I$ is the identified set for an unobserved feature $x$ of the underlying distribution. Under Assumption A0.i-ii, the set $\Theta_I$ is a convex polytope.
\begin{namedass}[A0 (Pointwise setup)]
    Suppose that at the fixed true parameter $\theta_0$: i) The identified set is non-empty, $\Theta_I(\theta_0) \ne \emptyset$; ii) $\Theta_I(\theta_0) \subseteq \mathcal{X}$ for a known compact $\mathcal{X} \subseteq \mathbb{R}^d$ and iii) There is an estimator $\hat{\theta}_n \equiv (\hat{p}'_n, \text{vec}(\hat{M}_n)',\hat{c}'_n)'$: $||\hat{\theta}_n - \theta_0|| = O_p(1/\sqrt{n})$
\end{namedass}
Assumption A0 is maintained throughout this section, while other conditions are stated explicitly. A0.i ensures feasibility of the population LP. In partially identified models it means that there exists a distribution consistent with the identifying restrictions, which does not imply correct model specification. A0.ii is a mild restriction\footnote{For a polytope to be bounded, it must be that $q \geq d+1$, see Chapters 2 and 3 in \citet{grunbaum1967convex} } that usually holds in applications, for example under bounded outcomes in AICM models (see Section \ref{section_aicm}). The estimator in A0.iii is typically warranted by CLT and the Delta-Method. We focus on the case in which $\theta_0$ is $\sqrt{n}-$estimable for expositional simplicity, but our results generalize to any rate $r_n \uparrow \infty$.

The following primal and dual solution sets will prove useful in our discussion:
\begin{align*}
    \mathcal{A}(\theta) \equiv \underset{M x \geq c}{\arg \min} ~ p'x, \quad \Uplambda(\theta) = \underset{M'\lambda = p, ~ \lambda \geq 0}{\arg \max} ~ c' \lambda. 
\end{align*}
Assumption A0 implies that a finite $B(\theta_0)$ is attained as a minimum in \eqref{lp_init}, $\Theta_I(\theta_0)$ and $\mathcal{A}(\theta_0)$ are non-empty compact sets, and $\Uplambda(\theta_0)$ is non-empty. We now briefly discuss the typically imposed regularity conditions.
\begin{definition}
    Slater's condition (SC) is the assertion that $\text{Int}(\Theta_I) \ne \emptyset$.\footnote{We give simplified versions of assumptions here for simplicity of exposition. In the presence of `true' equalities $Ax = b$, SC should be stated in terms of relative interior, allowing for point-identification along `true' equalities. LICQ should similarly be restated to account for equalities, as in \citet{gafarov2024simple}.}
\end{definition}
SC rules out point-identification of any linear functional of $x$. In particular, it precludes exact point-identification of the target $p'x$ and point-identification of $x$, i.e. the case when $|\Theta_I| = 1$. Most existing methods rely on SC explicitly (\citet{gafarov2024simple}) or implicitly (\citet{chorussel2023}, \citet{andrews2023}). Even an `approximate' failure of SC, when the true identified set $\Theta_I$ becomes `thin', can be problematic for the existing methods in finite samples. Our simulation evidence illustrates this, see Section \ref{monte-carlo}. This creates an undesirable tradeoff in applications: higher identification power comes with poorer estimation quality. 
\begin{definition}
    Linear independence constraint qualification (LICQ) is the assertion that the submatrix of binding inequality constraints at any $x \in \mathcal{A}(\theta_0)$ is full-rank.
\end{definition}
LICQ precludes the existence of overidentifying constraints at the optimum. It may be hard to justify in `bigger' models, like the one we develop and apply in Sections \ref{cmiv_assumptions} and \ref{returns_to_education}. These feature a larger number of inequality constraints that may have similar identifying power, so it is not ex-ante clear why there must not be overidentification at the optimum. LICQ also rules out parameters-on-the-boundary, as the following remark clarifies.
\begin{remark}
    Example 2.1 in \citet{santos2019} can be restated as a LP: $B(\theta_0) = \max\{0, \mathbb{E}[X]\} = \min_{t \in \mathbb{R}} t  ~ \text{s.t.} ~ t \geq 0, t \geq \mathbb{E}[X]$. LICQ fails in that program if $\mathbb{E}[X] = 0$, which corresponds to the parameter-on-the-boundary case from \citet{andrews1999estimation, andrews2000}.
\end{remark}

\begin{definition} No flat faces condition (NFF) is the assertion that $|\mathcal{A}(\theta_0)| = 1$.
\end{definition}
The notion of flat faces thus corresponds to $|\mathcal{A}(\theta_0)| \ne 1$, i.e. the situation in which the bound on the target parameter $p'x$ is achieved at multiple partially identified features $x$.

Assumption A0 does not impose LICQ or SC, nor does it rule out flat faces. Estimating $B(\theta_0)$ without these conditions is challenging due to the irregular behavior of $B(\cdot)$. If SC fails, $B(\cdot)$ may be discontinuous at $\theta_0$, and the plug-in estimator $B(\hat{\theta}_n)$ may not be pointwise consistent. 

\begin{figure}\centering
\begin{subfigure}{0.5\textwidth}\centering
    \begin{tikzpicture}[scale = 1.5]
    \tikzstyle{every node}=[font=\footnotesize]
    \draw[->] (-2,0) -- (2,0) node[right] {$x_1$};
    \draw[->] (0,-1.5) -- (0,1.5) node[above] {$x_2$};
    \draw[dashed, gray] (-1,-1.5) -- (-1,1.5);
    \draw[dashed, gray] (1,-1.5) -- (1,1.5);
    \draw[thick] (-1.5,-1.5) -- (1.5,1.5) node[above right] {$x_2 \leq x_1$};
    \draw[->, line width= .1mm, black] (1.5,1.5) --(1.5,1.4);
    \draw[->, line width= .1mm, black] (-1.5,-1.5) --(-1.5,-1.6);
    \draw[thick, red] (-1.5,-1.2) -- (1.5,1.2) node[above right] {$x_2 \geq (1+b)x_1$};
    \draw[->, line width=0.1mm, red] (-1.5,-1.2) -- (-1.5,-1.1);
    \draw[->, line width=0.1mm, red] (1.5,1.2) -- (1.5,1.3);
    \fill[gray!20] (0,0) -- (1,1) -- (1,.8) -- (0,0) -- cycle;
    \filldraw[black] (0,0) circle (1pt);
    \draw (-1,0.05) -- (-1,-0.05) node[below] {$-1$};
    \draw (1,0.05) -- (1,-0.05) node[below] {$1$};
    \draw (0,-0.05) node[below left] {$0$};
    \draw[->, line width= .1mm] (1,1.4) -- (0.9,1.4) node[above] {$x_1 \leq 1$};
    \draw[->, line width = .1mm] (-1,1.4) -- (-0.9, 1.4) node[above] {$x_1 \geq -1$}; 
\end{tikzpicture}
\caption{$b < 0, ~ B(b) = 0$}
\end{subfigure}%
\begin{subfigure}{0.5\textwidth}\centering
\begin{tikzpicture}[scale = 1.5]
\tikzstyle{every node}=[font=\footnotesize]
    \draw[->] (-2,0) -- (2,0) node[right] {$x_1$};
    \draw[->] (0,-1.5) -- (0,1.5) node[above] {$x_2$};
    \draw[dashed, gray] (-1,-1.5) -- (-1,1.5);
    \draw[dashed, gray] (1,-1.5) -- (1,1.5);
    \draw[thick] (-1.5,-1.5) -- (1.5,1.5) node[above right] {$x_2 = x_1$};
    \draw[thick, red] (-1.5,-1.5) -- (1.5,1.5);
    \draw[->, line width= .1mm, white] (-1.5,-1.5) --(-1.5,-1.6);
    \filldraw[black] (-1,-1) circle (1pt);
    \draw (-1,0.05) -- (-1,-0.05) node[below] {$-1$};
    \draw (1,0.05) -- (1,-0.05) node[below] {$1$};
    \draw (0,-0.05) node[below left] {$0$};
    \draw[->, line width= .1mm] (1,1.4) -- (0.9,1.4) node[above] {$x_1 \leq 1$};
    \draw[->, line width = .1mm] (-1,1.4) -- (-0.9, 1.4) node[above] {$x_1 \geq -1$}; 
\end{tikzpicture}
\caption{$b = 0, ~ B(b) = -1$}
\label{fig_point_cons_b0}
\end{subfigure}
\caption{The feasible region in \eqref{example_prop1} for two values of $b$.}
\label{fig_point_cons}
\end{figure}
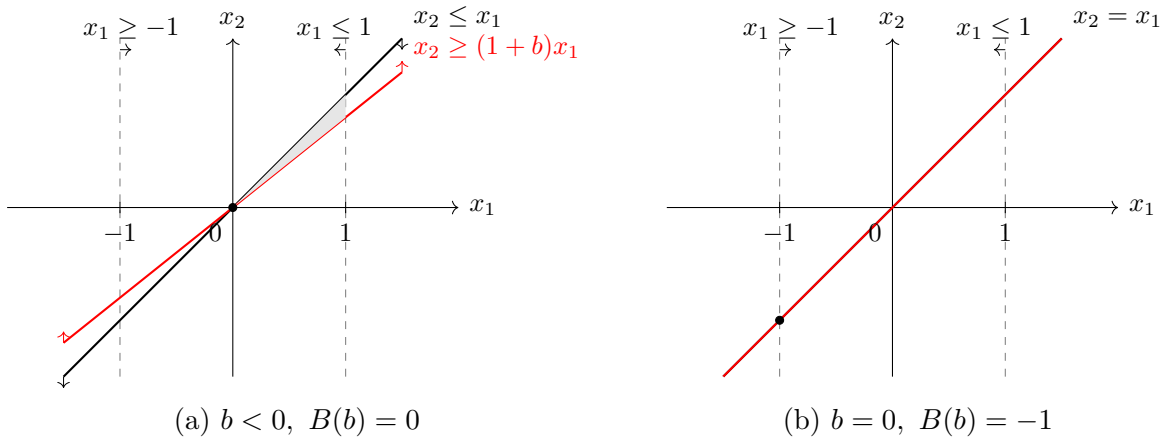

\begin{prop}\label{plug_in_incons}
    Fix any $d \in \mathbb{N}$. Then, i) for any $q \geq d+2$, there exist $\theta_0$ and $\hat{\theta}_n$ satisfying Assumption A0, with $B(\hat{\theta}_n) - B(\theta_0) \ne o_p(1)$ and $|B(\hat{\theta}_n)| < +\infty$ for all $n \in \mathbb{N}$ surely; and  ii) for any $q \geq d+1$, there exist $\theta_0$ and $\hat{\theta}_n$ satisfying Assumption A0, with $B(\hat{\theta}_n) - B(\theta_0) \ne o_p(1)$ and $B(\hat{\theta}_n) = +\infty$ with non-vanishing probability.
\end{prop}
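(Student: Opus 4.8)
The plan is to construct explicit low-dimensional counterexamples and then pad them with redundant constraints to reach the stated bounds on $q$. For part (ii), I would start from the program in Figure \ref{fig_point_cons}: take $d=1$ is too small, so instead use the two-variable program
\begin{align}\label{example_prop1}
    B(b) = \min_{x} ~ x_2 \quad \text{s.t.} \quad x_1 \leq 1, ~ x_1 \geq -1, ~ x_2 \leq x_1, ~ x_2 \geq (1+b)x_1,
\end{align}
where the true value is $b=0$, so $\Theta_I(\theta_0)$ is the segment $\{(t,t): t\in[-1,1]\}$ and $B(\theta_0)=-1$. The estimator $\hat\theta_n$ perturbs only the slope of the last constraint, setting $\hat b_n = \xi_n/\sqrt{n}$ with $\xi_n$ a fixed nonzero random variable (e.g. standard normal), leaving all other data at their true values. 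When $\hat b_n<0$ the feasible set collapses to a neighborhood of the origin and $B(\hat\theta_n)=0$; when $\hat b_n>0$ the constraints $x_2\le x_1$ and $x_2\ge (1+b)x_1$ are inconsistent away from $x_1\le 0$... actually one checks that for $\hat b_n > 0$ the feasible region is empty on $x_1 > 0$ but still contains $x_1 \le 0$ points, giving $B(\hat\theta_n) = -\infty$ or $+\infty$ depending on orientation; I would flip a sign so that the perturbed program becomes \emph{infeasible} with probability $\tfrac12$, yielding $B(\hat\theta_n)=+\infty$ by the stated convention $\inf\emptyset=+\infty$. Thus $B(\hat\theta_n)-B(\theta_0)$ does not vanish on an event of probability bounded away from zero. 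This realizes $q=d+3$ with $d=2$; to hit the claimed $q\ge d+1$ for arbitrary $d$, I embed the construction into $\mathbb{R}^d$ by adding $d-2$ trivial coordinates constrained only by $0\le x_j\le 1$ (which contributes $2(d-2)$ constraints but I only need to keep enough to bound the polytope and meet the count) and carefully pruning to leave exactly $q$ constraints; the essential point is that boundedness of a polytope in $\mathbb{R}^d$ forces $q\ge d+1$, and the irregular slope-perturbation mechanism only needs the two ``active'' coordinates.

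For part (i), I need $B(\hat\theta_n)$ to be \emph{finite} for every $n$ surely while still being inconsistent, which is why one more constraint ($q\ge d+2$) is required: boundedness of the perturbed polytope must be guaranteed uniformly, not just at $\theta_0$. I would take the left-hand configuration of Figure \ref{fig_point_cons} as the template — the program \eqref{example_prop1} with the roles arranged so that the true $b$ is $0$ but the feasible region, whatever small perturbation occurs, always stays inside the box $[-1,1]^2$, hence $B(\hat\theta_n)$ is a finite minimum over a nonempty compact set (nonemptiness is preserved because the origin, or a nearby vertex, remains feasible for small perturbations of this particular constraint system). Perturbing the slope $b$ by $\xi_n/\sqrt n$ again flips $B$ between $0$ and $-1$ (or between $0$ and some value bounded away from $B(\theta_0)$) according to the sign of $\xi_n$, so $B(\hat\theta_n)\not\to B(\theta_0)$ in probability, while $|B(\hat\theta_n)|<\infty$ always. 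Padding to general $d$ and general $q\ge d+2$ proceeds as before by adjoining boxed dummy coordinates.

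The routine parts are verifying Assumption A0 for each construction — A0.i and A0.ii are immediate from the explicit boxes, and A0.iii holds because $\hat\theta_n-\theta_0$ is exactly $O_p(1/\sqrt n)$ by design — and checking the LP values by inspecting vertices. The step I expect to be the main obstacle is the bookkeeping that simultaneously (a) achieves \emph{exactly} the threshold $q$ in each case, (b) keeps the polytope bounded (for (i), surely; for (ii), at $\theta_0$), and (c) preserves the sign-flipping discontinuity after the dummy coordinates are glued on; in particular one must confirm that the discontinuity in part (ii) genuinely needs only $d+1$ constraints whereas finiteness in part (i) genuinely needs $d+2$, i.e. that the two thresholds are tight and not artifacts of the particular example. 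I would handle this by isolating a $2$-dimensional ``core gadget'' whose behavior is fully understood and proving a small lemma that adding boxed coordinates changes $B$ by a controlled additive constant and changes the constraint count by a controlled amount, so the core discontinuity carries over verbatim.
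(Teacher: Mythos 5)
Your two-variable gadgets capture the right phenomenon, and you correctly identify that part (i) needs the perturbed polytope to stay bounded and nonempty surely, while part (ii) exploits infeasibility together with the convention $\inf\emptyset=+\infty$. But the padding step is where the argument breaks. If you adjoin $d-2$ dummy coordinates each bounded by a box $0\le x_j\le 1$, every new coordinate costs two constraints, so from a 2-dimensional core with $q_0$ constraints you land at $q_0+2(d-2)$ constraints in $\mathbb{R}^d$. For part (ii) you would need $q_0+2(d-2)=d+1$, i.e. $q_0=5-d$, which is already impossible for $d\ge 3$ since any bounded 2D gadget needs at least three constraints; the same arithmetic blocks $q=d+2$ in part (i). So the proposed lemma about ``adjoining boxed coordinates'' cannot recover the stated thresholds — that is not a bookkeeping nuisance to be ironed out but an obstruction built into the box-padding scheme. (You also miscount the core: the program in \eqref{example_prop1} has $q=4=d+2$ constraints for $d=2$, not $d+3$.)

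The paper avoids this by constructing the example natively in $\mathbb{R}^d$ rather than by padding. For part (ii) it takes $p=e_1$, $M=\begin{pmatrix}I_d\\ -\iota_d'\end{pmatrix}$, $c=\begin{pmatrix}0_d\\ b\end{pmatrix}$: the $d$ nonnegativity constraints plus the single aggregate bound $\iota'x\le -b$ give exactly $q=d+1$, the polytope is empty precisely when $b>0$, and the LP value jumps from $0$ to $+\infty$ there. For part (i) it takes $p=-e_1$ together with the standard simplex $\{x\ge 0,\ \iota'x\le 1\}$ and one extra slope constraint involving $b$, for a total of $q=d+2$; the simplex keeps the feasible set compact under any perturbation of $b$, while the slope constraint flips $x_1$ between being forced to $0$ (when $b>0$) and being free to reach $1$ (when $b\le 0$). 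Because the simplex needs only $d+1$ inequalities to bound $\mathbb{R}^d$, this construction scales linearly in $d$ and hits the thresholds exactly; redundant constraints then cover all $q$ above them. If you replace your box padding with a single aggregate constraint of the form $\iota'x\le C$ — which is the move the paper makes — your argument would go through; as written, it cannot reach $q=d+1$ or $q=d+2$ once $d\ge 4$.
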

The proposition above can be illustrated using two simple examples.
\begin{example}\label{example_incons}
    For the first part of Proposition \ref{plug_in_incons}, consider
    \begin{align}\label{example_prop1}
        B(b) = \underset{x \in \mathbb{R}^2}{\min} ~x_1 \quad \text{s.t.}: x_2 \geq (1+b)x_1,~ x_2 \leq x_1, ~ x_1 \in [-1;1],
    \end{align}
    where $b$ is estimated via $\hat{b}_n = b + \frac{1}{n}\sum_{i = 1}^n U_i$ with $U_i \sim U[-1;1]$ i.i.d. Suppose in population $b = 0$, as in Figure \ref{fig_point_cons_b0}. The true value is then $B(0) = -1$, attained at $x^* = - \iota$. The plug-in estimator collapses to $B(\hat{b}_n) = - \mathds{1}\{ \hat{b}_n \geq 0\}$, and does not converge to $-1$ in probability. 
\end{example}
\begin{example}\label{example_incons_2}
    To illustrate the second part of Proposition \ref{plug_in_incons}, consider
    \begin{align*}
        B(a) = \underset{x \in \mathbb{R}^2}{\min} ~x_1 \quad \text{s.t.}: x_2 \geq x_1 + a,~ x_2 \leq x_1, ~ x_1 \in [-1;1],
    \end{align*}
    where $a$ is estimated via $\hat{a}_n = a + \frac{1}{n}\sum^n_{i =1}U_i$ with $U_i \sim U[-1;1]$ i.i.d. Suppose $a = 0$. If $\hat{a}_n > 0$, the sample LP is infeasible, so $B(\hat{b}_n) = +\infty$. This occurs w.p. $1/2$ for any $n \in \mathbb{N}$.
\end{example}

In some special cases (e.g., \citet{honoretamer2006}), SC may be argued to hold, ensuring the continuity of $B(\cdot)$\footnote{Under SC and A0, $B(\cdot)$ is continuous at $\theta_0$, see Appendix \ref{ap_inf_lp_sc}.}. However, an additional challenge arises: $B(\cdot)$ is not necessarily Hadamard differentiable unless further regularity conditions hold. This complicates inference, as Proposition \ref{prop_4} in Section \ref{section_inference} illustrates. 

This section addresses Propositions \ref{plug_in_incons} and \ref{prop_4}. Section \ref{section_consistency} introduces the penalty function estimator and its debiased version, which we show to be $\sqrt{n}$-pointwise consistent under A0. Section \ref{section_inference} develops a computationally efficient inference procedure with exact asymptotic coverage under A0. Turning to uniform properties, Section \ref{section_uniformity} presents a general impossibility result for discontinuous functionals. It implies that the LP value cannot be uniformly consistently estimated under a uniform version of A0 alone. We then characterize a broad class of measures over which a uniformly consistent estimator exists. Sections \ref{un_cons_pen_func} and \ref{section_debiased_unif} establish the uniform rates of the penalty function estimators over this class. Section \ref{monte-carlo} provides simulation evidence.

\subsection{Consistency}\label{section_consistency}
\subsubsection{Penalty function estimator}
We now develop a consistent estimator that is inspired by the theory of exact penalty functions. The idea is to restate \eqref{lp_init} as an unconstrained penalized problem. Define the $L_1-$penalized version of the LP objective as
\begin{align*}
     L(x;\theta, w) \equiv  p'x + w'(c - Mx)^+,
\end{align*}
and consider the unconstrained problem
\begin{align}\label{p_trans}
    \tilde{B}(\theta; w) \equiv     
    \underset{x \in \mathcal{X}}{\min}~ L(x; \theta, w),& \quad
    \tilde{\mathcal{A}}(\theta; w) \equiv    
    \underset{x \in \mathcal{X}}{\arg \min}~ L(x; \theta, w).
\end{align}
We use $\tilde{B}(\cdot)$ to obtain a preliminary estimator of the LP value, which we term \textit{the penalty function estimator}. Note that $L(x; \theta, w) = p'x$ at any $x \in \Theta_I(\theta)$, i.e. the penalized objective function is equal to the LP objective function whenever the constraints in \eqref{lp_init} are satisfied.


\begin{namedass}[A1 (Penalty parameter)]
     The penalty vector $w \in \mathbb{R}^q$ and the true parameter $\theta_0 \in \mathbb{R}^S$ are such that in the problem \eqref{lp_init} evaluated at $\theta_0$ there exists a KKT vector $\lambda^* \in \Uplambda(\theta_0)$, such that $w > \lambda^*$.
 \end{namedass}
Note that Assumption A1 does not require $w$ to be component-wise larger than all KKT vectors. In any finite and feasible LP there exists at least one $\lambda^* < \infty$, so at any fixed $\theta_0$ there always exists a large enough $w$ that satisfies A1. 

\begin{remark}
    If it is known that i) $B(\theta_0) < K$ for some $K > 0$, and ii) $c > \underline{c} > 0$ for some known $\underline{c} > 0$, Assumption A1 is satisfied by $\theta_0$ and known $w = \iota K/\underline{c}$ by duality. 
\end{remark}

The following Lemma is key to understanding the penalty function approach. It asserts that under Assumption A1 the $L_1-$penalty function is \textit{exact} for the LP in \eqref{lp_init}.
\begin{lemma}\label{lemma_penalty} 
    For any $\theta \in \mathbb{R}^S$, and $w \in \mathbb{R}^q_{+}$, if $\Theta_I(\theta) \subseteq \mathcal{X}$, then
\begin{align}\label{cons_pen_fun}
    \tilde{B}(\theta;w) \leq B(\theta). 
\end{align}
Moreover, if $\theta_0$ and $w$ satisfy Assumption A1, then: i) \eqref{cons_pen_fun} holds with an equality at $\theta = \theta_0$, and ii) solutions coincide, $\tilde{\mathcal{A}}(\theta_0;w) = \mathcal{A}(\theta_0)$.
\end{lemma}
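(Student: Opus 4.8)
The plan is to prove the three assertions in order, since each one builds on the previous. The first inequality $\tilde{B}(\theta;w) \leq B(\theta)$ follows from the remark already noted in the text: for any feasible $x \in \Theta_I(\theta)$ we have $(c-Mx)^+ = 0$, so $L(x;\theta,w) = p'x$; since $\Theta_I(\theta) \subseteq \mathcal{X}$, such $x$ are admissible in the minimization defining $\tilde B(\theta;w)$, and hence $\tilde B(\theta;w) \leq \inf_{x \in \Theta_I(\theta)} p'x = B(\theta)$. (If $\Theta_I(\theta)=\emptyset$ then $B(\theta)=+\infty$ by the $\inf\emptyset$ convention and the inequality is trivial.) This part needs no extra hypotheses beyond $w \geq 0$ and $\Theta_I(\theta)\subseteq\mathcal{X}$.

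For part (i), I would use Lagrangian/KKT duality. Under Assumption A0 the LP at $\theta_0$ is feasible and finite, so strong duality holds and there is a KKT multiplier $\lambda^* \in \Uplambda(\theta_0)$, which by A1 we may take with $w > \lambda^* \geq 0$. The key computation is a lower bound on $L(x;\theta_0,w)$ valid for every $x \in \mathcal{X}$: write
\[
L(x;\theta_0,w) = p'x + w'(c-Mx)^+ \geq p'x + (\lambda^*)'(c - Mx)^+ \geq p'x + (\lambda^*)'(c-Mx),
\]
where the first inequality uses $w \geq \lambda^* \geq 0$ and $(c-Mx)^+ \geq 0$ componentwise, and the second uses $(c-Mx)^+ \geq (c-Mx)$ componentwise together with $\lambda^* \geq 0$. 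Now $p'x + (\lambda^*)'(c-Mx) = (p - M'\lambda^*)'x + (\lambda^*)'c = (\lambda^*)'c = B(\theta_0)$ by dual feasibility $M'\lambda^* = p$ and complementary slackness / strong duality. Hence $L(x;\theta_0,w) \geq B(\theta_0)$ for all $x \in \mathcal{X}$, so $\tilde B(\theta_0;w) \geq B(\theta_0)$; combined with \eqref{cons_pen_fun} this gives equality.

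For part (ii), the inclusion $\mathcal{A}(\theta_0) \subseteq \tilde{\mathcal{A}}(\theta_0;w)$ is immediate: any $x^* \in \mathcal{A}(\theta_0)$ lies in $\Theta_I(\theta_0) \subseteq \mathcal{X}$, so $L(x^*;\theta_0,w) = p'x^* = B(\theta_0) = \tilde B(\theta_0;w)$, making $x^*$ a minimizer. For the reverse inclusion, take $x \in \tilde{\mathcal{A}}(\theta_0;w)$, so $L(x;\theta_0,w) = B(\theta_0)$; tracing back through the chain of inequalities above, both must be equalities. Equality in $w'(c-Mx)^+ \geq (\lambda^*)'(c-Mx)^+$ forces $(w-\lambda^*)'(c-Mx)^+ = 0$, and since $w - \lambda^* > 0$ strictly (this is exactly where the strict inequality in A1 is used) and $(c-Mx)^+ \geq 0$, we get $(c-Mx)^+ = 0$, i.e. $Mx \geq c$, so $x \in \Theta_I(\theta_0)$. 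Then $L(x;\theta_0,w) = p'x$, and $p'x = B(\theta_0)$ means $x$ attains the LP minimum, so $x \in \mathcal{A}(\theta_0)$.

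The only genuinely delicate point is making sure strong duality and the existence of the KKT vector are correctly invoked — but A0 guarantees feasibility and finiteness of the LP, for which LP strong duality is standard and needs no constraint qualification, so this is routine. The role of the strict inequality $w > \lambda^*$ in A1, as opposed to $w \geq \lambda^*$, is confined to the reverse solution-set inclusion in part (ii); I would flag this explicitly.
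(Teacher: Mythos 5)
Your proof is correct and follows essentially the same route as the paper: the first inequality via restricting the minimization to the feasible set, the equality at $\theta_0$ via the Lagrangian chain $p'x + w'(c-Mx)^+ \geq p'x + \lambda^{*\prime}(c-Mx)^+ \geq p'x + \lambda^{*\prime}(c-Mx) = B(\theta_0)$, and the forward inclusion $\mathcal{A}(\theta_0) \subseteq \tilde{\mathcal{A}}(\theta_0;w)$ directly from the equality just established. The only stylistic difference is in the reverse inclusion: the paper argues by contradiction in two cases ($x^* \in \Theta_I$ and $x^* \notin \Theta_I$), whereas you trace equality through the chain and note that tightness of $w'(c-Mx)^+ \geq \lambda^{*\prime}(c-Mx)^+$ together with the strict inequality $w > \lambda^*$ from A1 forces $(c-Mx)^+ = 0$; this is the same mechanism phrased contrapositively, and arguably slightly cleaner because it handles both of the paper's cases at once.
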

The deterministic result in Lemma \ref{lemma_penalty}, combined with the observation that the objective function converges in probability uniformly in $x$ under A0, establish that the penalty function estimator with a fixed $w$ is consistent under A1.
\begin{prop}\label{first_pen_prop}
    Under Assumption A1,
    \begin{align*}
        \tilde{B}(\hat{\theta}_n;w) \xrightarrow{p} B(\theta_0).
    \end{align*}
\end{prop}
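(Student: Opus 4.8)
The plan is to route everything through the deterministic identity supplied by Lemma~\ref{lemma_penalty}: under A1 we have $\tilde B(\theta_0;w)=B(\theta_0)$, so it suffices to prove $\tilde B(\hat\theta_n;w)\xrightarrow{p}\tilde B(\theta_0;w)$, i.e. that $\theta\mapsto\tilde B(\theta;w)$ behaves continuously along the consistent sequence $\hat\theta_n$. I would establish this by a uniform-in-$x$ convergence argument rather than by any continuity property of $B(\cdot)$ itself — the latter can fail, which is the whole point of the penalized reformulation. First, note $\tilde B(\theta;w)$ is well defined for every $\theta$: the map $x\mapsto L(x;\theta,w)=p'x+w'(c-Mx)^+$ is continuous (a linear term plus a fixed nonnegative combination of the continuous maps $x\mapsto(c_j-M_j'x)^+$), and $\mathcal X$ is nonempty (it contains $\Theta_I(\theta_0)\ne\emptyset$ by A0.i--ii) and compact, so the minimum in \eqref{p_trans} is attained.

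The key step is a Lipschitz bound on $L$ in $\theta$, uniform over $x\in\mathcal X$. Put $R\equiv\sup_{x\in\mathcal X}\|x\|<\infty$. Using $|a^+-b^+|\le|a-b|$ coordinatewise and $w>0$ (which follows from A1), for any $x\in\mathcal X$,
\[
\bigl|L(x;\theta,w)-L(x;\theta_0,w)\bigr|\;\le\;\bigl|(p-p_0)'x\bigr|+\sum_{j=1}^q w_j\bigl|(c_j-c_{0,j})-(M_j-M_{0,j})'x\bigr|\;\le\;C_w\,\|\theta-\theta_0\|,
\]
where $C_w<\infty$ depends only on $w,R,q,d$. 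Hence $\sup_{x\in\mathcal X}|L(x;\hat\theta_n,w)-L(x;\theta_0,w)|\le C_w\|\hat\theta_n-\theta_0\|=O_p(1/\sqrt n)$ by A0.iii. (Alternatively, joint continuity of $(x,\theta)\mapsto L(x;\theta,w)$ together with compactness of $\mathcal X$ gives continuity of $\theta\mapsto\tilde B(\theta;w)$ by Berge's maximum theorem, and the continuous mapping theorem finishes; the display above is just the quantitative version of this.)

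Finally, for functions bounded below on the common domain $\mathcal X$ one has $|\min_{\mathcal X}f-\min_{\mathcal X}g|\le\sup_{\mathcal X}|f-g|$, so
\[
\bigl|\tilde B(\hat\theta_n;w)-\tilde B(\theta_0;w)\bigr|\;\le\;\sup_{x\in\mathcal X}\bigl|L(x;\hat\theta_n,w)-L(x;\theta_0,w)\bigr|\;=\;o_p(1),
\]
and since $\tilde B(\theta_0;w)=B(\theta_0)$ by Lemma~\ref{lemma_penalty}(i) — whose hypotheses hold because $\Theta_I(\theta_0)\subseteq\mathcal X$ and $w,\theta_0$ satisfy A1 — the conclusion follows. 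There is no serious obstacle; the only point requiring care is the uniform control of the penalty term across $x\in\mathcal X$, which is precisely where compactness of $\mathcal X$ and the nonexpansiveness of $(\cdot)^+$ are used, and it is notable that no continuity of $B(\cdot)$ at $\theta_0$ is invoked anywhere.
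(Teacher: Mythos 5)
Your proof is correct and follows essentially the same route the paper sketches in the sentence preceding the proposition (Lemma~\ref{lemma_penalty} for the identity $\tilde B(\theta_0;w)=B(\theta_0)$, plus uniform-in-$x$ convergence of the penalized objective over the compact $\mathcal X$) and makes quantitative in the proof of Theorem~\ref{first_penalty_theorem} via the same bound $|\inf_{\mathcal X} g_1-\inf_{\mathcal X} g_2|\le\sup_{\mathcal X}|g_1-g_2|$ and the nonexpansiveness of $(\cdot)^+$. No gaps; you have simply written out the details the paper leaves implicit.
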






For ease of notation, from now on we treat $w \in \mathbb{R}_+$ as a scalar penalty that induces the penalty vector $w \iota$. Our results extend immediately to the case when the coordinates of the penalty vector differ. Based on Lemma \ref{lemma_penalty} and Proposition \ref{first_pen_prop}, it might seem that $w$ should be selected to be as large as possible. This, however, yields a generally inconsistent estimator if SC fails.

\begin{excont}[cont'd]\label{lm_example}
    Consider \eqref{example_prop1} with $b = 0$ and suppose $w > 2$. If $\hat{b}_n < 0$, there exists a sample KKT vector $\hat{\lambda}_n \in \Uplambda(\hat{b}_n)$, whose largest coordinate is $||\hat{\lambda}_n||_{\infty} = |\hat{b}^{-1}_n|$. So, if also $w > |\hat{b}^{-1}_n|$, the penalty estimator selects an incorrect optimum $(0,0)$ in light of Lemma \ref{lemma_penalty}. Since $\hat{b}_n \xrightarrow{p} 0$, at a large enough sample size $|\hat{b}^{-1}_n|$ will exceed any fixed $w$ with high probability and the correct minimum of $-1$ will be estimated. However, that logic fails in finite samples if $w$ is `large'.
\end{excont}
\addtocounter{excont}{-1}
This observation justifies the need to study $w \to \infty$ asymptotic theory. We show that the penalty parameter can be allowed to diverge at the rate dominated by $\sqrt{n}$.
\begin{theor}\label{first_penalty_theorem}
    For any $w_n \to \infty$ w.p.a.1 with $\frac{w_n}{\sqrt{n}} \xrightarrow{p} 0$, we have
    \begin{align*}
        |\tilde{B}_n(\hat{\theta}_n, w_n)  - B(\theta_0)| = O_p\left(\frac{w_n}{\sqrt{n}}\right)
    \end{align*}
\end{theor}
Observe that the estimator in Theorem \ref{first_penalty_theorem} does not rely on Assumption A1, as the latter is always satisfied at a fixed measure for a large enough $n$ when $w_n \to \infty$.
\subsubsection{Debiased penalty estimator}\label{section_debiased}
The $w_n n^{-1/2}$ rate of convergence in Theorem \ref{first_penalty_theorem} is determined by the slowly vanishing penalty term. This term is a product of the deviation from the true polytope, that vanishes at $n^{-1/2}$, and an exploding sequence $w_n$. It is thus reasonable to ask whether the $\sqrt{n}-$rate could be restored by dropping the penalty term, i.e. \textit{debiasing} the penalty function. We show that this can be done. Before we proceed, let us make the following simplification. Without loss of generality, suppose that
\begin{align}\label{det_p}
    \hat{p}_n = p ~ - \text{non-random.}
\end{align}
To see why \eqref{det_p} can be assumed w.l.g., note that one can set $p = e_1 = (1 ~ 0 \dots 0)'$ and add an auxiliary variable for the value of the problem in the first position of $x$ (see \citet{gafarov2024simple}).

We define \textit{the debiased penalty function estimator} as
\begin{align*}
    \hat{B}(\hat{\theta}_n;w_n) \equiv \max_{x \in \tilde{\mathcal{A}}(\hat{\theta}_n;w_n)} p'x.
\end{align*}
The following theorem establihes its rate, and is one of the main contributions of this paper. 


\begin{theor}\label{root_n_pointwise}
    Suppose $\mathcal{A}(\theta_0) \subseteq Int(\mathcal{X})$. For any $w_n \to \infty$ w.p.a.1 with $\frac{w_n}{\sqrt{n}} \xrightarrow{p} 0$,
    \begin{align*}
        \underset{x \in \tilde{\mathcal{A}}(\hat{\theta}_n,w_n)}{\max} ~ \left|p'x - B(\theta_0)\right| = O_p\left(\frac{1}{\sqrt{n}}\right).
    \end{align*}
\end{theor}
\begin{remark}
    The result in Theorem \ref{root_n_pointwise} is uniform over the argmin set, so one may use any measurable selection from $\tilde{\mathcal{A}}(\hat{\theta}_n;w_n)$ to obtain a $\sqrt{n}-$consistent estimator. In the context of lower/upper bound estimation, $\max/\min_{\tilde{\mathcal{A}}(\hat{\theta}_n;w_n)} p'x$ respectively yield the tightest bound.
\end{remark}
\begin{remark}
    The argmin set of the penalty function estimator can be computed by solving a LP with $d + q$ variables and $2 q$ constraints. Specifically, in Appendix \ref{ap_proof_th22} we show that, w.p.a.1, $\tilde{\mathcal{A}}(\hat{\theta}_n;w_n) = \arg \min_{x, a \in \mathbb{R}^d \times \mathbb{R}^q} p'x + w_n\iota'a, \text{ s.t.: } a\geq 0, a \geq \hat{c}_n - \hat{M}_n x$. 
\end{remark}
\begin{remark}
    An alternative estimator can be constructed using a set-expansion argument: $\check{B}_n = \min_{x \in \mathbb{R}^d} \hat{p}_n'x ~ \text{s.t.} ~ \hat{M}_n x \geq \hat{c}_n - \sqrt{\kappa_n} n^{-1/2}\iota $. In Appendix \ref{ap_setex}, we show that the results from \citet{CHT} and the geometry of polytopes imply that $\check{B}_n$ with an appropriately chosen, diverging $\kappa_n$, is consistent for $B(\theta_0)$. However, it can be rate-conservative, converging at $\sqrt{{n}}\kappa_n^{-1/2}$. It appears to perform worse than the debiased penalty function estimator in our simulations, see Section \ref{monte-carlo}.
\end{remark}

We now outline the main ideas behind Theorem \ref{root_n_pointwise}. For $x \in \mathbb{R}^d$, let $J(x;\tilde{\theta}) \equiv \{j \in [q]: \tilde{M}_{j}x = \tilde{c}_{j}\}$ denote the set of constraints that bind at $x$ when evaluated at $\tilde{\theta}$. For ease of notation, from now on we write $\theta_0 = (\text{vec}(M)',c')'$. Let us introduce the following terminology. 
\begin{definition}[Vertex]
    We call $x \in \tilde{\mathcal{A}}(\hat{\theta}_n;w_n)$ a \textit{vertex-solution} if the corresponding matrix of binding constraints, $\hat{M}_{nJ(x;\hat{\theta}_n)}$, has full column rank.
\end{definition}
\begin{definition}[Nice face]
    We say that a nonempty set $A \subseteq [q]$ corresponds to 
\textit{a nice face}\footnote{It should be noted that a nice face is not necessarily a valid $k-$face of the true polytope $\Theta_I$.} $F \equiv \{x \in \mathbb{R}^d:M_A x = c_A\}$ if $p'x = B(\theta_0)$ for any $x \in F$. 
\end{definition}

Intuitively, the proof of Theorem \ref{root_n_pointwise} proceeds in two steps. First, by anti-concentration arguments we establish that with high probability asymptotically the penalty function estimator manages to select a vertex-solution $\hat{x}_n$ such that the set of constraints that bind at it, $\hat{A}_n = J(\hat{x}_n;\hat{\theta}_n)$, corresponds to a nice face $F = \{x \in\mathbb{R}^d : M_{\hat{A}_n}x = c_{\hat{A}_n}\}$. Once a nice face has been selected, the $\sqrt{n}-$convergence of $p'\hat{x}_n$ to $B(\theta_0)$ obtains as a consequence of $(\hat{M}_{nA}, \hat{c}_{nA})$ converging to $(M_A, c_A)$ at this rate for a fixed $A \subseteq [q]$. 



The discussion of uniform asymptotic theory in Section \ref{section_uniformity} sheds light on the role of $w_n$ and the trade-off involved in its selection. The practical guidance on selecting $w_n$ is then developed on the basis of our results and random matrix theory in Appendix \ref{ap_pensel}. 
 
\subsection{Inference}\label{section_inference}
This section develops an inference procedure for a general LP estimator, in which all parameters are inferred from the data. This procedure nests special cases in which some parameters remain fixed, as in \citet{semenova2023adaptive} or \citet{bhattacharya2009inferring}. 
\begin{namedass}[B0 (Random sample)]
    Suppose $\hat{\theta}_n = \hat{\theta}_n(\mathcal{D}_n)$ is a measurable function of the sample $\mathcal{D}_n \equiv \{W_1, W_2, \dots, W_n\}$, where $W_i \in \mathbb{R}^{d_W}, ~ i \in [n]$ are i.i.d. random vectors.
\end{namedass}
We suppose that Assumption B0 holds throughout Section 2.2, whereas the rest of the conditions are imposed explicitly. 
 \begin{namedass}[B1 (Asymptotic normality)]
     The estimator $\hat{\theta}_n$ is such that, for $\Sigma < \infty$,
     \begin{align*}
         \sqrt{n}(\hat{\theta}_n - \theta_0) \xrightarrow{L} \mathbb{G}_0 \sim \mathcal{N}(0, \Sigma).
     \end{align*}
 \end{namedass}
Assumption B1 is typically warranted by reference to CLT and the Delta Method when $\hat{\theta}_n = g(n^{-1}\sum^n_{i=1}W_i)$ for some smooth $g(\cdot)$, as in the AICM models \eqref{aicm_def}, see Section \ref{section_aicm}.

\subsubsection{Exact inference on a debiased estimator} 

We construct a method for statistical inference on \( B(\theta_0) \) that achieves exact asymptotic coverage under minimal regularity conditions. Our approach is based on an asymptotically normal version of the debiased penalty estimator with \( w_n \to \infty \) w.p.a.1 and is outlined in Algorithm \ref{alg:debiasing}. Before stating the main result, we introduce key auxiliary constructions and discuss our assumptions.

\begin{algorithm}[h!]
\caption{\textbf{(Inference procedure)}}
\label{alg:debiasing}
Given data $\mathcal{D}_n$, estimators $\hat{\theta}(\mathcal{D}_n)$ and $\hat{\Sigma}_n = \hat{\Sigma}(\mathcal{D}_n)$, penalty vector $w(n, \mathcal{D}_n) \in \mathbb{R}^q$ and constants $\gamma \in (0;1)$,  $\overline{v} > 0$, follow the steps below to obtain confidence intervals for $B(\theta_0)$.
\vspace{0.1cm}\\
\textbf{Step 1 (Split the sample):} 
\begin{algorithmic}[1]
\State Randomly split $\mathcal{D}_n$ into two folds $\{\mathcal{D}^{(f)}\}_{f  = 1,2}$ of sizes $n_1 = \lfloor\gamma n\rfloor$, $n_2 = n - n_1$
\State Compute $\hat{\theta}^{(f)} \equiv \hat{\theta}(\mathcal{D}^{(f)}) = (\text{vec}(\hat{M}^{(f)})',\hat{c}^{(f)\prime})'$ for $f = 1,2$
\end{algorithmic}
\vspace{0.1cm}
\textbf{Step 2 (Find the vertex):} 
\begin{algorithmic}[1]
    \State On the first fold, compute the penalty estimator's $\arg\min$ as
    \begin{align*}
        \hat{\mathcal{A}} \equiv  \arg \min_{x \in \mathbb{R}^d, a \in \mathbb{R}^q} p'x + w(n_1, \mathcal{D}^{(1)})'a, \quad  \text{s.t.: } a \geq \hat{c}^{(1)} - \hat{M}^{(1)}x,  ~ a \geq 0.
    \end{align*}
        \State Find the (finite) set of vertex-solutions $\hat{\mathcal{V}}_x \equiv \{x \in \mathbb{R}^d: (x, a) \in \hat{\mathcal{A}},~  \text{rk}(\hat{M}^{(1)}_{J(x;\hat{\theta}^{(1)})}) = d\}$
        \State Find the optimal vertex-solution $\hat{x} \in \arg \max_{x \in \hat{\mathcal{V}}_x} p'x$
    \State Find the set of binding inequalities $\hat{A} \equiv J(\hat{x};\hat{\theta}^{(1)})$
    \State Compute $\check{v} = \arg \min_{v \in \mathbb{R}^{|\hat{A}|}} ||p - \hat{M}_{\hat{A}}^{(1)\prime}v||^2, \text{ s.t. } ||v|| \leq \overline{v}$
\end{algorithmic}
\vspace{0.1cm}
\textbf{Step 3 (Construct the C.I.)}
\begin{algorithmic}[1]
    \State Compute $\hat{\sigma}^2_n = \sigma^2(\hat{A},\hat{x}, \hat{v},\hat{\Sigma}_n)$ using the formula in Lemma \ref{asymptotic_variance}.
\State Compute an updated estimate $\breve{B}\equiv \check{v}'(\hat{c}^{(2)}_{\hat{A}} - \hat{M}^{(2)}_{\hat{A}} \hat{x}) + p'\hat{x}$
\State The right, two-side and left $\alpha-$confidence intervals for $B(\theta_0)$ are given by, respectively,
    \begin{align*}
        \left(\breve{B} - \frac{\hat{\sigma}_n}{\sqrt{n_2}}z_{1-\alpha}; +\infty\right), \quad \left(\breve{B} - \frac{\hat{\sigma}_n}{\sqrt{n_2}}z_{1-\alpha/2}; \breve{B} + \frac{\hat{\sigma}_n}{\sqrt{n_2}}z_{1-\alpha/2}\right), \quad \left(-\infty; \breve{B} + \frac{\hat{\sigma}_n}{\sqrt{n_2}}z_{1-\alpha}\right).
    \end{align*}
\end{algorithmic}
\end{algorithm}

For the true $\theta_0 = (\text{vec}(M)',c')'$ and some subset of indices $A \subseteq [q]$, consider conditions
\begin{align}\label{optimality}
    \exists x \in \mathcal{A}(\theta_0): M_{A} x = c_{A},\\\label{range_req}
    p \in \mathcal{R}(M'_{A}).
\end{align}
Equation \eqref{optimality} is satisfied if constraints $A$ may bind simultaneously at some solutions of the original LP, while \eqref{range_req} holds if the objective function's gradient $p$ is a linear combination of the gradients of inequalities from $A$. For example, if $A$ is a set of \textit{all binding constraints} at some $x \in \mathcal{A}(\theta_0)$, equation \eqref{range_req} follows from KKT conditions.




Continuing the discussion in Section \ref{section_debiased}, we note that subsets $A$ that satisfy \eqref{optimality} and \eqref{range_req} correspond to the nice faces. 

\begin{lemma}\label{lemma_simple}
    If $A \subseteq [q]$ satisfies \eqref{optimality} and \eqref{range_req}, then $F = \{x \in \mathbb{R}^d: M_Ax = c_A\}$ is a nice face,
    \begin{align*}
        B(\theta_0) = p'x, \quad \forall x \in F.
    \end{align*}
\end{lemma}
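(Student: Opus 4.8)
The plan is to deduce the claim directly from the KKT/duality structure of the LP. First I would use \eqref{range_req}: since $p \in \mathcal{R}(M'_A)$, there exists a vector $v \in \mathbb{R}^{|A|}$ with $M'_A v = p$. The subtlety is that $v$ need not be nonnegative, so it is not immediately a dual-feasible vector; I will fix this below. Next I would use \eqref{optimality}: pick $x^* \in \mathcal{A}(\theta_0)$ with $M_A x^* = c_A$, so that $x^*$ attains the value $B(\theta_0) = p'x^*$.

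The core computation is then the following. For any $x \in F$, i.e. any $x$ with $M_A x = c_A$, write
\begin{align*}
    p'x - p'x^* = (M'_A v)'x - (M'_A v)'x^* = v'(M_A x - M_A x^*) = v'(c_A - c_A) = 0,
\end{align*}
so $p'x = p'x^* = B(\theta_0)$ for every $x \in F$. This argument actually needs nothing beyond \eqref{optimality} and \eqref{range_req}: the objective $p$ lies in the row space of $M_A$, hence $p'$ is constant on every translate of $\ker(M_A)$, and $F$ is exactly such a translate (the one passing through $x^*$). Since $x^*$ is a minimizer, the constant value is $B(\theta_0)$. Note $F \neq \emptyset$ because $x^* \in F$, so the ``nice face'' terminology applies.

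I expect no real obstacle here — the statement is essentially a restatement of the first-order conditions — but the one point to be careful about is that Lemma \ref{lemma_penalty} and the surrounding discussion phrase nice faces via KKT vectors $\lambda^* \geq 0$, whereas \eqref{range_req} only gives a signed combination. The clean way to handle this is exactly the kernel argument above, which sidesteps sign considerations entirely: one never needs $v \geq 0$, only that $x, x^* \in F$ differ by an element of $\ker(M_A) \subseteq \ker(p')$. If one instead wants to connect to the KKT description explicitly, one can additionally invoke that when $A$ is the full set of binding constraints at $x^* \in \mathcal{A}(\theta_0)$, complementary slackness and dual feasibility furnish a $\lambda^* \geq 0$ supported on $A$ with $M'_A\lambda^*_A = p$, but this is not required for the stated lemma.
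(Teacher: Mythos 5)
Your proof is correct and is essentially the same computation the paper uses (see Step 6 of the proof of Theorem \ref{root_n_pointwise} and Lemma \ref{reshuffle_lemma} in the Appendix): pick $v$ with $M'_A v = p$, pick $x^* \in \mathcal{A}(\theta_0)$ with $M_A x^* = c_A$, and note $p'x = v'M_A x = v'c_A = v'M_A x^* = p'x^* = B(\theta_0)$ for every $x \in F$. Your remark that $v$ need not be nonnegative and that the argument is purely a kernel/row-space observation is accurate and a useful clarification, but it does not change the route.
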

Define the set $\mathbb{A} \equiv \{A \in 2^{[q]}:|A| \geq d, A \text{ satisfies \eqref{optimality} and \eqref{range_req}}\}$. It is non-empty in any feasible finite LP. With probability approaching $1$, the penalty function estimator manages to select a vertex-solution $\hat{x}_n \in \tilde{\mathcal{A}}(\hat{\theta}_n; w_n)$, determined by the binding constraints $\hat{A}_n = J(\hat{x}_n;\hat{\theta}_n)\in \mathbb{A}$ that satisfy \eqref{optimality} and \eqref{range_req} and thus correspond to a nice face by Lemma \ref{lemma_simple}.





The debiased estimator may hence be understood as a two-stage procedure: one first finds the set of binding inequalities $\hat{A}_n \in \mathbb{A}$, and then estimates $B(\theta_0)$ as $p'\hat{M}_{n\hat{A}_n}^{\dagger}\hat{c}_{n\hat{A}_n}$.  Performing inference on that object directly would require working with the complex joint distribution of $\hat{A}_n$, and $\hat{\theta}_n$, and would likely result in an asymptotically non-normal estimator. 

We address this by `disentangling' the variation in \(\hat{A}_n\) and \(\hat{\theta}_n\) via sample splitting. Intuitively, a vertex is estimated on one part of the sample, while the noise in the parameter estimation comes from the other. We now state our assumptions and present the main result.

\begin{namedass}[B2 (Variance estimator)]
   B1 holds, and there exists an estimator $\hat{\Sigma}_n \xrightarrow{p} \Sigma$.
\end{namedass}
Assumption B2 requires the researcher to possess a consistent estimator of the asymptotic variance of $\hat{\theta}_n$. If $\hat{\theta}_n = g\left(n^{-1}\sum_{i = 1}^n W_i\right)$ for some smooth and known $g(\cdot)$, such estimator can typically be obtained from the estimated covariance matrix of $W_i$ via Delta-method. In more complicated scenarios, bootstrap on $\hat{\theta}_n$ may be employed.  

Define the set $\mathcal{S}_{A} \equiv \{v \in \mathbb{R}^{|A|}: p = M'_{A} v\}$ and note that \eqref{range_req} is equivalent to $\mathcal{S}_A \ne \emptyset$. 
\begin{namedass}[B3]
    For a constant $\overline{v} > 0$, $\underset{A \in \mathbb{A}}{\max} \underset{v \in \mathcal{S}_A}{\min} ||v||\leq \overline{v}$.
\end{namedass}
Assumption B3 is a technical condition ensuring that we can find a sequence approaching $\mathcal{S}_A$ asymptotically, i.e. $d(\check{v},\mathcal{S}_A) = o_p(1)$ for $\check{v}$ defined in \eqref{check_def}. Practical guidance on choosing $\overline{v}$ is provided in Appendix \ref{ap_ir_v}. Simulation evidence in Figure \ref{figure_irrev} in Appendix \ref{ap_ir_v} suggests that the specific value of \(\overline{v}\) has no impact on inference, as long as it is sufficiently large.

\begin{definition}[Optimal triplet]
    We call $(A, x, v) \in 2^{[q]}\times \mathbb{R}^d \times \mathbb{R}^q$ an optimal triplet if i) $|A| \geq d$, ii) $x \in \mathcal{A}(\theta_0)$, iii) $M_Ax = c_A$, iv) $p = M_A'v_A$, and v) $A = \text{Supp}(v)$.
\end{definition}


We randomly split $\mathcal{D}_n$ into two disjoint, collectively exhaustive folds $\mathcal{D}^{(f)}_{n}$ of size $n_f$ for $f = 1, 2$, with $n_1 = \lfloor\gamma n\rfloor$ and $n_2 = n - \lfloor \gamma n\rfloor$ for some fixed $\gamma \in (0;1)$. Our inference procedure uses the data from $\mathcal{D}^{(1)}$ to estimate an optimal triplet $(\hat{A},\hat{x},\hat{v})$. The vertex\footnote{While we assume that $\tilde{\mathcal{A}}(\hat{\theta}^{(1)};w_{n_1})$ is estimated precisely, the results do not change if one is only able to estimate a single optimum. This may occur if numerical errors do not allow the LP-solver to find all of the LP solutions. Such optimum will satisfy $\text{rk}(\hat{M}_{J(x;\hat{\theta}^{(1)})}) = d$ by definition, and so will be a valid vertex-solution.} is estimated as
\begin{align*}
    \hat{x} \in \arg \max_{x \in \tilde{\mathcal{A}}(\hat{\theta}^{(1)};w_{n_1})} p'x ~ \quad \text{s.t.:} \quad \text{rk}(\hat{M}_{J(x; \hat{\theta}^{(1)})}) = d, 
\end{align*}
the set of binding constraints that define it is denoted by $\hat{A} \equiv J(\hat{x};\hat{\theta}^{(1)})$, and 
\begin{align}\label{check_def}
    \check{v} \in \arg \min_{||v|| \leq \overline{v}} ||\hat{M}_{\hat{A}}^{(1)\prime}v - p||^2.
\end{align}
Finally, we define $\hat{v} \in \mathbb{R}^{q}$ so that $\hat{v}_{\hat{A}} = \check{v}$ and $\hat{v}_j = 0$ for $j \notin \hat{A}$.

Our procedure is then based on showing that, for large $n$,
\begin{align*}
    \sqrt{n_2} \left(\check{v}'(\hat{c}^{(2)}_{\hat{A}} - \hat{M}^{(2)}_{\hat{A}} \hat{x}) + p'\hat{x} - B(\theta_0)\right)  &\approx \mathcal{N}(0, \sigma^2(\hat{A}, \hat{x}, \hat{v}, \Sigma)),
\end{align*}
where $\sigma^2(\cdot)$ is derived in Appendix \ref{ap_asymp_var}. 
\begin{namedass}[B4 (Non-degeneracy)] B1 holds, and $\sigma(A, x, v, \Sigma) > 0$ for any optimal triplet $A, x, v$.
\end{namedass}
An inspection of the proof of Theorem \ref{main_inference} below reveals that Assumption B4 rules out the scenarios when finding an $A \in \mathbb{A}$ determines the value $B(\theta_0)$, even though $\hat{\theta}$ is noisy. This may occur, for example, if the corresponding $\hat{c}_A, \hat{M}_A$ are deterministic. In this case, if $A$ is also unique, meaning $|\mathbb{A}| = 1$, the debiased estimator has $0$ asymptotic variance, because $A$ and therefore $B(\theta_0)$ are correctly estimated with probability approaching $1$. 

\begin{theor}\label{main_inference}
     Suppose $\mathcal{A}(\theta_0) \subseteq \text{Int}(\mathcal{X})$ and Assumptions B1, B3, B4 hold. Moreover, 
    \begin{align*}
        \hat{\sigma}_n(A, x, v) \xrightarrow{p} \sigma(A, x, v, \Sigma)
    \end{align*}
    for any optimal triplet $(A, x, v)$ with $||v|| \leq \overline{v}$, which holds for $\hat{\sigma}_n(A, x, v) = \sigma(A, x, v, \hat{\Sigma}_n)$ under Assumption B2. Then, for any $\alpha \in (0;1)$, and any $w_n \to \infty$ w.p.a.1 such that $w_n = o_p(\sqrt{n})$, 
    \begin{align*}
    \mathbb{P}\left[\frac{\sqrt{n_2}}{\hat{\sigma}_n (\hat{A},\hat{v},\hat{x})} \left(\check{v}'(\hat{c}^{(2)}_{\hat{A}} - \hat{M}^{(2)}_{\hat{A}} \hat{x}) + p'\hat{x} - B(\theta_0)\right) \leq z_{1-\alpha}\right] = 1-\alpha + o(1). 
    \end{align*}
\end{theor}

\begin{remark}
    Following \citet{gafarov2024simple}, one can drop Assumption B4 by using $\max \{\hat{\sigma}_n(\cdot), \underline{\sigma}\}$ for some small $\underline{\sigma} > 0$ instead of $\hat{\sigma}_n(\cdot)$ in Theorem \ref{main_inference}. In that case, the test controls level, but may have a conservative size. 
\end{remark}
\begin{remark}\label{remark_bootstrap}
    If an estimator $\hat{\Sigma}_n$ is not available and is not obtained via bootstrap on $\hat{\theta}_n$, one may alternatively construct confidence intervals using the quantiles of $\sqrt{n}_2 (\breve{B} - B(\theta_0))$. These can be computed via bootstrap on $\sqrt{n}_2\hat{v}'_{\hat{A}}(\hat{c}_{\hat{A}}^{(2)} - c_{\hat{A}} - (\hat{M}^{(2)}_{\hat{A}} - M_{\hat{A}})\hat{x})$, where $\hat{A}, \hat{x},\hat{v}$ remain fixed, while  $\hat{c}^{(2)}$ and $\hat{M}^{(2)}$ are bootstrapped over the second fold data $\mathcal{D}^{(2)}$.
\end{remark}

\subsubsection{Bootstrapping the plug-in fails even under SC}\label{subsection_bad_inference}
To further justify the need for our inferential procedure, we examine the properties of the approach that combines bootstrap on $\hat{\theta}_n$ with the plug-in estimator $B(\hat{\theta}_n)$. This method is widely used in empirical literature applying AICM conditions (\citet{blundell2007changes}, \citet{kreider2012identifying}, \citet{pepper}, \citet{siddique}, \citet{de2017effect}, and  \citet{mentalhealth}). In light of Proposition \ref{plug_in_incons}, this approach is inapplicable when SC fails. In practice, researchers attempting to apply it to a LP with a small or empty interior of $\Theta_I$ may encounter frequent LP infeasibility in the bootstrap draws. 

In some cases, SC may be established. This is true, for example, if the bound of interest can be expessed as an intersection bound $B(\theta_0) = \max \{c_1,c_2, \dots, c_q\} = \min_{t \in \mathbb{R}} t ~~ \text{s.t. } t \geq c_i, ~ i \in [q]$, where $\theta_0 = (1 ~ \iota' ~ c')'$ (as in \citet{CLR}). Yet, even if SC holds, we demonstrate that bootstrap inference based on the plug-in estimator is not valid, unless further regularity conditions hold. This observation can be viewed as a generalization of the parameter-on-the-boundary problem of \citet{andrews1999estimation,andrews2000}.



\begin{definition}
    Let $\mathbb{D}$ and $\mathbb{E}$ be Banach spaces, and $f: \mathbb{D}_f \subseteq \mathbb{D} \to \mathbb{E}$. The map $f$ is said to be Hadamard directionally differentiable (H.d.d.) at $\upsilon \in \mathbb{D}_f$ tangentially to $\mathbb{D}_0 \subseteq \mathbb{D}$, if there is a continuous map $f'_\upsilon: \mathbb{D}_0 \to \mathbb{E}$, such that
    \begin{align*}
        \underset{n \to \infty}{\lim}\left \lVert \frac{f(\upsilon + t_n h_n) - f(\upsilon)}{t_n} - f'_{\upsilon}(h)\right \rVert_{\mathbb{E}} = 0,
    \end{align*}
    for all sequences $\{h_n\} \subset \mathbb{D}$ and $\{t_n\} \subset \mathbb{R}_+$ such that $t_n \to 0^+$, $h_n \to h \in \mathbb{D}_0$ as $n \to \infty$ and $\upsilon + t_n h_n \in \mathbb{D}_f$ for all $n$. If, moreover, $f'_v(h)$ is linear in $h$, the map $f$ is said to be fully Hadamard differentiable at $\upsilon$.
\end{definition}


For simplicity of exposition, we abstract from the case of `true equalities' in $\Theta_I$. The results extend trivially to this case if SC is defined in terms of relative interior.
\begin{lemma}[\citet{duan2020hadamard}]\label{hdd_of_lp}
    Under SC, $B(\cdot)$ is Hadamard directionally differentiable at $\theta_0$. The directional derivative is given by
    \begin{align}\label{hdd_lp}
        B'_{\theta_0}(h) = \underset{x \in \mathcal{A}(\theta_0)}{\inf} \underset{\lambda \in \Uplambda(\theta_0)}{\sup} h'_p x + \sum^Q_{i = 1} \lambda_{i}(h_{c_i} - h'_{M_i} x),
    \end{align}
    where $h = (h'_p, h'_{M_{1}},\dots, h'_{M_{q}}, h_{c_1},\dots,h_{c_q})'$ is the direction of the increment in $\theta$. 
\end{lemma}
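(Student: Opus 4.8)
This is an instance of the classical perturbation analysis of parametric optimal-value functions, so the plan is to verify the handful of regularity facts that Slater's condition and Assumption A0 supply, invoke the general directional-differentiability theorem for value functions (as in \citet{duan2020hadamard}), and then read the explicit formula \eqref{hdd_lp} off the affine structure of the LP Lagrangian $L(x,\lambda,\theta)=p'x+\lambda'(c-Mx)$. First I would record the preliminary observations that, on a neighborhood $N$ of $\theta_0$, the perturbed LP stays feasible (the Slater point $\bar x$ with $M\bar x>c$ remains strictly feasible for small perturbations) and stays bounded (boundedness of $\Theta_I(\theta_0)$ means its recession cone $\{x:Mx\ge 0\}=\{0\}$, an open condition on $M$ by compactness of the unit sphere); hence $B$ is finite-valued with attained minimum on $N$, and the tangential/domain caveats in the definition of H.d.d.\ are vacuous here since $\mathbb{D}=\mathbb{R}^S$.

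Next I would check the two structural ingredients the sensitivity theorem needs. The primal solution set $\mathcal{A}(\theta_0)$ is nonempty and compact: it is a closed subset of $\Theta_I(\theta_0)\subseteq\mathcal{X}$ by A0. The dual solution set $\Uplambda(\theta_0)$ is nonempty and convex by LP strong duality (finite attained primal value), and it is \emph{bounded}: if $\lambda\ge 0$, $M'\lambda=p$, $c'\lambda=B(\theta_0)$ and $M\bar x\ge c+\delta\iota$ with $\delta>0$, then $p'\bar x=\lambda'(M\bar x)\ge c'\lambda+\delta\|\lambda\|_1=B(\theta_0)+\delta\|\lambda\|_1$, so $\|\lambda\|_1\le(p'\bar x-B(\theta_0))/\delta<\infty$. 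This is precisely the bounded-multiplier-set conclusion delivered by a Mangasarian--Fromovitz / Robinson-type constraint qualification, and it uses only Slater's condition --- not LICQ, strict complementarity, or uniqueness of primal or dual solutions. With $\mathcal{A}(\theta_0)$ and $\Uplambda(\theta_0)$ compact and the data $(p,M,c)$ entering the objective and constraints affinely (hence jointly $C^\infty$ in $(x,\theta)$), the directional-differentiability theorem for optimal value functions applies and yields H.d.d.\ at $\theta_0$ with
\[
B'_{\theta_0}(h)=\inf_{x\in\mathcal{A}(\theta_0)}\ \sup_{\lambda\in\Uplambda(\theta_0)}\ \nabla_\theta L(x,\lambda,\theta_0)'\,h,
\]
where the usual second-order regularity provisos are automatic because $L$ is affine in $x$ for fixed $(\lambda,\theta)$ and affine in $\theta$, so the linearized subproblems coincide with the true ones. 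Differentiating $L(x,\lambda,\theta)=p'x+\sum_i\lambda_i(c_i-M_i'x)$ gives $\partial_p L=x$, $\partial_{c_i}L=\lambda_i$, $\partial_{M_i}L=-\lambda_i x$, hence $\nabla_\theta L(x,\lambda,\theta_0)'h=h_p'x+\sum_i\lambda_i(h_{c_i}-h_{M_i}'x)$, which is exactly \eqref{hdd_lp}.

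If one prefers a self-contained argument in place of citing the general theorem, the two inequalities would go as follows. For the $\le$ direction: given $x^*\in\mathcal{A}(\theta_0)$ and any $\lambda\in\Uplambda(\theta_0)$, push $x^*$ back into the perturbed feasible region along the Slater direction, $x_t=(1-a_t)x^*+a_t\bar x$ with $a_t=O(t)$ chosen so that $M_t x_t\ge c_t$ under the perturbation $\theta_0+th_t$; then $B(\theta_0+th_t)\le p_t'x_t=p'x^*+t\,\nabla_\theta L(x^*,\lambda,\theta_0)'h+o(t)$, where complementary slackness ($\lambda'(c-Mx^*)=0$) is used to bring in $\lambda$. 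For the $\ge$ direction: take $\lambda^*\in\Uplambda(\theta_0)$, correct it to a feasible point of the perturbed dual (possible because $\Uplambda(\theta_0)$ is bounded and the dual constraint $M'\lambda=p$ perturbs smoothly), and apply weak duality to the perturbed problem. Matching the two bounds, optimizing over $x^*$ and $\lambda^*$, and letting $t\to 0$ produces the minimax expression; the uniformity of the $O(t)$/$o(t)$ estimates in $\|h_t-h\|\to 0$ upgrades Gâteaux directional differentiability to Hadamard. I expect the delicate step to be exactly this control of the \emph{moving feasible set} --- showing that the perturbed value differs from the ``frozen-feasibility, perturbed-objective'' value by $o(t)$ --- which is where Slater's condition, and nothing stronger, is needed.
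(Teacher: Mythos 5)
The paper does not supply its own proof of this lemma; it simply cites \citet{duan2020hadamard}. So there is no in-paper argument to compare against, and I can only evaluate your reconstruction on its own merits. It is correct and takes the standard parametric-optimization sensitivity route (Gol'shtein / Bonnans--Shapiro). Your preliminary verifications are sound: the feasible region stays nonempty under small perturbations because the Slater point $\bar x$ with $M\bar x>c$ remains strictly feasible; it stays bounded because $\{x:Mx\ge 0\}=\{0\}$ is an open condition on $M$ (by compactness of the unit sphere); $\mathcal{A}(\theta_0)$ is a closed subset of the compact $\Theta_I(\theta_0)\subseteq\mathcal{X}$; and the boundedness of $\Uplambda(\theta_0)$ via $p'\bar x=\lambda'M\bar x\ge c'\lambda+\delta\|\lambda\|_1=B(\theta_0)+\delta\|\lambda\|_1$ is exactly right and is the clean way to turn Slater into the bounded-multiplier condition that the sensitivity theorem needs. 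The gradient computation $\partial_p L=x$, $\partial_{c_i}L=\lambda_i$, $\partial_{M_i}L=-\lambda_i x$ correctly reproduces \eqref{hdd_lp}. Your remark that the second-order corrections drop out because $L$ is affine in $x$ and in $\theta$ is the right observation that lets you land on the pure inf-sup formula rather than the general formula involving a second-order tangent expansion. The one place I would tighten the exposition is the final sentence of the self-contained sketch: passing from G\^ateaux to Hadamard directional differentiability is not automatic from ``uniform $o(t)$ estimates'' alone, but it does follow cleanly once you observe that Slater plus boundedness of $\Theta_I$ make $B(\cdot)$ locally Lipschitz near $\theta_0$ (e.g.\ from your two-sided perturbation inequalities), and a locally Lipschitz, directionally differentiable function is automatically Hadamard directionally differentiable. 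Stating that bridge explicitly would close the only small gap. Incidentally, the lemma as printed writes $\sum^{Q}_{i=1}$ but the index should run over the $q$ constraints; your reconstruction implicitly uses the correct range.
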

Hadamard directional differentiability of $B(\cdot)$ is sufficient for convergence in law.
\begin{prop}\label{dconv_of_lp}
    Under SC and Assumption B1, it follows that
    \begin{align*}
        \sqrt{n}(B(\hat{\theta}_n) - B(\theta_0)) \xrightarrow{L} B'_{\theta_0}(\mathbb{G}_0)
    \end{align*}
\end{prop}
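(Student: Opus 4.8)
The plan is to invoke the functional Delta Method for Hadamard directionally differentiable maps (Shapiro 1991, Dümbgen 1993, Fang and Santos 2019). The two inputs are already available: Lemma \ref{hdd_of_lp} establishes that $B(\cdot)$ is Hadamard directionally differentiable at $\theta_0$ tangentially to all of $\mathbb{R}^S$, with continuous (though generally nonlinear) derivative $B'_{\theta_0}$ given by \eqref{hdd_lp}; and Assumption B1 gives $\sqrt{n}(\hat{\theta}_n - \theta_0) \xrightarrow{L} \mathbb{G}_0$, a random element supported in the tangent space $\mathbb{R}^S$. The directional Delta Method then delivers $\sqrt{n}(B(\hat{\theta}_n) - B(\theta_0)) \xrightarrow{L} B'_{\theta_0}(\mathbb{G}_0)$ immediately.

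Concretely, I would first verify the domain conditions needed to apply the theorem, namely that $B(\hat{\theta}_n)$ is well-defined and finite with probability approaching one, so that the left-hand side is eventually a genuine real random variable. Under A0, $B(\theta_0)$ is finite; under SC, $B(\cdot)$ is continuous at $\theta_0$ (cited in the excerpt, with proof in Appendix \ref{ap_inf_lp_sc}): an interior point $x_0$ with $M x_0 > c$ stays strictly feasible for parameters in a neighborhood of $\theta_0$, so the perturbed LP remains feasible, and continuity rules out an explosion to $-\infty$. Hence $B(\hat{\theta}_n) \xrightarrow{p} B(\theta_0) \in \mathbb{R}$, so $\hat{\theta}_n$ lies in the effective domain of $B$ w.p.a.1; redefining $B$ arbitrarily off a fixed neighborhood of $\theta_0$ (which alters nothing on an event of probability tending to one, hence leaves the limiting law unchanged) places us formally in the setting of the Delta-Method theorem. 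I would then apply it: the hypotheses — H.d.d. at $\theta_0$ tangentially to $\mathbb{R}^S$ (Lemma \ref{hdd_of_lp}), continuity of $B'_{\theta_0}$ (immediate from \eqref{hdd_lp}, an inf–sup of affine functions over the compact set $\mathcal{A}(\theta_0)$ and the bounded set $\Uplambda(\theta_0)$), and weak convergence of $\sqrt{n}(\hat{\theta}_n - \theta_0)$ to a limit in the tangent space (Assumption B1) — are exactly the conditions required, yielding the claim.

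The main (and essentially only) obstacle is the domain bookkeeping: unlike full Hadamard differentiability, the directional Delta Method requires the approximating sequence to remain in the region where the derivative expansion is valid, so the continuity of $B(\cdot)$ under SC is genuinely load-bearing here, while the rest is a black-box citation. I would also flag the conceptual payoff that sets up the next subsection: $B'_{\theta_0}(\cdot)$ in \eqref{hdd_lp} is a true inf–sup and therefore generically fails to be linear in $h$ (indeed, need be neither convex nor concave), so the limit law $B'_{\theta_0}(\mathbb{G}_0)$ is typically non-Gaussian — which is precisely why the naïve bootstrap applied to the plug-in $B(\hat{\theta}_n)$ is inconsistent, by the known failure of the standard bootstrap for directionally differentiable functionals.
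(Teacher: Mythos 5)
Your proposal is correct and matches the paper's proof exactly: the paper invokes Theorem 2.1 of Fang and Santos (2019) together with Lemma \ref{hdd_of_lp}, which is precisely the functional Delta Method for Hadamard directionally differentiable maps that you cite. The additional domain bookkeeping you supply (using SC to guarantee finiteness and feasibility of $B(\hat{\theta}_n)$ w.p.a.1) is a careful elaboration of a step the paper leaves implicit, not a different route.
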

\begin{proof}
    \citet{santos2019} Theorem 2.1. combined with Lemma \ref{hdd_of_lp}.  
\end{proof}
Gaussianity of $B'_{\theta_0}(\mathbb{G}_0)$ is a necessary condition for bootstrap consistency \cite{santos2019}. Consequently, the empirical literature using bootstrap with the plug-in estimator has implicitly relied on this assumption. However, $B'_{\theta_0}(\mathbb{G}_0)$ is not normal unless full Hadamard differentiability holds, i.e. $B'_{\theta_0}(h)$ is linear in $h$. As \eqref{hdd_lp} suggests, this is not generally the case. Theorem 3.1 in \citet{santos2019} establishes that bootstrap is inconsistent for the distribution when $B'_{\theta_0}(h)$ fails to be linear. The typically applied plug-in and bootstrap combination is then only valid under further restrictive assumptions\footnote{The full-support condition in Proposition \ref{prop_4} is imposed for expositional purposes. Sufficiency of conditions i, ii holds generally, whereas necessity obtains whenever the derivative $B_{\theta_0}'(h)$ is not linear over $\text{Supp}(\mathbb{G}_0)$ when $\mathcal{A}(\theta_0), \Uplambda(\theta_0)$ are not singletons, see Lemma \ref{lemma_lin_fun} in Appendix.}:
\begin{prop}\label{prop_4}
    If SC and Assumption B1 hold, $\text{Supp}(\mathbb{G}_0) = \mathbb{R}^S$ and $\theta^*_n$ satisfies Assumption 3 in \citet{santos2019}, bootstrap is consistent in the sense that
    \begin{align*}
        \underset{f \in BL_1}{\sup} \left|\mathbb{E}[f(\sqrt{n}(B(\theta^{*}_n) - B(\hat{\theta}_n)))|\mathcal{D}_n] - \mathbb{E}[f(B'_{\theta_0}(\mathbb{G}_0))]\right| = o_p(1), \
    \end{align*}    
    where $BL_1 \equiv \{f:\mathbb{R}\to\mathbb{R} \text{ s.t. }|f(a)|\leq 1, |f(a) - f(b)|~  \leq |a - b| ~ \forall a, b \in \mathbb{R}\}$, if and only if i) NFF and ii) SMFCQ also hold at $\theta_0$. 
\end{prop}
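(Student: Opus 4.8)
\textbf{The plan} is to reduce the claimed equivalence, via the theory of directionally differentiable functionals, to a purely geometric statement about the Hadamard directional derivative $B'_{\theta_0}$ in \eqref{hdd_lp}, and then to settle that statement by probing linearity of $B'_{\theta_0}$ one coordinate block of $\theta$ at a time. For the reduction: $B(\cdot)$ is Hadamard directionally differentiable at $\theta_0$ tangentially to $\mathbb{R}^S$ by Lemma \ref{hdd_of_lp} (which uses SC), Assumption B1 holds, and $\theta^*_n$ is assumed to satisfy Assumption 3 of \citet{santos2019}; hence Theorem 3.1 of \citet{santos2019} applies and gives that the bootstrap is consistent in the stated $BL_1$ sense if and only if $B'_{\theta_0}$ is linear on $\text{Supp}(\mathbb{G}_0)$. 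Since $\text{Supp}(\mathbb{G}_0) = \mathbb{R}^S$, this is genuine linearity of $B'_{\theta_0}$ on all of $\mathbb{R}^S$ (the passage from ``linear on the support'' to honest linearity, and the weaker general statement when the support is a proper subspace, being the content of Lemma \ref{lemma_lin_fun}). So it remains to show: $B'_{\theta_0}$ is linear on $\mathbb{R}^S$ if and only if $|\mathcal{A}(\theta_0)| = 1$ (NFF) and $|\Uplambda(\theta_0)| = 1$, the latter being equivalent to SMFCQ at $\theta_0$ since SMFCQ is precisely uniqueness of the Lagrange multiplier.

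\textbf{For the geometric equivalence.} Sufficiency is immediate: with $\mathcal{A}(\theta_0) = \{x^*\}$ and $\Uplambda(\theta_0) = \{\lambda^*\}$ the $\inf$ and $\sup$ in \eqref{hdd_lp} are vacuous, so $B'_{\theta_0}(h) = h_p'x^* + \sum_i \lambda^*_i(h_{c_i} - h_{M_i}'x^*)$, which is linear in $h$. For necessity, suppose $B'_{\theta_0}$ is linear on $\mathbb{R}^S$; then it restricts to a linear map on every coordinate subspace. Taking increments $h = (h_p, 0, 0)$ in \eqref{hdd_lp} annihilates all $\lambda$-terms and yields $B'_{\theta_0}(h_p, 0, 0) = \inf_{x\in\mathcal{A}(\theta_0)} h_p'x$; but $\mathcal{A}(\theta_0)$ is nonempty compact convex by Assumption A0, and $g(h_p) := \inf_{x\in\mathcal{A}(\theta_0)} h_p'x$ is linear only when $\mathcal{A}(\theta_0)$ is a singleton — otherwise, picking $x_0 \ne x_1$ in $\mathcal{A}(\theta_0)$ and $h_p = x_0 - x_1$ gives $g(h_p) + g(-h_p) \le h_p'x_1 - h_p'x_0 = -\|x_0 - x_1\|^2 < 0$, contradicting $g(h_p) + g(-h_p) = 0$. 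Hence NFF. Symmetrically, increments $h = (0, 0, h_c)$ leave $B'_{\theta_0}(0, 0, h_c) = \sup_{\lambda\in\Uplambda(\theta_0)} \lambda'h_c$; finiteness of $B'_{\theta_0}$ (from Lemma \ref{hdd_of_lp}) precludes any recession direction of $\Uplambda(\theta_0)$, so $\Uplambda(\theta_0)$ is compact, and the identical ``support function linear $\Rightarrow$ singleton'' argument forces $|\Uplambda(\theta_0)| = 1$, i.e.\ SMFCQ.

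\textbf{The main obstacle} is conceptual rather than computational: one must notice that linearity of $B'_{\theta_0}$ on all of $\mathbb{R}^S$ can be tested separately on the $p$-block and the $c$-block of increments, which collapses the $\inf$–$\sup$ in \eqref{hdd_lp} to a pure infimum (pinning down $\mathcal{A}(\theta_0)$, hence NFF) and a pure supremum (pinning down $\Uplambda(\theta_0)$, hence SMFCQ) and so cleanly decouples the two conditions; after that step everything reduces to the elementary fact that a support function is linear iff its generating set is a point. Two places ask for care. First, the hypothesis $\text{Supp}(\mathbb{G}_0) = \mathbb{R}^S$ is exactly what upgrades \citet{santos2019}'s ``linear on the support'' to honest linearity and is what makes the ``only if'' direction go through; without it only sufficiency of (i)–(ii) is general, which is why the statement is qualified. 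Second, compactness of $\Uplambda(\theta_0)$ should be justified — either by boundedness of the dual optimal set under Slater's condition, or, as above, by finiteness of $B'_{\theta_0}$. As a cross-check I would also verify the necessity direction through the alternative representation $B'_{\theta_0}(h) = \inf_{x\in\mathcal{A}(\theta_0)} s(h, \psi(x, \Uplambda(\theta_0)))$, where $\psi(x,\lambda)$ is the affine ``gradient'' vector whose $p$- and $c$-blocks are $x$ and $\lambda$: linearity forces a common point in every $\psi(x, \Uplambda(\theta_0))$, and reading off those two blocks again collapses both $\mathcal{A}(\theta_0)$ and $\Uplambda(\theta_0)$ to singletons.
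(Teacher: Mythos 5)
Your proof is correct and follows essentially the same route as the paper: reduce via \citet{santos2019} Theorem 3.1 to linearity of $B'_{\theta_0}$ (using $\text{Supp}(\mathbb{G}_0) = \mathbb{R}^S$ to upgrade ``linear on the support'' to genuine linearity), identify SMFCQ with uniqueness of the Lagrange multiplier, and then test linearity separately on $p$-increments and $c$-increments, which collapses the $\inf$--$\sup$ in \eqref{hdd_lp} to the support functions of $\mathcal{A}(\theta_0)$ and $\Uplambda(\theta_0)$ respectively and forces each to be a singleton. The paper packages your inline ``support function linear $\Rightarrow$ point'' argument as Lemma \ref{lemma_lin_fun} and cites \citet{kyparisis1985uniqueness} for the SMFCQ $\iff$ $|\Uplambda(\theta_0)|=1$ equivalence, but the substance is the same.
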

\begin{remark}
    Strict Mangasarian-Fromovitz Constraint Qualification (SMFCQ) that we define in Appendix \ref{ap_proof_prop4} is equivalent to $|\Uplambda(\theta_0)| = 1$, which is necessary for bootstrap consistency. However, it depends on both the vector $p$ and on $\Theta_I$. In Appendix \ref{ap_licq}, we show that \( |\Uplambda(\cdot)| = 1 \) uniformly over all objective functions minimized at some \( x \in \Theta_I \) if and only if (pointwise) LICQ holds at \( x \). Both SMFCQ and LICQ are high-level and may be restrictive in models with potentially overidentifying constraints, as in Sections \ref{cmiv_assumptions} and \ref{returns_to_education}.  
\end{remark}
\begin{remark}
    A consistent estimator for the distribution of $B(\hat{\theta}_n)$ under SC can be obtained by combining the Functional Delta Method (FDM) of \citet{santos2019} with the Numerical Delta Method (NDM) given in \citet{hong2015numerical}, see Appendix \ref{ap_inf_lp_sc}. In Appendix \ref{ap_bias_pen}, we also show that the penalty function estimator is H.d.d. in $\theta$ , so FDM + NDM combination yields exact inference for it. This approach relies on an arbitrarily selected NDM step size and a fixed $w$, and so does not appear satisfactory. Finally, the set-expansion estimator enforces SC and has a Lipschitz-bounded bias (see Appendix \ref{ap_inf_lp_sc}). A conservative inference procedure based on it can then also be obtained via FDM and  NDM.
\end{remark}

\subsection{Impossibility result}\label{section_uniformity}
The optimization problem \eqref{statement_LP} is challenging to study under no further assumptions, as it may exhibit instability under arbitrary perturbations of parameters. We now show that this not only leads the plug-in $B(\hat{\theta}_n)$ to fail pointwise, but also precludes the existence of a uniformly consistent LP estimator over the unrestricted set of measures. 

We first establish a new impossibility result that complements the findings of \citet{hirano}. Specifically, we show that no uniformly consistent estimator exists for any discontinuous functional from the space of probability measures endowed with the total variation norm to an arbitrary metric space $(\mathcal{V}, \rho)$. 

\begin{lemma}\label{discontinuity_lemma}
    Suppose a functional $V: (\mathcal{P}, ||\cdot||_{TV}) \to (\mathcal{V}, \rho)$ is discontinuous at $\mathbb{P}_0 \in \mathcal{P}$. Then, there exists no uniformly consistent estimator $\hat{V}_n = \hat{V}_n(X)$, which is a sequence of measurable functions of the data $X \sim \mathbb{P}^n$. Moreover, if $\varepsilon > 0$ is a lower bound on the discontinuity, i.e. for any $\delta > 0$ there exists $\P_1 \in \mathcal{P}$ such that $||\P_0 - \P_1||_{TV}< \delta$ and $\rho(V(\P_0), V(\P_1)) > \varepsilon$, then
    \begin{align*}
        \underset{\hat{V}_n}{\inf ~ }\underset{\mathbb{P} \in \mathcal{P}}{\sup ~} \mathbb{E}_\mathbb{P}\left[\rho\left(V(\mathbb{P}), \hat{V}_n(X(\mathbb{P}^n))\right)\right] \geq \frac{\varepsilon}{4}, \quad  \forall n \in \mathbb{N},
    \end{align*}
    where infinum is taken over all measurable functions of the data. 
\end{lemma}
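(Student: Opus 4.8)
The plan is to use Le Cam's two-point (binary testing) method, adapted to the setting where the two competing hypotheses become arbitrarily close in total variation at the single sample level, so that no test — and hence no estimator — can distinguish the corresponding product measures uniformly well. Concretely, I would fix $\delta > 0$ and pick $\P_1 = \P_1(\delta) \in \mathcal{P}$ as furnished by the discontinuity hypothesis, so that $\|\P_0 - \P_1\|_{TV} < \delta$ while $\rho(V(\P_0), V(\P_1)) > \varepsilon$. The key quantitative input is that total variation between product measures does not blow up too fast: $\|\P_0^n - \P_1^n\|_{TV} \le n\|\P_0 - \P_1\|_{TV} < n\delta$ (subadditivity of TV over independent coordinates, or equivalently a hybrid/telescoping argument). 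Therefore, given any fixed $n$, choosing $\delta < 1/(2n)$ forces $\|\P_0^n - \P_1^n\|_{TV} < 1/2$, i.e. the two data-generating laws are statistically close even after $n$ i.i.d. draws.

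The second step is the standard reduction from estimation to testing. For any estimator $\hat{V}_n$, define the ``decision'' that reports $\P_0$ if $\rho(V(\P_0), \hat{V}_n(X)) \le \varepsilon/2$ and reports $\P_1$ otherwise. On the event that $\hat{V}_n$ is within $\varepsilon/2$ of $V(\P_0)$ it cannot simultaneously be within $\varepsilon/2$ of $V(\P_1)$, by the triangle inequality and $\rho(V(\P_0), V(\P_1)) > \varepsilon$; similarly near $V(\P_1)$. Hence at least one of the two error probabilities of this test is controlled by a risk term, and Le Cam's inequality gives
\begin{align*}
    \P_0^n\!\left[\rho(V(\P_0), \hat V_n(X)) > \tfrac{\varepsilon}{2}\right] + \P_1^n\!\left[\rho(V(\P_1), \hat V_n(X)) > \tfrac{\varepsilon}{2}\right] \;\ge\; 1 - \|\P_0^n - \P_1^n\|_{TV} \;>\; \tfrac12 .
\end{align*}
So for each fixed $n$, taking $\delta$ small enough, at least one of $\P \in \{\P_0, \P_1\}$ has $\P^n[\rho(V(\P), \hat V_n) > \varepsilon/2] > 1/4$. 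This already kills uniform consistency: a uniformly consistent estimator would need $\sup_{\P \in \mathcal{P}} \P^n[\rho(V(\P), \hat V_n) > \varepsilon/2] \to 0$, contradicting the bound $> 1/4$ holding for every $n$ (with a $\P$ that may depend on $n$).

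For the quantitative risk lower bound, I would convert the probability bound into an expectation bound via Markov: on the bad event $\rho \ge \varepsilon/2$, so $\mathbb{E}_\P[\rho(V(\P), \hat V_n)] \ge (\varepsilon/2)\cdot \P^n[\rho > \varepsilon/2] > (\varepsilon/2)(1/4) = \varepsilon/8$; taking the sup over $\P$ and then the inf over estimators, and being slightly more careful (one can get $\ge \varepsilon/4$ by optimizing the split of the testing error, e.g. using that $\max$ of the two error probabilities is $\ge \frac12(1 - \|\P_0^n - \P_1^n\|_{TV})$ and choosing $\delta$ so the TV gap is below $1/2$, yielding each relevant probability $\ge 1/4$ for the worse hypothesis, then $\mathbb{E} \ge \frac{\varepsilon}{2}\cdot\frac12 = \frac{\varepsilon}{4}$ after the triangle-inequality bookkeeping), gives
\begin{align*}
    \inf_{\hat V_n} \sup_{\P \in \mathcal{P}} \mathbb{E}_\P\!\left[\rho\!\left(V(\P), \hat V_n(X(\P^n))\right)\right] \;\ge\; \frac{\varepsilon}{4}, \qquad \forall n \in \mathbb{N}.
\end{align*}
The main obstacle — really the only delicate point — is getting the constant exactly $\varepsilon/4$: one must choose $\delta = \delta_n$ so that $n\delta_n$ is small enough that $\|\P_0^n - \P_1^n\|_{TV}$ is bounded away from $1$ by the right margin, and then track the triangle-inequality loss of $\varepsilon$ versus $\varepsilon/2$ carefully when splitting the combined testing error across the two hypotheses; the rest (subadditivity of TV over product measures, the estimation-to-testing reduction, Markov) is routine. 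Note also that this step is where the total-variation topology is essential — it is precisely what makes $\|\P_0^n - \P_1^n\|_{TV}$ controllable in $n$, which would fail for weaker metrics; and the conclusion then transfers to the LP value $B(\P)$ because, as argued in the surrounding text, $B$ is TV-discontinuous at measures where Slater's condition fails.
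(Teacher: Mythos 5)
Your approach is essentially the paper's: Le Cam's two-point method combined with the subadditivity bound $\|\P_0^n - \P_1^n\|_{TV} \le n\|\P_0 - \P_1\|_{TV}$, using the discontinuity hypothesis to choose $\P_1$ arbitrarily close to $\P_0$ in TV while $\rho(V(\P_0),V(\P_1)) > \varepsilon$. That part is right, and it is the argument in the paper's Lemma C.1 plus Appendix C.1.

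The one place you do not quite land is the constant. Your main line — choose $\delta < 1/(2n)$ so the product-level TV gap is below $1/2$, get $\max_i \P_i^n[\rho > \varepsilon/2] > 1/4$, then apply Markov — gives $\mathbb{E} \ge \frac{\varepsilon}{2}\cdot\frac14 = \frac{\varepsilon}{8}$, not $\frac{\varepsilon}{4}$; the ``$\frac{\varepsilon}{2}\cdot\frac12$'' in your ``more careful'' aside is an arithmetic slip, because the probability you end up with under that TV bound is $\ge 1/4$, not $\ge 1/2$. To reach $\varepsilon/4$ you need two changes, both of which the paper makes. First, drive the $n$-fold TV gap to $0$, not merely below $1/2$: the paper chooses $\|\P_0 - \P_1\|_{TV} < \vartheta n^{-1}$ so that $\|\P_0^n - \P_1^n\|_{TV} \le \vartheta$, and later takes $\vartheta \to 0$. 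Second, use the expectation-level form of Le Cam's two-point bound (the paper cites Wainwright's inequality 15.14), which, for $\rho(V(\P_0),V(\P_1)) > \varepsilon$, gives directly
\begin{align*}
\inf_{\hat V_n} \max_{i\in\{0,1\}} \mathbb{E}_{\P_i^n}\bigl[\rho\bigl(V(\P_i),\hat V_n\bigr)\bigr] \;\ge\; \frac{\varepsilon}{4}\bigl(1 - \|\P_0^n - \P_1^n\|_{TV}\bigr) \;\ge\; \frac{\varepsilon(1-\vartheta)}{4},
\end{align*}
without an intermediate pass through a probability bound and Markov. Sending $\vartheta \to 0$ then gives exactly $\varepsilon/4$. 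With these two adjustments your argument matches the paper's proof.
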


In this section, we treat the parameter $\theta_0$ as a functional of the underlying probability measure $\mathbb{P} \in \mathcal{P}$. We then make the following assumption on the pair $\theta_0(\cdot), \mathcal{P}$:
\begin{namedass}[U0 (Uniform setup)] The functional $\theta_0(\cdot)$ and the set of probability measures $\mathcal{P}$ are such that: i) $\theta_0: (\mathcal{P}, ||\cdot||_{TV}) \to (\mathbb{R}^S, ||\cdot||_2)$ is continuous; ii) $\theta_0(\mathcal{P}) = \{ y \in \mathbb{R}^S \text{ s.t. } \Theta_I(y) \ne \emptyset, \Theta_I(y) \subseteq \mathcal{X}\}$ for a known and fixed compact $\mathcal{X}$
\end{namedass}
Assumption U0 formalizes the notion of the \textit{unrestricted set of measures}. U0.i demands that the true parameter be continuous in $P$, which holds, for example, in AICM models (see Example \ref{aicm_cont_rate}). U0.ii assumes that $\theta_0$ has full support over $\mathcal{P}$, meaning that any $\theta$ corresponding to a consistent model with $\Theta_I(\theta) \subseteq \mathcal{X}$ is attained at some $P \in \mathcal{P}$.
\begin{theor}\label{no_unif_cons_est}
    Under U0, there exists no uniformly consistent estimator of $B(\theta_0)$.    
\end{theor}
\begin{proof}
    Combining U0, Lemma \ref{discontinuity_lemma} and Proposition \ref{plug_in_incons}.
\end{proof}
Given this negative result, it is natural to seek a minimal restriction on $\mathcal{P}$ for which a uniformly consistent estimator may exist. We now show that the condition ensuring uniform consistency of the penalty function approach can be considered minimal in the sense to be made precise in Proposition \ref{delta_properites}
.

Our examination of Example \ref{lm_example} has shown that $w_n$ cannot be allowed to diverge faster than $\sqrt{n}$, as otherwise the penalty approach may fail at measures where SC fails. At the same time, if all KKT vectors $\lambda^* \in \Uplambda(\theta_0)$ grow large, an arbitrarily large $w$ is needed for Assumption A1 to hold. This occurs when optimal vertices become `sharp', i.e. all relevant full-rank submatrices of binding inequality constraints grow closer to being degenerate. The condition that ensures uniform consistency of the penalty function approach should therefore bound such `sharpness'.

We begin our construction with an existence result based on the Caratheodory’s Conical Hull Theorem.

\begin{prop}\label{theor_j_star}
    The problem \eqref{lp_init} admits a solution $x^*$ and the associated KKT vector $\lambda^*$ such that for some index subset $J^{*} \subseteq \{1, \dots, q\}$ with $|J^{*}| = d$, $M_{J^{*}}$ is invertible and:
    \begin{align*}
        &x^* = M_{J^{*}}^{-1}c_{J^{*}},\\ &\lambda^{*}_{J^{*}} = M_{J^{*}}^{-1 \prime}p,\\
        &\lambda^{*}_i = 0 ~ \text{if} ~ i \notin J^{*}.
    \end{align*}
\end{prop}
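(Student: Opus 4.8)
The plan is to start from an optimal solution–multiplier pair and use Carathéodory's conical hull theorem to prune it down to exactly $d$ active constraints with the required invertibility. By Assumption A0, the LP \eqref{lp_init} at $\theta_0$ is feasible and has finite value attained at some $x^\circ \in \mathcal{A}(\theta_0)$; by LP duality (which holds since A0 guarantees $\Uplambda(\theta_0)\ne\emptyset$), there is a KKT vector $\lambda^\circ \in \Uplambda(\theta_0)$ with $M'\lambda^\circ = p$, $\lambda^\circ \ge 0$, and complementary slackness $\lambda^\circ_i(M_i'x^\circ - c_i) = 0$ for all $i$. Let $A^\circ \equiv J(x^\circ;\theta_0) = \{i : M_i'x^\circ = c_i\}$ be the set of binding constraints; by complementary slackness $\mathrm{Supp}(\lambda^\circ) \subseteq A^\circ$, so $p = \sum_{i \in A^\circ} \lambda^\circ_i M_i$ lies in the conical hull of $\{M_i : i \in A^\circ\} \subseteq \mathbb{R}^d$.

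Next I would invoke Carathéodory's conical hull theorem: since $p \in \mathrm{Cone}(\{M_i\}_{i \in A^\circ})$ in $\mathbb{R}^d$, there is a subset $A^\star \subseteq A^\circ$ with $|A^\star| \le d$ and nonnegative coefficients $(\mu_i)_{i \in A^\star}$ such that $p = \sum_{i \in A^\star} \mu_i M_i$ and, moreover, the vectors $\{M_i : i \in A^\star\}$ are linearly independent. Set $x^* \equiv x^\circ$ and define $\lambda^*_i = \mu_i$ for $i \in A^\star$ and $\lambda^*_i = 0$ otherwise; then $M'\lambda^* = p$, $\lambda^* \ge 0$, and complementary slackness still holds because $A^\star \subseteq A^\circ$ consists of constraints binding at $x^*$ — hence $(x^*,\lambda^*)$ is again an optimal primal–dual pair and $\lambda^* \in \Uplambda(\theta_0)$. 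The indices in $A^\star$ give $|A^\star| \le d$ linearly independent rows of $M$ binding at $x^*$; if $|A^\star| < d$, extend $A^\star$ to a set $J^*$ of size exactly $d$ by adjoining further rows of $M$ that keep the row set linearly independent (possible because $\mathrm{rk}(M) = d$, which is forced by A0.ii — a polytope contained in a compact set cannot contain a line, so the recession cone is trivial and $M$ has full column rank). Setting $\lambda^*_i = 0$ on the newly added indices preserves all the KKT relations.

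It remains to read off the stated formulas. With $|J^*| = d$ and $M_{J^*}$ having $d$ linearly independent rows in $\mathbb{R}^d$, $M_{J^*}$ is invertible. Since all constraints in $J^* \supseteq A^\star$ that carry positive multiplier are binding at $x^*$, and in fact every $i \in J^*$ either is binding (those in $A^\star$, and any adjoined ones can be chosen among binding constraints, or else carry zero multiplier harmlessly — see below), we get $M_{J^*} x^* = c_{J^*}$, i.e. $x^* = M_{J^*}^{-1} c_{J^*}$. From $M'\lambda^* = p$ and $\lambda^*_i = 0$ for $i \notin J^*$ we obtain $M_{J^*}'\lambda^*_{J^*} = p$, hence $\lambda^*_{J^*} = M_{J^*}^{-1\prime} p$, and $\lambda^*_i = 0$ for $i \notin J^*$ by construction.

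The main obstacle is the extension step when $|A^\star| < d$: one must adjoin rows of $M$ that (i) keep the submatrix invertible and (ii) are consistent with $M_{J^*}x^* = c_{J^*}$. The clean way is to argue that, because $\Theta_I(\theta_0) \subseteq \mathcal{X}$ is bounded, every $x^\circ \in \mathcal{A}(\theta_0)$ can be taken to be a vertex of $\Theta_I(\theta_0)$ (an optimal face of a pointed polytope contains a vertex), so that the binding set $A^\circ$ already contains $d$ linearly independent rows; then the Carathéodory pruning can be carried out inside $A^\circ$ while the extension, if needed, also draws from $A^\circ$, and $M_{J^*}x^* = c_{J^*}$ holds automatically. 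I would make this vertex reduction explicit up front, which turns the extension step into simply completing a linearly independent subset within $A^\circ$ and eliminates any tension between invertibility and feasibility.
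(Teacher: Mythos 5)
Your argument is correct and, once you apply the vertex-reduction fix you describe at the end, it is essentially the paper's proof: start from a vertex-solution so that the binding set already contains $d$ linearly independent rows of $M$, apply Carathéodory's conical hull theorem to prune the dual support to a linearly independent subset $A^\star$ of size at most $d$, and then pad $A^\star$ up to size $d$ from within the binding set so that invertibility of $M_{J^*}$ and $M_{J^*}x^*=c_{J^*}$ hold simultaneously. You correctly identified the one subtle point — that the extension step must draw from constraints binding at $x^*$ — which is precisely why the paper's proof begins with a vertex-solution rather than an arbitrary optimal point.
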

Proposition \ref{theor_j_star} asserts that any finite and feasible LP has an optimal vertex $x^*$ at which there is a subset $J^*$ of binding constraints, such that i) the corresponding gradients form a full-rank square matrix, and ii) the objective function gradient belongs to the conical hull formed by the gradients of the constraints from $J^*$. 
\begin{namedass}[U1 ($\delta$-condition)]
     The class of measures $\overline{\mathcal{P}}$ satisfies the $\delta-$condition for a given $\delta > 0$, if
    \begin{align}\label{eqdelta}
        \underset{\P \in \overline{\mathcal{P}}}{\inf} \underset{J^{*} \in \mathcal{J}^*(\theta(\P))}{\max} ~ \sigma_d(M_{J^{*}}(\P)) > \delta,
    \end{align}
    where $\mathcal{J}^*(\theta(\P))$ collects all $J^{*}$ defined in Proposition \ref{theor_j_star} at a given $\theta(\P)$. 
\end{namedass}
The $\delta$-condition does not rule out the failure of LICQ, SC or NFF, and is weaker than the conditions under which uniform consistency of LP estimators has been established. To formalize this, let us introduce three families of measures. Firstly, denote the family of measures satisfying U1 for a given $\delta > 0$ by $\mathcal{P}^\delta$. A measure satisfies the Slater's condition if $\mathbb{P} \in \mathcal{P}^{SC} \equiv \{\mathbb{P} \in \mathcal{P}|\text{Int}(\Theta_I(\theta(\P))) \ne \emptyset \}$. Similarly, a measure satisfies a uniform $\varepsilon$-LICQ condition (as in \citet{gafarov2024simple}) if $\mathbb{P} \in \mathcal{P}^{LICQ; \varepsilon}$, where
\begin{align*}
\mathcal{P}^{LICQ; \varepsilon} \equiv \{\mathbb{P} \in \mathcal{P}|M(\mathbb{P{}})_A \in \mathbb{R}^{d\times d}, \sigma_d(M(\mathbb{P})_A) > \varepsilon ~ \forall  A \in \mathcal{V}(\mathbb{P})\},
\end{align*}
where the set $\mathcal{V}(\mathbb{P}) \subseteq 2^{[q]}$ consists of sets of indices of binding inequalities that define vertices of the polytope $\Theta_I(\theta(\mathbb{P}))$.

\begin{prop}\label{delta_properites}
    The following hold:
    \begin{enumerate}
        \item $\mathcal{P}^{Slater} \cup  \mathcal{P}^{LICQ;0} \subset \mathcal{P} = \bigcup_{\delta > 0} \mathcal{P}^{\delta}$, where the inclusion is strict
        \item $\mathcal{P}^{LICQ; \varepsilon} \subset \mathcal{P}^{\delta}$ for any $\delta \leq \varepsilon$, where the inclusion is strict 
    \end{enumerate}
\end{prop}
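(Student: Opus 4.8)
The strategy is to reduce U1 to a single scalar. For $\P\in\mathcal P$ set $\phi(\P)\equiv\max_{J^{*}\in\mathcal J^{*}(\theta(\P))}\sigma_d\!\big(M_{J^{*}}(\P)\big)$, so that $\mathcal P^{\delta}=\{\P\in\mathcal P:\phi(\P)>\delta\}$ is exactly the family of measures satisfying U1 at level $\delta$. The first thing to record is that $0<\phi(\P)<\infty$ for every $\P\in\mathcal P$: by U0.ii the set $\Theta_I(\theta(\P))$ is nonempty and, being closed and contained in the compact $\mathcal X$, compact, so \eqref{lp_init} is feasible with a finite, attained value; Proposition \ref{theor_j_star} then gives $\mathcal J^{*}(\theta(\P))\neq\emptyset$, and since this is a set of size-$d$ index sets with $M_{J^{*}}$ invertible, $\phi(\P)$ is a maximum of finitely many strictly positive numbers.

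For statement (1), the equality $\mathcal P=\bigcup_{\delta>0}\mathcal P^{\delta}$ is then immediate: every $\mathcal P^{\delta}\subseteq\mathcal P$ by construction, and conversely any $\P\in\mathcal P$ lies in $\mathcal P^{\phi(\P)/2}$. The containment $\mathcal P^{Slater}\cup\mathcal P^{LICQ;0}\subseteq\mathcal P$ holds because both families are defined as subsets of $\mathcal P$, so strictness is the only thing to check. For this I would fix any $x_0\in\mathcal X$ and take $\theta_0$ whose feasible set is the singleton $\{x_0\}$, cut out by the $2d$ inequalities $x_i\ge (x_0)_i$ and $-x_i\ge -(x_0)_i$, $i\in[d]$. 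By U0.ii this $\theta_0$ equals $\theta_0(\P)$ for some $\P\in\mathcal P$ (so $\P\in\mathcal P$), while $\text{Int}(\Theta_I)=\emptyset$ so SC fails, and the unique vertex $x_0$ has all $2d>d$ inequalities binding, so $M_A\notin\mathbb R^{d\times d}$ for the corresponding $A\in\mathcal V(\P)$ and LICQ fails. Hence $\P\in\mathcal P\setminus(\mathcal P^{Slater}\cup\mathcal P^{LICQ;0})$.

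For statement (2), the inclusion $\mathcal P^{LICQ;\varepsilon}\subseteq\mathcal P^{\delta}$ for $\delta\le\varepsilon$ is the step with content. Given $\P\in\mathcal P^{LICQ;\varepsilon}$, apply Proposition \ref{theor_j_star} to obtain an optimal $x^{*}$ and $J^{*}$ with $|J^{*}|=d$, $M_{J^{*}}$ invertible and $x^{*}=M_{J^{*}}^{-1}c_{J^{*}}$; then $x^{*}$ is the unique point at which the constraints in $J^{*}$ bind, hence a vertex of $\Theta_I$. Letting $A=\{j\in[q]:M_jx^{*}=c_j\}$ be the full set of constraints binding at $x^{*}$, we have $J^{*}\subseteq A$ and $A\in\mathcal V(\P)$, so membership in $\mathcal P^{LICQ;\varepsilon}$ forces $M_A\in\mathbb R^{d\times d}$, i.e.\ $|A|=d$; therefore $A=J^{*}$ and $\sigma_d(M_{J^{*}})=\sigma_d(M_A)>\varepsilon\ge\delta$, so $\phi(\P)>\delta$ and $\P\in\mathcal P^{\delta}$. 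For strictness I would reuse the singleton polytope above, but multiply every constraint row and the matching entry of $c$ by a constant $K>\delta$: this leaves $\Theta_I=\{x_0\}$ (hence also the failure of SC and of LICQ) unchanged, while every invertible $d\times d$ submatrix of the scaled constraint matrix is $K$ times a signed permutation matrix, so $\sigma_d(M_{J^{*}})=K>\delta$ for all $J^{*}\in\mathcal J^{*}$ and $\phi(\P)=K>\delta$. Thus $\P\in\mathcal P^{\delta}\setminus\mathcal P^{LICQ;\varepsilon}$, and the inclusion is strict.

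I expect the only real difficulty to be interpretive: one has to read $\mathcal P^{LICQ;\varepsilon}$ as simultaneously requiring non-degeneracy of every vertex ($M_A\in\mathbb R^{d\times d}$, i.e.\ $|A|=d$) and a uniform singular-value floor, and one has to confirm, via the full-support clause U0.ii, that the singleton polytope is genuinely realized as $\Theta_I(\theta_0(\P))$ by some $\P\in\mathcal P$ (and can be placed in $\mathcal X$, which only needs $\mathcal X\neq\emptyset$). The one genuine deduction is the identification $A=J^{*}$ in statement (2): this is precisely where the non-degeneracy built into $\mathcal P^{LICQ;\varepsilon}$ is used, and everything else is bookkeeping with the definitions.
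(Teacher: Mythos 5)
Your proof is correct and follows essentially the same approach as the paper's (very terse) proof: reducing the $\delta$-condition to finiteness and strict positivity of $\phi(\mathbb{P})=\max_{J^*}\sigma_d(M_{J^*})$ via Proposition \ref{theor_j_star}, and exhibiting a point-identified polytope at which SC fails, LICQ fails, but the $\delta$-condition holds. The only cosmetic difference is the choice of counterexample: the paper's proof cites part (ii) of the proof of Proposition \ref{plug_in_incons}, which uses a singleton cut out by $q=d+1$ inequalities (the positive orthant intersected with a half-space), whereas you use the $q=2d$ box inequalities $x_i\geq(x_0)_i$, $-x_i\geq-(x_0)_i$; both achieve the same ends, and your rescaling device for part (2)'s strictness claim works equally well with either. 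The deduction $A=J^*$ from $J^*\subseteq A$, $|J^*|=d$, and the cardinality constraint $|A|=d$ imposed by $\mathcal{P}^{LICQ;\varepsilon}$ is exactly the content behind the paper's ``follows by definition.''
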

\begin{proof}
    Part 1 follows by Proposition \ref{theor_j_star}, the definition of $\mathcal{P}^\delta$, and using part ii) in Appendix \ref{ap_proof_prop_21}. Part 2 follows by definition of $\mathcal{P}^{LICQ;\varepsilon}$. 
\end{proof}

Intuitively, the $\delta>0$ in Assumption U1 merely parametrizes the degree of irregularity that the researcher is willing to allow for the identified polytope over the considered set of measures. The resulting family of sets `covers' the unconstrained set of measures asymptotically as $\delta$ decreases to $0$. Uniform LICQ and SC, on the contrary, both restrict the set of measures. That is because measures like the one in Figure \ref{fig_point_cons_b0} do not belong to either $\mathcal{P}^{LICQ;\varepsilon}$ for any $\varepsilon \geq 0$, or $\mathcal{P}^{SC}$. 

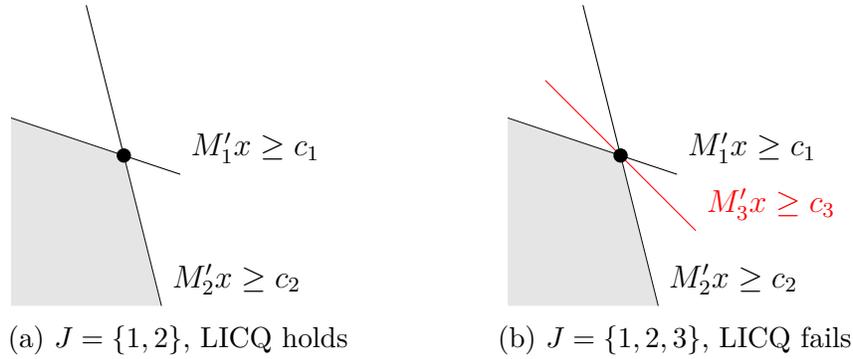
\begin{figure}[h]
    \centering
    \begin{subfigure}{0.4\linewidth}
    \centering
        \begin{tikzpicture}
            \draw[] (-1.5,0.5) -- (0.75,-0.25) node[above right] {$M'_1x \geq c_1$};
            \draw[] (-0.5,2) -- (0.5,-2) node[above right] {$M'_2x \geq c_2$};
            \draw[transparent, red] (-1,1) -- (1,-1) node[above right] {$M'_3x \geq c_3$};
            \fill[nearly transparent, gray!80] (-1.5,-2) -- (-1.5,0.5) -- (0,0) -- (0.5,-2) -- (-1.5,-2) -- cycle;
            \filldraw[black] (0,0) circle (2.5pt);
        \end{tikzpicture}
        \caption{$J = \{1,2\}$, LICQ holds}
        
    \end{subfigure}%
    \begin{subfigure}{0.4\linewidth}
    \centering
        \begin{tikzpicture}
            \draw[] (-1.5,0.5) -- (0.75,-0.25) node[above right] {$M'_1x \geq c_1$};
            \draw[] (-0.5,2) -- (0.5,-2) node[above right] {$M'_2x \geq c_2$};
            \fill[nearly transparent, gray!80] (-1.5,-2) -- (-1.5,0.5) -- (0,0) -- (0.5,-2) -- (-1.5,-2) -- cycle;
            \draw[red] (-1,1) -- (1,-1) node[above right] {$M'_3x \geq c_3$};
            \filldraw[black] (0,0) circle (2.5pt);
        \end{tikzpicture}
        \caption{$J = \{1,2,3\}$, LICQ fails}
        
    \end{subfigure}
    \caption{Illustrating the difference between LICQ and $\mathcal{P}^\delta$. In (a) and (b) the $\delta-$condition holds with the same $\delta = \sigma_2(M_{\{1,2\}}) \gg 0$.}
    \label{LICQ_vs_Delta}
\end{figure}
\begin{excont}[cont'd]
Figure \ref{b_delta} plots the range of $b-$values that correspond to the measures satisfying the $\delta-$condition for a given $\delta > 0$ in \eqref{example_prop1}. The $\delta-$condition cuts off an interval of $b$ along which the optimal vertex becomes `too sharp'. 
Recall that the case $b = 0$ leads to the failure of SC. However, it satisfies the $\delta-$condition for a relatively large $\delta$, because at the optimum $x^* = (-1, -1)'$ there is a set $J^* = \{1,3\}$ from Proposition \ref{theor_j_star} such that the relevant matrix of binding constraints $M_{J^*}$ has the smallest singular value $\sigma_2(M_{J^*}) \approx 0.62 \gg 0$.
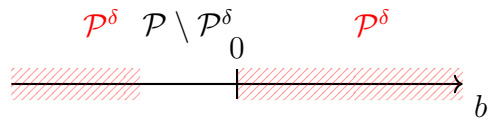
\begin{figure}[h]
    \centering
    \begin{tikzpicture}
        \draw[thick,->] (-3, 0) -- (3, 0) node[below right] {\( b \)};
        
        \draw[thick] (0, -0.2) -- (0, 0.2);
        \node[above] at (0, 0.2) {\( 0 \)};
        \node[above] at (-0.65, 0.4) {\( \mathcal{P}\setminus \mathcal{P}^\delta \)};

        \fill[pattern= north east lines, pattern color = red, opacity=0.6] (-3, -0.2) -- (-1.3, -0.2) -- (-1.3, 0.2) -- (-3, 0.2) -- cycle;
        \fill[pattern = north east lines, pattern color = red, opacity=0.6] (0, -0.2) -- (3, -0.2) -- (3, 0.2) -- (0, 0.2) -- cycle;
        
        \node[above, red] at (-1.8, 0.5) {\( \mathcal{P}^{\delta} \)};
        \node[above, red] at (1.8, 0.5) {\( \mathcal{P}^{\delta} \)};
    \end{tikzpicture}
    \caption{The set of $b$ in problem \eqref{example_prop1} satisfying a $\delta-$condition.}
    \label{b_delta}
\end{figure}
\end{excont}
\begin{example}
   There are LPs in which SC, LICQ and NFF all fail, but the $\delta-$condition is satisfied for a relatively large $\delta$. One example is the problem
   \begin{align*}
       \min -x_1 + x_2, \quad \text{s.t.} \quad  x_2 \leq x_1, x_2 \geq x_1, x_1 \in [-1;1],
   \end{align*}
   where the  $\delta-$condition is satisfied for $\delta > \sigma_2(M_{\{1,3\}}) \approx 0.62$. It is large relative to the values of $\delta$, for which the penalty vector suggested in Appendix \ref{ap_pensel} is valid for small $n$. There are flat faces, as any pair with $x_2 = x_1$ and $x_1 \in [-1;1]$ is a solution, SC fails, as $\text{Int}(\Theta_I) = \emptyset$, and LICQ fails, as at an optimal $x_2 =  x_1 = -1$ there are three binding constraints.
\end{example}

\subsection{Uniform consistency of penalty function estimator}\label{un_cons_pen_func}
The penalty function estimator converges to the LP value $B(\P)$ a.s. at rate $\sqrt{n}w_n^{-1}$, uniformly over the set of distributions that satisfy the $\delta-$condition for some $\delta >0$. 
\begin{theor}\label{uniform_penalty_theorem}
    Suppose that i) $\hat{\theta}_n = \hat{\theta}_n(\P)$ converges to $\theta_0(\mathbb{P})$ a.s. uniformly over $\mathcal{P}$ at rate $\overline{r}_n \uparrow \infty$, i.e. for all $r_n \uparrow \infty$ with $r_n = o(\overline{r}_n)$ and any $\varepsilon > 0$,
    \begin{align}\label{eq83}
        \underset{n \to \infty}{\lim} \underset{\mathbb{P} \in \mathcal{P}}{\sup~} \mathbb{P}[\underset{m \geq n}{\sup}r_m||\hat{\theta}_m - \theta(\mathbb{P})|| \geq \varepsilon] = 0, 
    \end{align}
    and ii) $w_n(\P) = w_n \to \infty$ w.p.a.1 a.s. uniformly over $\mathcal{P}$, i.e. for any $M > 0$,
    \begin{align*}
        \lim_{n \to \infty} \inf_{\P \in \mathcal{P}} \P[\inf_{m \geq n} w_m > M ] = 1.
    \end{align*}
    Then, for all $r_n \uparrow \infty$ with $r_n = o(\frac{\overline{r}_n}{w_n})$ and any $\varepsilon > 0$,
    \begin{align*}
        \sup_{\delta > 0} \underset{n \to \infty}{\lim} \underset{\mathbb{P} \in \mathcal{P}^\delta}{\sup~} \mathbb{P}[\underset{m \geq n}{\sup}r_m|\tilde{B}(\hat{\theta}_m;w_m) - B(\theta_0(\mathbb{P}))| \geq \varepsilon] = 0. 
    \end{align*}
\end{theor}
In applications, condition i) in Theorem \ref{uniform_penalty_theorem} can usually be established with rate $\overline{r}_n = \sqrt{n}$ by reference to the uniform LLN, provided $\theta_0$ is uniformly continuous in population moments. 
\begin{example}\label{aicm_cont_rate}
    In the AICM models \eqref{aicm_def}, $\theta_0$ is linear in moments of interactions of $Y(t)$ with treatment indicators and linear or hyperbolic in probabilities of $\{T = t, Z = z\}$ (see Section 3). Thus, condition i) in Theorem \ref{uniform_penalty_theorem} is established by, firstly, imposing integrability
    \begin{align*}
        \underset{C \to \infty}{\lim} \underset{\mathbb{P} \in \mathcal{P}}{\sup ~} \mathbb{E}_\mathbb{P}||\mathbb{Y}|| \mathds{1}\{||\mathbb{Y}|| > C\} = 0, \quad \mathbb{Y} \equiv (Y(t))_{t \in \mathcal{T}},
    \end{align*}
    which holds whenever bounded outcomes are assumed. If also a full-support condition holds,
    \begin{align*}
        \inf_{\mathbb{P} \in \mathcal{P}, ~ (t, z) \in \mathcal{T}\times\mathcal{Z}}\mathbb{P}[T = t, Z = z] > C,
    \end{align*}
    for some $C > 0$, then $\theta(\cdot)$ is uniformly continuous in population moments. Combining this with the LLN uniform in probability measure (see Proposition A.5.1 on p. 456 of \citet{van1996weak}) yields condition (ii). If one additionally assumes that
    \begin{align*}
        \underset{C \to \infty}{\lim} \underset{\mathbb{P} \in \mathcal{P}}{\sup ~} \mathbb{E}_\mathbb{P}||\mathbb{Y}||^2 \mathds{1}\{||\mathbb{Y}|| > C\} = 0, 
    \end{align*}
    the rate in \eqref{eq83} is $r_n  = \sqrt{n}$ (see Proposition A.5.2 on p. 457 of \citet{van1996weak}).
\end{example}

\begin{remark}
    The existence of a uniformly consistent estimator over $\mathcal{P}$ implies that $B(\cdot)$ is continuous over $\theta_0(\mathcal{P})$. However, as Example \ref{example_incons} illustrates, it is not necessarilly continuous over the support of $\hat{\theta}_n$. It is straightforward to see that the case $b = 0$ in \eqref{example_prop1} satisfies the $\delta$-condition for a relatively large $\delta$, but the plug-in estimator is still pointwise inconsistent at such measure. 
\end{remark}
\subsection{Uniform consistency of the debiased estimator}\label{section_debiased_unif}

This subsection provides a variation of the $\delta-$condition under which the debiased penalty estimator is shown to be uniformly consistent. Intuitively, the debiased estimator achieves uniform consistency whenever for all considered measures it is true that if the distance of $x \in \mathcal{X}$ from the polytope $\Theta_I(\P)$ is large, then the constraint violation $\iota'(c(\P)-M(\P)x)^+$ is also large. To develop that notion formally, we introduce the following two terms.

\begin{definition}[Face condition number]
    For a proper $k$-face\footnote{See Chapters 2 and 3 in \citet{grunbaum1967convex} for the definition. We call a $k-$face proper if $0 \leq k<d$. A $0-$face is a vertex.} of a polytope $\Theta = \{x \in \mathbb{R}^d:\tilde{M}x \geq \tilde{c}\}$, given by $f$ and described by binding constraints $A \subseteq [q]$ with $|A| \geq d - k$ such that $\text{rk}(\tilde{M}_A) = d - k$, define the face condition number to be
    \begin{align*}
        \tilde{\kappa}(f) \equiv \underset{B \subseteq A: ~ \text{rk} (\tilde{M}_B) = d-k}{\min} \sigma_{d - k}(\tilde{M}_B).
    \end{align*}
\end{definition}
\begin{definition}[Polytope condition number]
    For a polytope $\Theta$, define the polytope condition number as
    \begin{align*}
        \kappa(\Theta) = \underset{f - \text{proper face of $\Theta$}}{\min} \tilde{\kappa}(f).
    \end{align*}
\end{definition}
The following proposition simplifies the interpretation of $\kappa(\Theta)$. It asserts that in the definition of $\kappa(\Theta)$ it suffices to only consider the vertices of $\Theta$, 
\begin{prop}\label{redefinition_condition}For any polytope $\Theta$, $\kappa(\Theta)> 0$, and
\begin{align*}
    \kappa(\Theta) = \underset{f - \text{vertex of $\Theta$}}{\min} \tilde{\kappa}(f).
\end{align*}
\end{prop}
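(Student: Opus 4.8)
The plan is to show two things: first, that $\kappa(\Theta) > 0$ for any polytope $\Theta$, and second, that the minimum defining $\kappa(\Theta)$ over all proper faces is in fact attained among the vertices. Both reductions rest on the observation that every proper $k$-face $f$ of $\Theta$ is itself contained in some vertex $v$ of $\Theta$ in the sense that the binding-constraint index set $A(f)$ is a subset of the binding-constraint index set $A(v)$ of an appropriately chosen vertex; more precisely, since a proper face is a nonempty polytope of dimension $k < d$, it has at least one vertex $v$, and $v$ is also a vertex of $\Theta$ whose set of binding constraints $A(v) \supseteq A(f)$. Positivity of $\kappa(\Theta)$ is then immediate: each face condition number $\tilde\kappa(f)$ is a minimum over a finite (nonempty, since $\mathrm{rk}(\tilde M_{A(f)}) = d-k$ guarantees some $B \subseteq A(f)$ with $\mathrm{rk}(\tilde M_B) = d-k$ achieves it) collection of strictly positive singular values $\sigma_{d-k}(\tilde M_B)$, and $\kappa(\Theta)$ is a minimum of finitely many such positive numbers because a polytope has finitely many faces.

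For the harder equality, I would argue $\kappa(\Theta) = \min_{v\text{ vertex}} \tilde\kappa(v)$ by showing each side dominates the other. One direction is trivial: vertices are proper faces (with $k=0$), so the minimum over all proper faces is at most the minimum over vertices. For the reverse, fix an arbitrary proper $k$-face $f$ with binding set $A$, $\mathrm{rk}(\tilde M_A) = d-k$, and let $v$ be a vertex of $f$ (hence a vertex of $\Theta$) with $A \subseteq A(v)$ and $\mathrm{rk}(\tilde M_{A(v)}) = d$. I need to compare $\tilde\kappa(f) = \min_{B \subseteq A,\, \mathrm{rk}(\tilde M_B) = d-k}\sigma_{d-k}(\tilde M_B)$ with $\tilde\kappa(v) = \min_{C \subseteq A(v),\, \mathrm{rk}(\tilde M_C) = d}\sigma_d(\tilde M_C)$. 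The key linear-algebra fact is interlacing: given any minimizing $C \subseteq A(v)$ with $\mathrm{rk}(\tilde M_C) = d$ that realizes $\tilde\kappa(v)$, and given any minimizing $B \subseteq A$ realizing $\tilde\kappa(f)$, I can extend $B$ to a subset $B' $ of $A(v)$ with $\mathrm{rk}(\tilde M_{B'}) = d$ (possible because $\mathrm{rk}(\tilde M_{A(v)}) = d \geq d-k = \mathrm{rk}(\tilde M_B)$, so rows can be added from $A(v)$ to complete the rank); then $\sigma_d(\tilde M_{B'}) \leq \sigma_{d-k}(\tilde M_{B'}) \leq \sigma_{d-k}(\tilde M_B) = \tilde\kappa(f)$, where the first inequality is monotonicity of singular values in the index and the second is the fact that adding rows to a matrix cannot decrease any of its top singular values (equivalently, $\sigma_j(\tilde M_B) \leq \sigma_j(\tilde M_{B'})$ for a submatrix, which follows from the min-max/Courant–Fischer characterization). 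Hence $\tilde\kappa(v) \leq \sigma_d(\tilde M_{B'}) \leq \tilde\kappa(f)$, and taking the minimum over all proper faces $f$ gives $\min_v \tilde\kappa(v) \leq \kappa(\Theta)$, completing the argument.

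The main obstacle I anticipate is the bookkeeping in the last chain of inequalities — specifically, making sure that "extending $B$ to $B'$ within $A(v)$ so that the rank jumps from $d-k$ to $d$" is justified, and getting the direction of the singular-value monotonicity right (adding rows weakly increases $\sigma_j$ for each fixed $j$, while increasing the index $j$ weakly decreases $\sigma_j$). A clean way to package this is: for any row-subsets $B \subseteq B'$, $\sigma_{\mathrm{rk}(\tilde M_{B'})}(\tilde M_{B'}) \leq \sigma_{\mathrm{rk}(\tilde M_B)}(\tilde M_B)$ whenever $\mathrm{rk}(\tilde M_B) \leq \mathrm{rk}(\tilde M_{B'})$, which I would state and prove as a one-line lemma from Courant–Fischer before invoking it. Everything else — finiteness of the face lattice, nonemptiness of the minimizing sets, existence of a vertex inside every nonempty face of a (bounded) polytope — is standard and can be cited from \citet{grunbaum1967convex}.
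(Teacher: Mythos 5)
Your overall reduction follows the same route as the paper's own proof: every proper $k$-face $f$ contains a vertex $v$ of $\Theta$ with $I(f)\subseteq I(v)$, and you aim to show that any $B$ attaining $\tilde{\kappa}(f)$ extends inside $I(v)$ to a $C$ of rank $d$ with $\sigma_d(\tilde{M}_C)\leq\sigma_{d-k}(\tilde{M}_B)$, whence $\tilde{\kappa}(v)\leq\tilde{\kappa}(f)$. The positivity claim, the finiteness of the face lattice, and the trivial direction (vertices are $0$-faces, so $\kappa(\Theta)\leq\min_v\tilde{\kappa}(v)$) are all handled correctly.

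The gap is the middle step of your chain $\sigma_d(\tilde{M}_{B'})\leq\sigma_{d-k}(\tilde{M}_{B'})\leq\sigma_{d-k}(\tilde{M}_B)$, and the ``clean lemma'' that packages it. For $B\subseteq B'$, appending rows weakly \emph{increases} each singular value at a fixed index $j$, since $\tilde{M}_{B'}^\top\tilde{M}_{B'}\succeq\tilde{M}_{B}^\top\tilde{M}_{B}$ and Weyl's inequality gives $\lambda_j(\tilde{M}_{B'}^\top\tilde{M}_{B'})\geq\lambda_j(\tilde{M}_{B}^\top\tilde{M}_{B})$. So the inequality runs $\sigma_{d-k}(\tilde{M}_{B'})\geq\sigma_{d-k}(\tilde{M}_{B})$ --- the opposite direction of what you wrote, and indeed the opposite of what your own parenthetical justification (``$\sigma_j(\tilde{M}_B)\leq\sigma_j(\tilde{M}_{B'})$'') asserts. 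The packaged lemma $\sigma_{\rk(\tilde{M}_{B'})}(\tilde{M}_{B'})\leq\sigma_{\rk(\tilde{M}_B)}(\tilde{M}_B)$ is also false as stated: in $\mathbb{R}^2$ take $\tilde{M}_B=(0.1\;\;0)$ (rank $1$, smallest nonzero singular value $0.1$) and $\tilde{M}_{B'}$ with rows $(0.1,0)$, $(10,0)$, $(0,1)$ (rank $2$, $\tilde{M}_{B'}^\top\tilde{M}_{B'}=\mathrm{diag}(100.01,1)$, smallest singular value $1$); then $1>0.1$. The problem is that the extra rows you add can be nearly parallel to rows already in $B$, inflating singular values without adding rank.

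The fix has two parts, and reproduces the paper's argument. First, observe that in the definition of $\tilde{\kappa}(f)$ the minimum may be restricted to $B$ with $|B|=d-k$ exactly, and in $\tilde{\kappa}(v)$ to $|C|=d$ exactly, because at fixed rank removing redundant rows weakly lowers the relevant singular value (this \emph{is} a legitimate use of the adding-rows monotonicity, in the right direction). Second, with $|B|=d-k$ extended by precisely $k$ independent rows to $B'=B\cup\tilde{B}$ with $|B'|=d$ and $\rk(\tilde{M}_{B'})=d$, the inequality $\sigma_d(\tilde{M}_{B'})\leq\sigma_{d-k}(\tilde{M}_B)$ does \emph{not} follow from index monotonicity plus row monotonicity; it follows from a direct variational comparison of smallest singular values of the transposes. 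Writing $\sigma_{d-k}(\tilde{M}_B)=\min\{\|\tilde{M}'_B r\|:r\in\mathbb{R}^{d-k},\|r\|=1\}$ and $\sigma_d(\tilde{M}_{B'})=\min\{\|\tilde{M}'_{B'}r\|:r\in\mathbb{R}^{d},\|r\|=1\}$, and noting that $\tilde{M}'_{B'}=(\tilde{M}'_B\;\;\tilde{M}'_{\tilde{B}})$ so that $(r_B,0)\in\mathbb{R}^d$ has the same norm and image under $\tilde{M}'_{B'}$ as $r_B$ under $\tilde{M}'_B$, the second minimum is over a larger feasible set than the first, giving $\sigma_d(\tilde{M}_{B'})\leq\sigma_{d-k}(\tilde{M}_B)$. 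Substituting this for your broken middle step, the rest of your proposal goes through.
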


We are now ready to define the polytope $\delta-$condition.

\begin{namedass}[U2 (Polytope $\delta$-condition)]
    The class of measures $\overline{\mathcal{P}}$ satisfies the polytope $\delta-$condition for a given $\delta > 0$, if
    \begin{align*}
        \underset{\mathbb{P} \in \overline{\mathcal{P}}}{\inf} \kappa(\Theta_I(\mathbb{P})) \geq \delta.
    \end{align*}
\end{namedass}

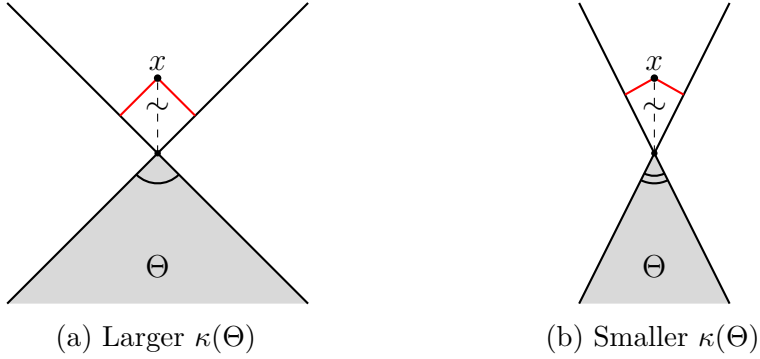
\begin{figure}
    \centering
    \begin{subfigure}{0.4\linewidth}
    \centering
    \begin{tikzpicture}
        \fill[gray!30] (-2,-2) -- (0,0) -- (2,-2) -- cycle;
        \draw[thick] (-2,2) -- (0,0) -- (2,2);
        \draw[thick] (-2,-2) -- (0,0) -- (2,-2);
        \draw[red, thick] (-0.5,0.5) -- (0,1);
        \draw[red, thick] (0,1) -- (0.5,0.5);
        \node[fill=black,circle,inner sep=1pt] (x) at (0,1) {};
        \draw[dashed] (0,0) -- (x);
        \node at (0,1.2) {$x$};
        \node at (0, .6) {$\sim$};
        \draw[thick] (0,0) ++ (-135:0.4cm) arc (-135:-45:0.4cm);
        \node at (0,-1.5) {$\Theta$};
        \filldraw[black] (0,0) circle (1pt);
    \end{tikzpicture}
    \caption{Larger $\kappa(\Theta)$}
    \end{subfigure}%
    \begin{subfigure}{0.4\linewidth}
    \centering
    \begin{tikzpicture}
        \fill[gray!30] (-1,-2) -- (0,0) -- (1,-2) -- cycle;
        \draw[thick] (-1,2) -- (0,0) -- (1,2);
        \draw[thick] (-1,-2) -- (0,0) -- (1,-2);
        \draw[red, thick] (-0.39,.78) -- (0,1);
        \draw[red, thick] (0.39,.78) -- (0,1);
        \node[fill=black,circle,inner sep=1pt] (x) at (0,1) {};
        \draw[dashed] (0,0) -- (x);
        \node at (0,1.2) {$x$};
        \node at (0, .6) {$\sim$};
        \coordinate (A) at (-2,-2);
        \coordinate (B) at (0,0);
        \coordinate (C) at (2,-2);
        \draw[thick] (0,0) ++ (-117:0.4cm) arc (-117:-63:0.4cm);
        \draw[thick] (0,0) ++ (-117:0.3cm) arc (-117:-63:0.3cm);
        \node at (0,-1.5) {$\Theta$};
        \filldraw[black] (0,0) circle (1pt);
\end{tikzpicture}
\caption{Smaller $\kappa(\Theta)$}
\end{subfigure}
\caption{Illustration of Lemma \ref{lemma_polytope_anticoncentration}. $\iota'(\tilde{c}-\tilde{M}x)^+$ is the sum of the lengths of red perpendiculars from $x$ to the inequalities that define the vertex. The distance $d(x,\Theta)$ is the same in both graphs, which is marked by $\sim$, but the $L_1-$deviation is larger in the left panel.}
\label{fig_anticonc}
\end{figure}

The polytope $\delta-$condition lower bounds the smallest singular values of all full-rank matrices that can be constructed from the vertices of the polytope. Similarly to Assumption U1, it parametrizes the unconstrained set of probability measures, since at any fixed $\mathbb{P} \in \mathcal{P}$ we have $\kappa(\Theta_I(\mathbb{P})) > 0$ by definition. Proposition \ref{delta_properites} continues to hold for U2, if one substitutes $\mathcal{P}^\delta$ with $\mathcal{P}_p^\delta$ - the family of measures satisfying U2 for a given $\delta > 0$. 

The following Lemma provides a minorization of the $L_1-$deviation from a polytope, and appears to be mathematically novel. Intuitively, $\kappa(\Theta)$ determines `sharpness' of the vertices of a polytope. The less sharp is a vertex, the greater the polytope constraints' violations need to be in terms of the distance from the polytope. Figure \ref{fig_anticonc} illustrates that point. 

\begin{lemma}\label{lemma_polytope_anticoncentration}
    For any non-empty and bounded polytope $\Theta = \{x \in \mathbb{R}^d|\tilde{M}x \geq \tilde{c}\}$:
    \begin{align*}
        \iota'(\tilde{c} - \tilde{M}x)^+ \geq d(x, \Theta)\kappa(\Theta)
    \end{align*}
\end{lemma}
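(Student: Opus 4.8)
The plan is to fix an arbitrary $x \in \mathbb{R}^d$; if $x \in \Theta$ the inequality is trivial since the left side is nonnegative and $d(x,\Theta) = 0$, so assume $x \notin \Theta$. Let $\bar{x}$ be the Euclidean projection of $x$ onto $\Theta$, so $\|x - \bar{x}\| = d(x,\Theta) > 0$, and let $A = J(\bar{x};\tilde{\theta}) = \{j \in [q]: \tilde{M}_j \bar{x} = \tilde{c}_j\}$ be the set of constraints binding at $\bar{x}$. Since $\bar{x}$ lies on the relative boundary of $\Theta$ (being a nearest point to an exterior $x$), $A$ is nonempty and the face of $\Theta$ containing $\bar{x}$ in its relative interior is cut out by exactly these constraints. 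Now I would use the standard first-order optimality (KKT) characterization of the projection: $x - \bar{x} \in \mathrm{Cone}(\{\tilde{M}_j : j \in A\})$, i.e. $x - \bar{x} = \sum_{j \in A}\mu_j \tilde{M}_j$ with $\mu_j \geq 0$. (If $\bar x$ is a vertex this is just the normal cone; in general it is the cone generated by the active gradients, and complementary slackness makes the inactive multipliers vanish.)

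The core estimate is a lower bound on $\iota'(\tilde{c} - \tilde{M}x)^+ = \sum_{j \in [q]}(\tilde{c}_j - \tilde{M}_j x)^+ \geq \sum_{j \in A}(\tilde{c}_j - \tilde{M}_j x)^+$. For $j \in A$ we have $\tilde c_j = \tilde M_j \bar x$, hence $\tilde c_j - \tilde M_j x = \tilde M_j(\bar x - x) = -\tilde M_j(x-\bar x)$; and since $x - \bar x$ is a nonnegative combination of the $\tilde M_j$, $j\in A$ — actually I should be careful about signs. With constraints written $\tilde M x \ge \tilde c$, the inward normals are $\tilde M_j$ and $x$ lies on the outward side, so $\tilde M_j(x - \bar x) \le 0$ for the active $j$ that ``point toward'' $x$; concretely $\tilde c_j - \tilde M_j x = \tilde M_j(\bar x - x) \ge 0$ because $x-\bar x$ is in the cone of the $\tilde M_j$ and $\bar x - x$ makes a nonpositive inner product with each generator... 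I would verify this carefully: writing $x - \bar x = \sum_{k\in A}\mu_k \tilde M_k$ with $\mu_k \ge 0$, then $\tilde M_j(\bar x - x) = -\sum_k \mu_k \tilde M_j \tilde M_k$, which need not be signed termwise, so instead I would argue $\sum_{j\in A}\mu_j(\tilde c_j - \tilde M_j x) = \sum_{j\in A}\mu_j \tilde M_j(\bar x - x) = -\|x-\bar x\|^2 < 0$ would be wrong-signed — the right identity is $\sum_j \mu_j(\tilde M_j x - \tilde c_j) = (x-\bar x)'(x - \bar x) = \|x-\bar x\|^2$, i.e. $\sum_{j\in A}\mu_j(\tilde c_j - \tilde M_j x) = -\|x-\bar x\|^2$, so each active constraint with $\mu_j > 0$ is violated at $x$ and $\tilde c_j - \tilde M_j x \ge 0$; the ones with $\mu_j = 0$ can be discarded. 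Restricting $A$ to a subset $B \subseteq A$ with $\mathrm{rk}(\tilde M_B) = d - k$ (where $k = \dim f$ and $f$ is the face through $\bar x$), spanning the $\mu_j > 0$ directions via Carathéodory, I get $\tilde M_B(\bar x - x)$ is a vector whose $\ell_1$-norm is $\iota'(\tilde c_B - \tilde M_B x)$, and whose $\ell_2$-norm is at least $\sigma_{d-k}(\tilde M_B)\cdot\|P(\bar x - x)\|$ where $P$ projects onto the row space of $\tilde M_B$; since $x - \bar x$ lies in that row space, $\|P(\bar x - x)\| = \|x - \bar x\| = d(x,\Theta)$. Combining, $\iota'(\tilde c - \tilde M x)^+ \ge \|\tilde M_B(\bar x - x)\|_1 \ge \|\tilde M_B(\bar x - x)\|_2 \ge \sigma_{d-k}(\tilde M_B) d(x,\Theta) \ge \tilde\kappa(f) d(x,\Theta) \ge \kappa(\Theta) d(x,\Theta)$, where the last steps use the definitions of the face and polytope condition numbers and Proposition \ref{redefinition_condition} (or just $\kappa(\Theta) \le \tilde\kappa(f)$ for every proper face).

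The main obstacle I anticipate is the bookkeeping around the cone/normal-cone characterization of the projection and choosing the right full-rank subset $B$: I need $x - \bar x$ to lie in the span of the rows of $\tilde M_B$ (so that the singular-value bound captures the full distance) while simultaneously $B$ indexes only constraints that are actually violated at $x$ (so that $\iota'(\tilde c_B - \tilde M_B x)^+ = \|\tilde M_B(\bar x - x)\|_1$ with no wasted terms). Carathéodory's conical hull theorem supplies a subset of at most $d - k$ generators whose cone still contains $x - \bar x$, and linear independence of those generators can be arranged, giving exactly $\mathrm{rk}(\tilde M_B) = d-k$ as required in the definition of $\tilde\kappa(f)$; I then need the elementary fact $\|v\|_1 \ge \|v\|_2$ and the variational characterization $\|\tilde M_B u\|_2 \ge \sigma_{d-k}(\tilde M_B)\|u\|_2$ for $u$ in the row space of $\tilde M_B$. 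Everything else is routine.
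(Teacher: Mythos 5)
There is a genuine gap in your argument, and it sits exactly where you flagged uncertainty. The problematic claim is that every constraint $j \in B$ with $\mu_j > 0$ is violated at $x$, which is what you need for the key step
\begin{equation*}
\iota'(\tilde c - \tilde M x)^+ \;\ge\; \sum_{j \in B}(\tilde c_j - \tilde M_j x)^+ \;\overset{?}{=}\; \|\tilde M_B(\bar x - x)\|_1.
\end{equation*}
This is false in general. Take $\Theta = \{x \in \mathbb{R}^2 : x_1 \ge 0,\; -x_1 + x_2 \ge 0\}$ (truncate it by extra constraints far from the origin to make it bounded) and $x = (0.5,-1)$. The projection is $\bar x = (0,0)$, which is a vertex, so $k = 0$ and you need $\mathrm{rk}(\tilde M_B) = 2$; the only admissible $B \subseteq A$ is $B = \{1,2\}$ with multipliers $\lambda = (0.5, 1)$, both strictly positive. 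Yet constraint $1$ is \emph{satisfied} at $x$ ($x_1 = 0.5 \ge 0$), so $(\tilde c_1 - \tilde M_1 x)^+ = 0 < 0.5 = |\tilde c_1 - \tilde M_1 x|$, and the claimed equality fails. Notice also that your own arithmetic contradicts your conclusion: with your (sign-flipped) convention you derive $\sum_j \mu_j(\tilde c_j - \tilde M_j x) = -\|x - \bar x\|^2 < 0$ with $\mu_j \ge 0$, and then assert that every term $\tilde c_j - \tilde M_j x$ is $\ge 0$ — those two statements cannot both hold. (With the correct KKT sign $\bar x - x = \tilde M'\lambda$, $\lambda \ge 0$, the weighted sum is $+\|\bar x - x\|^2 > 0$, but a positive weighted sum still does not force every term to be nonnegative, as the example shows.) Nor can you repair this by restricting $B$ to only the violated constraints: in the example the violated set is $\{2\}$, which has rank $1 < d-k = 2$, and $\bar x - x$ does not lie in its row space, so the singular-value step captures only part of the distance. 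Carathéodory guarantees a linearly independent subset whose cone contains $\bar x - x$, but it does not guarantee that this subset consists of violated constraints, nor that it has rank exactly $d-k$.

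The paper's proof avoids this entirely by never requiring termwise nonnegativity. It establishes the aggregate identity $\lambda^{*\prime}(\tilde c - \tilde M x) = \|\bar x - x\|^2 = d(x,\Theta)^2$ from the projection KKT conditions and complementary slackness, then uses the innocuous termwise bound $t \le t^+$ (valid regardless of sign) together with $\lambda^* \ge 0$ to get $d(x,\Theta)^2 \le \lambda^{*\prime}(\tilde c - \tilde M x)^+ \le \|\lambda^*\|_\infty\, \iota'(\tilde c - \tilde M x)^+$, and finally bounds $\|\lambda^*\|_\infty \le \|\lambda^*\| \le d(x,\Theta)/\kappa(\Theta)$ via the singular-value estimate $\|\tilde M'\lambda^*\| = d(x,\Theta) \ge \sigma_{d-k^*}(\tilde M_J)\|\lambda^*_J\| \ge \kappa(\Theta)\|\lambda^*_J\|$. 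Dividing through by $d(x,\Theta)/\kappa(\Theta)$ gives the lemma. Your singular-value and $\ell_1$-versus-$\ell_2$ pieces are sound, but the decomposition you built them into requires a sign fact that is simply not available; to close the gap you should switch to the aggregate inequality the paper uses, which only needs $\lambda^* \ge 0$ and $t \le t^+$.
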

By Theorem \ref{uniform_penalty_theorem}, the biased penalty estimator is uniformly consistent at rate $\frac{\sqrt{n}}{w_n}$ under the $\delta$-condition. The following Theorem asserts that the debiased estimator converges at least at the same rate.

\begin{theor}\label{debiased_uniform_consistency}

    Suppose that the hypotheses i) and ii) of Theorem \ref{uniform_penalty_theorem} are satisfied. Then, for all $r_n \uparrow \infty$ with $r_n = o(\overline{r}_nw^{-1}_n)$ and any $\varepsilon > 0$,
    \begin{align*}
        \sup_{\delta > 0} \underset{n \to \infty}{\lim} \underset{\mathbb{P} \in \mathcal{P}_p^\delta}{\sup~} \mathbb{P}[\underset{m \geq n}{\sup}r_m|\hat{B}(\hat{\theta}_m;w_m) - B(\theta(\mathbb{P}))| \geq \varepsilon] = 0. 
    \end{align*}
    Moreover, for all $r_n \uparrow \infty$ with $r_n = o(\overline{r}_n)$ and any $\varepsilon > 0$,
    \begin{align*}
        \sup_{\delta > 0} \underset{n \to \infty}{\lim} \underset{\mathbb{P} \in \mathcal{P}_p^\delta}{\sup~} \mathbb{P}[\underset{m \geq n}{\sup}r_m(B(\theta(\mathbb{P})) - \hat{B}(\hat{\theta}_m;w_m) )) \geq \varepsilon] = 0. 
    \end{align*}
\end{theor}

\begin{remark}
    Theorem \ref{debiased_uniform_consistency} establishes that under the polytope $\delta-$condition the debiased estimator converges at least at the rate $\sqrt{n}w^{-1}_n$ for some slowly growing $w_n$. Moreover, it converges from below at the rate $\sqrt{n}$ uniformly.
\end{remark}
\begin{remark}
    It is unclear if the $\sqrt{n}w_n^{-1}$ uniform rate is sharp. The simulation evidence in Appendix \ref{ap_sim} suggests that this may be the case, but only along the sequences of measures $\{\P_k\}_{k \in \mathbb{N}}\subseteq \mathcal{P}$, for which SC, LICQ and NFF all fail in the limit. Once such sequences are ruled out, the rate of $\sqrt{n}$ appears to be restored uniformly.
\end{remark}

\subsection{Monte Carlo}\label{monte-carlo}
Consistency and inference simulations use $10^4$ and $10^3$ repetitions respectively. 
\subsubsection{Consistency} We first describe Example \ref{example_incons} in more detail. Consider a setup with $d = 2$ variables and $q = 4$ constraints. Recall that $\theta = (p', \text{vec}(M)', c')'$. In this case, the parameters are defined as
\begin{align}\label{description_of_theta}
    p = \begin{pmatrix}
        1\\
        0
    \end{pmatrix}, \quad 
    M = \begin{pmatrix}
        -(1+b) & 1\\
        1 & -1\\
        1 & 0\\
        -1 & 0
    \end{pmatrix}, \quad c = \begin{pmatrix}
        0\\
        0\\
        -1\\
        -1
    \end{pmatrix}.
\end{align}
The sample analogues $\hat{M}_n, \hat{c}_n$ of parameters in \eqref{description_of_theta} are obtained by substituting $b$ with its estimator $\hat{b}_n$. In our simulation study, that estimator is given by $\hat{b}_n = b + n^{-1}\sum^n_{i=1} U^b_i$, where $U^b_i \sim U[-1; 1]$ for $i \in [n]$ are i.i.d.

The true parameter $b$ is in one-to-one correspondence with the underlying measure, and $\theta$ is completely described by it: $\theta_0(\mathbb{P}) = \theta_0(b(\mathbb{P}))$. Consequently, it is sufficient to index the parameter $\theta$, the identified polytope $\Theta_I$ and the program's value $B$ by $b$ only. 

Observe that the example \eqref{description_of_theta} is so engineered that $\Theta_I(\hat{b}_n)$ would never be empty or unbounded, ensuring the existence of the plug-in estimator $B(\hat{b}_n)$. A slight modification to that setup, that we consider in the next subsection, would lead the identified polytope to be empty with a potentially non-vanishing probability. 

\paragraph{Measures} We consider two values of $b$: $b = 0$ and $b = -0.05$. At $b = 0$, SC fails, because $\Theta_I(0)$ has an empty interior. LICQ also fails at $b = 0$, as at the optimum $x^* = (-1,-1)'$, inequalities $1,2,4$ are binding. There are no flat faces at $b = 0$. At $b = -0.05$, SC, LICQ and NFF all hold. The values $b = 0$ and $b = -0.05$ result in the smallest singular values of $M_{J^*(b)} (b)$ matrices that correspond to the $75-$th and the $19-$th percentiles of the Tao-Vu distribution given in Theorem \ref{tao_vu_th} in the Online Appendix. 

If $b < 0$, the norm of the `smallest' KKT vector $\lambda$ in the true LP corresponding to \eqref{description_of_theta} is proportional to $|b^{-1}|$. So, for a small negative $b$, the $\delta$-condition is only satisfied for a small value of $\delta > 0$. In this case the `optimal' $w$ is large. By that logic, the plug-in estimator, which in this case obtains as the limit $w \to \infty$, should perform well at such $b$, and potentially outperform the debiased penalty estimator. This is partly an artifact of the setup in \eqref{description_of_theta}: additional noise in one of the slated lines' intercepts would render the problem infeasible with positive probability, which would worsen the performance of the plug-in estimator in finite samples (see Figure \ref{robustness_debiased_morenoise}).

At $b \geq 0$, the $\delta$-condition is satisfied with a relatively large $\delta = \frac{\sqrt{5} - 1}{2}$, and so the `optimal' $w$ is relatively small. If $b = 0$, in 50\% of the cases, $\hat{b}_n < 0$ is estimated. If, moreover, $w_n > \text{const}\times |\hat{b}_n^{-1}|$, the debiased penalty estimator would select the incorrect maximum of $0$ in such cases. A larger $w$ at $b = 0$ thus hampers the performance of the debiased penalty estimator. 

\paragraph{Parameters} We set $w_n = \delta^{-1}_{0.2}(d) ||p||\frac{\ln \ln n}{\ln \ln 100}$, where $\delta_{\alpha}(d)$ is the $\alpha-$quantile of the $d-$dimensional Tao-Vu distribution given in Theorem \ref{tao_vu_th} in Appendix \ref{ap_pensel}. To ensure the same expansion rate for the set-expansion estimator, we set $\sqrt{\kappa_n} = \kappa_0 \times \ln \ln n$. There is no guidance as to the selection of $\kappa_0$. Our baseline is $\kappa_0 = 0.1$, and we explore other values in Figure \ref{robustness_debiased}.
\begin{figure}[h]
    \centering
    \includegraphics[width=0.9\linewidth]{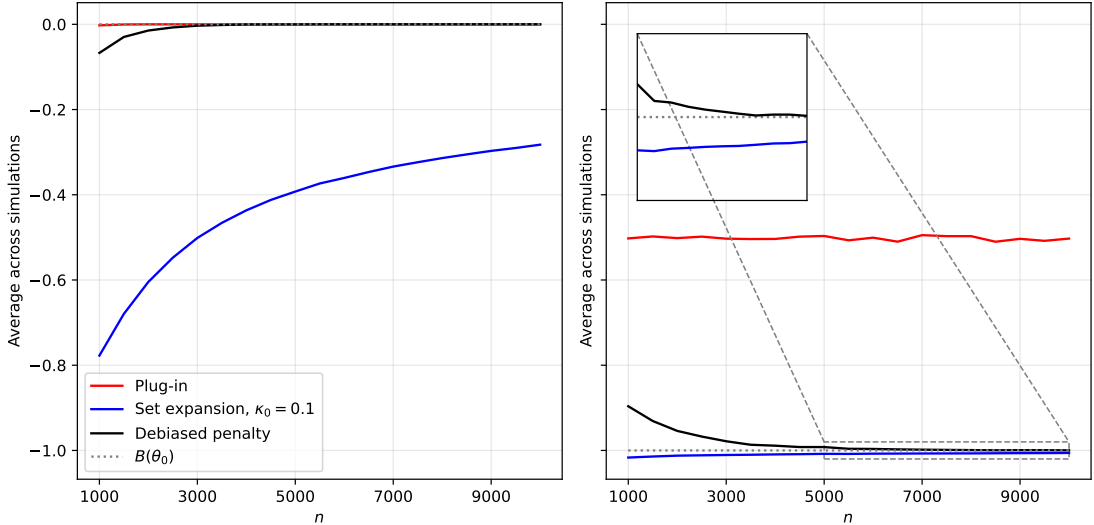}
    \caption{Simulation of example in \eqref{description_of_theta} for $b = -0.05$ and $b = 0$ respectively.}
    \label{sim_baseline_two_meas}
\end{figure}
\paragraph{Discussion}
Consider the right panel of Figure \ref{sim_baseline_two_meas}, corresponding to $b=0$. The plug-in estimator is inconsistent, while the set-expansion estimator performs well, approaching $-1$ from below for larger $n$. The debiased penalty-function estimator is slightly upward-biased for smaller $n$, but yields the value of almost exactly $-1$ in larger samples. In contrast, the set-expansion estimator has a conservative rate, and remains slightly downward-biased even in larger samples. 

The case of $b = -0.05$ is depicted in the left panel of Figure \ref{sim_baseline_two_meas}. The plug-in estimator is consistent and appears to be the best estimator out of the three. The set-expansion estimator is severely downward-biased. This is because when the optimal vertex has a `sharp' angle, a small expansion of the inequalities' RHS may lead to a large shift of the vertex. To see that, consider shifting both inequalities outwards in Figure \ref{fig_point_cons} for a small and a large absolute value of $b$, when $b$ is negative. Once the expansion grows smaller, the set-expansion estimator slowly converges to the true value of $0$ from below. While selecting a smaller $\kappa_0$ parameter would improve the performance of the set-expansion estimator at $b = -0.05$, in the next part of our analysis we demonstrate that in this example $\kappa_0 = 0.1$ is close to being optimal in the uniform sense, because smaller $\kappa_0$ worsens the estimator's performance at $b = 0$. The debiased penalty estimator, in contrast, converges rather quickly. It is slightly conservative at smaller $n$, as it selects the incorrect vertex of $-1$ whenever $\hat{b}_n < 0$ and $|\hat{b}^{-1}_n| > \text{const} \times w_n$, i.e. when the penalty parameter is not large enough. 

\paragraph{Robustness of the debiased penalty estimator}
For $b < 0$, the debiased estimator may be expected to perform better at measures with a larger $|b|$. Figure \ref{robustness_debiased} illustrates that point:
\begin{figure}[h]
    \centering
    \includegraphics[width=0.5\linewidth]{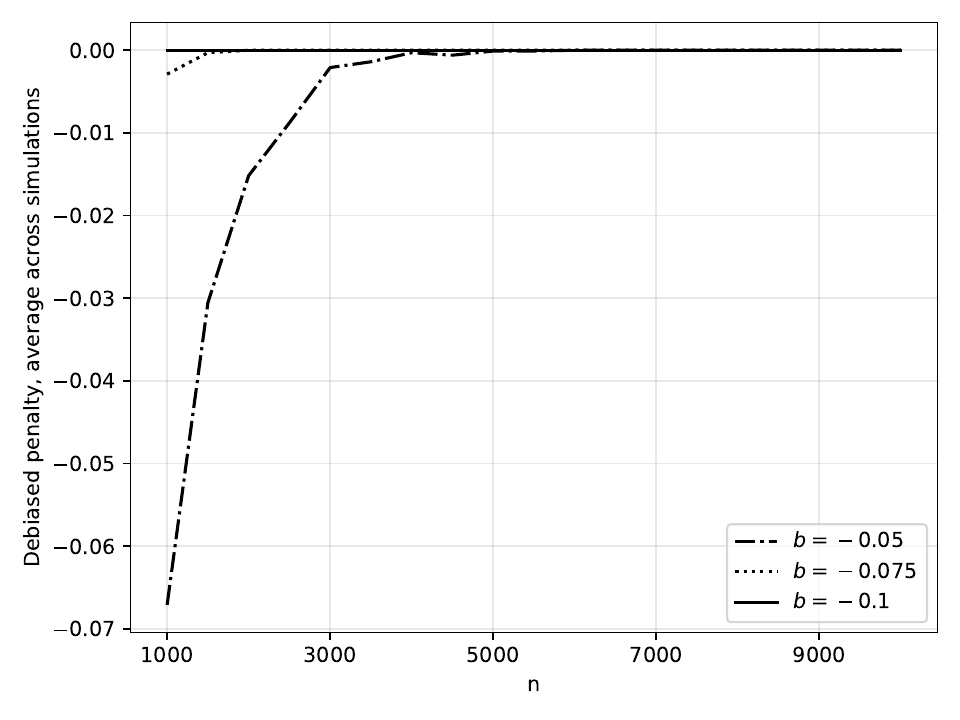}
    \caption{Performance of the debiased penalty estimator for different $b$ in example \eqref{description_of_theta}.}
\end{figure}

Researchers applying any LP estimator should exercise caution when operating in smaller samples due to irregularity inherent in \ref{lp_init}. In example \eqref{description_of_theta}, the debiased penalty estimator exhibits desirable behavior even along highly irregular measures for sample sizes of order $n = 5000$. Such sample sizes are not uncommon in partially identified settings. In our application, estimation is performed on $664633$ observations. 

\paragraph{Alternative $\kappa_0$}
We now investigate to which extent the behavior of the set-expansion estimator can be improved by selecting an alternative $\kappa_0$ parameter. Clearly, a larger $\kappa_n$ makes the set-expansion estimator more conservative. However, $\kappa_n$ cannot be selected as small as possible. If the expansion is too small, it may be insufficient to counteract the noise involved in estimating $\Theta_I$, leading to poor performance of the estimator at measures where SC fails to hold. 
\begin{figure}[h]
    \centering
    \includegraphics[width=0.9\linewidth]{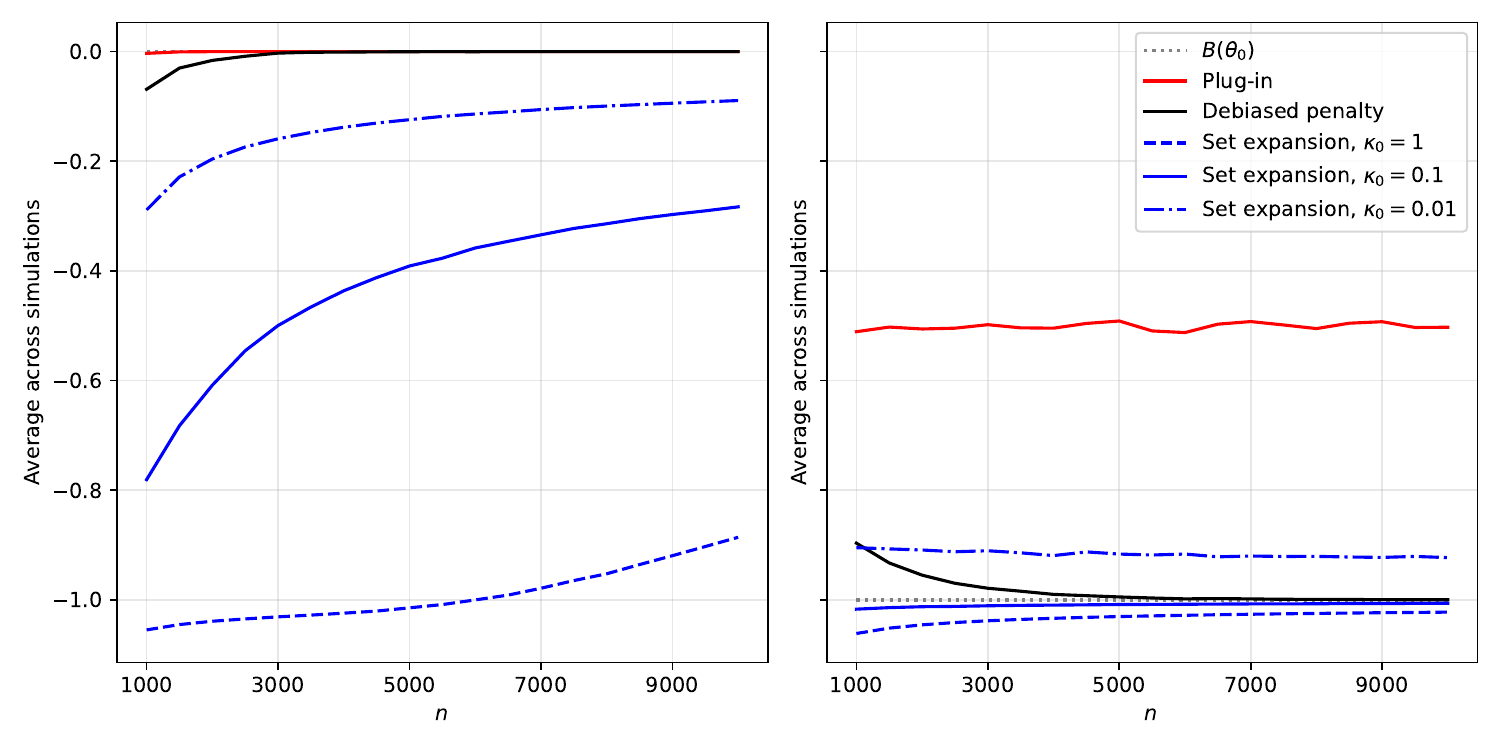}
    \caption{Simulation of \eqref{description_of_theta} for $b = -0.05$ and $b = 0$ resp., different $\kappa_0$.}
    \label{robustness_debiased}
\end{figure}
That tradeoff is very clear in Figure \ref{robustness_debiased}. Compare the performance of the estimator with $\kappa_0 = 0.01$ to the baseline of $\kappa_0 = 0.1$. Decreasing $\kappa_0$ down to $0.01$ makes the estimator less conservative at $b = -0.05$, but results in an upward bias at $b = 0$. This occurs, because at $\kappa_0 = 0.01$ the resulting set-expansion sequence $\kappa_n$ is too small to counteract the estimation noise for the considered sample sizes. This logic is `monotone', meaning that selecting an even smaller $\kappa_0$ would worsen the performance at $b = 0$ further. Even at $\kappa_0 = 0.01$, the set-expansion estimator is quite conservative for the measure $b = -0.05$, and the parameter can clearly not be reduced any further without affecting the validity of the estimator at $b = 0$. It appears that the baseline choice of $\kappa_0 = 0.1$ is close to being optimal in our example. Therefore, the conservative behavior of the set-expansion estimator at $b=-0.05$ is not explained by a poor choice of the tuning parameter, but rather is a feature of the estimator itself.  
\paragraph{Noise in $\hat{c}_n$}
We also report the result of consistency simulations for the DGP described in \eqref{descr_th_1}. This DGP features noise on the right-hand-side of the inequalities that describe the polytope, which allows the estimated polytope to be empty.
\begin{figure}[h]
    \centering
    \includegraphics[width=0.9\linewidth]{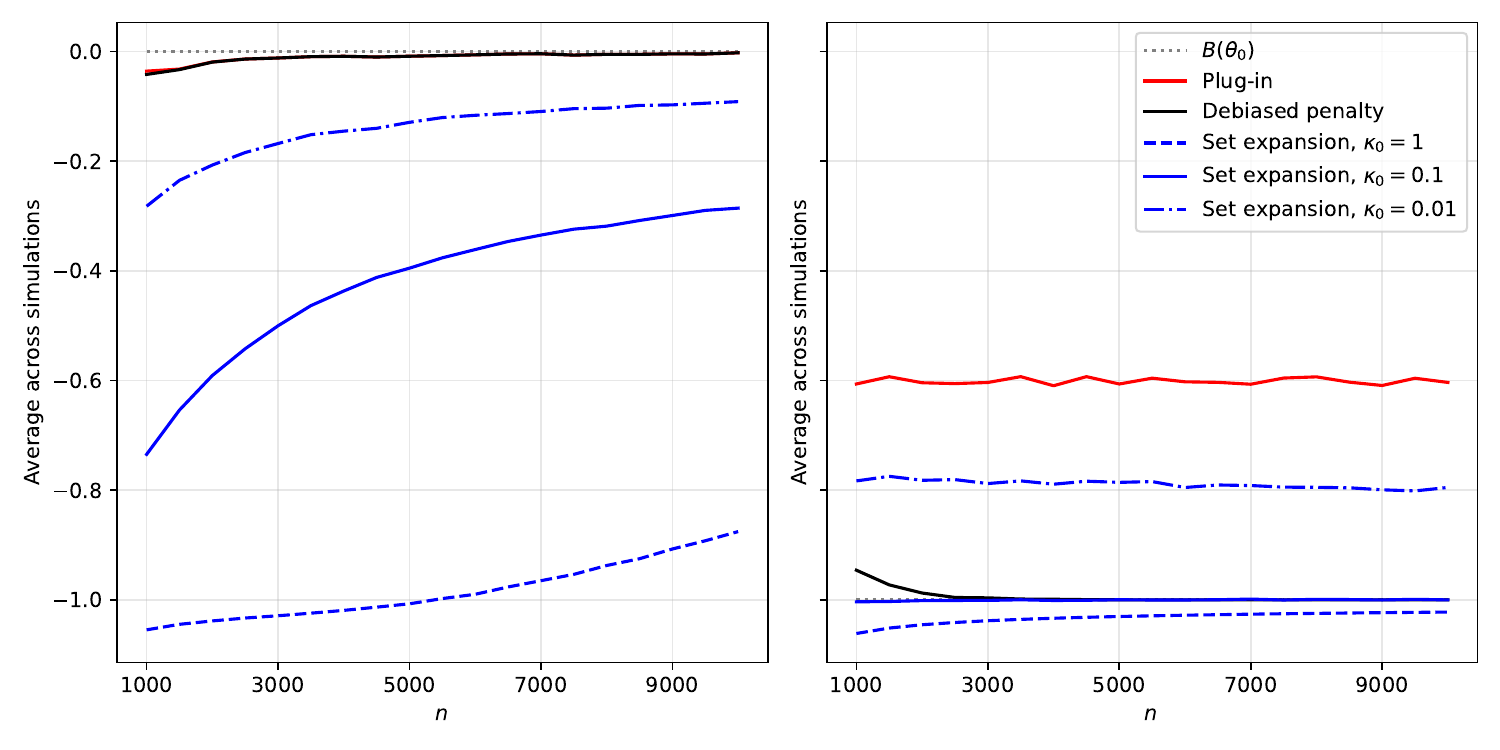}
    \caption{Simulation of \eqref{descr_th_1} for $b = -0.05$ and $b = 0$ resp., different $\kappa_0$. Averages for the plug-in and set-expansion estimators ignore failed iterations.}
    \label{robustness_debiased_morenoise}
\end{figure}
In Figure \ref{robustness_debiased_morenoise}, the plug-in and the debiased penalty estimators perform equally well at $b = -0.05$. The rest of the conclusions are qualitatively unchanged. 
\subsubsection{Inference}
In this section, we assess the performance of our inferential procedure. We compare it to the performance of two recently developed methods, \citet{chorussel2023} (CR) and \citet{gafarov2024simple} (BG). We consider a slightly modified version of example \eqref{description_of_theta}:
\begin{align}\label{descr_th_1}
        M = \begin{pmatrix}
        -(1+b) & 1\\
        1 + \zeta & -1\\
        1 & 0\\
        -1 & 0
    \end{pmatrix}, \quad c = \begin{pmatrix}
        \nu\\
        \zeta\\
        -1 - \nu\\
        -1
    \end{pmatrix}
\end{align}
The sample analogues $\hat{M}_n, \hat{c}_n$ of parameters in \eqref{descr_th_1} are obtained by substituting $b, \zeta, \nu$ with their estimators:
\begin{align*}
    \hat{b}_n = b + n^{-1}\sum^n_{i=1}U^{b}_i, \quad \hat{\zeta}_n = n^{-1}\sum^n_{i=1}U_i^{\zeta}, \quad \hat{\nu}_n = n^{-1}\sum^n_{i = 1} U_i^{\nu},
\end{align*}
where $U^{k}_i \sim U[-0.5;0.5]$, i.i.d. across $i \in [n]$ and $k \in \{b, \zeta, \nu\}$. We consider measures, for which the true values of $\zeta = \nu  = 0$, whereas we still vary $b$ as in the example before. As before, we mainly consider $b = -0.05$ and $b = 0$.

Note that in \eqref{descr_th_1} the plug-in estimator of the polytope $\hat{\Theta}_I$ may be empty for some realizations of $\{U^k_i\}_{i,k}$. That is the case, for example, if $\hat{b}_n = \hat{\zeta}_n = 0$ and $\hat{\nu}_n > 0$. 

\paragraph{Other methods} In case SC fails, the BG estimator is not applicable. It fails to exist in around 25\% of cases at $b = 0$. The CR estimator in the main text of the paper also relies on $B(\hat{\theta}_n)$ and is not applicable if SC fails. The procedure described on p.47 of the Supplementary Appendix in \citet{chorussel2023} would not be practically applicable without the results contained in the present paper. It can be shown that the CR augmented procedure combines a random, non-vanishing set expansion and objective function perturbation with the penalty function approach\footnote{Because the penalty function estimator can be rewritten in the form of an equivalent problem w.p.a.1.}. The authors, however, treat the analogue of the penalty parameter as `some [fixed] large value'. They proceed to argue that the inequalities, which establish the size of their inference procedure, hold for any value of $w$, which appears to suggest that its selection is unimportant. There is no practical guidance on $w$ selection, and CR do not implement the augmented estimator in their simulations. In this paper, we studied penalized estimation in great detail, and our results suggest that the appropriate choice of $w$ is critical. Both our previous findings and simulations in Figure \ref{sim_inference} demonstrate that for different values of $w$ the performance of the CR procedure ranges from highly conservative to invalid. Selecting `a large value' of $w$ does not yield a valid procedure in finite samples. Our simulations also suggest that CR augmented approach can perform relatively well if combined with our results on the rate and level of $w_n$. Unlike our approach, however, it remains asymptotically conservative due to the use of non-vanishing random expansions. 

\paragraph{Implementation details} As mentioned in Section 2.2, one can obtain an estimator $\hat{\Sigma}_n$ for the asymptotic covariance matrix of $\hat{\theta}_n$ via resampling. One then plugs it into the expression for $\sigma(\cdot)$ to obtain the required s.e. An even simpler approach is to obtain an estimator for $\sigma(\hat{A},\hat{x},\hat{v},\Sigma)$ directly by bootstrapping the quantities estimated on the second fold, while keeping the first fold quantities fixed. We have verified that the performance of the two approaches is similar to using the closed-form estimator for $\Sigma$, namely $\hat{\Sigma}_n = G \hat{\Omega}_n G'$, where $G \equiv \frac{\partial \theta}{\partial (b ~ \zeta ~ \nu)'}$ and $\hat{\Omega}_n$ is the sample covariance matrix of $(b_i ~ \zeta_i ~ \nu_i)'$. We employ the latter estimator in our simulations. When implementing the procedure in \citet{chorussel2023} (CR), we use uniform noise with the support size of $0.001$, as recommended in the paper. Note that we refer to their parameter $M$ as $w$. The estimator in \citet{gafarov2024simple} (BG) was implemented using the code kindly provided by the author. 
\begin{figure}[h!]
    \centering
    \begin{subfigure}[b]{\linewidth}
    \centering
        \includegraphics[width=0.9\linewidth]{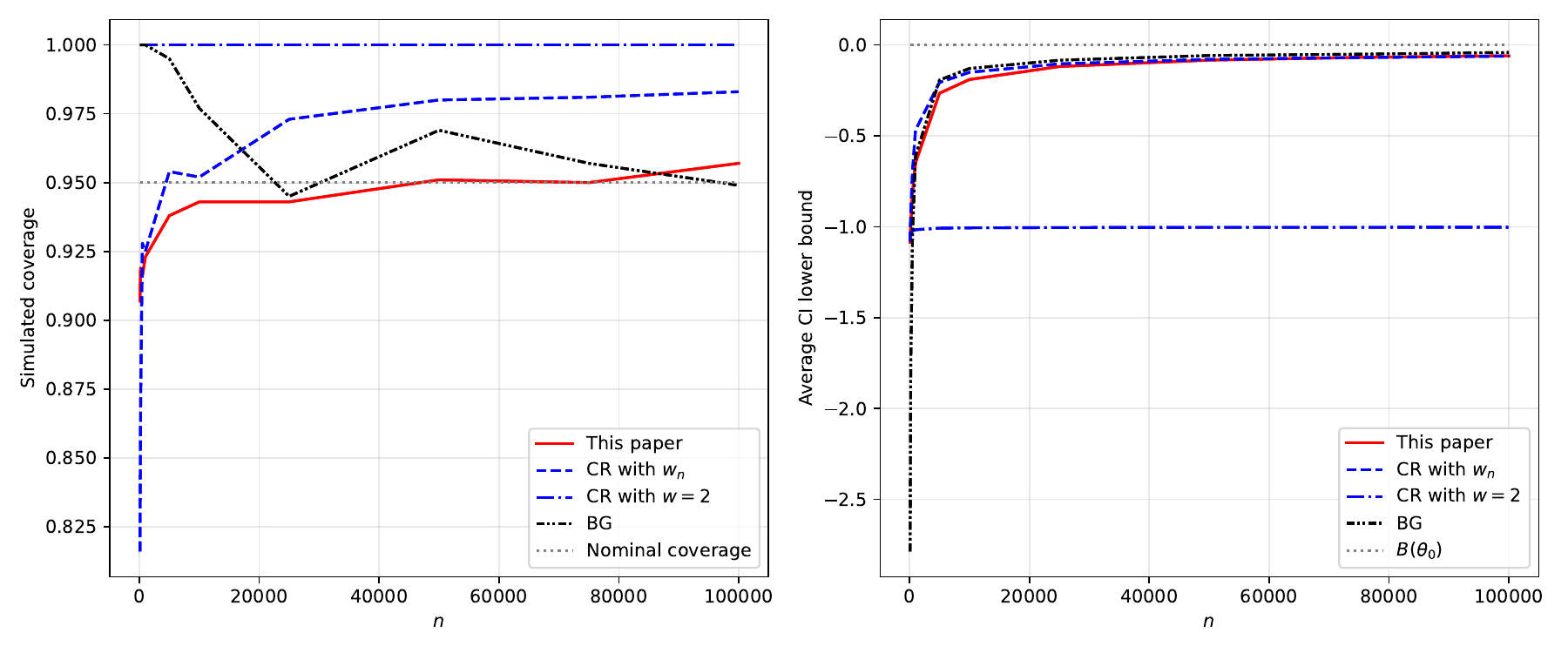}
        \caption{$b = -0.05$. For BG, the number of failed simulations was $206, 158, 46, 6$ at $n = 100, 200, 500, 1000$ respectively, none at larger $n$.}
        \label{fig:subfig1}
    \end{subfigure}\\
    \begin{subfigure}[b]{\linewidth}
    \centering
        \includegraphics[width=0.9\linewidth]{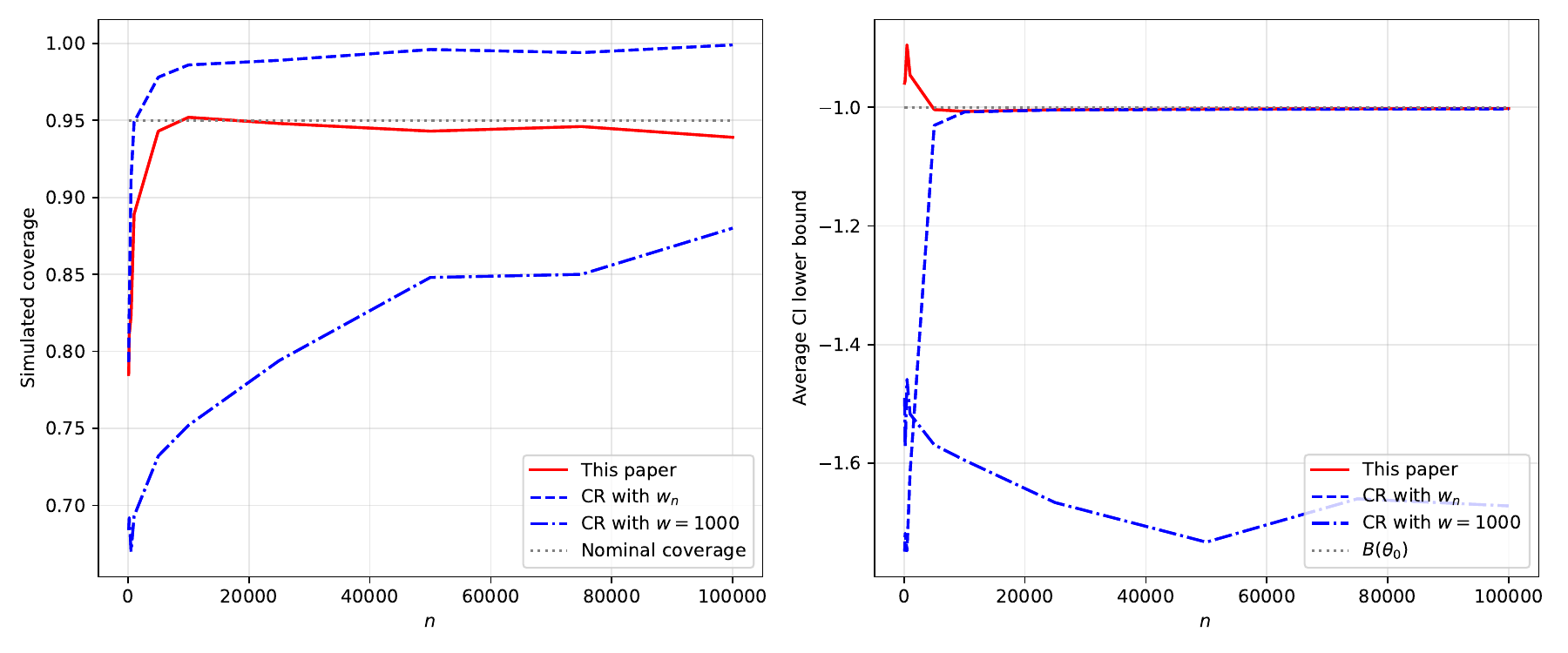}
        \caption{$b = 0$. BG omitted, as around $250$ simulations fail.}
        \label{fig:subfig2}
    \end{subfigure}
    \caption{Performance of different inferential procedures over $1000$ simulations of \eqref{descr_th_1}. Left panel - estimated coverage of a $95\%$ one-sided C.I.; right panel - average lower confidence bound.}
    \label{sim_inference}
\end{figure}

\paragraph{Discussion}
Results of our simulations are given in Figure \ref{sim_inference}. Overall, it appears that our estimator has the correct nominal level even at smaller sample sizes, whereas the CR with penalty $w_n$ overcovers asymptotically. BG estimator achieves nominal coverage asymptotically, although it over-covers in smaller samples and may yield a very conservative left confidence bound, as is evident from the right panel. It fails at $b = 0$, because the SC fails\footnote{We have still run the corresponding simulations, but are not displaying them. BG fails to exist in around 25\% of cases, while the remaining simulations result in highly conservative bounds with incorrect coverage}. For different fixed values of $w$ the Cho-Russell procedure's performance may range from highly conservative to invalid. We illustrate this by adding lines with $w = 2$ and $w = 1000$ to the figures corresponding to $b = -0.05$ and $b = 0$, respectively. 

\section{Special case of LP bounds}\label{section_aicm}
\normalsize
Let $Y \in \mathbb{R}$ denote the outcome of interest\footnote{Univariate case is considered for simplicity of exposition, but the extension to multivariate outcomes is immediate.}, $T \in \mathbb{R}$ stand for the treatment, and $Z \in \mathbb{R}^{d_Z}$ be the candidate instrument. Here \textit{treatment} is any variable which effect on $Y$ we attempt to infer, whereas the term \textit{instrument} refers to an auxiliary variable that allows us to partially identify the causal effect of interest. Our approach nests the case when $Z$ is the usual IV. The reader seeking an economic intuition may refer to Section \ref{returns_to_education}, where $Y$ is the log-wage, $T$ is the level of education, and $Z$ is a proxy for the level of ability.

Denote the supports $Y,T,Z$ as $\mathcal{Y}$, $\mathcal{T}$ and $\mathcal{Z}$ respectively. Throughout this section we consider the case of continuous outcome and discrete treatment and instrument, namely $\mathcal{Y}$ is a (possibly infinite) interval\footnote{We focus on the uncountable case for concreteness. Inclusion \eqref{one_side} in Theorem \ref{theor_identif} holds generally, while sharpness also holds for the case of arbitrary $\mathcal{Y}$ if there are no almost sure inequalities.}, while $N_T \equiv |\mathcal{T}| < \infty$ and $N_Z \equiv |\mathcal{Z}| < \infty$. In non-parametric bounds literature it is rather conventional to employ a discrete instrument at the estimation stage (see \citet{MP2009}). While our main identification result may be extended to continuous $Z$, estimation in such settings is left for future research.

Our setup accommodates missing observations of the dependent variable. Namely, we split the set of treatments into two disjoint subsets $\mathcal{T} = \mathcal{O} \sqcup \mathcal{U}$. Whenever $T \in \mathcal{O}$, the researcher observes $Y, T, Z$, whereas if $T \in \mathcal{U}$, only the covariates $T, Z$ are observed. For example, in \citet{blundell2007changes} the wage is observed only if an individual is employed. Corresponding to the legs of the treatment are the potential outcomes $Y(t), ~ t \in \mathcal{T}$: 
\begin{align*}
    Y = \sum_{t \in \mathcal{O}} \mathds{1}\{T = t\} Y(t) + \sum_{t \in {\mathcal{U}}} \mathds{1}\{T = t\} Y(t)
\end{align*}
Continuing the wages and education example, the value of $Y(t)$ for a fixed $t\in \mathcal{T}$ may then correspond to the potential wage that an individual with the associated random characteristics would get, had she obtained education $t$.

Let us collect the potential outcomes in the vector $\mathbb{Y} \equiv (Y(t))_{t \in \mathcal{T}} \in \mathbb{R}^{N_T}$. Variables $(\mathbb{Y}, T, Z, Y)$ are jointly defined on the true probability space $(\mathbb{P}, \Omega, \mathcal{S})$ and we let $\mathcal{P}$ denote the considered collection of probability measures on $(\Omega, \mathcal{S})$, such that $\mathbb{P} \in \mathcal{P}$. We impose the following conditions on the set of considered measures throughout this section:
\begin{namedass}[I0 (Conditions on $\mathcal{P}$)]
$\mathcal{P}$ is such that $P \in \mathcal{P}$ if: i) $P$ generates $F_{T,Z}(\cdot)$ and $\{F_{Y|T = t, Z}(\cdot)\}_{t \in \mathcal{O}}$; ii) $P[T = d, Z = z] > 0$ $\forall d, z \in \mathcal{T} \times \mathcal{Z}$ and iii) $|\mathbb{E}_P[Y(t)|T  = d, Z = z]| < \infty$ for all $z \in \mathcal{Z}$ and $t, d \in \mathcal{T}$
\end{namedass}
Part i) of the Assumption I0 formalizes the assumed identification pattern. It says that the joint distribution of $T, Z$ is always identified and the researcher also observes the joint distribution of $Y, T, Z$ whenever $T \in \mathcal{O}$. Parts ii) and iii) of I0 ensure that all conditional expectations and probabilities are well-defined and finite\footnote{Similar identification results can still be obtained if one relaxes the full-support condition for some known pairs from $\mathcal{Z} \times \mathcal{T}$. Note that it can also be verified in the data.}. 
\begin{remark}
    Under no missing data, i.e. $\mathcal{T} = \mathcal{O}$, condition i) is equivalent to $P$ generating the identified joint distribution $F_{Y, T, Z}(\cdot)$.
\end{remark}
We define the vector $m$ collecting all elementary conditional moments as
\begin{align*}
    m(P) \equiv (\mathbb{E}_P[\mathbb{Y}|T = d, Z = z])_{d \in \mathcal{T}, z \in \mathcal{Z}} \in \mathbb{R}^{N^2_T N_Z},
\end{align*}
and suppose that the researcher is interested in the target parameter $\beta^*$, given by
\begin{align}\label{vf}
    \beta^* = \mu^*(\P)'m(\mathbb{P}),
\end{align}
where $\mu^{*}: \mathcal{P} \to \mathbb{R}^{N^2_T N_Z}$ is \textit{identified}\footnote{By this we mean that $\mu^*(P) = \mu^*(\P)$ for all $P \in \mathcal{P}$.} and \textit{chosen} by the researcher. It parametrizes the choice of the outcome of interest, as the following remark clarifies.

\begin{remark}
    The form \eqref{vf} nests i) $\mathbb{E}[Y(t)]$, ii) $ATE_{td} = \mathbb{E}[Y(t) - Y(d)]$ and iii) $CATE_{td, A, B} = \mathbb{E}[Y(t) - Y(d)|T \in A, Z \in B]$.
\end{remark}
\subsection{Affine inequalities over conditional moments}
We now introduce the general class of identifying conditions described by affine inequalities over conditional moments, potentially augmented with affine a.s. restrictions. These restrict the set of admissible measures to $\mathcal{P}^*$:
\begin{align}\label{identif2}
   \mathcal{P}^{*} \equiv  \{P \in \mathcal{P}|(M^* m + b^*)(P) \geq 0 \land (\tilde{M} \mathbb{Y} + \tilde{b})(P) \geq 0 \text{~ $P$-a.s.}\},
\end{align}
where $b^*: \mathcal{P} \to \mathbb{R}^R$, $M^*: \mathcal{P} \to \mathbb{R}^{R \times N^2_T N_Z}$, $\tilde{b}: \mathcal{P} \to \mathbb{R}^{\tilde{R}}$ and $\tilde{M}: \mathcal{P} \to \mathbb{R}^{\tilde{R} \times N_T}$ are \textit{identified} parameters, \textit{chosen} by the researcher. These parametrize the choice of $R \in \mathbb{N}$ identifying inequalities on conditional moments of potential outcomes as well as $\tilde{R} \in \mathbb{N}$ almost sure inequalities on the potential outcomes. In general, $\psi(P) \equiv (\mu^*(P),b^*(P), M^*(P), \tilde{b}(P), \tilde{M}(P))$ and $m(P)$ are functionals of $P$. We omit this dependence whenever it does not cause confusion.

The family of models that can be written in the form \eqref{identif2} is very rich, as illustrated by the following examples. 

\begin{example}[MIV]
     $Z \in \mathbb{R}$ is a monotone IV \cite{MP2000} if, for each $t \in \mathcal{T}$ and $z, z' \in \mathcal{Z}$, if $z' \geq z$, then $\mathbb{E}[Y(t)|Z = z'] \geq \mathbb{E}[Y(t)|Z = z]$. MIV is nested in AICM for an appropriate choice of matrix $M^* = M_{MIV}$ and $b^* = 0$, and no a.s. restrictions, $\tilde{M} = 0, ~\tilde{b} = 0$. Monotone treatment selection assumption in \citet{MP2000} obtains under $Z = T$.
\end{example}

\begin{example}[IV]\label{exmp_iv}
    $Z \in \mathbb{R}$ is mean-independent of potential outcomes if, for each $t \in \mathcal{T}$ and $z, z' \in \mathcal{Z}$, $ \mathbb{E}[Y(t)|Z = z'] = \mathbb{E}[Y(t)|Z = z]$. This assumption is nested in AICM for $M^* = M_{IV} \equiv (M'_{MIV} \quad - M'_{MIV})'~$, $b^* = 0$, and no a.s. restrictions. 
\end{example}

\begin{example}[MTR]
   Monotone treatment response assumption \cite{MP2000} imposes that, for each $t, t' \in \mathcal{T}$, if $t' > t$, then $Y(t') \geq Y(t)$ a.s. It is nested in AICM for an appropriate choice of matrix $\tilde{M} = \tilde{M}_{MTR}$ with $\tilde{b} = 0$, and no inequalities over conditional moments, $M^* = 0, ~  b^* = 0$.
\end{example}

\begin{example}[Roy model]
    \citet{laffers2019} imposes that for each $t, z \in \mathcal{T} \times \mathcal{Z}$, the individual's choice is, on average, optimal: $\mathbb{E}[Y(t)|T = t, Z = z] =\max_{d\in \mathcal{T}} ~ \mathbb{E}[Y(d)|T = t, Z = z]$. It is nested for an appropriate choice of matrix $M^{*} = M_{ROY}$ and $b^* = 0$, and no a.s. restictions.
\end{example}

\begin{example}[Missing data] \citet{blundell2007changes} derives bounds on $F(w|x)$ - the cdf of wages evaluated at some $w \in \mathbb{R}$, conditional on $X = x$. The wage is observed if the individual is employed, $E = 1$, and unobserved otherwise ($E = 0$). Introduce $\mathcal{O} = \{1\}$ and $ \mathcal{U} = \{0\}$. Let $Y(t) \equiv \mathds{1}\{W \leq w\}$, so that $\mathbb{E}[Y(t)|X = x] = F(w|x)$. Our approach allows to accommodate all identifying conditions in the original paper by appropriately choosing $M^*, b^*$ and $\tilde{M}, \tilde{b}$. 
\end{example}
\begin{remark}
        Combinations of assumptions are obtained by stacking the respective matrices, as in Example \ref{exmp_iv}.  Sensitivity analysis can be performed via relaxations $b_{\ell}^* = b^* + \ell$, or $\tilde{b}_\ell = \tilde{b} + \ell$ for some $\ell \geq 0$. For example, given some $\ell = \{\ell(t,z,z')\}_{t,z, z'} \geq 0$, inequalities $\mathbb{E}[Y(t)|Z = z'] - \mathbb{E}[Y(t)|Z = z] \geq -\ell(t,z,z')$ for $z, z' \in \mathcal{Z}$ with $z' > z$ and $t \in \mathcal{T}$ yield a relaxation of MIV. In \citet{de2017effect} the shape of observed moments may suggest a failure of monotonicity near the boundaries of $\text{Supp}(Z)$. Selecting positive $\ell(z, z')$ for values of $z, z'$ close to the boundaries could constitute a meaningful robustness check. 
\end{remark}

\subsection{Linear programming bounds}
We now provide the general identification result for the models described by $\mathcal{P}^*$. Let us construct $x$ that collects unobserved pointwise-conditional moments\footnote{Formally, $x$ is a partially identified functional $x = x(P)$, whereas $\overline{x} = \overline{x}(P)$ is point-identified.} and the vector $\overline{x}$ of those pointwise-conditional moments that are identified:
\begin{align*}
    x &\equiv (\mathbb{E}[Y(t)|T = d, Z = z])_{z \in \mathcal{Z} \land (t, d \in \mathcal{T}: t \ne d \lor t, d \in \mathcal{U}: t = d)},\\
    \overline{x} &\equiv (\mathbb{E}[Y(t)|T = t, Z = z])_{z \in \mathcal{Z}, t \in \mathcal{O}}.
\end{align*}
For \textit{known} selector matrices $P_m, \overline{P}_m$, one can then decompose $m$ as
\begin{align*}
    m = \underbrace{P_m x}_\text{partially identified} + \underbrace{\overline{P}_m \overline{x}}_\text{identified}. 
\end{align*}
It is also straightforward to observe that $\tilde{M} \mathbb{Y} + \tilde{b} \geq 0$ a.s. implies
\begin{align}\label{mean_r}
    (I_{N_T N_Z} \otimes \tilde{M}) m + \iota_{N_T N_Z} \otimes \tilde{b} \geq 0.
\end{align}
Before we state the main identification result, let us construct the matrix $M^{**}$ and the vector $b^{**}$ that combine the conditional restrictions with the implications of the almost sure restrictions in \eqref{mean_r}, as 
\begin{align}\label{twostars}
        M^{**} \equiv \begin{pmatrix}
            I_{N_T N_Z} \otimes \tilde{M}\\
            M^*
        \end{pmatrix}, \quad b^{**} \equiv \begin{pmatrix}
            \iota_{N_T N_Z} \otimes \tilde{b}\\
            b^*
        \end{pmatrix}.
\end{align}
\begin{theor}\label{theor_identif}
    Suppose Assumption I0 holds. For any $\psi$, the sharp identified set $\mathcal{B}^*$ for $\beta^*$ satisfies
    \begin{align}\label{one_side}
        \mathcal{B}^* \equiv (\mu^{*\prime} m)(\mathcal{P}^*) \subseteq \{\beta \in \mathbb{R}|\underset{x: Mx \geq c}{\inf} p'x \leq \beta - \overline{p}'\overline{x} \leq  \underset{x: Mx \geq c}{\sup} p'x\},
    \end{align}
    where
    \begin{align*}
        \overline{p} \equiv \overline{P}'_m\mu^*, \quad p \equiv P'_m\mu^*, \quad 
        M \equiv M^{**}P_m, \quad c \equiv - b^{**} - M^{**} \overline{P}_m \overline{x}.
    \end{align*}
    Reverse inclusion in \eqref{one_side} holds if $\tilde{M} = 0, \tilde{b}= 0$, or if one of the following is true:
    \begin{enumerate}
    \item MTR holds, i.e. $Y(t_1) \geq Y(t_0)$ a.s. $\forall t_1, t_0 \in \mathcal{T}$ s.t. $t_1 > t_0$:
    \begin{align*}
        \tilde{M} = \tilde{M}_{MTR}, \quad \tilde{b} = 0_{N_{T} - 1}
    \end{align*}
    \item Outcomes are bounded, $Y(t) \in [K_0; K_1]$, $\forall t \in \mathcal{T}$ a.s. for known $K_1 > K_0$:
    \begin{align}
        \tilde{M} = \tilde{M}_b \equiv \begin{pmatrix}
        I_{N_T}\\
        -I_{N_T}
    \end{pmatrix}, \quad \tilde{b} = \tilde{b}_b \equiv \begin{pmatrix}
        -K_0 \cdot \iota_{N_T}\\
        K_1 \cdot \iota_{N_T}
    \end{pmatrix}.
        \end{align}
        \item MTR holds, outcomes are bounded and $(\Omega, \mathcal{S})$ can support a $U[0;1]$ r.v.:
        \begin{align}
            \tilde{M} = \begin{pmatrix}
                \tilde{M}_{MTR}\\
                \tilde{M}_{b}
            \end{pmatrix}, \quad \tilde{b} = \begin{pmatrix}
                \tilde{b}_{MTR}\\
                0_{N_{T} - 1}
            \end{pmatrix}.
        \end{align}
    \end{enumerate}
\end{theor}
The matrix $\tilde{M}_{MTR}$ is defined in Appendix \ref{ap_mtr}.

Theorem \ref{theor_identif} postulates that bounds on the target parameter $\beta^*$ under $\mathcal{P}^*$ can be obtained by solving two linear programs. The LP bounds are sharp if there are no a.s. inequalities in the model, or if the a.s. inequalities parametrize three special cases that are typically used in the literature. Otherwise, the LP bounds are valid, but not necessarily sharp, as we show in Appendix \ref{ap_fail_conv}. This is because, in general, the entire distribution of $\mathbb{Y}, T, Z$ is relevant for $\beta^*$ under a.s. restrictions. The naive approach of searching over such joint distributions, however, would involve infinite-dimensional optimization, because $|\mathcal{Y}| = \infty$. 

\begin{remark}
    We are not aware of affine a.s. restrictions $\tilde{M}, \tilde{b}$ used in applied work that are not special cases 1-3 in Theorem \ref{theor_identif}. The special cases 1-3 appear in \citet{blundell2007changes}, \citet{kreider2012identifying}, \citet{pepper}, \citet{siddique}.
\end{remark}
\begin{remark}
    The empirical literature has extensively relied on the MIV + MTR + MTS combination of \citet{MP2000} assumptions, as it yields the tightest bounds out of all classical conditions. In the absence of a theoretical justification, this has led to errors \cite{laffers}. Theorem \ref{theor_identif} provides the first available sharp bounds under this combination when $Y$ is continuous.
\end{remark}
\begin{remark}
    In any of the sharp cases in Theorem \ref{theor_identif}, the polytope $\Theta_I \equiv \{x \in \mathbb{R}^q: Mx \geq c\}$ gives the sharp identified set for the vector of unobserved pointwise-conditional moments $x$ under the corresponding AICM model described by $M, c$. 
\end{remark}

\section{Conditional monotonicity assumptions}\label{cmiv_assumptions}

A particular family of identifying conditions that can be written in the form \eqref{identif2} is the \textit{conditional monotonicity} class of assumptions. These impose that potential outcomes are mean-monotone in the instrument even within some treatment subgroups. While more restrictive than the conventional MIV, conditionally monotone instrumental variables (cMIV) allow to sharpen the bounds on the outcomes of interest. Throughout this section, we assume that outcomes are bounded $Y(t) \in [K_0; K_1]~ a.s.$ for known $K_0, K_1 \in \mathbb{R}$, $K_0 < K_1$. We also suppose that there are no missing data\footnote{Although it is hopefully clear from our general approach how cMIV conditions extend to the missing data case.}, i.e. $\mathcal{T} = \mathcal{O}$. 

We argue that cMIV assumptions are reasonable in classical applications, discuss the difference between MIV and cMIV and develop a formal testing strategy for a particular version of cMIV. This testing procedure relies on the observed outcomes' monotonicity, which has been typically used in applied work to justify applying MIV. Our results imply that if such monotonicity is observed and the researcher is comfortable with MIV, the cMIV assumption is \textit{inexpensive}, and can be applied to sharpen the bounds on the outcomes of interest. In some applications, as is the case in Section 5, cMIV yields informative bounds even when the classical conditions fail to do so.

While we only discuss three variations of cMIV, the class of such assumptions is potentially richer\footnote{One can consider the class of conditional restrictions $\mathbb{E}[Y(t)|T \in A, Z = z'] \geq \mathbb{E}[Y(t)|T \in A, Z = z], ~ ~ \forall A \in \mathcal{F}_t$ for all $t \in T$ where subcollections $\mathcal{F}_t \subseteq \mathcal{T}$ are chosen by the researcher. The triplet $M, p,c$ from Theorem \ref{theor_identif} under any such restrictions can be constructed by following the procedure in Appendix \ref{ap_cmiv_ps}.}, and Theorem \ref{theor_identif} applies in any such framework.   

\begin{namedass}[cMIV-s]
 Suppose that for any $t \in \mathcal{T}$, $A \subseteq \mathcal{T}: A \ne \{t\}$ and $z, z' \in \mathcal{Z}$ s.t. $z' > z$ we have:
\begin{align*}
    \mathbb{E}[Y(t)|T \in A, Z = z'] \geq \mathbb{E}[Y(t)|T \in A, Z = z],
\end{align*}
i.e. the potential outcomes are, on average, non-decreasing in $Z$ for any treatment subgroup. 
\end{namedass}
The strong conditional monotonicity assumption possesses the greatest identifying power across all considered cMIV conditions. To see that cMIV-s implies MIV, set $A = \mathcal{T}$ in the definition above. 
\begin{namedass}[cMIV-w]
 Suppose MIV holds and for any $t \in \mathcal{T}$ and $z, z' \in \mathcal{Z}$ s.t. $z' > z$ we have:
 \begin{align*}
    \mathbb{E}[Y(t)|T \ne t, Z = z'] \geq \mathbb{E}[Y(t)|T \ne t, Z = z],
\end{align*}
i.e. the potential outcomes are, on average, non-decreasing in $Z$ for the non-treated and the whole population.
\end{namedass}
The weak conditional monotonicity assumption allows for closed-form expressions for sharp bounds that are easy to compute and perform inference on.

\begin{namedass}[cMIV-p]
    Suppose MIV holds and for any $t \in \mathcal{T}, d \in \mathcal{T}\setminus\{t\}$ and $z, z' \in \mathcal{Z}$ s.t. $z' > z$ we have:
    \begin{align*}
        \mathbb{E}[Y(t)|T = d, Z = z'] \geq \mathbb{E}[Y(t)|T = d, Z = z],
    \end{align*}
    i.e. the potential outcomes are, on average, non-decreasing in $Z$ conditional on any counterfactual level of treatment.
\end{namedass}
The pointwise conditional monotonicity assumption is directly testable under a mild homogeneity condition, see Section \ref{testing_cmiv}. 


Conditional monotonicity restrictions differ in the collection of treatment subsets over which monotonicity in the instrument is assumed. The strong conditionally monotone instruments are such that, among individuals from any given counterfactual treatment subgroup, higher values of $Z$ are, on average, associated with higher potential outcomes. The weak conditional monotonicity restriction only imposes the same mean-monotonicity on the whole population and on the untreated, whereas the pointwise form assumes it over the entire population as well as conditional on each counterfactual level of treatment. 

\begin{remark}
       All cMIV assumptions imply MIV. Moreover, cMIV-w, cMIV-p are implied by cMIV-s. If treatment is binary, cMIV-s, cMIV-w and cMIV-p are equivalent.
\end{remark}

While it is possible for the general apporach of form \eqref{identif2}, cMIV conditions avoid assuming monotonicity over the observed treatment subset $\{T = t\}$. This is because such monotonicity is identified. If it holds, it should not add any identifying power to our conditions in theory. However, large violations of the observed outcomes' monotonicity will lead the test developed in Section \ref{testing_cmiv} to reject cMIV-p and cMIV-s.

The following observation motivates the use of cMIV assumptions.

\begin{prop}
   \citet{MP2000} MIV bounds are not sharp under either cMIV-s, cMIV-w or cMIV-p.
\end{prop}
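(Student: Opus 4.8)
The plan is to show that under any of the three cMIV conditions the identified set for $\beta^*$ strictly tightens relative to the MIV-only set, so in particular the MIV bounds cannot be sharp. By Theorem~\ref{theor_identif}, in the bounded-outcomes case with no missing data, the sharp identified set for $\beta^*$ is exactly the interval $[\inf_{Mx \ge c} p'x + \overline{p}'\overline{x}, \sup_{Mx \ge c} p'x + \overline{p}'\overline{x}]$, where the polytope $\Theta_I = \{x : Mx \ge c\}$ is the sharp identified set for the vector of unobserved conditional means $x$. So it suffices to exhibit a population distribution consistent with MIV at which the cMIV polytope $\Theta_I^{\mathrm{cMIV}}$ is a strict subset of $\Theta_I^{\mathrm{MIV}}$ \emph{and} this strict containment moves at least one of the two optimal values of the linear program $p'x$. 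Equivalently — and this is the cleanest route — I would construct an explicit example (binary treatment $\mathcal{T} = \{0,1\}$, binary instrument $\mathcal{Z} = \{0,1\}$, bounded outcomes, say in $[0,1]$) and simply compute the MIV bounds and the cMIV bounds on, e.g., $\mathbb{E}[Y(1)]$ or $ATE$, showing they differ.

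First I would recall the structure of the MIV restriction and the extra cMIV restriction in this minimal case. With binary $T,Z$, the unobserved coordinates of $x$ are the counterfactual conditional means $\mathbb{E}[Y(1)\mid T=0,Z=z]$ and $\mathbb{E}[Y(0)\mid T=1,Z=z]$ for $z \in \{0,1\}$; the observed coordinates $\overline{x}$ are $\mathbb{E}[Y(t)\mid T=t,Z=z]$. MIV imposes $\mathbb{E}[Y(t)\mid Z=1] \ge \mathbb{E}[Y(t)\mid Z=0]$, which after conditioning on $T$ and using the law of total expectation is a single affine inequality in $x$ (mixing observed and unobserved means with the identified weights $\mathbb{P}[T=d\mid Z=z]$). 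For binary treatment, cMIV-s, cMIV-w and cMIV-p coincide (Remark following cMIV-p), and they add the inequality $\mathbb{E}[Y(1)\mid T=0,Z=1] \ge \mathbb{E}[Y(1)\mid T=0,Z=0]$ together with $\mathbb{E}[Y(0)\mid T=1,Z=1] \ge \mathbb{E}[Y(0)\mid T=1,Z=0]$ — pure constraints on the unobserved coordinates. The key point is that these new half-space constraints are \emph{not implied} by the MIV inequality plus the box constraints $x \in [0,1]$: I would pick observed moments so that the MIV-optimal vertex of $\Theta_I^{\mathrm{MIV}}$ violates one of the cMIV inequalities, forcing the cMIV-optimal value strictly inside.

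Concretely, I would choose $\mathbb{P}[T=1\mid Z=z]$, $\mathbb{E}[Y(1)\mid T=1,Z=z]$, $\mathbb{E}[Y(0)\mid T=0,Z=z]$ so that the MIV upper bound on $\mathbb{E}[Y(1)]$ is attained by setting $\mathbb{E}[Y(1)\mid T=0,Z=0]=1$ and $\mathbb{E}[Y(1)\mid T=0,Z=1]$ as small as the MIV inequality permits — which, with the right numbers, is strictly less than $1$, hence strictly less than $\mathbb{E}[Y(1)\mid T=0,Z=0]$, violating cMIV. Then the cMIV program, which additionally requires $\mathbb{E}[Y(1)\mid T=0,Z=1]\ge\mathbb{E}[Y(1)\mid T=0,Z=0]$, cannot reach that vertex, so its optimum is strictly smaller; a symmetric construction handles the lower bound. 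Verifying that the chosen observed moments are realizable by an actual joint law of $(\mathbb{Y},T,Z)$ satisfying MIV (so that the example is genuinely a point of $\mathcal{P}^{\mathrm{MIV}}$, not vacuous) is routine since the coordinates are unconstrained conditional means in $[0,1]$ and the MIV inequality is satisfied by construction at the \emph{true} law.

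The main obstacle is bookkeeping rather than conceptual: keeping the selector matrices $P_m,\overline{P}_m$ and the identified weights $\mathbb{P}[T=d\mid Z=z]$ straight so that the MIV and cMIV inequalities are written correctly in terms of $x$, and then checking that the extremal vertex of the MIV polytope indeed lies strictly outside the cMIV polytope for the chosen numbers (equivalently, that the added cMIV half-space is not redundant). One should also be slightly careful that the example does not accidentally make the two programs agree because the objective $p$ happens to be orthogonal to the direction in which the polytope shrinks — this is avoided by choosing the target parameter ($\mathbb{E}[Y(1)]$ or $ATE_{10}$) whose gradient has a nonzero component along the counterfactual coordinate $\mathbb{E}[Y(1)\mid T=0,\cdot]$ that the cMIV constraint binds on. Once the numbers are fixed, the strict inequality of the two LP values — hence non-sharpness of the MIV bounds under cMIV — follows immediately.
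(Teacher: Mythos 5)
Your general strategy---constructing an explicit example in which the cMIV restriction on the counterfactual means strictly tightens the MIV bound on a classical target such as $\mathbb{E}[Y(t)]$ or $ATE$---is the same as the paper's. The gap is in the specific instance you propose: with a binary instrument $\mathcal{Z}=\{0,1\}$ and binary treatment, the cMIV bound on $\mathbb{E}[Y(t)]$ coincides with the MIV bound for \emph{every} MIV-consistent data generating process, so your construction cannot succeed.

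To see why, write $x_j \equiv \mathbb{E}[Y(t)\mid T\ne t, Z=z_j]$ for $j\in\{0,1\}$. For binary $Z$ the MIV restriction reduces to a single half-plane $x_1 \geq \alpha x_0 + \beta$ in $[K_0,K_1]^2$, with $\alpha \equiv P[T\ne t\mid Z=z_0]/P[T\ne t\mid Z=z_1] > 0$, while the bound on $\mathbb{E}[Y(t)]$ maximizes or minimizes a linear functional with strictly positive coefficients on $x_0,x_1$ plus an identified constant. You describe the MIV upper bound as attained by setting $x_0 = K_1$ and then taking $x_1$ ``as small as the MIV inequality permits''; but MIV only bounds $x_1$ from below in terms of $x_0$, so for a maximization the optimum always has $x_1 = K_1$, where $x_1 \ge x_0$ holds trivially. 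Symmetrically, the MIV minimum has $x_0 = K_0$, and if MIV is active at the minimum the optimal vertex is $(K_0,\ \alpha K_0 + \beta)$; MIV being active means precisely $\alpha K_0 + \beta > K_0$, so again $x_1 > x_0$ automatically. In all cases the added cMIV half-space $x_1 \geq x_0$ is slack at the MIV-optimal vertex, and since the cMIV feasible set is contained in the MIV feasible set, the two LP values coincide.

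What is missing is that the example needs at least three instrument values, so that the MIV ``ironing'' can produce a \emph{non-monotone} sequence of implied lower bounds on the counterfactual means---something impossible with only two points. The paper takes $Z\in\{z_0,z_1,z_2\}$, binary $T$, $-K_0=K_1=1$, $\mathbb{E}[Y(t)\mid T=t,Z=z_j]=0$, and a non-monotone propensity score $P[T\ne t\mid Z=z_j]=(1/8,\ 1/2,\ 1/4)$. The MIV-ironed lower bounds on $\mathbb{E}[Y(t)\mid Z=z_j]$ are all $-1/8$, which back out lower bounds on $x_j$ equal to $(-1,\ -1/4,\ -1/2)$; since $-1/4 > -1/2$ this pattern violates the cMIV monotonicity in $z$, and imposing $x_2 \ge x_1 \ge -1/4$ strictly tightens the MIV-implied $x_2 \geq -1/2$, lifting the lower bound on $\mathbb{E}[Y(t)\mid Z=z_2]$ from $-1/8$ to $-1/16$ (and hence strictly raising the lower bound on $\mathbb{E}[Y(t)]$). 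Allow a third instrument level with a propensity score that is non-monotone in $z$, and your outline goes through; with only two values of $Z$ it does not.
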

\begin{proof}
Consider a binary treatment $T$, three levels of the instrument $Z \in \{z_0, z_1, z_2\}$ with $z_0 < z_1 < z_2$ and $-K_0 = K_1 = 1$. Suppose for a fixed $t \in \{0,1\}$, we have $\mathbb{E}[Y(t)|T = t, Z = z_i] = 0$, with $P[T \ne t| Z = z_0] = 0.125$, $P[T \ne t| Z = z_1] = 0.5$, $P[T \ne t| Z = z_2] = 0.25$.  The no-assumptions lower bounds on $\mathbb{E}[Y(t)|Z = z_i]$ are $(-0.125, ~ -0.5, ~ -0.25)$. MIV `irons' the no-assumptions bounds to $(-0.125, ~ -0.125, ~ -0.125)$, which also implies the lower bounds on $\mathbb{E}[Y(t) | T \ne t, Z = z_i]$: $(-1, ~ -0.25, ~ -0.5 )$.
Under cMIV, one can further `iron' these to improve the lower bound for $z_2$ up to $-0.25$, so that the lower bound on $\mathbb{E}[Y(t)|Z = z_2]$ becomes $-1/16 > -1/8$.
\end{proof}
Sharp bounds for all versions for cMIV follow from Theorem \ref{theor_identif}. We also show that under cMIV-w the bounds can be characterized explicitly, which is especially convenient if the treatment is binary, so that all cMIV assumptions coincide (see Appendix \ref{ap_cmiv_w}). For didactic purposes, we also provide detailed guidance on how to construct the triplet $M, c, p$ from Theorem \ref{theor_identif} under cMIV-s, cMIV-p and MIV in Appendix \ref{ap_cmiv_ps}.
\subsection{Discussion of cMIV}
This section illustrates the difference between MIV and cMIV by considering two parametric examples with classical applications.

\subsubsection{Education selection}\label{ex_ed_sel}

Consider the following empirical setup. Suppose $T$ is an indicator of whether or not an individual has a university degree, $Y(t)$ are potential log wages and $Z$ is an observed indicator of ability.

MIV assumption on $Z$ implies that more able individuals can do better both with and without a college degree on average: $\mathbb{E}[Y(t)|Z = z]$ - monotone in $z$. cMIV additionally imposes that: i) among those who have a college degree, a \textit{smarter} individual could have done relatively better on average than their counterpart if both did not have it: $\mathbb{E}[Y(0)|Z = z, T = 1]$ - monotone in $z$; and ii) among those who do not have a college degree, a \textit{smarter} individual could have done relatively better on average than their counterpart if both had it: $\mathbb{E}[Y(1)|Z = z, T = 0]$ - monotone in $z$.

We now consider a parametric example. Suppose that $\eta$ measures how \textit{diligent} one is from birth and is ex-ante mean-independent of $Z$. While $Z$ is observed by both the employers and the econometrician (e.g. an IQ score), the employer additionally observes the employee effort level $\eta + \varepsilon$ with $\varepsilon \indep (Z, T, \eta)$. Suppose $Var(Z) = Var(\eta) = 1$ and $\mathbb{E}[Z]=\mathbb{E}[\eta] = \mathbb{E}[\varepsilon] = 0$. Suppose that, on average, employees choose $T$ to maximize their expected earnings. This motivates a stylized Roy selection model, with
\begin{align*}
    Y(t) = \beta_{0}(t) + \beta_{1}(t)Z + \beta_{2}(t)\eta + \varepsilon(t), \quad T = \mathds{1}\{\mathbb{E}[Y(1) - Y(0)|Z,\eta] + \nu \geq 0\},
\end{align*}
where $\nu \indep (Z, \eta, \varepsilon)$ is remaining heterogeneity, and $\varepsilon(t) \equiv \beta_2(t) \varepsilon$. MIV demands that
\begin{align*}
    (\text{MIV}): \beta_1(t) \geq 0.
\end{align*}
MIV postulates that the direct effect of ability on potential earnings is positive. It seems reasonable to suppose that $\beta_i(t) \geq 0, ~ i = 1,2, ~ t = 0, 1$, i.e. both effort and ability increase potential wages. Letting $\delta_z \equiv \beta_1(1) - \beta_1(0)$ and $\delta_\eta \equiv \beta_2(1) - \beta_2(0)$ denote the differentials in the effects of ability and effort respectively, the additional requirement of cMIV is that:
\begin{align}\label{eq12}
    \underbrace{\beta_1(0) z}_\text{direct effect} + \underbrace{\beta_2(0) \mathbb{E}[\eta|  \delta_z z + \delta_\eta \eta + \tilde{\nu} \geq 0]}_\text{selection given $T = 1$} - \text{increasing}\\\label{eq13}
    \underbrace{\beta_1(1) z}_\text{direct effect} + \underbrace{\beta_2(1) \mathbb{E}[\eta|  \delta_z z + \delta_\eta \eta + \tilde{\nu} \leq 0]}_\text{selection given $T = 0$} - \text{increasing},
\end{align}
where $\tilde{\nu} \equiv \beta_0(1) - \beta_0(0) + \nu$. 

Notice that if $\delta_z$ and $\delta_\eta$ are of different signs, for example because the jobs that one may apply for with a college degree are more ability-intensive ($\delta_z > 0$), whereas those which are available otherwise are more skill-intensive ($\delta_\eta < 0$), the additional conditional monotonicity requirements \eqref{eq12}-\eqref{eq13} are less strict than MIV. This is because, \textit{conditional} on both having a degree and not having it, ability and effort are \textit{positively} associated. 

Intuitively, among those who do not have a degree ($T=0$), people of higher ability must have had stronger incentives to forgo college. This should have been because a higher level of diligence gives them a comparative advantage in effort-intensive jobs. Among those with a degree, higher ability implies a comparative advantage in ability-intensive occupations, which explains their willingness to select into this option ($T=1$). It does not, therefore, signal as low an effort level as it would for a less capable individual. 

Now consider the same setup with $\delta_z = \delta_\eta > 0$ and $\tilde{\nu} = 0$, 
\begin{align*}
    T = \mathds{1}\{\eta + Z \geq 0\}. 
\end{align*}
This selection mechanism can be explained by the fact that to get a degree one needs to be either hard-working or of high ability. The requirement of MIV is unchanged, and cMIV necessitates that:
\begin{align}\label{eq14}
    \beta_1(0) z + \beta_2(0) \mathbb{E}[\eta|  \eta \geq -z] - \text{increasing}\\\label{eq15}
    \beta_1(1) z + \beta_2(1) \mathbb{E}[\eta| \eta \leq -z] - \text{increasing}
\end{align}

In this case, conditional on each level of education, effort level $\eta$ and ability $Z$ are negatively associated, so the conditional selection terms in \eqref{eq14}-\eqref{eq15} make cMIV a stricter assumption than MIV. Intuitively, a more able individual with a college degree did not need to work as hard to get it as her counterpart with a lower ability. Similarly, if an individual is capable, but does not have a degree, she has to be of lower effort as otherwise she would have selected into education. 

Even if MIV holds, cMIV can thus fail if employer prefers effort over ability to the extent that the conditional negative association between the two outweighs the direct impact of ability on wages as well as any ex-ante positive correlation between the employer-observed signal of diligence and the ability.

An examination of equations \eqref{eq12} and \eqref{eq13} suggests that cMIV is more likely to hold whenever $\delta_z$ is small relative to $\delta_\eta$, while $\beta_1(\cdot)$ is large relative to $\beta_2(\cdot)$. This means that $Z$ should be \textit{relatively weak} in the parlance of the classical IV models, and \textit{strongly monotone}. 

Overall, it seems reasonable to use a proxy for the level of ability as a conditionally monotone instrument in the estimation of returns to schooling. One would be inclined to think that while $Z$ does enter selection, it affects the potential outcomes directly and strongly enough, so that there are no subgroups by schooling for which a higher value of ability would correspond to lower potential wages on average. 

\subsubsection{Simultaneous equations}\label{ex_sim_eq}
As some aspects of mathematical intuition may be muted in discrete models, we also consider a simple continuous setup to confirm the insights derived from the previous analysis. For illustrative purposes, we drop the boundedness and discreteness assumptions and consider the supply and demand simultaneous equations,
\begin{align*}
    q^k(p) = \alpha^k(p) + \beta^k(p) Z + \gamma^k(p)\eta + \kappa^k(p)\varepsilon^k , ~ k \in \{s,d\}.
\end{align*}
The observed log-price $P$ clears the market in expectation,
\begin{align}
    \label{mc}
    P \in \{p \in \mathbb{R}| \mathbb{E}[q^s(p)|Z,\eta] = \mathbb{E}[q^d(p)|Z, \eta]\},
\end{align}
where $\eta, Z$ are continuous unobserved and observed random variables respectively, with  $\mathbb{E}[\eta|Z] = 0$ a.s.\footnote{Mean independence is not restrictive, as it can always be enforced by redefining the d.g.p. in an observationally equivalent way.} and $\mathbb{E}[\varepsilon^k] = 0$ with $\varepsilon^k \indep (\eta, Z, \varepsilon^{-k})$ for $k \in \{s,d\}$. Further assume that all functions of $p$ are continuous. 

Potential price $p$ indexes the potential outcomes, giving rise to the demand and supply \textit{schedules}. Suppose we aim to identify the elasticity of supply, $\mathbb{E}[(q^s(p_1) - q^s(p_0))/(p_1 - p_0)]$ for some $p_1 > p_0$, and $Z$ is a monotone instrument for $q^s(p)$, while $P$ can be interpreted as treatment. $\eta$ is unobserved heterogeneity and $\varepsilon^k$ are random violations from the market clearing condition or measurement errors independent of the rest of the model. For an individual realization of market clearing an econometrician observes $\{P, \{q^k(P)\}_k, Z\}$, but does not observe the schedules at other prices $\{q^k(p)\}_k \text{ for } p \ne P$, nor disturbances $\{\eta, \{\varepsilon^k\}_k\}$.  

Define $\delta_z(p) \equiv  \beta^s(p) - \beta^d(p)$ and similarly for $\eta$, with $\delta_p(p) \equiv \alpha^s(p) - \alpha^d(p)$. As stated, the model is potentially \textit{incomplete} or \textit{incoherent}, as for a given vector $(Z, \eta)$ equation \eqref{mc} may have multiple or no solutions. To avoid that, so long as that the support of $Z, \eta, \varepsilon^k$ is full, it is necessary that $\delta_z(p)$, $\delta_\eta(p)$ be constant. We shall assume that for simplicity. Provided that $\delta_p(p)$, which determines the \textit{excess supply} at fixed $(Z, \eta)$, is strictly increasing and has full image, the model is \textit{complete} and \textit{coherent}, and 
\begin{align}\label{sel_P}
    P = \delta^{-1}_p( - \delta_z Z - \delta_\eta \eta).
\end{align}
Equation \eqref{sel_P} introduces a deterministic linear relationship between $Z$ and $\eta$ conditional on each given value of $P$. As we saw in the previous example, this constitutes the worst-case scenario for cMIV, if $\delta_z$ and $\delta_\eta$ have the same sign. A noisier selection mechanism would relax the conditional link between $Z$ and $\eta$, and would thus weaken the conditional selection channel.


Note that the reduced-form error is $u \equiv \gamma^s(P) \eta + \kappa^s(P) \varepsilon^s$ and there is a simultaneity bias, as
\begin{align*}
    \mathbb{E}[P u] = \mathbb{E}[P \gamma^s(P)  \underbrace{\mathbb{E}[\eta|\delta_z Z + \delta_\eta \eta = P]}_\text{simultaneity/ommited variable}] \ne 0.
\end{align*}
In this setup, MIV requires that
\begin{align*}
    (\text{MIV}): ~ \beta^s(p) \geq 0, ~ \forall p \in \mathbb{R}.
\end{align*}
Whereas cMIV-p additionally imposes that
\begin{align}\label{selec}
    \beta^s(p) z + \gamma^s(p)\mathbb{E}[\eta|\delta_z z + \delta_\eta \eta = - \delta_p(d)] \geq 0 - \text{increasing in} ~ z, ~ \forall p,d \in \mathbb{R}: d \ne p. 
\end{align}
Suppose that $\delta_z, \delta_\eta \ne 0$ to rule out uninteresting cases. \eqref{selec} then implies
\begin{align}\label{selec1}
    \beta^s(p) \geq \gamma^s(p)\frac{\beta^s(p) - \beta^d(p)}{\gamma^s(p) - \gamma^d(p)}.
\end{align}
For concreteness, consider two positive supply shocks, i.e. $\beta^s(p), \gamma^s(p) > 0$. Equation \eqref{selec1} means that either $\eta$ and $Z$ affect the reduced-form equilibrium price in different directions (recall the comparative advantage example), or the effect of $Z$ on the equilibrium price relative to its effect on the supply schedule is smaller than that of $\eta$,
\begin{align}\label{selec2}
    \text{sgn}(\delta_\eta) \ne \text{sgn}(\delta_z) \quad \text{or} \quad \left|\frac{\beta^s(p) - \beta^d(p)}{\beta^s(p)}\right| \leq \left|\frac{\gamma^s(p) - \gamma^d(p)}{\gamma^s(p)}\right|.
\end{align}

Under $\text{sgn}(\delta_\eta) = \text{sgn}(\delta_z)$, equation \eqref{selec2} once again requires that $Z$ be \textit{strongly monotone} and \textit{relatively weak} for cMIV-p to hold. The logic we described may help the researcher navigate the potential economic forces in a given application to decide whether cMIV-p is a suitable assumption. 

For example, consider estimating the supply elasticity in the market for plane tickets in the early days of Covid-19 pandemic. Suppose $Z$ is an inverse Covid-stringency index for the economy, while $\eta$ may be interpreted as residual cost shocks, defined to be mean-independent of $Z$. It is likely that $\delta_\eta \approx \gamma^s$, i.e. residual cost shocks affect mainly the supply in that sector, and not the demand. It is also likely that either supply is less responsive to $Z$ than demand (so that cMIV is implied by MIV), or the effects are of the same order of magnitude. $Z$ is therefore likely to be a conditionally monotone instrument. 

\subsection{Testing cMIV}\label{testing_cmiv}

One could argue against cMIV conditions whenever $\mathbb{E}[Y(t)|T = t, Z= z]$ fail to be monotone in the data. In general, the power and size of that test are unclear. There is, however, a special case when cMIV can be tested directly, provided that the researcher is willing to assume MIV. In some applications one may conjecture that the potential outcomes' functions $Y(t)$, either in the reduced or in the structural form, are such that the relative effects of $Z$ and the unobserved variable(s) $\eta$, potentially correlated with $Z$, are unchanged across outcome indices $t$. 

Researchers often impose even stricter versions of this homogeneity assumption. For example, \citet{MP2009} discuss MIV identification under HLR condition: $Y(t) = \beta t + \eta$. Conditions in Proposition \ref{cMIV_test} relax HLR to an arbitrary shape of response of a potential outcome to treatment and allow for a generally heteroscedastic/treatment-specific response to unobserved variables and instrument, so long as the relative effects are unchanged across potential outcomes. All functions in the proposition below are assumed to be measurable.

\begin{prop}\label{cMIV_test}
    Suppose that a): i) $Y(t) = g(t, \xi) + h(t) \psi(Z,\eta)$, $h(t) \ne 0$ for all $t \in \mathcal{T}$, with $\xi \indep (T, Z, \eta)$ and ii) MIV holds, strictly for some $z, z' \in \mathcal{Z}$ with $z'>z$; or b): i) $Y(t) = g(t, \xi, T) + h(t) \psi(Z,\eta)$ for all $t \in \mathcal{T}$, with $\xi \indep (T, Z, \eta)$, ii) $\frac{h(t)}{h(d)} > 0 ~ \forall t,d \in \mathcal{T}$; and iii) MIV holds. Then Assumption cMIV-p holds iff $\mathbb{E}[Y(t)|T = t, Z = z]$ are all monotone. 
\end{prop}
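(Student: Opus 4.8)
The plan is to reduce both cMIV-p and the observed diagonal-monotonicity condition to one and the same requirement on a single nuisance function, so that the equivalence follows by inspection once the reduction is in place. The first step exploits the common structure of cases (a) and (b). Using $\xi \indep (T,Z,\eta)$ (hence $\xi \indep (T,Z)$) together with the finiteness in Assumption I0(iii), for every $t,d\in\mathcal{T}$ and $z\in\mathcal{Z}$ one gets
\[
\mathbb{E}[Y(t)\mid T=d,Z=z]\;=\;c(t,d)\;+\;h(t)\,\phi(d,z),
\]
where $c(t,d)$ collects the $\xi$-part --- it equals $\mathbb{E}[g(t,\xi)]$ in case (a) and $\mathbb{E}[g(t,\xi,d)]$ in case (b), and in both cases is free of $z$ --- while $\phi(d,z)\equiv\mathbb{E}[\psi(Z,\eta)\mid T=d,Z=z]$ is free of $t$, and $h(t)\ne 0$. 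Since $Z$ is discrete with $\mathbb{P}[T=d,Z=z]>0$, no null-set issues arise. The content of this identity is that, for a fixed conditioning level $d$, the monotonicity in $z$ of $z\mapsto\mathbb{E}[Y(t)\mid T=d,Z=z]$ depends on $t$ only through $\operatorname{sgn} h(t)$: the shape $\phi(d,\cdot)$ is common to all potential-outcome indices.

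The second step pins down the sign of $h$; afterwards I normalize $h(t)>0$ for all $t$, which is harmless because $(\psi,h)\mapsto(-\psi,-h)$ leaves every $Y(t)$ and all hypotheses intact. In case (b) this is precisely hypothesis (ii). In case (a), I argue from MIV: since $\mathbb{E}[Y(t)\mid Z=z]=\mathbb{E}[g(t,\xi)]+h(t)\bar\phi(z)$ with $\bar\phi(z)\equiv\mathbb{E}[\psi(Z,\eta)\mid Z=z]$, MIV forces $z\mapsto h(t)\bar\phi(z)$ to be non-decreasing for every $t$; if two indices had oppositely signed $h$, then $\bar\phi$ would be both non-decreasing and non-increasing, hence constant, making every $\mathbb{E}[Y(t)\mid Z=\cdot]$ constant and contradicting the strict-MIV clause. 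This is the only use of strictness, and I expect it to be the most delicate step: it is exactly what keeps the ``iff'' from degenerating (dropping it, or dropping hypothesis (ii) in case (b), allows the constant-$\bar\phi$ / mixed-sign scenario that breaks the equivalence).

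For the third step, assume $|\mathcal{T}|\ge 2$ so that cMIV-p is non-vacuous. Modulo MIV, which is a maintained hypothesis of the proposition, cMIV-p asserts that $z\mapsto c(a,b)+h(a)\phi(b,z)$ is non-decreasing for every $a$ and every $b\ne a$; since $h(a)>0$ and, for each conditioning level $b\in\mathcal{T}$, some $a\ne b$ exists, this is equivalent to: $\phi(b,\cdot)$ is non-decreasing for every $b\in\mathcal{T}$. On the other hand, ``$\mathbb{E}[Y(t)\mid T=t,Z=\cdot]$ is non-decreasing for all $t$'' reads ``$c(t,t)+h(t)\phi(t,\cdot)$ non-decreasing for all $t$'', which --- again using $h(t)>0$ --- is also equivalent to ``$\phi(b,\cdot)$ non-decreasing for every $b\in\mathcal{T}$'' after relabeling $t$ as $b$. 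The two conditions therefore coincide, so under the maintained MIV hypothesis cMIV-p holds if and only if the observed diagonal means $\mathbb{E}[Y(t)\mid T=t,Z=\cdot]$ are all monotone, which is the claim. As a by-product, each $\phi(b,\cdot)$ is observable, since it equals $h(b)^{-1}\bigl(\mathbb{E}[Y(b)\mid T=b,Z=\cdot]-c(b,b)\bigr)$, which is consistent with the test being implementable from the data.
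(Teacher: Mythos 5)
Your proof is correct and takes essentially the same route as the paper's: decompose $\mathbb{E}[Y(t)\mid T=d,Z=z]$ into a $z$-free term plus $h(t)$ times a $t$-free function of $(d,z)$ using $\xi\indep(T,Z,\eta)$, use strict MIV (in case (a)) or hypothesis b)(ii) to fix a common sign for $h$, and then read off that both cMIV-p and diagonal monotonicity are equivalent to monotonicity of the shared shape $\phi(d,\cdot)$. Your treatment is somewhat more explicit than the paper's (which only states "since inequality is strict for some $z,z'$, it follows that $h(t)/h(d)>0$" without spelling out the constant-$\bar\phi$ contradiction), but the underlying argument and its two-case structure are the same.
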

Note that whether or not $h(t) \ne 0$ is observable in the data for case $(a)$ and whether or not $h(t)/h(d)>0$ is also identified for $(b)$.

\begin{remark}
    The monotonicity of observed outcomes has been routinely used in applied work to motivate the use of MIV condition (e.g. \citet{de2017effect}). We show that, given that MIV holds and under a homogeneity condition, the observed monotonicity is instead equivalent to cMIV-p.
\end{remark}

\begin{remark}
    cMIV is testable in Example \ref{ex_sim_eq} because the reduced form expression has the form $b): i)$. It is testable in Example \ref{ex_ed_sel} if instead of separately observing $\eta, Z$, employers on average observe a mixed signal of ability and effort, $s \equiv a Z + b\eta$ for some $a, b \in \mathbb{R}$.
\end{remark}


A test of cMIV-p is thus the test of all $f_t(z) \equiv \mathbb{E}[Y(t)|T = t, Z = z]$ being monotone,
\begin{align*}\label{null}
    \mathcal{H}_0: f_t(z) - \text{increasing in }z, ~ \forall t \in \mathcal{T}.
\end{align*}
To obtain such a test, we may extend the procedure in \citet{chetverikov_2019}\footnote{This test is developed for continuous $Z$, which is used in our application. Although the instrument is discretized at the estimation stage, the monotonicity of $\mathbb{E}[Y(t)|T = t, Z = z]$ for continuous $Z$ clearly implies the monotonicity of the discretized moments. The procedure we describe straightforwardly accommodates testing discrete instruments. As noted in \citet{chetverikov_2019}, for discrete conditioning variable the test is a simple parametric problem, since the conditional moment function can be $\sqrt{n}-$consistently estimated at each point from the support.}. Denote the set of all observations with treatment level $t$ as $\mathcal{I}_t \equiv \{i \in [n]: T_i = t\}$ with $n_t \equiv |\mathcal{I}_t|$. Suppose $\phi^{t}_{n_t}$ is the corresponding Chetverikov's regression monotonicity test (or a corresponding parametric test for discrete $Z$) with the confidence level $\alpha_t \in (0; 0.5)$. We define the joint test as
\begin{align*}
    \phi_n \equiv \max_{t \in \mathcal{T}} ~ \phi^t_{n_t}.
\end{align*}

Denote $\mathcal{P}^C$ to be the set of probability measures, such that for all $P \in \mathcal{P}^C$ and all $t \in \mathcal{T}$ the conditional probability measure given $T=t$ that $P$ generates satisfies the regularity conditions in Theorem 3.1 in \citet{chetverikov_2019}. Similarly, let $\mathcal{P}^C_t$ be the set of all the conditional probability measures given $T = t$ that measures from $\mathcal{P}^C$ generate. 


\begin{prop}
    If $\Pi_{t \in \mathcal{T}} (1-\alpha_t) \geq 1-\alpha$, then
    \begin{align*}
        \inf_{P \in \mathcal{P}^C \cap \mathcal{H}_0} ~ P[\phi_n = 0] \geq 1-\alpha + o(1),
    \end{align*}
    as $n \to \infty$.
\end{prop}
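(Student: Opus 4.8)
The plan is to combine the product structure of $\phi_n$ with the uniform size control of \citet{chetverikov_2019} and a conditioning argument on the realized treatment labels. The starting observation is that, by construction, $\phi_n = 0$ exactly when $\phi^t_{n_t} = 0$ for every $t \in \mathcal{T}$. Under $\mathcal{H}_0$ each $f_t$ is nondecreasing, so for a fixed $t$ the test $\phi^t_{n_t}$ is the regression-monotonicity test of \citet{chetverikov_2019} run on $n_t$ i.i.d.\ draws from the conditional law $P(\cdot\mid T=t)$, which by assumption lies in $\mathcal{P}^C_t$. Writing $\mathcal{H}_0^t$ for the subset of $\mathcal{P}^C_t$ under which the conditional mean of $Y$ given $Z$ is nondecreasing, Theorem 3.1 in \citet{chetverikov_2019} yields
\begin{align*}
    \liminf_{m\to\infty}\ \inf_{Q\in\mathcal{P}^C_t\cap\mathcal{H}_0^t}\ Q^m\big[\phi^t_m = 0\big]\ \ge\ 1-\alpha_t ,
\end{align*}
where $Q^m$ is the $m$-fold product law and $\phi^t_m$ denotes the level-$\alpha_t$ test on a sample of size $m$.

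Next I would condition on the vector $\mathbf{T}\equiv(T_1,\dots,T_n)$. Under i.i.d.\ sampling the pairs $(Y_i,Z_i)$ are, conditionally on $\mathbf{T}$, independent with $(Y_i,Z_i)\mid\mathbf{T}\sim P(\cdot\mid T=T_i)$; hence the blocks $\{(Y_i,Z_i):i\in\mathcal{I}_t\}$, $t\in\mathcal{T}$, are mutually independent, the $t$-th block being $n_t$ i.i.d.\ draws from $P(\cdot\mid T=t)$, and $\phi^t_{n_t}$ depends on the $t$-th block only. By exchangeability the conditional acceptance probability depends on $\mathbf{T}$ only through $n_t$, so
\begin{align*}
    \mathbb{P}\big[\phi_n=0\,\big|\,\mathbf{T}\big]\ =\ \prod_{t\in\mathcal{T}}\mathbb{P}\big[\phi^t_{n_t}=0\,\big|\,\mathbf{T}\big]\ =\ \prod_{t\in\mathcal{T}} g_t(n_t),
\end{align*}
where $g_t(m)\equiv \mathbb{P}_{P(\cdot\mid T=t)^m}[\phi^t_m=0]\ge 0$.

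Then I would pass to the limit. Fix $\varepsilon>0$; by the first display there is $M_\varepsilon$, not depending on $P\in\mathcal{P}^C\cap\mathcal{H}_0$, with $g_t(m)\ge 1-\alpha_t-\varepsilon$ for all $m\ge M_\varepsilon$ and all $t$. Taking expectations in the second display and using $g_t\ge 0$,
\begin{align*}
    \mathbb{P}[\phi_n=0]\ =\ \mathbb{E}\Big[\prod_{t\in\mathcal{T}}g_t(n_t)\Big]\ \ge\ \Big(\prod_{t\in\mathcal{T}}(1-\alpha_t-\varepsilon)\Big)\,\mathbb{P}\big[\min_{t\in\mathcal{T}}n_t\ge M_\varepsilon\big].
\end{align*}
Since $n_t\sim\mathrm{Binomial}(n,P[T=t])$ with $P[T=t]>0$ by the full-support part of Assumption I0, $\mathbb{P}[\min_t n_t\ge M_\varepsilon]\to 1$, so $\liminf_n\mathbb{P}[\phi_n=0]\ge\prod_t(1-\alpha_t-\varepsilon)$ uniformly over $\mathcal{P}^C\cap\mathcal{H}_0$; letting $\varepsilon\downarrow0$ and invoking the hypothesis $\prod_t(1-\alpha_t)\ge 1-\alpha$ gives the claim.

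The main obstacle is uniformity of both limiting statements over $\mathcal{P}^C$: the choice of $M_\varepsilon$ must be uniform over the class, which is precisely the uniform-over-$\mathcal{P}^C_t$ content of Theorem 3.1 in \citet{chetverikov_2019}, and $\mathbb{P}[\min_t n_t\ge M_\varepsilon]\to 1$ must hold uniformly, which requires a positive uniform lower bound on $\min_t P[T=t]$ over $\mathcal{P}^C$ — this should be read into, or appended to, the regularity conditions defining $\mathcal{P}^C$. A secondary point needing care is the exchangeability/measurability step allowing one to replace conditioning on $\mathbf{T}$ by conditioning on $n_t$; this is routine but worth one explicit line. Everything else is a union/product bound.
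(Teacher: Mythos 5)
Your proof is correct and follows the same conceptual route as the paper---decompose $\phi_n$ into per-treatment tests, control each test's acceptance probability via Theorem 3.1 of \citet{chetverikov_2019}, and combine multiplicatively---but your version is strictly more careful on the step that the paper handles informally. The paper asserts the marginal factorization $P[\phi_n=0]=\prod_{t}P[\phi^t_{n_t}=0]$ directly from exclusivity of the $\mathcal{I}_t$ and i.i.d.\ sampling; this identity is not exact, because the sample sizes $n_t$ are multinomially dependent (they sum to $n$) and the tests $\phi^t_{n_t}$ are functions of those random sizes, so the events $\{\phi^t_{n_t}=0\}$ are not marginally independent. Indeed, when the per-$t$ acceptance probabilities are increasing in sample size, negative association of the $n_t$'s makes the left side strictly \emph{smaller} than the right, which is the wrong direction for deducing the desired lower bound. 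Your conditioning on $\mathbf{T}$ repairs this cleanly: given $\mathbf{T}$ the treatment blocks are genuinely independent, the conditional acceptance probability of the $t$-th test is a function $g_t$ of $n_t$ alone, the correct marginal identity is $P[\phi_n=0]=\mathbb{E}\bigl[\prod_{t}g_t(n_t)\bigr]$, and the truncation bound $\ge\bigl(\prod_{t}(1-\alpha_t-\varepsilon)\bigr)\,P[\min_t n_t\ge M_\varepsilon]$ then delivers the result. Your closing observation about uniformity is also well taken: one needs $\inf_{P\in\mathcal{P}^C}\min_{t}P[T=t]>0$ for $P[\min_t n_t\ge M_\varepsilon]\to 1$ uniformly over the null class, which is stronger than the pointwise positivity in Assumption I0.ii and should indeed be read into the definition of $\mathcal{P}^C$; the paper's own proof has the same latent dependence even if it is not made explicit.
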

\begin{proof}
    Notice that each $\phi^t_{n_t}$ is a function of the observations from $\mathcal{I}_t$ only. Since $\mathcal{I}_t$ are mutually exclusive by construction and because the data are i.i.d., we have $P[\phi_n = 0] = \Pi_{t \in \mathcal{T}} P[\phi^t_{n_t} = 0]$. By the standard optimization argument, we have
    \begin{align*}
        \Pi_{t \in \mathcal{T}} \underset{P \in \mathcal{P}_t^C}{\inf} ~ P[\phi^t_{n_t} = 0] \leq \underset{P \in \mathcal{P}^C}{\inf} \Pi_{t \in \mathcal{T}} P[\phi^t_{n_t}=0].
    \end{align*}
    Theorem 3.1 from \citet{chetverikov_2019} and $\Pi_{t \in \mathcal{T}} (1-\alpha_t) \geq 1-\alpha$ then yield the result.
\end{proof}

\begin{remark}
    One may set $\alpha_t = 1 - (1-\alpha)^{1/N_T}$ as a baseline. If the domain knowledge suggests that for some treatments monotonicity is more likely to hold, one can set a higher $\alpha_t$ for them, so long as $\Pi_{t \in \mathcal{T}} (1-\alpha_t) \geq 1-\alpha$. This may improve the power of the test.
\end{remark}

\section{Returns to education in Colombia}\label{returns_to_education}
Our data is comprised of 861492 observations from Colombian labor force. The sample represents a snapshot of those individuals who could be matched across the educational, formal employment and census datasets in 2021\footnote{Educational dataset was assembled by the testing authority Instituto Colombiano para la Evaluación de la Educación (ICFES), formal employment dataset comes from social security data based on Planilla Integrada de Liquidación de Aportes (PILA), whereas census data is handled by Departamento Administrativo Nacional de Estadística (DANE). The data was merged and anonymized by ICFES.}. For 664633 individuals from this dataset we observe their average lifetime wages, education level and Saber 5 or Saber 11 scores for Mathematics and Spanish language tests\footnote{Saber 5 and 11 tests are taken at different ages, but designed to be comparable between each other, which justifies merging them.}. 


The outcome variable we consider ($Y_i$) is a log-wage, and $T_i$ is the education level. We distinguish four education levels: primary, secondary and high school, as well as `university'\footnote{$T_i$ is based on the number of years of schooling, $S_i$. If $S_i < 9$, set $T_i \equiv 0$ meaning the individual only graduated from primary school. $S_i \in [9;11)$ and $T_i \equiv 1$ correspond to completing compulsory education (secondary school), $S_i = 11$ and $T_i \equiv 2$ means that the individual is a high-school graduate, whereas $S_i > 11$ with $T_i \equiv 3$ means university education. Unfortunately, $S_i$ is capped at $17$ years in our sample, making it impossible to distinguish between those who continued to graduate education and those who just finished the $6-$years degree.}. Our measure of ability $Z_i$ is constructed as a CES aggregator of Mathematics and Spanish language test scores. To apply Theorem \ref{theor_identif}, we then split $Z_i$ into deciles. Formally, we have
\begin{align*}
    Z_i \equiv (MATH^{1/2}_i + SPANISH^{1/2}_i)^{2}.
\end{align*}
\begin{figure}[h]
    \centering
    \includegraphics[width=0.7\linewidth]{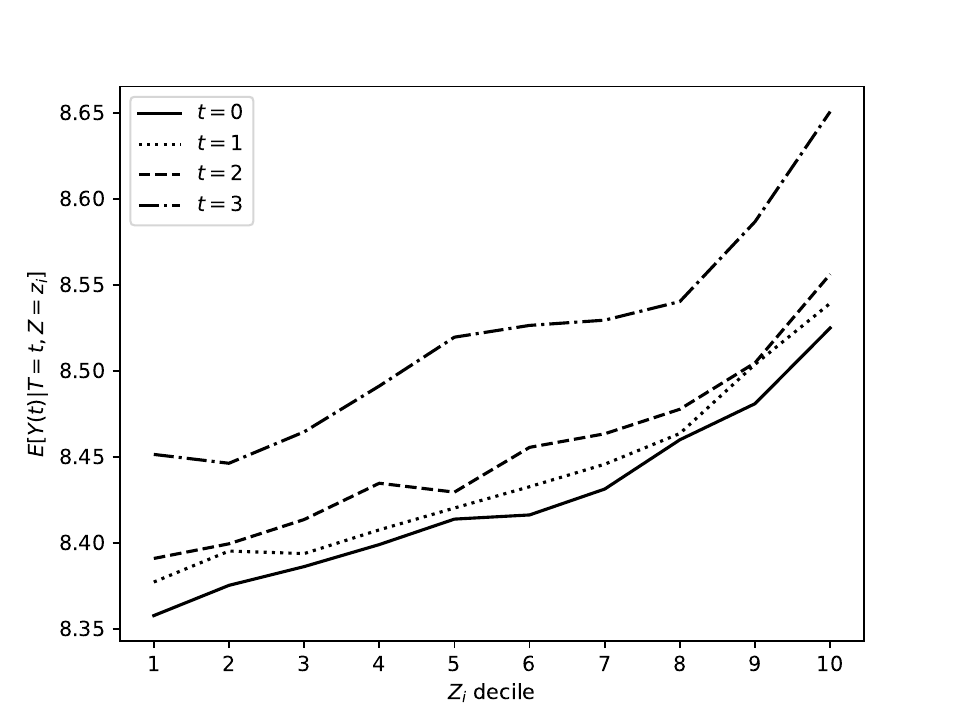}
    \caption{Estimated conditional moments of log-wages given ability and education level.}
    
\end{figure}
We first test whether cMIV is a reasonable assumption in our setup by implementing the test developed in Section \ref{testing_cmiv}. To that end, we use the parameters and kernel functions recommended by \citet{chetverikov_2019} and focus on the theoretically most powerful procedure, the step-down approach. The estimated $p$-value of the test is $0.29$, see Table \ref{tab_res_}. We thus conclude that cMIV-p is a credible assumption provided that MIV holds. 

\begin{table}[h]
    \centering
\begin{tabular}{lcccc}
\toprule
$t$ & $R^{st}_t$ & $R^{crit}_{t; 0.1}$ & $p$-value & $n_t$ \\
\midrule
0 & 0.98 & 2.33 & 0.34 & 274295\\
1 & -1.17 & 2.17 & 0.95 & 143299\\
2 & -1.51 & 2.30 & 1.00 & 216336 \\
3 & 1.86 & 2.38 & 0.08 & 30703\\
\bottomrule
\end{tabular}

    \caption{\footnotesize Results of the monotonicity test, see Section \ref{testing_cmiv}. Second column gives the estimated \citet{chetverikov_2019} test-statistic, third column contains the $\alpha = 0.1$ critical values, corresponding to $\alpha_t = 1-(1-0.1)^{1/4} \approx 0.026$ individual critical value. The last column gives a $p$-value against the individual null for each $t$. The overall $p$-value is $0.29$.}
    \label{tab_res_}
\end{table}

The data we study is rather noisy. One would expect a considerable measurement error in the construction of both treatment levels and the outcome variable\footnote{In particular, age is self-reported when filling an online questionnaire and appears to be of low quality, so we are forced to merge multiple cohorts.}. In line with that, the strongest form of cMIV is not sufficient to provide identification in the absence of further assumptions. While the resulting bounds are tighter than that under MIV, they remain uninformative. 

To achieve identification, we supplement our assumptions with the MTR condition, which appears reasonable in our setting.\footnote{See \citet{MP2000} for a discussion of MTR in the context of returns to education. From the results in Section \ref{section_aicm} it also follows that MTR need only hold conditionally on all pairs in \(\mathcal{T} \times \mathcal{Z}\)—it does not need to hold almost surely.}  The estimated bounds and confidence intervals are presented in Figure \ref{est_results}. While MIV and cMIV-w remain uninformative, both cMIV-p and cMIV-s yield positive lower bounds on the ATEs. Under cMIV-p, the effect of obtaining a university education is estimated to be at least \(3.54\%\), and at least \(5.5\%\) under cMIV-s. These findings align with previous evidence: causal estimates for the U.S. (\citet{card1993using}, \citet{propensity_score}, \citet{angrist2011schooling}) suggest returns of at least \(10\%\) for a four-year college degree. Recent evidence indicates that this figure may be substantially lower for Colombia \cite{gomez2022returns}.


\begin{figure}[h]
    \centering
    \includegraphics[width=0.9\linewidth]{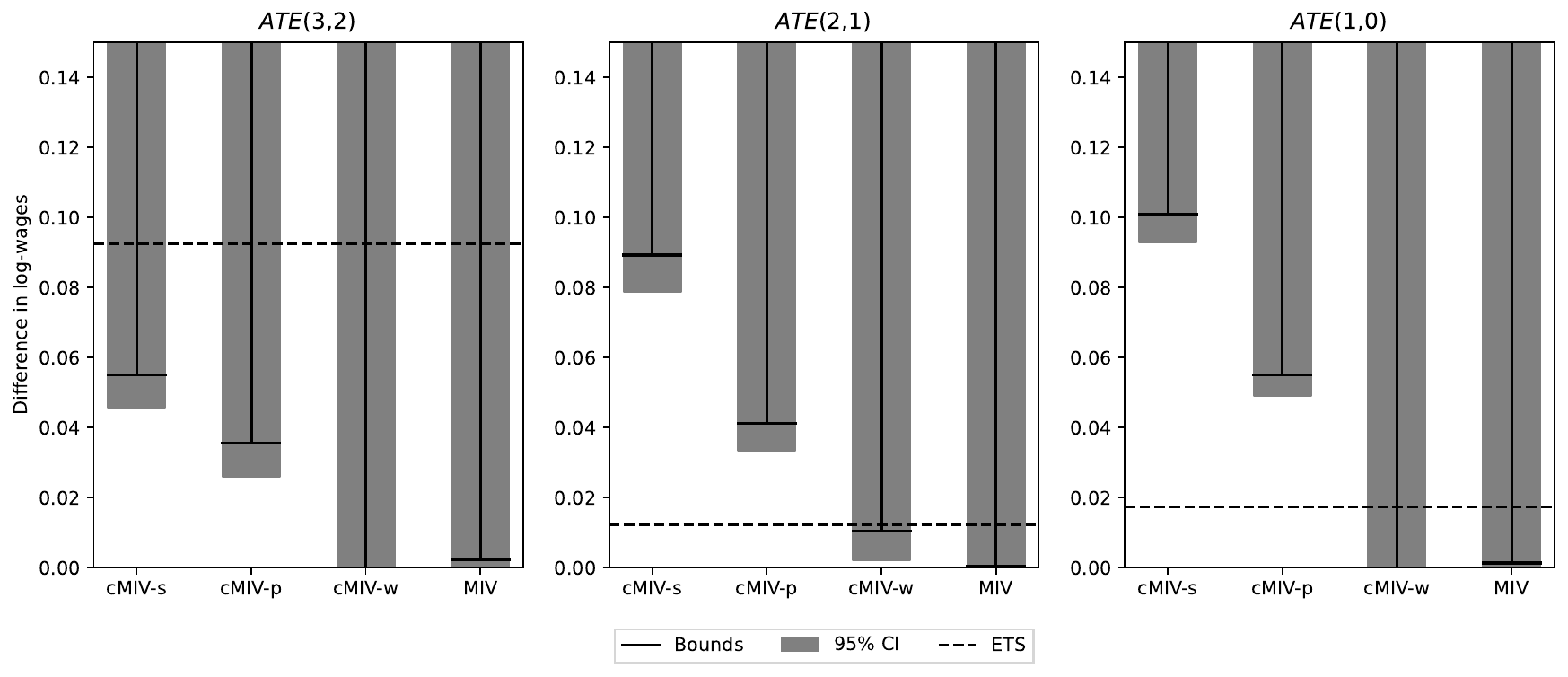}
    \caption{Estimation results for the monotonicity assumptions augmented with MTR. $5\%$ two-sided CI constructed according to Algorithm \ref{alg:debiasing} using $2.5\%$ one-sided bounds. Parameters are selected according to Appendix \ref{ap_pensel} and \ref{ap_ir_v}. The bounds' point estimates are given by $\breve{B}$, while the exogenous treatment selection (ETS) estimates are the sample analogues of $ATE^{ETS}_{t,d} \equiv \mathbb{E}[Y|T = t] - \mathbb{E}[Y|T = d]$.}
    \label{est_results}
\end{figure}

We also find significantly positive effects at other education stages, see Figure \ref{est_results}. Further results and details on the estimation procedure are available in Appendix \ref{ap_emp}. 

\newpage
\newpage 
\selectlanguage{english}

\nocite{*}

\addcontentsline{toc}{section}{6~~~~References.}

\bibliography{literature.bib} 
\bibliographystyle{ecta} 


\newpage
\appendix
\addcontentsline{toc}{section}{Online Appendices} 
\part{Online Appendices}
\parttoc
\section{Consistency of penalty function estimators}
\subsection{Proof of Lemma \ref{lemma_penalty}}
\begin{proof}
    Recall that for any sets $A, B$, we have $\inf A \cup B = \min \{\inf A, \inf B\}$. Fix any $(\theta, w) \in \mathbb{R}^S \times \mathbb{R}^q$ such that $\Theta_I(\theta) \subseteq \mathcal{X}$ and take $A \equiv L(\Theta_I(\theta); \theta,w)$, $B \equiv L(\mathcal{X}\cap\Theta_I(\theta)'; \theta,w)$. Note that $A \cup B = L(\mathcal{X}; \theta,w)$. Substituting the definitions, it follows that
    \begin{align*}
        \tilde{B}(\theta;w) = \min\{\inf_{x \in \Theta_I(\theta)} p'x, &\inf_{x \in \mathcal{X}\cap\Theta_I(\theta)'} p'x + w'(c - Mx)^+\} \\ =\min \{B(\theta), &\inf_{x x \in \mathcal{X}\cap\Theta_I(\theta)'} p'x + w'(c - Mx)^+\} \leq B(\theta),
    \end{align*}
    which establishes \eqref{cons_pen_fun}.

    For the second part, fix any pair $(\theta_0, w)$ that satisfies Assumption A0. We write $\theta_0 = (p',\text{vec}(M)', c')'$. Let $\lambda^*$ be the KKT vector from Assumption A1. The definition of KKT vector (see Section 28 in \citet{Rockafellar_1970}) requires that
    \begin{equation}\label{eq_problematic_label}
        B(\theta_0) = \inf_{x \in \mathbb{R}^d} p'x + \lambda^{*\prime}(c - Mx).
    \end{equation}
    Note that, for any $x \in \mathbb{R}^d$,
    \begin{align}
    \label{eq40}
        B(\theta_0) \leq p'x + \lambda^{*\prime}(c - Mx) \leq p'x + \lambda^{*\prime}(c - Mx)^+ \leq p'x + w'(c - Mx)^+, 
    \end{align}
    where the first inequality follows from \eqref{eq_problematic_label}, the second inequality follows from $\lambda^* \geq 0$ and $(t)^+ \geq t$ for any $t \in \mathbb{R}$, and the third inequality follows by Assumption A1. Taking infinum over $x \in \mathcal{X}$ on both sides of \eqref{eq40} and combining with \eqref{cons_pen_fun} yields
    \begin{align}\label{equality}
        \tilde{B}(\theta_0;w) = B(\theta_0)        
    \end{align}
    We now wish to show that $\mathcal{A}(\theta_0) = \tilde{\mathcal{A}}(\theta_0;w)$. From \eqref{equality}, the fact that $L(x; \theta_0, w) = p'x$ for $x \in \Theta_I(\theta_0)$ and $\mathcal{A}(\theta_0) \subseteq \Theta_I(\theta_0)$, it follows that: 
    \begin{align}\label{one_side_incl}
        \mathcal{A}(\theta_0) \subseteq \tilde{\mathcal{A}}(\theta_0;w)
    \end{align}
    To establish the other direction, we proceed by contradiction. Suppose $\exists x^* \in \tilde{\mathcal{A}}(\theta_0;w) \cap \mathcal{A}(\theta_0)'$. Suppose $x^* \in \Theta_I(\theta_0)$. Since $x^* \notin \mathcal{A}(\theta_0)$, it must then be that $p'x^* > B(\theta_0)$, but $\tilde{B}(\theta_0;w) = L(x^*;\theta_0,w) = p'x^*$, which yields a contradiction with \eqref{equality}. So, $x^* \notin \Theta_I(\theta_0)$. Consider
    \begin{align}\label{contradicting_eq}
        \tilde{B}(\theta_0;w) = p'x^* + w'(c - Mx^*)^+ > p'x^* + \lambda^{*\prime}(c - Mx^*)^+ \geq p'x^* + \lambda^{*\prime}(c - Mx^*)
    \end{align}
    where the first inequality follows from Assumption A1 and the fact that $x^* \notin \Theta_I(\theta_0) \implies \exists j: c_j - M_j x^* > 0$. Combining \eqref{contradicting_eq} with \eqref{eq_problematic_label}, one gets $\tilde{B}(\theta_0;w) > B(\theta_0)$, which yields a contradiction with \eqref{equality}. So, $\mathcal{A}(\theta_0) \supseteq \tilde{\mathcal{A}}(\theta_0;w)$. Combining with \eqref{one_side_incl} establishes
    \begin{align*}
        \mathcal{A}(\theta_0)  = \tilde{\mathcal{A}}(\theta_0;w).
    \end{align*}
    This concludes the proof of the lemma. 
\end{proof}
\subsection{Proof of Theorem \ref{first_penalty_theorem}}
\begin{proof}
Fix the true $\theta_0 = (p', \text{vec}(M)', c')'$. Recall that for any $g_1, g_2 \in \mathcal{C}(\mathcal{X})$, we can bound
\begin{align}\label{bound_sup}
    |\inf_\mathcal{X} g_1 - \inf_\mathcal{X} g_2| \leq \sup_{\mathcal{X}} |g_1 - g_2|.
\end{align}
Clearly, $L(\cdot;\theta,w) \in \mathcal{C}(\mathcal{X})$ for any $(\theta, w) \in \mathbb{R}^{S} \times \mathbb{R}_+$. Using this, the bound \eqref{bound_sup} and the definition of $\tilde{B}(\cdot)$, one gets
\begin{align}\label{sup_ineq}
    |\tilde{B}(\hat{\theta}_n; w_n) - \tilde{B}(\theta_0; w_n)| \leq \sup_{x \in \mathcal{X}}|L(x;\hat{\theta}_n,w_n) -L(x;\theta_0,w_n)|
\end{align}
Using the definition of $L(\cdot)$, triangle and Cauchy-Schwarz inequalities, for every $x \in \mathcal{X}$
\begin{align}
    |L(x;\hat{\theta}_n,w_n) -L(x;\theta_0,w_n)| \leq ||\hat{p}_n - p|| \cdot ||x|| +  w_n \sqrt{q} ||(\hat{c}_n - \hat{M}_n x)^+ - (c - Mx)^+||.
\end{align}
It is straightforward to observe that for any $v_1, v_2 \in \mathbb{R}^q$
\begin{align}\label{ineq_plus}
    ||v^+_1 - v^+_2|| \leq ||v_1 - v_2||
\end{align}
Further, recall that for any $A \in \mathbb{R}^{q \times d}$ and $x \in \mathbb{R}^d$
\begin{align}\label{cauchy_matrix}
    ||Ax|| \leq ||x||\sup_{||y|| \leq 1} ||Ay||  = ||x||\cdot ||A||_2,
\end{align}
where $||A||_2$ is the spectral norm of $A$. Also recall that if $||\cdot||_F$ is the Frobenius norm,
\begin{align}\label{frob_spect}
    ||A||_2 \leq ||A||_F = ||\text{vec}(A)||.
\end{align}
Combining \eqref{ineq_plus}, \eqref{cauchy_matrix}, \eqref{frob_spect} and using triangle inequality, one gets
\begin{align*}
    |L(x;\hat{\theta}_n,w_n) -L(x;\theta_0,w_n)| 
    \leq ||\hat{p}_n - p|| \cdot ||x|| \\
    + w_n \sqrt{q} \left(||\hat{c}_n - c||+ ||\text{vec}(\hat{M}_n) - \text{vec}(M)||\cdot||x||\right),
\end{align*}
where taking sup on both sides and using \eqref{sup_ineq} yields
\begin{align}\label{eq50}
    |\tilde{B}(\hat{\theta}_n; w_n) - \tilde{B}(\theta_0; w_n)| \leq ||\hat{p}_n - p|| \cdot ||x||_\infty\\ \notag
    + w_n \sqrt{q} \left(||\hat{c}_n - c||+ ||\text{vec}(\hat{M}_n) - \text{vec}(M)||\cdot||x||_\infty\right) = O_p\left(\frac{w_n}{\sqrt{n}}\right),
\end{align}
where $||x||_{\infty} = \sup_{x \in \mathcal{X}}||x|| < \infty$ by Assumption A0.ii, and the last equality follows from Assumption A0.iii. 

Finally, since $\theta_0$ satisfies A0, there exists some $\lambda^*$ in $\Uplambda(\theta_0)$ with $||\lambda^*||_\infty < \infty$. Let $E_n \equiv \{w_n > ||\lambda^*||_\infty\}$. By Lemma \ref{lemma_penalty}, $E_n \subseteq \{\tilde{B}(\theta_0; w_n) = B(\theta_0)\}$, so, for any deterministic $\{r_n\}_{n \in \mathbb{N}}$ and any $\varepsilon > 0$,
\begin{align}\label{eq51}
    & \mathbb{P}\left[r_n|\tilde{B}(\hat{\theta}_n; w_n) - B(\theta_0)| > \varepsilon \right]  =\\\notag
    & \mathbb{P}\left[r_n|\tilde{B}(\hat{\theta}_n; w_n) - \tilde{B}(\theta_0;w_n)| > \varepsilon, E_n \right]  + \mathbb{P}\left[r_n|\tilde{B}(\hat{\theta}_n; w_n) - B(\theta_0)| > \varepsilon, E_n' \right] \leq \\\notag
    & \mathbb{P}\left[r_n|\tilde{B}(\hat{\theta}_n; w_n) - \tilde{B}(\theta_0;w_n)| > \varepsilon\right] + \mathbb{P}[E'_n] 
\end{align}
By definition, $w_n \to \infty$ w.p.a.1 implies $\mathbb{P}[w_n > ||\lambda^*||_\infty] \to 1$, so $\mathbb{P}[E'_n] \to 0$.  Combining this, \eqref{eq50} and \eqref{eq51}, one obtains:
\begin{align*}
    \tilde{B}(\hat{\theta}_n; w_n) - B(\theta_0) = O_p\left(\frac{w_n}{\sqrt{n}}\right)
\end{align*}
This concludes the proof of the Theorem. 

\end{proof}

\subsection{Proof of Theorem \ref{root_n_pointwise}}\label{ap_proof_th22}
\begin{proof}
    Fix $\theta = (p', \text{vec}(M)', c')' = \theta_0$. We proceed in six steps, first proving the following lemma:
    \begin{lemma}\label{lemma_m_estim}
        Consider $B \equiv \arg \min_{x \in A} f(x)$ and $c \equiv f(x^*)$ for any $x^* \in B$, where $f(\cdot)$ is continuous and $A$ is a non-empty compact. Then, for any measurable random sequence $\{x_n\} \subseteq A$ such that $f(x_n) \xrightarrow{p} c$, there exists a measurable random sequence $\{x^*_n\} \subseteq B$ such that $||x^*_n - x_n|| \xrightarrow{p} 0$. 
    \end{lemma}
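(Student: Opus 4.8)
\textbf{Plan for the proof of Lemma \ref{lemma_m_estim}.}

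The statement is a standard ``approximate minimizers are close to the minimizing set'' compactness argument, and I would prove it by contradiction combined with a subsequence extraction. First I would record the key continuity fact: since $f$ is continuous on the compact set $A$, it attains its minimum, so $B \ne \emptyset$ and $c$ is well-defined; moreover, for every $\varepsilon > 0$ the quantity
\begin{align*}
    \rho(\varepsilon) \equiv \inf\{ f(x) - c : x \in A,\ d(x, B) \geq \varepsilon \}
\end{align*}
satisfies $\rho(\varepsilon) > 0$. This is itself proved by contradiction: if $\rho(\varepsilon) = 0$ for some $\varepsilon>0$, take a sequence $y_k \in A$ with $d(y_k,B)\ge\varepsilon$ and $f(y_k)\to c$; by compactness of $A$ pass to a convergent subsequence $y_{k_j} \to y \in A$; continuity gives $f(y) = c$, so $y \in B$, contradicting $d(y,B) \ge \varepsilon$ (the distance function $x \mapsto d(x,B)$ is continuous, so $d(y,B)\ge\varepsilon$ survives the limit).

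Given $\rho(\varepsilon) > 0$ for all $\varepsilon > 0$, I would define $x^*_n$ to be a (measurably selected) closest point of $B$ to $x_n$, i.e.\ $x^*_n \in \arg\min_{y \in B} \|x_n - y\|$, which exists because $B$ is compact; measurability of the selection follows from a measurable selection theorem (e.g.\ the Kuratowski–Ryll-Nardzewski theorem, or simply since $B$ is compact one can take the lexicographically-minimal nearest point, which is a Borel function of $x_n$). Then $\|x^*_n - x_n\| = d(x_n, B)$, so it suffices to show $d(x_n,B) \xrightarrow{p} 0$. Fix $\varepsilon > 0$. On the event $\{d(x_n, B) \geq \varepsilon\}$ we have $x_n \in A$ with $d(x_n,B)\ge\varepsilon$, hence $f(x_n) - c \geq \rho(\varepsilon)$, i.e.\ $|f(x_n) - c| \geq \rho(\varepsilon)$. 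Therefore
\begin{align*}
    \mathbb{P}[ d(x_n, B) \geq \varepsilon ] \leq \mathbb{P}[ |f(x_n) - c| \geq \rho(\varepsilon) ] \to 0
\end{align*}
by the hypothesis $f(x_n) \xrightarrow{p} c$ and $\rho(\varepsilon) > 0$. Since $\varepsilon > 0$ was arbitrary, $d(x_n,B) = \|x^*_n - x_n\| \xrightarrow{p} 0$, which is the claim.

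The only genuinely delicate point is the measurability of the selection $x^*_n$: one must ensure $\{x^*_n\}$ is a measurable random sequence rather than merely a pointwise-defined one. I would handle this by invoking a measurable-selection result for the compact-valued (here constant-valued) map $\omega \mapsto \arg\min_{y\in B}\|x_n(\omega) - y\|$, or by exhibiting an explicit Borel selector (nearest point, ties broken lexicographically). Everything else is routine compactness and the definition of convergence in probability; no quantitative rate is needed, so the penalty-parameter machinery from the main text plays no role here — this lemma is purely a deterministic-analysis plus elementary-probability statement that will subsequently be applied with $A = \mathcal{X}$, $f = L(\cdot;\theta_0,w)$-type objects and their sample analogues.
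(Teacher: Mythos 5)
Your proof is correct and is essentially the same compactness argument as the paper's: both hinge on the fact that $\inf\{f(x) - c : x \in A,\ d(x,B)\ge\varepsilon\}$ is strictly positive, and both define $x^*_n$ as a measurably selected nearest point of $B$ to $x_n$. The only difference is presentational — you establish the separation constant $\rho(\varepsilon)>0$ directly and then bound $\mathbb{P}[d(x_n,B)\ge\varepsilon]$, whereas the paper wraps the same fact inside a contradiction-and-subsequence argument; your direct version is a bit cleaner and also handles the edge case $\{x\in A : d(x,B)\ge\varepsilon\}=\emptyset$ without extra comment.
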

    \begin{proof}
        Under the assumptions of the Lemma, Berge's maximum theorem implies that $B$ is a non-empty compact. Because the distance is continuous, the projection $x^*_n$ of $x_n$ onto $B$ is always well-defined for each $n$. If it is not unique, we select one of the values that yield the minimum distance. Measurability of at least one such selection is established by reference to Theorem 18.19 in \citet{aliprantis2007infinite}. We then proceed by contradiction. Suppose that $\exists \varepsilon >0$:
        \begin{align}
            \mathbb{P}[||x^*_n - x_n|| > \varepsilon] \not\to 0
        \end{align}
        Then, there exists a $\delta > 0$ and a subsequence $\{n_{k}\}^\infty_{k = 1}$ such that, for all $k \in \mathbb{N}$:
        \begin{align}\label{existence_subs}
            \mathbb{P}[||x^*_{n_k} - x_{n_k}|| > \varepsilon] > \delta
        \end{align}
        Consider the following problem:
        \begin{align}
            \min_{x \in A, ~ d(x, B) \geq \varepsilon} f(x)
        \end{align}
        Notice that the constraint set is compact. It is also non-empty, as for any $k$ some of the realisations of $x_{n_k}$ are in it by \eqref{existence_subs}. Therefore the minimum is attained at some $\tilde{x}$. Suppose that the minimum is equal to $f(\tilde{x}) = \tilde{c}$. If $\tilde{c} = c$, it follows that $\tilde{x} \in B$, which is not possible as $d(\tilde{x}, B) \geq \varepsilon$. Clearly, $\tilde{c} < c$ is also infeasible as the constraint set of that problem is smaller than that of the original one. Therefore, $\tilde{c} - c = K > 0$. Then, note that for any $k \in \mathbb{N}$:
        \begin{align}
            ||x^*_{n_k} - x_{n_k}|| > \varepsilon \implies f(x_{n_k}) \geq f(\tilde{x}) = c + K > c
        \end{align}
        So, 
        \begin{align}
            \mathbb{P}[f(x_{n_k}) - f(x^*) \geq  K] \geq \mathbb{P}[||x^*_{n_k} - x_{n_k}|| > \varepsilon] > \delta > 0,
        \end{align}
        where the LHS goes to $0$ as $k \to \infty$, since $f(x_{n_k}) \xrightarrow{p} f(x^*)$ by assumption of the Lemma. This yields a contradiction. Therefore, $||x^*_{n} - x_{n}|| \xrightarrow{p} 0$.
    \end{proof}
    \begin{enumerate}
        \item We first prove that $\exists$ $\{\delta_n\} \to 0^+$ such that $\mathcal{A}(\hat{\theta}_n, w_n) \subseteq \mathcal{A}(\theta_0)^{\delta_n}$ w.p. $1$ asymptotically. For this purpose, recall that by Theorem 3 for any sequence $x_n \in \mathcal{A}(\hat{\theta}_n, w_n)$ for all $n$ and for any $x^* \in \mathcal{A}(\theta_0)$, we have:
        \begin{align}\label{proof_pen_conv}
            p'x_n + w_n \iota'(\hat{c}_n - \hat{M}_n x_n)^+ - p'x^* = o_p(1)
        \end{align}
        Furthermore, since $w_n = o_p(\sqrt{n})$, we have:
        \begin{align}\label{bound_in_prob}
            w_n ||\hat{c}_n - \hat{M}_n x - c + M x||_{\infty} = o_p(1)
        \end{align}
        Because the argmin is contained in a compact, $\mathcal{A}(\hat{\theta}_n, w_n) \subseteq \mathcal{X}$, the first term in \eqref{proof_pen_conv} is bounded in probability: $p'x_n = O_p(1)$, thus, from \eqref{proof_pen_conv}, it also follows that $w_n \iota (\hat{c}_n - \hat{M}_n x_n)^+ = O_p(1)$. By triangle inequality and using with \eqref{bound_in_prob}, we therefore conclude:
        \begin{align}
            w_n \iota'(c - M x_n)^+ = O_p(1)
        \end{align}
        As $w_n \to \infty$, it further follows that:
        \begin{align}\label{conv_to_set}
            (c - M x_n)^+ = o_p(1)
        \end{align}
        We shall now consider $\tilde{x}_n$ - a projection of $x_n$ onto $\{x \in \mathbb{R}^d|Mx \geq c\}$. Note that it exists, because distance is a continuous function and the set is a non-empty compact. Note that $\eqref{conv_to_set}$ implies that, for some random $\kappa_n \geq 0$ for all $n$:
        \begin{align}\label{set_exp}
            c - M x_n \leq \iota \kappa_n
        \end{align}
        where $\kappa_n = o_p(1)$. We get:
        \begin{align}\label{projections}
            ||x_n - \tilde{x}_n|| = d(x_n, \{x \in \mathbb{R}^d|Mx \geq c\}) \leq \\ \leq d_H(\{x \in \mathbb{R}^d|Mx \geq c - \kappa_n\}, \{x \in \mathbb{R}^d|Mx \geq c\}) \leq C \kappa_n, 
        \end{align}
        where $C> 0$ is some fixed constant. The first equality is by definition of projection, the second inequality follows from the definition of the Hausdorff distance and \eqref{set_exp} as well as:
        \begin{align}
            d(x_n, \{x \in \mathbb{R}^d|Mx \geq c\}) \leq \underset{x \in \{x \in \mathbb{R}^d|Mx \geq c - \kappa_n\}}{\sup}d(x, \{x \in \mathbb{R}^d|Mx \geq c\}))
        \end{align}
        The final inequality is implied by Lipschitz-continuity of polytopes in Hausdorff distance with respect to RHS expansions (see \citet{li1993sharp}). Therefore:
        \begin{align}\label{projection_conv}
            \tilde{x}_n  - x_n \xrightarrow{p} 0
        \end{align}
        We now wish to show that $p'x_n \xrightarrow{p} p'x^*$, where $x^*$ is some value from $\mathcal{A}(\theta_0)$. For arbitrary $\varepsilon >0$ note that:
        \begin{align}
            \mathbb{P}[|p'x_n + w_n\iota'(\hat{c}_n - \hat{M}_nx_n) - p'x^*| > \varepsilon] \geq \\
            \mathbb{P}[p'x_n > p'x^* + \varepsilon - w_n\iota'(\hat{c}_n - \hat{M}_nx_n) ] \geq \\
            \mathbb{P}[p'x_n > p'x^* + \varepsilon]
        \end{align}
        As the LHS goes to $0$ by \eqref{proof_pen_conv}, we have:
        \begin{align}\label{one_side_conv}
            \mathbb{P}[p'x_n > p'x^* + \varepsilon] \to 0
        \end{align}
        To prove the other side, note that, as $\tilde{x}_n \in \Theta_I(\theta_0)$, by definition of $x^*$, it must be that $p'\tilde{x}_n \geq p'x^*$. Therefore,
        \begin{align}
            \mathbb{P}[p'x_n < p'x^* - \varepsilon] \leq \mathbb{P}[p'x_n < p'\tilde{x}_n - \varepsilon] \to 0, 
        \end{align}
        where the RHS converges to $0$ by \eqref{projection_conv} and CMT. We thus conclude that $p'x_n \xrightarrow{p} p'x^*$ and, moreover, $p'\tilde{x}_n \xrightarrow{p} p'x^*$.
        
        Notice that by Lemma 2, for a fixed, large enough $w$ satisfying Assumption A1 Lemma \ref{lemma_m_estim} applies, where one sets $f(x) = L(x; \theta_0, w)$, $B = \mathcal{A}(\theta_0)$ with $f(x^*) = p'x^*$ for any $x^* \in \mathcal{A}(\theta_0)$. Thus, $\exists x^*_n \in \mathcal{A}(\theta_0)$ such that $||x_n - x^*_n|| \xrightarrow{p} 0$. Therefore, $\exists \delta_n \to 0^+$ such that:
        \begin{align}
            \mathbb{P}[||x_n - x^*_n|| < \delta_n] \to 1
        \end{align}
        
        Recall that the sequence $x_n$ was arbitrarily selected from $\mathcal{A}(\hat{\theta}_n, w_n)$, and we can, for example, select a measurable $\{x_n\}^{\infty}_{n = 1}$ (by Theorem 18.19 in \citet{aliprantis2007infinite}):
        \begin{align}
            x_n \in \underset{x \in \mathcal{A}(\hat{\theta}_n, w_n)}{\arg \max} d(x, \mathcal{A}(\theta_0))
        \end{align}
        For such $x_n$, we get:
        \begin{align}
            ||x_n - x^*_n|| < \delta_n \implies d(x, \mathcal{A}(\theta_0)) < \delta_n ~ \forall x \in \mathcal{A}(\hat{\theta}_n, w_n)
        \end{align} 
        So:
        \begin{align}\label{concentration}
            \mathbb{P}[\mathcal{A}(\hat{\theta}_n, w_n) \subseteq \mathcal{A}(\theta_0)^{\delta_n}] \geq \mathbb{P}[||x_n - x^*_n|| < \delta_n] \to 1
        \end{align}
        This establishes the existence of a deterministic $\delta_n \to 0^+$ such that $\mathcal{A}(\hat{\theta}_n, w_n) \subseteq \mathcal{A}(\theta_0)^{\delta_n}$ w.p.a.1.
    \item By \eqref{concentration} and using the representation found in Proposition \ref{prop_penalty_hdd} we have that:
    \begin{align}
        \underset{x \in \mathcal{X}}{\inf} ~ L(x;\hat{\theta}_n, w_n) = \underset{x \in \mathcal{A}(\theta_0)^{\delta_n}}{\inf} ~ L(x;\hat{\theta}_n, w_n) + o_p(1)\\
        = \underset{x \in \mathcal{A}(\theta_0)^{\delta_n}}{\min} p'x + w_n \underset{j \in [2^q]}{\max} ~ \left\{\sum_{i \in \Pi_j}  (\hat{c}_{ni} - \hat{M}_{ni}'x)\right\} + o_p(1),
    \end{align}
    where $o_p(1)$ encompasses realizations at which $\mathcal{A}(\hat{\theta}_n, w_n) \notin \mathcal{A}(\theta_0)^{\delta_n}$ or where $\hat{\theta}_n$ is not in a fixed open vicinity of $\theta_0$ that was argued to exist in Proposition \ref{prop_penalty_hdd}. Suppose that at $\theta_0$ the constraints that do not bind at any $x \in \mathcal{A} (\theta_0)$ are given by $I \subseteq \{1, 2, \dots, q\}$. By continuity, it follows that $\exists$ $\delta > 0$ and $\varepsilon > 0$ such that:
    \begin{align}
        c_{i} - M_{i} x < -\varepsilon, \forall i \in I
    \end{align}
    for any $x \in \mathcal{A}(\theta_0)^\delta$. From \eqref{concentration} it then also follows that:  
    \begin{align}
        \underset{x \in \mathcal{X}}{\inf} ~ L(x;\hat{\theta}_n, w_n) = \underset{x \in \mathcal{A}(\theta_0)^{\delta_n}}{\min} p'x + w_n \underset{\Pi \in 2^{[q] \setminus I}}{\max} ~ \left\{\sum_{i \in \Pi_j}  (\hat{c}_{ni} - \hat{M}_{ni}'x)\right\} + o_p(1)
    \end{align}
    \item Consider the problem in the linear programming representation found in Proposition \ref{prop_penalty_hdd}, which it admits w.p.a. $1$:
    \begin{align}\label{lp_rep_penalty}
        \inf_{x \in \mathcal{X}} L(x; \hat{\theta}_n; w_n) = \underset{t,x}{\min} ~ t \quad \text{s.t.: } \begin{cases} t \in [\underline{t}; \overline{t}]\\ x \in \mathcal{X}\\ p'x + \sum_{i \in \Pi_j} w_n (\hat{c}_{ni} - \hat{M}_{ni}'x) \leq t, j \in [2^q] \end{cases}
    \end{align}
    The Lagrangian reads as:
    \begin{align}
        \mathcal{L} = t + \sum_{\Pi \in 2^{[q]}} \lambda_\Pi \left(p'x - t + w_n \sum_{j \in \Pi} \hat{c}_{nj} - \hat{M}_{nj}'x\right),
    \end{align}
    Where the constraints $x \in \mathcal{X}$ and $t \in [\underline{t}; \overline{t}]$ are omitted, as they are not binding with probability $1$ as. This holds, as $\mathcal{A}(\theta_0) \subseteq Int(\mathcal{X})$ and $B(\theta_0) \in Int([\underline{t};\overline{t}])$ by assumption. Because $\mathcal{A}(\theta_0)$ is compact, there further exists\footnote{To see that, consider $A, B \subseteq \mathbb{R}^d$ such that $A$ is compact, $B$ is open and $A \subseteq B$. Since $B$ is open, for any $b \in B \exists \varepsilon  > 0: B_{\varepsilon}(b) \subseteq B$. This defines an open cover of $A$, as $A \subseteq \bigcup_{b \in B} B_{\varepsilon_b/2}(b)$. Since $A$ is compact, for any cover there exists a finite subcover, i.e. $\exists (b_k, \varepsilon_{b_k}/2)^K_{k = 1}$ such that $b_k \in B$ and $A \subseteq \bigcup^K_{k = 1} B_{\varepsilon_{b_k}/2}(b_k)$. Take $\delta = \min_k \varepsilon_{b_k}/2$. Then, pick any $x \in A^\delta$. It follows that $\exists y \in A$: $||x - y|| < \delta$. Because $y \in A$, there further $\exists k$: $||y - b_k|| \leq \varepsilon_{b_k}/2$. Thus, $||x - b_k|| \leq ||y - b_k|| + ||x - y|| < \varepsilon_{b_k}/2 + \delta \leq \varepsilon_{b_k}$, and so $x \in B_{\varepsilon_{b_k}}(b_k) \subseteq B$.} a $\overline{\delta} > 0$: $\mathcal{A}(\theta_0)^{\overline{\delta}} \subseteq Int(\mathcal{X})$  and as $\tilde{\mathcal{A}}(\hat{\theta}_n; w_n) \subseteq \mathcal{A}(\theta_0)^{\delta_n}$ w.p.a. $1$ for some $\delta_n \to 0^+$, it follows that w.p.a.1 $\mathcal{A}(\hat{\theta}_n; w_n) \subseteq Int(\mathcal{X})$. Similar argument establishes that $t^*_n \in Int([\underline{t};\overline{t}])$ w.p.a.1. In what follows, we will simply call such optimal pairs \textit{interior}.

    
    Differentiating with respect to $t$, one notes that:
    \begin{align}
        \sum_{\Pi} \lambda_\Pi = 1
    \end{align}
    Next, at any \textit{interior} optimal $t, x$:
    \begin{align}
        t = p'x + w_n \underset{\Pi}{\max} \sum_{j \in \Pi} (\hat{c}_{nj} - \hat{M}_{nj}'x) 
    \end{align}
    To see that, note that by contradiction, if: 
    \begin{align}
        t > p'x + w_n \underset{\Pi}{\max} \sum_{j \in \Pi} (\hat{c}_{nj} - \hat{M}_{nj}'x) 
    \end{align}
    Then, as we assumed that the pair $(t,x)$ is \textit{interior}, there exists $\tilde{t} < t$ such that the pair $(\tilde{t}, x)$ satisfies all the constraints. Therefore, $(t,x)$ is not optimal. The other direction of the inequality is infeasible, and so the equality must hold.
    Moreover, since $\Pi$ may be empty, we also have at any optimal $x$:
    \begin{align}
        t \geq p'x
    \end{align}
    Furthermore, the problem has a solution w.p.a.1, and therefore it has a vertex-solution, i.e. a solution that is pinned down by a matrix of binding constraints of full column-rank. Because w.p.a.1 any solution is \textit{interior}, any such matrix w.p.a.1 does not feature constraints $x \in \mathcal{X}, t \in [\underline{t}, \overline{t}]$. The only constraints that can be satisfied at such vertex-solution with an equality are of the following type:
    \begin{align}\label{bind_constr_seq}
        p'x - t = w_n \sum_{j \in \Pi_k} \hat{c}_{nj} - \hat{M}_{nj}' x, ~ k \in \tilde{J}
    \end{align}
    for some $\tilde{J} \subseteq 2^{[q]}: |\tilde{J}| \geq d + 1$, where the latter inequality holds by definition of a vertex of a linear program\footnote{Any finite feasible LP has a vertex-solution, at which the matrix of binding constraints has full rank, so that its dimension is at least that of $(t ~ x')'$.}. One can write the complete set of the binding constraints \eqref{bind_constr_seq} as:
    \begin{align}\label{bind_constr_matr}
        \hat{R}_{\tilde{J}n} \begin{pmatrix}
            t\\
            x
        \end{pmatrix} = \hat{r}_{n}^{\tilde{J}},
    \end{align}
    where the $|\tilde{J}| \times (d + 1)$ matrix $\hat{R}_{\tilde{J}n}$ is of full column rank and the system yields a unique solution $t^*_n, x^*_n$. 
    
    \item Denote the set of all vertices $(t^*, x^*)$ that satisfy \eqref{bind_constr_seq} with $|\tilde{J}| \geq d+1$ and a full-column-rank $\hat{R}_{\tilde{J}n}$ at a given $\hat{\theta}_n$ by $\mathcal{V}^*(\hat{\theta}_n)$. From the previous arguments it follows that $\mathcal{V}^*(\cdot)$ is non-empty w.p.a.1 and finite, because any finite-dimensional polytope has finitely many vertices and therefore the corresponding LP has finitely many optimal vertices. We will write $\mathcal{V}_x^*(\hat{\theta}_n)$ for the projection of that set on the $x$-coordinates. For any vertex-solution $(t^*, x^*) \in \mathcal{V}^*(\hat{\theta}_n)$, suppose constraints $V^* \subseteq \{1, \dots q\}$ are violated at it, meaning that:
    \begin{align}
        V^*(\hat{\theta}_n,x^*) \equiv \{j \in [q]|\hat{c}_{nj} - \hat{M}_{nj}'x^* > 0\} 
    \end{align}
    For brevity, we will write $V^*_n \equiv V^*(\hat{\theta}_n, x_n^*)$ where $t^*_n, x_n^* \in \mathcal{V}^*(\hat{\theta}_n)$ is some (measurable) sequence of optimal vertices. Note that:
    \begin{align}\label{eq_v_star}
        t^*_n = p'x^*_n + w_n \underset{\Pi}{\max} \sum_{j \in \Pi} (\hat{c}_{nj} - \hat{M}_{nj}'x^*_n) = p'x^*_n + w_n \sum_{j \in V_n^*} (\hat{c}_{nj} - \hat{M}_{nj}'x^*_n)
    \end{align}
    Consider \eqref{bind_constr_seq} and suppose $\tilde{J}_n = \tilde{J}(t^*_n, x^*_n)$ with $|\tilde{J}_n| \geq d+1$ is the set of the corresponding subsets, i.e.:
    \begin{align}\label{eq_pii}
        t^*_n = p'x^*_n + w_n \sum_{j \in \Pi_i} (\hat{c}_{nj} - \hat{M}_{nj}'x^*_n) ~ \forall i \in [k] 
    \end{align}
    It must be that $V_n^* \subseteq \Pi_i$ $\forall i \in \tilde{J}_n$, because $j \notin V_n^* \implies (\hat{c}_{nj} - \hat{M}_{nj}'x^*_n) \leq 0$, and so we have:
    \begin{align}\label{ineq_bounds}
        \sum_{j \in V_n^*} (\hat{c}_{nj} - \hat{M}_{nj}'x^*_n) =  \sum_{j \in \Pi_i} (\hat{c}_{nj} - \hat{M}_{nj}'x^*_n) \leq \sum_{j \in \Pi_i \cap V_n^*} (\hat{c}_{nj} - \hat{M}_{nj}'x^*_n),
    \end{align}
    where the first equality follows from \eqref{eq_pii} and \eqref{eq_v_star}. We now proceed by contradiction. Suppose that $\exists j: j \in V_n^* \cap \Pi_i'$ (where the complement is taken with respect to $[q]$), then:
    \begin{align*}
        \sum_{j \in \Pi_i \cap V_n^*} (\hat{c}_{nj} - \hat{M}_{nj}'x^*_n) < \sum_{j \in \Pi_i \cap V_n^*} (\hat{c}_{nj} - \hat{M}_{nj}'x^*_n) + \sum_{j \in \Pi'_i \cap V_n^*} (\hat{c}_{nj} - \hat{M}_{nj}'x^*_n) =\\
        =\sum_{j \in V_n^*} (\hat{c}_{nj} - \hat{M}_{nj}'x^*_n),
    \end{align*}
    which yields a contradiction with \eqref{ineq_bounds}, so there can be no such $j$. In light of \eqref{ineq_bounds} it then also follows that $\forall i \in \tilde{J}_n$ and $\forall j \in \Pi_i \cap V_n^{*\prime}$ it must be that:
    \begin{align}
        \hat{c}_{nj} - \hat{M}_{nj}'x^*_n = 0 ~ \forall j \in \Pi_k \setminus V_n^*
    \end{align}
    Therefore, the complete system described by equation \eqref{eq_pii}, is equivalent to:
    \begin{align}\label{final_matrix_lp}\begin{cases}
        &\hat{c}_{nj} - \hat{M}_{nj}'x^*_n = 0 ~ \forall i \in \tilde{J}_n: \Pi_i \ne V_n^*, ~ \forall ~ j \in \Pi_i \setminus V_n^*\\
        &t^*_n = p'x^*_n + w_n \sum_{j \in V^*_n} \hat{c}_{nj} - \hat{M}'_{nj}x^*_n
        \end{cases}
    \end{align}
    From the representation \eqref{bind_constr_matr}, we know that the matrix corresponding to system \eqref{final_matrix_lp} must be of full column rank, $d+1$. Dropping the equation defining $t_n^*$, it implies that there exists at least $d$ linearly independent equations of form:
    \begin{align*}
        \hat{c}_{nj} - \hat{M}_{nj}'x^*_n = 0
    \end{align*}
    We denote the set of all binding constraints by $\Pi^*(\hat{\theta}_n, x^*_n) \equiv \{j \in [q]|\hat{c}_{nj} - \hat{M}_{nj}'x^*_n = 0\}$, which we shall occasionally write as $\Pi_n^*$ for brevity. We thus have:
    \begin{align}
        |\Pi^*_n| \geq d, \quad \text{rk}(\hat{M}_{\Pi^*_n}) = d
    \end{align}

    \item Consider two collections of sets:
    \begin{align}
        \mathcal{E} \equiv \{A \subseteq 2^{[q]}: M_A x \ne c_{A} ~ \forall x \in \mathcal{A}(\theta_0)\}\\
        \mathcal{F} \equiv \{A \subseteq 2^{[q]}: p \notin \mathcal{R}(M'_A)\}
    \end{align}
    We shall now consider two events $E_n$ and $F_n$:
    \begin{align}
        E_n \equiv \{\Pi^*_n \in \mathcal{E}\}, \quad F_n \equiv \{\Pi^*_n \in \mathcal{F}\} \\
    \end{align}
    We wish to show that $\mathbb{P}[E_n] \to 0$ and $\mathbb{P}[F_n] \to 0$ and therefore $\mathbb{P}[E'_n \cap F'_n] \to 1$.

\begin{enumerate}[a)]
    \item Let us consider $E_n$ first. Since $\mathcal{A}(\theta_0)$ is compact, for a fixed set $A \in \mathcal{E}$, the condition $M_{A} x \ne c_{A} ~ ~ \forall x \in \mathcal{A}(\theta_0)$ implies that there exists $\varepsilon(A) > 0$:
    \begin{align}
        \inf_{x \in \mathcal{A}(\theta_0)} || M_{A} x - c_{A}|| > \varepsilon(A)
    \end{align}
    Because $E$ is a finite collection of sets, we can pick $\varepsilon = \min_{A \in E} \varepsilon(A)$, so that:
    \begin{align}
         \min_{A \in \mathcal{E}}\inf_{x \in \mathcal{A}(\theta_0)} || M_{A} x - c_{A}|| > \varepsilon
    \end{align}
    By continuity of the objective function in $x$, there further $\exists \kappa > 0$, such that:
    \begin{align}
         \min_{A \in \mathcal{E}}\inf_{x \in \mathcal{A}^\kappa(\theta_0)} || M_{A} x - c_{A}|| > \frac{\varepsilon}{2}
    \end{align}
    We now consider:
    \begin{align}
        \mathbb{P}[E_n] \leq \mathbb{P}\left[||\hat{M}_{\Pi^*_n} x^*_n - \hat{c}_{\Pi^*_n}|| = 0, \inf_{x \in \mathcal{A}^\kappa(\theta_0)} || M_{\Pi^*_n}x - c_{\Pi^*_n}|| > \frac{\varepsilon}{2}\right]
    \end{align}
    Observe that for any non-empty $A \subseteq [q]$, by Cauchy-Schwartz and triangle inequalities:
    \begin{align*}
        ||(\hat{M}_{nA} x^*_n - \hat{c}_{nA})|| = \\
        \left|\left|(M_A x^*_n - c_A) - \left((\hat{c}_{nA} - c_A) + (M_A - \hat{M}_{nA})x^*_n\right)\right|\right| \geq \\
        \left|\left| M_A x^*_n - c_A \right|\right| - \left|\left| \hat{M}_{nA} - M_A\right|\right|||x||_{\infty} - \left|\left|\hat{c}_{nA} - c_A\right|\right|
    \end{align*}
    We can thus further rewrite: 
    \begin{align*}
        \mathbb{P}\left[||\hat{M}_{\Pi^*_n} x^*_n - \hat{c}_{\Pi^*_n}|| \leq 0, \inf_{x \in \mathcal{A}^\kappa(\theta_0)} || M_{\Pi^*_n}x - c_{\Pi^*_n}|| > \frac{\varepsilon}{2}\right] \leq \\
        \mathbb{P}\left[ \left|\left| M_{\Pi^*_n} x^*_n - c_{\Pi^*_n} \right|\right| \leq  \eta_n, \inf_{x \in \mathcal{A}^\kappa(\theta_0)} || M_{\Pi^*_n}x - c_{\Pi^*_n}|| > \frac{\varepsilon}{2}\right], 
    \end{align*}
    where $\eta_n \equiv \left|\left| \hat{M}_{\Pi^*_n} - M_{\Pi^*_n}\right|\right|||x||_{\infty} + \left|\left|\hat{c}_{\Pi^*_n} - c_{\Pi^*_n}\right|\right| = o_p(1)$.
    Finally, using $\mathbb{P}[A \cap B'] + \mathbb{P}[A \cap B] = \mathbb{P}[A]$:
    \begin{align*}
        \mathbb{P}\left[ \left|\left| M_{\Pi^*_n} x^*_n - c_{\Pi^*_n} \right|\right| \leq  \eta_n, \inf_{x \in \mathcal{A}^\kappa(\theta_0)} || M_{\Pi^*_n}x - c_{\Pi^*_n}|| > \frac{\varepsilon}{2}\right]  = \\
        \mathbb{P}\left[ \left|\left| M_{\Pi^*_n} x^*_n - c_{\Pi^*_n} \right|\right| \leq  \eta_n, \inf_{x \in \mathcal{A}^\kappa(\theta_0)} || M_{\Pi^*_n}x - c_{\Pi^*_n}|| > \frac{\varepsilon}{2}, x_n^* \in \mathcal{A}^\kappa(\theta_0)\right] + \\
        + \mathbb{P}\left[ \left|\left| M_{\Pi^*_n} x^*_n - c_{\Pi^*_n} \right|\right| \leq  \eta_n, \inf_{x \in \mathcal{A}^\kappa(\theta_0)} || M_{\Pi^*_n}x - c_{\Pi^*_n}|| > \frac{\varepsilon}{2}, x_n^* \notin \mathcal{A}^\kappa(\theta_0)\right],
    \end{align*}
    where the second term is $o(1)$ by Step 1 of the proof. Finally, we note that $x^*_n \in \mathcal{A}^\kappa(\theta_0) \implies \left|\left| M_{\Pi^*_n} x^*_n - c_{\Pi^*_n} \right|\right| > \varepsilon/2$, which, combined with $\left|\left| M_{\Pi^*_n} x^*_n - c_{\Pi^*_n} \right|\right| \leq  \eta_n$, further implies that $\eta_n > \varepsilon/2 > 0$, so that:
    \begin{align*}
        \mathbb{P}\left[ \left|\left| M_{\Pi^*_n} x^*_n - c_{\Pi^*_n} \right|\right| \leq  \eta_n, \inf_{x \in \mathcal{A}^\kappa(\theta_0)} || M_{\Pi^*_n}x - c_{\Pi^*_n}|| > \frac{\varepsilon}{2}, x_n^* \in \mathcal{A}^\kappa(\theta_0)\right] \leq \\
        \mathbb{P}\left[\frac{\varepsilon}{2} \leq \eta_n \right] = o(1)
    \end{align*}
    This concludes the proof. 
    \item We now consider $F_n$. To do so, it is convenient to observe that the penalty function estimator and problem \eqref{lp_rep_penalty} are equivalent to yet another LP: 
    \begin{align}\label{santos_rep}
        B(\hat{\theta}_n) + o_p(1/\sqrt{n}) = \underset{x,a}{\min ~} p'x + w_n \iota'a \quad \text{s.t.}: \begin{cases}
            a \geq 0 \\
            a \geq \hat{c}_n - \hat{M}_n x
        \end{cases}
    \end{align}
Note that we drop the constraints corresponding to $x \in \mathcal{X}$ in \eqref{santos_rep}, and $o_p(1/\sqrt{n})$ accommodates the potential non-existence of the interior solution.
    Write Lagrangian:
    \begin{align*}
        \mathcal{L} = p'x + w_n \iota'a + \mu'(\hat{c}_n - \hat{M}_n x - a) - \omega'a
    \end{align*}
    The KKT conditions at an interior optimum are:
    \begin{align}\label{pos_lin_hul}
        p = \hat{M}_{n}'\mu\\
        w_n = \omega + \mu\\
        \omega'a = 0\\
        \mu'(\hat{c}_n - \hat{M}_nx - a) = 0\\
        a \geq \hat{c}_n - \hat{M}_n x\\
        a \geq 0, \omega \geq 0, \mu \geq 0
    \end{align}
    Analyzing the above system, one observes that if at $x^*_n \in \mathcal{V}^*_x(\theta_n)$ a constraint is violated, $j \in V_n^*$, then $a_j > 0$, and so $\omega_j = 0$, which implies $\mu_j = w_n$. If $\hat{M}_{nj} x_n^* -  \hat{c}_{nj} > 0$, then $\hat{c}_{nj} - \hat{M}_{nj} x_n^{*} - a_j < 0$, and so $\mu_j = 0$. Finally, if $j \in \Pi_n^*$, then $\mu_j \in [0; w_n]$. Therefore, \eqref{pos_lin_hul} rewrites as:
    \begin{align}
        p = w_n \sum_{j \in V_n^*} \hat{M}_{nj}' +  \sum_{j \in \Pi_n^*} \hat{M}_{nj}' \mu_j
    \end{align}
    Since $\mu_j \leq w_n$ and as $\hat{M}_n - M = O_p(1/\sqrt{n})$, we have:
    \begin{align}
        p = w_n \sum_{j \in V_n^*} M_{j}' +  \sum_{j \in \Pi_n^*} M_{j}' \mu_j + O_p(\frac{w_n}{\sqrt{n}})
    \end{align}
    Consider a projection $P_{\Pi_n^*}$ from $\mathbb{R}^d$ onto $\mathcal{R}(M'_{\Pi_n^*})$. For example, one can construct it as $M'_{\Pi_n^*}(M'_{\Pi_n^*})^{\dagger}$, where $\dagger$ denotes a Moore-Penrose pseudoinverse. We can write:
    \begin{align}\label{squeeze_eq}
        p - O_p(\frac{w_n}{\sqrt{n}}) = w_n (I - P_{\Pi_n^*})\sum_{j \in V^*} M_{j}' +  \underbrace{w_n P_{\Pi_n^*}\sum_{j \in V^*} M_{j}' + \sum_{j \in \Pi_n^*} M_{j}'\mu_j}_{T_n \in \mathcal{R}(M'_{\Pi_n^*})}  
    \end{align}
    Notice that, if $\sum_{j \in V^*} M_{j}' \notin \mathcal{R}(M'_{\Pi^*})$, then the RHS of \eqref{squeeze_eq} has unbounded norm:
    \begin{align}\label{unbounded_norm}
        \left|\left|w_n (I - P_{\Pi_n^*})\sum_{j \in V_n^*} M_{j}' + T_n \right|\right|^2  = \\\notag
        = w^2_n ||(I - P_{\Pi_n^*})\sum_{j \in V_n^*} M_{j}'||^2 + ||T_n||^2 
    \end{align}
    Since the square norm of the LHS of \eqref{squeeze_eq} is bounded from above by $||p||^2 +O_p(\frac{w^2_n}{n}) = ||p||^2 + o_p(1)$, \eqref{unbounded_norm} will contradict the equality in \eqref{squeeze_eq} w.p.a.1. 
    Suppose, alternatively, that $\exists v: \sum_{j \in V_n^*} M_{j}' = M'_{\Pi_n^*} v$. Equation \eqref{squeeze_eq} rewrites:
    \begin{align*}
        p - O_p(\frac{w_n}{\sqrt{n}}) = M'_{\Pi_n^*}(\mu_{\Pi_n^*} + w_n v),
    \end{align*}
    which implies, for example, that:
    \begin{align}\label{orthog_eq}
        (I - P_{\Pi_n^*})p +  P_{\Pi_n^*}p - M'_{\Pi_n^*}(\mu_{\Pi_n^*} + w_n v) = O_p(\frac{w_n}{\sqrt{n}})
    \end{align}
    The norm of the LHS of \eqref{orthog_eq} must go to $0$, however, if $p \notin \mathcal{R}(M'_{\Pi_n^*})$, we have, by orthogonality: 
    \begin{align*}
        \left|\left|(I - P_{\Pi_n^*})p\right|\right|^2 +  \left|\left|P_{\Pi_n^*}p - M'_{\Pi_n^*}(\mu_{\Pi_n^*} + w_n v)\right|\right|^2 \geq \left|\left|(I - P_{\Pi_n^*})p\right|\right|^2 > 0, 
    \end{align*}
    which will also yield a contradiction w.p.a.1. To complete the proof, one applies the same probabilistic arguments as used in step 5.a above, which we omit here. Thus, $\mathbb{P}[F_n] \to 0$.
\end{enumerate}
    \item We define the \textit{correct set of vertices}, $\mathcal{G}$, as follows:
    \begin{align*}
        \mathcal{G} \equiv \{A \subseteq [q]: \exists x \in \mathcal{A}(\theta_0) \text{ s.t. } M_{A}x = c_A, ~  p \in \mathcal{R}(M'_{A})\}
    \end{align*}
    In line with previous notation, let $G_n \equiv \{\Pi^*_n \in \mathcal{G}\}$. The results of point 5 imply that $\mathbb{P}[E_n'\cap F_n'] = \mathbb{P}[G_n] \to 1$.
    
    Consider any $A \in \mathcal{G}$. Suppose $p = M_A'v$ for some $v \in \mathbb{R}^{|A|}$. Further, fix any $x \in \mathcal{A}(\theta_0): M_A x = c_A$, then:
    \begin{align}
        B(\theta_0) = p'x = v'M_Ax = v'c_A
    \end{align}
    The conclusion then follows from the following chain of equalities:
    \begin{align}
        G_n &\implies p'x^*_n - B(\theta_0) = v'M_{\Pi^*_n}x^*_n - v'c_{\Pi^*_n} =\\
        &= v'\hat{M}_{\Pi^*_n} x^*_n -  v'c_{\Pi^*_n} + v'(M_{\Pi^*_n} - \hat{M}_{\Pi^*_n})x^*_n =\\\label{deviations_point}
        &= v'(\hat{c}_{\Pi^*_n} - c_{\Pi^*_n}) + v'(M_{\Pi^*_n} - \hat{M}_{\Pi^*_n})x^*_n
    \end{align}
    Finally, from \eqref{deviations_point}, applying the triangle and Cauchy-Shwartz inequalities as well as noting that over the event $G_n$ one has $\Pi^*_n \in \mathcal{G}$ by definition, it follows that:
    \begin{align*}
        G_n \implies |p'x^*_n - B(\theta_0)| \leq \varpi_n \equiv \\\notag
        \max_{A \in \mathcal{G}} \left\{ \left(||\hat{c}_{A} - c_{A}|| + ||x||_{\infty}||M_{A} - \hat{M}_{A}||\right) \cdot \min_{v \in \mathbb{R}^{|A|}:~ M_A'v = p} ||v||\right\}
    \end{align*}
    One concludes by noting that the RHS is clearly $O_p(1/\sqrt{n})$, as $\mathcal{G}$ is finite and $\hat{\theta}_n - \theta_0 = O_p(1/\sqrt{n})$ by assumption. Formally, for any $\varepsilon > 0$:
    \begin{align}
        \mathbb{P}[r_n|p'x^*_n - B(\theta_0)| > \varepsilon] = \mathbb{P}[r_n|p'x^*_n - B(\theta_0)| > \varepsilon, G_n] + o(1) \leq\\
        \mathbb{P}[r_n\varpi_n > \varepsilon, G_n] + o(1)
        \leq 
        \mathbb{P}[r_n\varpi_n > \varepsilon] + o(1)
    \end{align}
    and $r_n\varpi_n = O_p(\frac{r_n}{\sqrt{n}})$ for any $r_n \to \infty$, where we used the fact that $\mathbb{P}[G'_n \cap O_n] \leq \mathbb{P}[G'_n] = o(1)$ for any measurable $O_n$. Recalling that the choice of $x^*_n \in \mathcal{V}^*_{x}(\hat{\theta}_n)$ was arbitrary and that neither $\varpi_n$, nor the $o(1)$ depend on $x^*_n$, one gets:
    \begin{align}
        \sup_{x \in \mathcal{V}^*_{x}(\hat{\theta}_n)} |p'x - B(\theta_0)| = O_p(1/\sqrt{n})
    \end{align}
    But because any $x \in \mathcal{A}(\hat{\theta}_n; w_n)$ can be represented as a convex combination of vertices, $\{x_j\}^K_{j = 1} \subseteq \mathcal{V}^*_{x}(\hat{\theta}_n)$, as: $x = \sum_{j} \omega_j x_j$,  where $\omega_j \in [0;1]$ and $\sum_j \omega_j = 1$. Using that, applying the triangle inequality and taking maximum, one gets, for any $x \in \tilde{\mathcal{A}}(\hat{\theta}_n;w_n)$:
    \begin{align*}
        |p'x - B(\theta_0)| = \left|\sum_{j} \omega_j (p'x_j - B(\theta_0)\right| \leq\\  \max_{j} |p'x_j - B(\theta_0)| \leq \sup_{x \in \mathcal{V}^*_{x}(\hat{\theta}_n)} |p'x - B(\theta_0)| = O_p\left(1/\sqrt{n} \right)
    \end{align*}
    taking supremum on the left hand side establishes the claim of the theorem.
    \end{enumerate}
    \end{proof}
\section{Inference on the debiased estimator}
\subsection{Proof of Theorem \ref{main_inference}}
\begin{proof}
For $x \in \mathbb{R}^{qd}$, define the inverse-vectorization operator as
\begin{align*}
    \text{vec}_{q \times d}^{-1}(x) \equiv \left(\text{vec}(I_d)'\otimes I_q\right)(I_d \otimes x).
\end{align*}
Further, define selector matrices $C_{c}$ and $C_{M}$ that select the $c$ and $M$ components of $\theta$ respectively:
\begin{align*}
    C_c \theta = c, \quad C_M \theta = \text{vec}(M).
\end{align*}
Moreover, for a subset of rows $A \subseteq \{1,2,\dots, q\}$, define the row-selector $C(A)$ as
\begin{align*}
    C(A)M = M_{A}, \quad C(A)c = c_{A}.
\end{align*}
We first work on $\mathcal{D}^{(1)}_n$. From Step 5.b in the proof of Theorem \ref{root_n_pointwise}, it follows that solving the penalized problem is w.p.a.1 equivalent to solving a relaxed LP, i.e., w.p.a.1,
\begin{align}\label{relaxed_lp}
    \tilde{\mathcal{A}}(\hat{\theta}_1;w_{n_1}) = \min_{x \in \mathbb{R}^d, a \in \mathbb{R}^q} p'x + w_{n_1}\iota'a, \quad  \text{s.t.: } a \geq \hat{c}^{(1)} - \hat{M}^{(1)}x,  ~ a \geq 0.
\end{align}
Denote the set of vertex-solutions of \eqref{relaxed_lp} by $\hat{\mathcal{V}}_x$ and consider
\begin{align*}
    \hat{x} \in \arg \min_{x \in \hat{\mathcal{V}}_x}~ p'x, \quad \hat{A} = J(\hat{x};\hat{\theta}^{(1)}).
\end{align*}
From Step 6 of the proof of Theorem \ref{root_n_pointwise} it follows that $\hat{A} \in \mathbb{A}$ w.p.a.1. 

For a nonempty $A \in 2^{[q]}$ and any $\tilde{M} \in \mathbb{R}^{q\times d}$, define
\begin{align*}
    \mathbb{S}(A, \tilde{M}) \equiv \arg \underset{v \in \mathbb{R}^{|A|}: ||v|| \leq \overline{v}}{\min} ||p - \tilde{M}_{A}'v||^2 
\end{align*}
The optimization problem above is continuous in $\tilde{M}$, the constraint correspondence is constant and compact. Hence, for any nonempty $A \in 2^{[q]}$, by Berge's Maximum Theorem, $\mathbb{S}(A, \cdot)$ is compact, nonempty, and upper-hemicontinuous (see Theorem 17.31 in \citet{aliprantis2007infinite}). 

Because $\hat{M}^{(1)} \xrightarrow{p} \hat{M}^{(1)}$, for any nonempty $A\in 2^{[q]}$ it follows by the usual M-estimation argument\footnote{Theorem 18.19 in \citet{aliprantis2007infinite} establishes measurability of $\mathbb{S}(\hat{A}, \hat{M}^{(1)})$ and $\mathbb{S}(\hat{A},M)$)}, there exists a deterministic $s_n(A) \downarrow 0$, such that
\begin{align}\label{upper_hemi}
    \mathbb{S}(A, \hat{M}^{(1)}) \subseteq \mathbb{S}(A, M)^{s_n(A)}, \quad \text{w.p.a.1},
\end{align}
so that also
\begin{align}\label{upper_hemic}
    \mathbb{S}(\hat{A}, \hat{M}^{(1)}) \subseteq \mathbb{S}(\hat{A}, M)^{s_n(\hat{A})} \text{ w.p.a.1}
\end{align}
Observing that the objective function is convex, $\mathbb{S}(A,\tilde{M})$ is also convex-valued for any nonempty $A \in 2^{[q]}$ and $\tilde{M} \in \mathbb{R}^{q\times d}$. 

Define some measurable $\check{v} \in \mathbb{S}(\hat{A}, \hat{M}^{(1)})$ and denote its projection onto $\mathbb{S}(\hat{A}, M)$ by $\tilde{v}_n$. Both $\check{v}$ is a random sequence, but we suppress the dependence on $n$ for simplicity. $\tilde{v}_n$ is well-defined by the Hilbert Projection Theorem. $\check{v}$ is well-defined, and $\tilde{v}_n$ is measurable by Theorem 18.19 in \citet{aliprantis2007infinite}. From \eqref{upper_hemic} it follows that
\begin{align}\label{sel_v}
    ||\check{v} - \tilde{v}_n|| = o_p(1)
\end{align}




By definition, $A\in \mathbb{A}$ implies that $\exists v^* \in \mathcal{S}_{A}$ such that $||v^*|| \leq \underset{\tilde{A} \in \mathbb{A}}{\max} \underset{v \in \mathcal{S}_{\tilde{A}}}{\min} ||v|| \leq \overline{v}$, where the last inequality is by Assumption B3. This implies that, for any $A \in \mathbb{A}$,
\begin{align}\label{opt_arg}
    \underset{v \in \mathbb{R}^{|A^*|}: ||v|| \leq \overline{v}}{\inf} ||p - M_{A^*}'v||^2 = \underset{v \in \mathbb{R}^{|A^*|}: ||v|| \leq \overline{v}}{\min} ||p - M_{A^*}'v||^2 = 0.
\end{align}
By \eqref{opt_arg}, we have $\mathbb{S}(A, M) \subseteq \mathcal{S}_{A}$ if $A \in \mathbb{A}$. Therefore, $\tilde{v}_n \in \mathcal{S}_{\hat{A}}$ w.p.a.1.



The following Lemma justifies our construction:
\begin{lemma}\label{reshuffle_lemma}
    Suppose $\hat{A} \in \mathbb{A}$. Then,
    \begin{align}\label{eq_lem1}
        \tilde{v}_{n}'c_{\hat{A}} = B(\theta_0),\\
        \label{eq_lem2}
        \tilde{v}_{n}'M_{\hat{A}}\hat{x} = p'\hat{x}.
    \end{align}
\end{lemma}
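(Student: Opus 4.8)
The plan is to exploit the two defining properties of membership $\hat{A} \in \mathbb{A}$ — conditions \eqref{optimality} and \eqref{range_req} — together with the fact, established just above, that $\tilde v_n \in \mathcal S_{\hat A}$ w.p.a.1, i.e. $p = M_{\hat A}' \tilde v_n$. The key observation is that $\tilde v_n$ is, by this very identity, a vector of coefficients expressing the objective gradient $p$ as a linear combination of the gradients of the constraints indexed by $\hat A$; this is exactly the algebraic content of a KKT multiplier restricted to the rows in $\hat A$. So once $p = M_{\hat A}' \tilde v_n$ is in hand, both displays follow by substituting this into $p'x$ for suitable $x$ and invoking the fact (Lemma \ref{lemma_simple}) that $\hat A$ corresponds to a nice face.

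For \eqref{eq_lem2}: since $\hat x \in \hat{\mathcal V}_x \subseteq \tilde{\mathcal A}(\hat\theta^{(1)};w_{n_1})$ is a vertex-solution with $\hat A = J(\hat x;\hat\theta^{(1)})$, this is a statement \emph{about the estimated quantities}, not the population ones; but \eqref{eq_lem2} as written involves $M_{\hat A}$ and $p$ (population objects) and $\hat x$. Here I would simply compute directly: using $p = M_{\hat A}'\tilde v_n$ (valid w.p.a.1), we get $\tilde v_n' M_{\hat A}\hat x = (M_{\hat A}'\tilde v_n)'\hat x = p'\hat x$, which is a pure identity with no appeal to any binding-constraint structure for $\hat x$. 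So \eqref{eq_lem2} is immediate from $\tilde v_n \in \mathcal S_{\hat A}$.

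For \eqref{eq_lem1}: I want to show $\tilde v_n' c_{\hat A} = B(\theta_0)$. Because $\hat A \in \mathbb A$, condition \eqref{optimality} gives some $x^\star \in \mathcal A(\theta_0)$ with $M_{\hat A} x^\star = c_{\hat A}$, and condition \eqref{range_req} together with $\tilde v_n \in \mathcal S_{\hat A}$ gives $p = M_{\hat A}' \tilde v_n$. Then
\begin{align*}
    \tilde v_n' c_{\hat A} = \tilde v_n' M_{\hat A} x^\star = (M_{\hat A}'\tilde v_n)' x^\star = p' x^\star = B(\theta_0),
\end{align*}
where the last equality holds because $x^\star \in \mathcal A(\theta_0)$ attains the minimum. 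This is really a restatement of the content of Lemma \ref{lemma_simple} (the nice-face lemma): $\hat A \in \mathbb A$ satisfies \eqref{optimality} and \eqref{range_req}, so $F = \{x : M_{\hat A} x = c_{\hat A}\}$ is a nice face and $p' x = B(\theta_0)$ on all of $F$; picking any representation $v$ of $p$ over $M_{\hat A}'$ and evaluating at a point of $F$ recovers $v' c_{\hat A} = B(\theta_0)$.

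The one point requiring care — the main (mild) obstacle — is bookkeeping the "w.p.a.1" qualifiers: the identity $p = M_{\hat A}'\tilde v_n$ holds only on the event that $\hat A \in \mathbb A$ and that $\check v \in \mathbb S(\hat A, \hat M^{(1)})$ with $\tilde v_n$ its projection landing in $\mathcal S_{\hat A}$, all of which were shown to have probability approaching one. Since the Lemma's hypothesis is literally "$\hat A \in \mathbb A$", I would state the conclusions as holding on that event intersected with the probability-one-limiting event that $\tilde v_n \in \mathcal S_{\hat A}$ (equivalently, phrase the whole displayed chain conditionally on $\hat A \in \mathbb A$ and note $\tilde v_n \in \mathcal S_{\hat A}$ follows from \eqref{opt_arg} and \eqref{upper_hemic}). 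No further estimates are needed; both equalities are exact algebraic consequences once the representation $p = M_{\hat A}'\tilde v_n$ and the nice-face property are invoked.
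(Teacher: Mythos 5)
Your proof is correct and follows essentially the same route as the paper: both arguments rest on the identity $p = M_{\hat A}'\tilde v_n$ (from $\tilde v_n \in \mathcal S_{\hat A}$) and the existence, via condition \eqref{optimality}, of $x^\star \in \mathcal A(\theta_0)$ with $M_{\hat A}x^\star = c_{\hat A}$, from which both displays follow by transposing and substituting. The only small remark is that once the hypothesis $\hat A \in \mathbb A$ is assumed, the inclusion $\tilde v_n \in \mathcal S_{\hat A}$ holds deterministically (since $\tilde v_n$ is by construction in $\mathbb S(\hat A, M) \subseteq \mathcal S_{\hat A}$ via \eqref{opt_arg}), so the ``w.p.a.1'' caveats you add are unnecessary at the level of this lemma and belong to the place where the lemma is invoked.
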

\begin{proof}
    If $\hat{A} \in \mathbb{A}$, condition \eqref{optimality} holds for some $x \in \mathcal{A}(\theta_0)$ such that $M_{\hat{A}}x = c_{\hat{A}}$. Since such $x$ is a minimizer, it follows that $p'x = B(\theta_0)$. As $\tilde{v}_{n} \in \mathcal{S}_{\hat{A}}$, we have $p = M_{\hat{A}}'\tilde{v}_{n}$. Taking transpose and multiplying by $\hat{x}$ yields \eqref{eq_lem2}. To show \eqref{eq_lem1}, write: 
    \begin{align*}
        p'x = \tilde{v}_{n}'M_{\hat{A}}x = \tilde{v}_{n}'c_{\hat{A}}
    \end{align*}
    This concludes the proof of the Lemma. 
\end{proof}

To avoid dealing with changing dimension, we let $\hat{v} \in \mathbb{R}^{q}$ be such that $\hat{v}_{\hat{A}} = \check{v}$ and $\hat{v}_j = 0$ if $j \notin \hat{A}$. Similarly, define $\dot{v}_n$: $(\dot{v}_n)_{\hat{A}} = \tilde{v}_n$ and $(\dot{v}_n)_j = 0$ if $j \notin \hat{A}$. Note that $||\hat{v} - \dot{v}_n|| = ||\check{v} - \tilde{v}_{n}||$. 

Equipped with $\hat{v}$, $\hat{A}$ and $\hat{x}$, we can now move onto the second fold. For $(A, v, x) \in 2^{[q]} \setminus\{\emptyset\} \times \mathbb{R}^{q} \times \mathcal{X}$, define
\begin{align*}
    H_n(A, x, v) \equiv  \frac{\sqrt{n_2}}{\sigma_n(A,x,v)}v_A'\left(\hat{c}^{(2)}_{A} - c_{A} - (\hat{M}^{(2)}_{A} - M_{A}) x\right), \quad H_n \equiv H_n(\hat{A},\hat{x},\hat{v}).
\end{align*}


Let $Z^{(2)}_n \equiv \sqrt{n_2}(\hat{\theta}^{(2)} - \theta_0)$. One can rewrite
\begin{align}\label{expr_hn}
\sqrt{n_2}v_A'\left(\hat{c}^{(2)}_{A} - c_{A} - (\hat{M}^{(2)}_{A} - M_{A}) x\right) = v_A'C(A)\left(C_c Z^{(2)}_n - \text{vec}_{q\times d}^{-1}(C_M Z^{(2)}_n) x\right).
\end{align}
Applying the definition of $\text{vec}_{q \times d}^{-1}$ and using bilinearity of Kronecker product, one notes that \eqref{expr_hn} is linear in $Z^{(2)}_n$ and therefore, under Assumption B1, for any $(A, v, x) \in 2^{[q]} \setminus\{\emptyset\} \times \mathbb{R}^{q} \times \mathcal{X}$, we have
\begin{align*}
\sqrt{n_2}v_A'\left(\hat{c}^{(2)}_{A} - c_{A} - (\hat{M}^{(2)}_{A} - M_{A}) x\right) \xrightarrow{d} \mathcal{N}(0,\sigma^2(A,x, v, \Sigma)),
\end{align*}
where $\sigma^2(\cdot)$ is given in Lemma \eqref{asymptotic_variance}.

By assumption B4 we then have, for a fixed optimal triplet $A, x, v$ and CMT,
\begin{align}\label{conv_for_fixed}
    \frac{\sqrt{n_2}}{\sigma(A,x,v,\Sigma)}v_A'\left(\hat{c}^{(2)}_{A} - c_{A} - (\hat{M}^{(2)}_{A} - M_{A}) x\right)\xrightarrow{d} \mathcal{N}(0,1)
\end{align}

We begin by taking the infeasible $\hat{\sigma}_n(A, v, x) = \sigma(A, v, x, \Sigma)$.
   Consider the set:
    \begin{align}
        \aleph(\underline{v}, \underline{\sigma}) \equiv \{(A, v, x) \in 2^{[q]} \setminus\{\emptyset\} \times \mathbb{R}^{q} \times \mathcal{X}: \underline{v} \leq ||v_n|| \leq \overline{v}, \sigma(A, v, x, \Sigma) \geq \underline{\sigma}\}
    \end{align}
    We now fix an aribtrary deterministic sequence $(A_n, v_n, x_n) \in \aleph(\underline{v},\underline{\sigma})$ for all $n \in \mathbb{N}$ for some small $\underline{v} > 0$ and $\underline{\sigma}>0$ that we pick below. Consider the limit (integration is with respect to $\mathcal{D}^{2}_n$ only):
    \begin{align*}
        \lim_{n \to \infty} \mathbb{P}[H_n(A_n, v_n, x_n)\leq z_{1-\alpha}] 
    \end{align*}
    The space $2^{[q]}\setminus\{\emptyset\}$, to which $A_n$ belongs, is endowed with a discrete metric, and we consider the space $\aleph(\underline{v},\underline{\sigma})$ as endowed with the maximum product metric $\rho_{\infty}$. It is straightforward to notice that $\sigma(\cdot)$ is continuous in its first three arguments with respect to $\rho_{\infty}$ even on the unrestricted space $2^{[q]} \setminus\{\emptyset\} \times \mathbb{R}^{q} \times \mathcal{X}$, and thus $\aleph(\underline{v},\underline{\sigma})$ is a compact space for any $\underline{v} > 0 ,\underline{\sigma} > 0$. It is also non-empty for some small enough $\underline{v} > 0 ,\underline{\sigma} > 0$ by Assumption B4. Suppose $\underline{v} > 0 ,\underline{\sigma} > 0$ are small enough and pick any convergent subsequence $(A_{n_k}, v_{n_k}, x_{n_k}) \to (A, v, x)$. Recall that:
    \begin{align}
        H_n(A_n, v_n, x_n) = g(\sqrt{n_2}(\hat{\theta}^{(2)} - \theta_0), A_n, v_n, x_n)
    \end{align}
    for a continuous function $g$ and:
    \begin{align}
        \begin{pmatrix}
            \sqrt{(n_2)_k}(\hat{\theta}^{(2)} - \theta_0)\\
            A_{n_k}\\ 
            v_{n_k}\\
            x_{n_k}
        \end{pmatrix} \xrightarrow{d} \begin{pmatrix}
            \mathcal{N}(0,\Sigma)\\
            A\\
            v\\
            x
        \end{pmatrix}
    \end{align}
    we conclude that, by continuous mapping theorem, as $k \to \infty$:
    \begin{align}
         g(\sqrt{(n_2)_k}(\hat{\theta}^{(2)}_{n_k} - \theta_0), A_{n_k}, v_{n_k}, x_{n_k}) = H_{n_k}(A_{n_k}, v_{n_k}, x_{n_k}) \xrightarrow{d} g(Z, A, v, x),
    \end{align}
    where $Z \sim \mathcal{N}(0, \Sigma)$. By \eqref{conv_for_fixed}, this implies:
    \begin{align}\label{lim_of_conv}
        \lim_{k \to \infty }\mathbb{P}[H_{n_k}(A_{n_k}, v_{n_k}, x_{n_k})\leq z_{1-\alpha}] = 1-\alpha 
    \end{align}
    We claim that this further implies that:
    \begin{align}
        \lim_{n \to \infty} \mathbb{P}[H_n(A_n, v_n, x_n)\leq z_{1-\alpha}] = 1-\alpha
    \end{align}
    Suppose, by contradiction, $\lim_{n \to \infty} \mathbb{P}[H_n(A_n, v_n, x_n)\leq z_{1-\alpha}] \ne 1-\alpha$. It means that $\exists \varepsilon > 0$ such that $\forall N \in \mathbb{N}$ $\exists n \geq N$ such that:
    \begin{align}
        |\mathbb{P}[H_n(A_n, v_n, x_n)\leq z_{1-\alpha}] - (1-\alpha)| > \varepsilon
    \end{align}
    Thus, we can construct a subsequence $n_{k}$ such that: 
    \begin{align}
        |\mathbb{P}[H_{n_k}(A_{n_k}, v_{n_k}, x_{n_k})\leq z_{1-\alpha}] - (1-\alpha)| > \varepsilon
    \end{align}
    for all $k \in \mathbb{N}$. Noting that $A_{n_k}, v_{n_k}, x_{n_k}$ still belongs to a compact metric space, we can find a further subsequence $n_{k_j}$ such that $A_{n_{k_j}}, v_{n_{k_j}}, x_{n_{k_j}}$ is convergent. But for this subsequence our previous result, \eqref{lim_of_conv}, yields that:
    \begin{align}
        \mathbb{P}[H_{n_{k_j}}(A_{n_{k_j}}, v_{n_{k_j}}, x_{n_{k_j}})\to  (1-\alpha),
    \end{align}
    which yields a contradiction. Thus, for any $(A_n, v_n, x_n)$ satisfying $x_n \in \mathcal{X}$, $\underline{v} < ||v_n|| \leq \overline{v}$ and $\sigma(A_n, v_n, x_n, \Sigma) \geq \underline{\sigma}$ for all $n \in \mathbb{N}$:
    \begin{align}
        \lim_{n \to \infty} \mathbb{P}[H_n(A_n, v_n, x_n)\leq z_{1-\alpha}] = 1 - \alpha
    \end{align}
    \begin{lemma}\label{lemma_65}
        There exists a measurable sequence $\breve{x}$, such that $(\hat{A}, v_n, \breve{x})$ is an optimal triplet w.p.a.1 and $\rho_{\infty}((\hat{A}, \hat{v}, \hat{x}), (\hat{A}, \dot{v}_n, \breve{x})) = o_p(1)$.
    \end{lemma}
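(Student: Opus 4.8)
Here is how I would prove Lemma \ref{lemma_65}.

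\textbf{Construction.} The plan is to take $\breve{x}$ to be a measurable metric projection of $\hat{x}$ onto the polyhedral face
\begin{align*}
    P_{\hat{A}} \equiv \mathcal{A}(\theta_0) \cap \{x \in \mathbb{R}^d : M_{\hat{A}}x = c_{\hat{A}}\}.
\end{align*}
On the event $G_n$ introduced in Step 6 of the proof of Theorem \ref{root_n_pointwise} (for which $\mathbb{P}[G_n]\to 1$) we have $\hat{A}\in\mathbb{A}$, so \eqref{optimality} yields some $x^\ast\in\mathcal{A}(\theta_0)$ with $M_{\hat{A}}x^\ast=c_{\hat{A}}$; hence $P_{\hat{A}}\neq\emptyset$ on $G_n$. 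Since $M,c,p,B(\theta_0)$ are non-random and $\hat{A}$ is measurable with finitely many possible values, $\omega\mapsto P_{\hat{A}(\omega)}$ is a measurable, closed-valued correspondence on $G_n$, so a measurable selection $\breve{x}$ of its projection map exists by Theorem 18.19 in \citet{aliprantis2007infinite}; off $G_n$ put $\breve{x}=\hat{x}$. Recall $\hat v,\dot{v}_n$ are already defined, with $\|\hat{v}-\dot{v}_n\|=\|\check{v}-\tilde{v}_n\|=o_p(1)$ by \eqref{sel_v}.

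\textbf{Triplet verification.} I would then check that $(\hat{A},\dot{v}_n,\breve{x})$ is an optimal triplet w.p.a.1. Condition (i) holds because $\text{rk}(\hat{M}^{(1)}_{\hat{A}})=d$ forces $|\hat{A}|\geq d$; (ii)–(iii) hold since $\breve{x}\in P_{\hat{A}}\subseteq\mathcal{A}(\theta_0)$ and $M_{\hat A}\breve x=c_{\hat A}$; and (iv) holds w.p.a.1 because $\tilde{v}_n\in\mathcal{S}_{\hat{A}}$ w.p.a.1 (established just above the lemma), i.e.\ $p=M_{\hat{A}}'\tilde{v}_n=M_{\hat{A}}'(\dot{v}_n)_{\hat{A}}$. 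For (v), note $\text{Supp}(\dot{v}_n)\subseteq\hat{A}$ by construction; relabeling $\hat{A}$ as $A^\circ\equiv\text{Supp}(\dot{v}_n)$ preserves (ii)–(iv) (the discarded rows carry zero weight) and gives (v), and since $\sigma(\cdot)$ and $H_n(\cdot)$ depend on $v$ only through its support this relabeling is immaterial downstream; the residual point that $|A^\circ|\geq d$ w.p.a.1 can be arranged by taking $\check{v}$ to be a minimum-norm element of $\mathbb{S}(\hat{A},\hat{M}^{(1)})$.

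\textbf{Distance estimate.} Writing $\mathcal{A}(\theta_0)=\{x:Mx\geq c,\ p'x\leq B(\theta_0)\}$, the set $P_{\hat{A}}$ is a polyhedron cut out by a fixed finite family of (in)equalities, with only $\hat{A}$ varying over finitely many values on which $P_{\hat A}\neq\emptyset$. Hoffman's error bound for polyhedra (cf.\ \citet{li1993sharp}) then gives a constant $C<\infty$, uniform over those $\hat A$, with
\begin{align*}
    d(\hat{x},P_{\hat{A}}) \leq C\Big(\big\|(c-M\hat{x})^+\big\| + \big(p'\hat{x}-B(\theta_0)\big)^+ + \big\|M_{\hat{A}}\hat{x}-c_{\hat{A}}\big\|\Big).
\end{align*}
On the probability-one event where $\hat{x}\in\tilde{\mathcal{A}}(\hat\theta^{(1)};w_{n_1})$ (the relaxed LP \eqref{relaxed_lp}), the first term is $o_p(1)$ by \eqref{conv_to_set}, the second is $o_p(1)$ (indeed $O_p(n^{-1/2})$) by Theorem \ref{root_n_pointwise}, and the third is $\|M_{\hat A}-\hat M^{(1)}_{\hat A}\|\,\|\hat x\|+\|\hat c^{(1)}_{\hat A}-c_{\hat A}\|=o_p(1)$ using $\hat{M}^{(1)}_{\hat{A}}\hat{x}=\hat{c}^{(1)}_{\hat{A}}$, $\hat\theta^{(1)}\to\theta_0$, and $\hat{x}\in\mathcal{X}$ compact. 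Hence $\|\hat{x}-\breve{x}\|=d(\hat{x},P_{\hat{A}})=o_p(1)$, which together with $\|\hat{v}-\dot{v}_n\|=o_p(1)$ and the common first coordinate yields $\rho_\infty\big((\hat{A},\hat{v},\hat{x}),(\hat{A},\dot{v}_n,\breve{x})\big)=o_p(1)$.

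\textbf{Main obstacle.} The crux is the distance estimate: one must show that \emph{approximate} feasibility in $\Theta_I$, \emph{approximate} optimality, and \emph{approximate} satisfaction of the binding equalities jointly force $\hat{x}$ to lie genuinely near the exact face $P_{\hat{A}}$ — a bounded-linear-regularity/Hoffman argument for the intersection of a polytope with an affine subspace — alongside the bookkeeping that controls each residual and the measurable-selection and support technicalities in the triplet verification.
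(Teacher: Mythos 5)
Your argument is correct, and it rests on the same underlying mechanism as the paper's proof — a polyhedral error bound — but the execution is a bit leaner. The paper projects $\hat{x}$ in two stages: first onto $\mathcal{A}(\theta_0)$ to obtain $\tilde{x}$ with $\|\hat{x}-\tilde{x}\|=o_p(1)$ (reusing a bound already established inside the proof of Theorem \ref{root_n_pointwise}), and then from $\tilde{x}$ onto $\mathbb{X}(\hat{A},0)$ via the $\varepsilon$-relaxed sets $\mathbb{X}(\hat{A},\varepsilon)$ and the Lipschitz continuity of polyhedra in right-hand-side perturbations (\citet{li1993sharp}), so that only the single residual $\|M_{\hat{A}}\tilde{x}-c_{\hat{A}}\|$ has to be controlled at that stage. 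You instead project $\hat{x}$ directly onto $P_{\hat{A}}=\mathbb{X}(\hat{A},0)$ and invoke Hoffman's error bound once, controlling the three residuals (constraint violation in $\Theta_I$, optimality gap, $\hat{A}$-binding residual) simultaneously; since $\{x: Mx\geq c,\ p'x\leq B(\theta_0)\}$ coincides with $\mathcal{A}(\theta_0)$, your one-sided residual $(p'\hat{x}-B(\theta_0))^+$ is indeed the correct Hoffman residual for that formulation, and finiteness of the Hoffman constant uniformly over the finitely many values of $\hat{A}$ is as you say. The two routes are equivalent in power; yours avoids the intermediate projection at the cost of having to juggle the three residuals at once, the paper's composes two simpler bounds. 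One further remark worth keeping: you correctly flag that condition (v), $\hat{A}=\mathrm{Supp}(\dot{v}_n)$, is not automatic (only $\mathrm{Supp}(\dot{v}_n)\subseteq\hat{A}$ is), and the paper's own proof does not address this either. Your relabeling observation — that $\sigma(\cdot)$ and $H_n(\cdot)$ depend on $v$ only through its nonzero coordinates — is the right repair, but the proposed min-norm choice of $\check{v}$ does not actually guarantee $|\mathrm{Supp}(\dot{v}_n)|\geq d$; a clean fix would drop the $|A|\geq d$ clause from the triplet definition or state Assumption B4 on the larger class, which is how the downstream argument actually uses the lemma.
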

    \begin{proof}
        For $A \in 2^{[q]}\setminus \{\emptyset\}$ and $\varepsilon \geq 0$, define  
        \begin{align*}
            \mathbb{X}(A, \varepsilon) \equiv  \{x \in \mathbb{R}^d: p'x = B(\theta_0), Mx \geq c, ||M_{A}x - c_A|| \leq \varepsilon\}, 
        \end{align*}
        From the assumption of Theorem \eqref{main_inference}, that $\mathcal{A}(\theta_0) \subseteq \text{Int}(\mathcal{X})$, it follows that $ \mathbb{X}(A, \varepsilon) \subseteq \mathcal{A}(\theta_0)$ if $A \in \mathbb{A}$. 
        Further, define 
        \begin{align*}
            \overline{\mathbb{X}}(A, \varepsilon) &\equiv \{x \in \mathbb{R}^d: p'x = B(\theta_0), Mx \geq c, M_{A}x - c_A \leq \varepsilon\iota_{|A|}\}.
        \end{align*}
        Observe that for any $A \in \mathbb{A}$ and $\varepsilon \geq 0$, $\mathbb{X}(A, \varepsilon), \overline{\mathbb{X}}(A, \varepsilon)$ are nonempty, and 
        \begin{align*}  
        \mathbb{X}(A, \varepsilon) \subseteq \overline{\mathbb{X}}(A, \varepsilon),
        \end{align*}
        so 
        \begin{align*}
            \text{d}_{H}(\mathbb{X}(A, \varepsilon), \mathbb{X}(A, 0)) = \max \{\sup_{x \in \mathbb{X}(A, \varepsilon)}\text{d}(x,\mathbb{X}(A, 0)), \sup_{x \in \mathbb{X}(A, 0)}\text{d}(x,\mathbb{X}(A, \varepsilon)) \} = \\
            \sup_{x \in \mathbb{X}(A, \varepsilon)}\text{d}(x,\mathbb{X}(A, 0)) \leq \sup_{x \in \overline{\mathbb{X}}(A, \varepsilon)}\text{d}(x,\mathbb{X}(A, 0)) = \text{d}_H(\overline{\mathbb{X}}(A, \varepsilon),\mathbb{X}(A, 0)) \leq C  |A|\varepsilon,
        \end{align*}
        where the last inequality, for some $C>0$, follows from Lipschitz-continuity of polytopes with respect to the RHS perturbations, see \citet{li1993sharp}. We conclude that
        \begin{align}\label{haus_proj}
             \text{d}_{H}(\mathbb{X}(A, \varepsilon), \mathbb{X}(A, 0)) \leq C  |A|\varepsilon
        \end{align}
        From the proof of Therem \eqref{root_n_pointwise} it follows that the projection $\tilde{x}$ of $\hat{x}$ onto $\mathcal{A}(\theta_0)$ is such that
        \begin{align*}
            ||\tilde{x} - \hat{x}|| = o_p(1)
        \end{align*}
        By triangle and Cauchy-Schwartz inequalities, 
        \begin{align*}
            ||M_{\hat{A}}\tilde{x} - c_{\hat{A}}|| \leq  ||M_{\hat{A}}||\cdot||\tilde{x} - \hat{x}|| + ||M_{\hat{A}}\hat{x} - c_{\hat{A}}|| = \\ ||M_{\hat{A}}||\cdot||\tilde{x} - \hat{x}|| + ||M_{\hat{A}}\hat{x} - \hat{M}^{(1)}_{\hat{A}}\hat{x} + \hat{c}^{(1)}_{\hat{A}}- c_{\hat{A}}|| \leq \\
            ||M_{\hat{A}}||\cdot||\tilde{x} - \hat{x}|| + ||M_{\hat{A}} - \hat{M}^{(1)}_{\hat{A}}|| \cdot ||x||_\infty + ||\hat{c}^{(1)}_{\hat{A}}- c_{\hat{A}}|| \leq \\
            ||M||\cdot||\tilde{x} - \hat{x}|| + ||M - \hat{M}^{(1)}||\cdot||x||_\infty + ||\hat{c}^{(1)} - c||,
        \end{align*}
        because the right-hand side vanishes in probability, it follows that for any $\varepsilon > 0$, $\tilde{x} \in \mathbb{X}(\hat{A}, \varepsilon)$ w.p.a.1. Denote the projection of $\tilde{x}$ onto $\mathbb{X}(\hat{A}, 0)$ by $\breve{x}$. It is measurable by the usual arguments, and, by \eqref{haus_proj},
        \begin{align*}
            ||\tilde{x} - \breve{x}|| \leq C  |A| \cdot ||M_{\hat{A}}\tilde{x} - c_{\hat{A}}|| = o_p(1)
        \end{align*}
        Finally, by triangle inequality, 
        \begin{align}\label{conv_x_hat}
            ||\hat{x} - \breve{x}|| \leq ||\hat{x} - \tilde{x}|| + ||\tilde{x} - \breve{x}|| = o_p(1).
        \end{align}
        Observe that $\mathbb{X}(A, 0)$  for $A \in \mathbb{A}$ is the set of $x \in \mathcal{A}(\theta_0)$ that satisfy the respective requirements of an optimal triplet jointly with $A$. Thus, whenever $\hat{A} \in \mathbb{A}$, $\breve{x}, \hat{A}, v_n$ form an optimal triplet, which occurs w.p.a.1. 
        Combining \eqref{conv_x_hat} and \eqref{sel_v},
        \begin{align*}
            \rho_{\infty}((\hat{A}, \hat{v}, \hat{x}), (\hat{A}, v_n, \breve{x})) = o_p(1)
        \end{align*}
        This concludes the proof of the Lemma. 
    \end{proof}

    We now show that we can pick $\underline{\sigma}$ and $\underline{v}$ such that the event
    \begin{align*}
        E_n \equiv \{\sigma(\hat{A},\hat{v}, \hat{x}, \Sigma) < \overline{\sigma}\} \cup \{||\hat{v}|| < \underline{v}\}
    \end{align*}
    vanishes asymptotically. 
    
    By Lemma \ref{lemma_65}, continuity of $\sigma(\cdot)$ in the first three arguments with respect to the $\rho_\infty$ metric, and Assumption B4 combined with the fact that the set of optimal triplets with the additional requirement that $||v||\leq \overline{v}$ in Assumption B3 is compact, 
    if we consider
    \begin{align*}
        \underline{\sigma} = 0.5 \min_{A, x, v - \text{optimal triplet, }||v|| \leq \overline{v}} \sigma(A,x,v,\Sigma),
    \end{align*}
    then
    \begin{align*}
        \mathbb{P}[\sigma(\hat{A},\hat{v}, \hat{x}, \Sigma) < \overline{\sigma}] \xrightarrow{}0.
    \end{align*}
    For the second part of $E_n$, majorize
    \begin{align*}
        ||p|| = ||M'\dot{v}_n|| =||\dot{v}_n|| \cdot ||M'\frac{\dot{v}_n}{||\dot{v}_n||}|| \leq \sigma_1(M) ||\dot{v}_n||,
    \end{align*}
    so that
    \begin{align*}
        ||\dot{v}_n|| \geq \frac{||p||}{\sigma_1(M)}.
    \end{align*}
    Set $\underline{v} \equiv 0.5\frac{||p||}{\sigma_1(M)}$. Using \eqref{sel_v}, triangle inequality and recalling that $||\dot{v}_n|| = ||\tilde{v}||$, and $||\hat{v}|| = ||\check{v}||$ establishes 
    \begin{align*}
        \mathbb{P}[||\hat{v}|| < \underline{v}] \xrightarrow{} 0,
    \end{align*}
    so a union bound yields
    \begin{align*}
        \mathbb{P}[E_n] \to 0.
    \end{align*}
    Note that
        \begin{align}
        \mathbb{P}[H_n \leq z_{1-\alpha}|\mathcal{D}^{(1)}_n] = \mathbb{P}[H_n \leq z_{1-\alpha}|\hat{A},\hat{v},\hat{x}],
    \end{align}
    because the data in $\mathcal{D}^{(1)}_n$ is independent from $\mathcal{D}^{(2)}_n$ and all dependencies of $H_n$ on $\mathcal{D}^{(1)}_n$ can be described as measurable functions of $\hat{A},\hat{v},\hat{x}$. 
    
    Observe that, for any realization of noise,
    \begin{align*}
        \mathds{1}_{E'_n} \underset{A, v, x \in \aleph(\underline{v},\underline{\sigma})}{\inf} \mathbb{P}[H_n(A, v, x) \leq z_{1-\alpha}] \leq \mathbb{P}[H_n \leq z_{1-\alpha}|\hat{A}, \hat{v}, \hat{x}] \leq \\
        \leq \underset{A, v, x \in \aleph(\underline{v},\underline{\sigma})}{\sup} \mathbb{P}[H_n(A, v, x) \leq z_{1-\alpha}] + \mathds{1}_{E_n}, 
    \end{align*}
    from where it follows that
    \begin{align*}
        \mathbb{P}[H_n \leq z_{1-\alpha}|\hat{A}, \hat{v}, \hat{x}] = 1-\alpha + o_p(1),
    \end{align*}
    and, therefore,
    \begin{align}\label{prob_to_int}
        \mathbb{P}[H_n \leq z_{1-\alpha}|\hat{A},\hat{v},\hat{x}] = 1-\alpha + o_p(1).
    \end{align}
    By Portmanteau and because probability is bounded, by integrating \eqref{prob_to_int} over $\mathcal{D}^{(1)}_n$, we get
    \begin{align}\label{uncond_conv}
        \mathbb{P}[H_n \leq z_{1-\alpha}] = 1-\alpha + o(1).
    \end{align}
Finally, define
\begin{align*}
    G_n \equiv \frac{\sqrt{n_2}}{\hat{\sigma}_n(\hat{A},\hat{v},\hat{x})}(\check{v} - \tilde{v}_{n})'(c_{\hat{A}} - M_{\hat{A}}\hat{x}).
\end{align*}
Using $||c_{\hat{A}} - M_{\hat{A}}\hat{x}|| = O_p(\frac{1}{\sqrt{n}})$ (see Proof of Lemma \eqref{lemma_65}),  \eqref{sel_v} and CMT, one concludes that $G_n = o_p(1)$. 
Applying Lemma \ref{reshuffle_lemma} yields 
\begin{align*}
    \frac{\sqrt{n_2}}{\hat{\sigma}_n(\hat{A},\hat{v},\hat{x})} \left(\check{v}'(\hat{c}^{(2)}_{\hat{A}} - \hat{M}^{(2)}_{\hat{A}} \hat{x}) + p'\hat{x} - B(\theta_0)\right) = H_n - G_n
\end{align*}
Finally, because $G_n = o_p\left(1\right)$, we have, for any $\varepsilon > 0$,
    \begin{align}
        o(1) + \mathbb{P}[H_n \leq z_{1-\alpha} - \varepsilon] \leq \mathbb{P}[H_n - G_n \leq z_{1-\alpha}] \leq \mathbb{P}[H_n \leq z_{1-\alpha} + \varepsilon] + o(1).
    \end{align}
   Letting $\alpha^{+}(\varepsilon) \equiv 1 - \Phi(z_{1-\alpha} - \varepsilon)$ and $\alpha^{-}(\varepsilon) \equiv 1 - \Phi(z_{1-\alpha} + \varepsilon)$, applying \eqref{uncond_conv}, one obtains:
   \begin{align}
       o(1) + 1 - \alpha^{+}(\varepsilon) \leq \mathbb{P}[H_n - G_n \leq z_{1-\alpha}] \leq  o(1) + 1 - \alpha^{-}(\varepsilon) 
   \end{align}
    Taking $\varepsilon \to 0$ and using continuity of the normal's cdf, we obtain:
    \begin{align}
       \mathbb{P}[H_n - G_n \leq z_{1-\alpha}] = 1 - \alpha + o(1)
   \end{align}
   To extend the proof to general consistent $\hat{\sigma}_n$, refer to CMT. 
   
   This concludes the proof of the Theorem. 
\end{proof}   
\subsection{Asymptotic variance}\label{ap_asymp_var}
\begin{lemma}\label{asymptotic_variance}
    At fixed $A, x, v$,
    \begin{align*}
    \sigma^2(A,x,v,\Sigma) = J_1 \Sigma J_1' - 2 J_2(I_d \otimes C_M \Sigma J_1')x  + J_2\left(xx'\otimes C_M \Sigma C_M'\right)J_2',
\end{align*}
where
\begin{align*}
    J_1 \equiv \check{v}'C(\hat{A}) C_c,\quad
    J_2 \equiv \check{v}'C(\hat{A})(\text{vec}(I_d)' \otimes I_q).
\end{align*}
\end{lemma}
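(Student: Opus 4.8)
The plan is to reduce $\sigma^2(A,x,v,\Sigma)$ to the asymptotic variance of a single linear functional of $Z^{(2)}_n := \sqrt{n_2}(\hat\theta^{(2)} - \theta_0)$ and then expand the resulting quadratic form with elementary Kronecker-product identities. First I would recall, from the proof of Theorem \ref{main_inference} (equation \eqref{expr_hn}), the exact identity
\begin{align*}
\sqrt{n_2}\, v_A'\left(\hat{c}^{(2)}_{A} - c_{A} - (\hat{M}^{(2)}_{A} - M_{A}) x\right) = v_A'C(A)\left(C_c Z^{(2)}_n - \text{vec}_{q\times d}^{-1}(C_M Z^{(2)}_n)\, x\right),
\end{align*}
and write $W := Z^{(2)}_n$. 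Using the definition $\text{vec}_{q\times d}^{-1}(y) = (\text{vec}(I_d)'\otimes I_q)(I_d\otimes y)$ together with the mixed-product rule $(A\otimes B)(C\otimes D)=AC\otimes BD$ applied with scalar factors, one gets $(I_d\otimes C_M W)x = x\otimes (C_M W) = (x\otimes C_M)W$. Hence the displayed quantity equals $\left(J_1 - J_2(x\otimes C_M)\right)W$, where $J_1 = v_A'C(A)C_c$ and $J_2 = v_A'C(A)(\text{vec}(I_d)'\otimes I_q)$ are exactly the matrices in the statement (with the $\check v,\hat A$ there read as the generic $v,A$).

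Since Assumption B1 gives $W\xrightarrow{d}\mathbb{G}_0\sim\mathcal{N}(0,\Sigma)$, the quantity above converges in law to a centered Gaussian with variance $\sigma^2(A,x,v,\Sigma) = \left(J_1 - J_2(x\otimes C_M)\right)\Sigma\left(J_1 - J_2(x\otimes C_M)\right)'$. Expanding and using $(x\otimes C_M)' = x'\otimes C_M'$ and the symmetry of $\Sigma$, the two cross terms coincide, so
\begin{align*}
\sigma^2(A,x,v,\Sigma) = J_1\Sigma J_1' - 2\,J_2(x\otimes C_M)\Sigma J_1' + J_2(x\otimes C_M)\Sigma(x'\otimes C_M')J_2'.
\end{align*}
It then remains to simplify the last two terms. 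For the cross term, $(x\otimes C_M)\Sigma J_1' = x\otimes(C_M\Sigma J_1') = (I_d\otimes C_M\Sigma J_1')x$, again by the mixed-product rule, which produces the middle term of the claimed formula. For the quadratic term, I would factor $x\otimes C_M = (x\otimes I_{qd})C_M$, so that $(x\otimes C_M)\Sigma(x'\otimes C_M') = (x\otimes I_{qd})\left(C_M\Sigma C_M'\right)(x'\otimes I_{qd})$; writing $\Omega_M := C_M\Sigma C_M'$ and observing that $(x\otimes I_{qd})\Omega_M(x'\otimes I_{qd})$ has $(i,j)$-block equal to $x_i x_j\Omega_M$, this equals $xx'\otimes C_M\Sigma C_M'$. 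Substituting both identities back yields precisely $\sigma^2(A,x,v,\Sigma) = J_1\Sigma J_1' - 2 J_2(I_d\otimes C_M\Sigma J_1')x + J_2\left(xx'\otimes C_M\Sigma C_M'\right)J_2'$.

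The argument is essentially bookkeeping; the only points requiring attention are keeping the Kronecker factors conformable (e.g.\ $C_M\in\mathbb{R}^{qd\times S}$, $x\otimes C_M\in\mathbb{R}^{qd^2\times S}$) and verifying the block identity $(x\otimes I)\Omega(x'\otimes I)=xx'\otimes\Omega$. I do not expect a genuinely hard step: if anything, the main obstacle is making sure the reduction to the linear functional $\left(J_1 - J_2(x\otimes C_M)\right)W$ matches the definitions of $J_1$ and $J_2$ verbatim, which I would confirm by expanding $\text{vec}_{q\times d}^{-1}$ explicitly on the column $C_M W$.
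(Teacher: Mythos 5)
Your proof is correct and follows essentially the same computation as the paper: expand a quadratic form in $\Sigma$ and simplify with Kronecker identities. The only (cosmetic) difference is that you package the statistic as the single linear functional $\left(J_1 - J_2(x\otimes C_M)\right)W$ and expand $a\Sigma a'$ once, whereas the paper decomposes into $\Var$, $\Cov$, $\Var$ and handles each piece by a block argument; your reduction $(I_d\otimes C_M W)x=(x\otimes C_M)W$ upfront makes the bookkeeping a little more mechanical but the content is identical.
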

\begin{proof}
\begin{align}
    \text{Var}\left(\check{v}'C(\hat{A})\left(C_c Z - \text{vec}_{q\times d}^{-1}(C_M Z) \hat{x}\right)\right) = \\
    = \Var\left(\check{v}'C(\hat{A})C_c Z\right) - 2\Cov\left(\check{v}'C(\hat{A})C_c Z, \check{v}'C(\hat{A})\text{vec}_{q\times d}^{-1}(C_M Z) \hat{x}\right) + \\
    + \Var\left(\check{v}'C(\hat{A})\text{vec}_{q\times d}^{-1}(C_M Z) \hat{x}\right)
\end{align}
where $Z \sim \mathcal{N}(0, \Sigma)$ has the asymptotic distribution of $Z^{(2)}_n$. The first term rewrites as:
\begin{align}
    \Var\left(\check{v}'C(\hat{A})C_c Z\right) 
    = J_1 \Sigma J_1'
\end{align}
To deal with the last term, rewrite:
\begin{align}
    \Var\left(\check{v}'C(\hat{A})\text{vec}_{q\times d}^{-1}(C_M Z) \hat{x}\right) =  J_2\Var\left((I_d \otimes C_M Z\right)\hat{x})J_2' 
    \end{align}
Direct computation yields:
\begin{align}
    (I_d \otimes C_M Z)\hat{x} = \begin{pmatrix}
        C_M Z \hat{x}_1\\
        C_M Z \hat{x}_2\\
        \dots \\
        C_M Z \hat{x}_d
    \end{pmatrix}
\end{align}
So:
\begin{align}
    \Var\left((I_d \otimes C_M Z\right)\hat{x}) = \hat{x}\hat{x}' \otimes C_M \Sigma C_M'
\end{align}
Consider:
\begin{align}
    \Cov\left(\check{v}'C(\hat{A})C_c Z, \check{v}'C(\hat{A})\text{vec}_{q\times d}^{-1}(C_M Z) \hat{x}\right) = \mathbb{E}[J_1 Z J_2(I_d \otimes C_M Z)\hat{x}] = \\
    = J_2\mathbb{E}[(I_d \otimes C_M Z Z' J_1')]\hat{x} = J_2(I_d \otimes C_M \Sigma J_1')\hat{x} 
\end{align}
Combining everything, we get:
\begin{align}
    \sigma(\hat{A},\hat{x},\hat{v},\Sigma) = J_1 \Sigma J_1' - 2 J_2(I_d \otimes C_M \Sigma J_1')\hat{x}  + J_2\left(\hat{x}\hat{x}'\otimes C_M \Sigma C_M'\right)J_2'
\end{align}
We thus have, for fixed $\hat{A}, \hat{v}, \hat{x}$ with $\hat{v} \ne 0$. 
\end{proof}
\section{Uniform estimation}
\begin{lemma}\label{lemma_c1}
For any probability measures $\P, \tilde{\P}$ on the same measurable space,
    \begin{align*}
        ||\mathbb{P}^n - \tilde{\mathbb{P}}^n||_{TV} \leq  n ||\mathbb{P} - \tilde{\mathbb{P}}||_{TV}.
    \end{align*}
    \end{lemma}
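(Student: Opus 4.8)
The plan is to prove the bound by induction on $n$, using the standard tensorization inequality for total variation together with the triangle inequality. Recall that for probability measures, $||\mu - \nu||_{TV} = \sup_{A}|\mu(A) - \nu(A)|$ (or, up to the usual factor-of-$2$ convention, $\frac12 \int |d\mu - d\nu|$); the argument below is insensitive to which normalization one adopts, as long as it is used consistently. The base case $n = 1$ is immediate. For the inductive step, the key identity is the telescoping decomposition
\begin{align*}
    \mathbb{P}^n - \tilde{\mathbb{P}}^n = \mathbb{P}^{n-1}\otimes \mathbb{P} - \mathbb{P}^{n-1}\otimes\tilde{\mathbb{P}} + \mathbb{P}^{n-1}\otimes\tilde{\mathbb{P}} - \tilde{\mathbb{P}}^{n-1}\otimes\tilde{\mathbb{P}},
\end{align*}
so that by the triangle inequality for $||\cdot||_{TV}$,
\begin{align*}
    ||\mathbb{P}^n - \tilde{\mathbb{P}}^n||_{TV} \leq ||\mathbb{P}^{n-1}\otimes(\mathbb{P} - \tilde{\mathbb{P}})||_{TV} + ||(\mathbb{P}^{n-1} - \tilde{\mathbb{P}}^{n-1})\otimes\tilde{\mathbb{P}}||_{TV}.
\end{align*}

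The next step is the elementary fact that tensoring with a fixed probability measure does not increase total variation distance: for any measures $\mu, \nu$ on a space $\mathcal{X}$ and any probability measure $\rho$ on $\mathcal{Y}$, one has $||\mu\otimes\rho - \nu\otimes\rho||_{TV} = ||\mu - \nu||_{TV}$. This follows directly from the density representation — writing $\mu - \nu$ in terms of densities against a dominating measure, the product density against $\rho$ integrates the $\mathcal{Y}$-variable out to $1$ — or from the variational characterization via test functions bounded by $1$. Applying this to each of the two terms above gives
\begin{align*}
    ||\mathbb{P}^n - \tilde{\mathbb{P}}^n||_{TV} \leq ||\mathbb{P} - \tilde{\mathbb{P}}||_{TV} + ||\mathbb{P}^{n-1} - \tilde{\mathbb{P}}^{n-1}||_{TV} \leq ||\mathbb{P} - \tilde{\mathbb{P}}||_{TV} + (n-1)||\mathbb{P} - \tilde{\mathbb{P}}||_{TV} = n||\mathbb{P} - \tilde{\mathbb{P}}||_{TV},
\end{align*}
where the second inequality is the inductive hypothesis. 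This closes the induction and proves the lemma.

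I do not anticipate a serious obstacle here; the only point requiring a little care is the tensorization-with-fixed-marginal identity, which should be stated and justified cleanly (either by an explicit density computation against a common dominating $\sigma$-finite measure such as $\mu + \nu$, or by invoking the coupling/variational definition of $||\cdot||_{TV}$). One should also be mildly careful that $\mathbb{P}$ and $\tilde{\mathbb{P}}$ need not be mutually absolutely continuous, but this is handled by always dominating by the sum of the two measures. Everything else is bookkeeping with the triangle inequality.
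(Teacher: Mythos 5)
Your proof is correct, but it takes a genuinely different route from the paper's. The paper invokes the coupling characterization of total variation, $||\mathbb{P} - \tilde{\mathbb{P}}||_{TV} = \inf_{\mathcal{M}}\mathcal{M}(X \ne Y)$, takes the $n$-fold product of an optimal coupling (which is a coupling of $\mathbb{P}^n$ and $\tilde{\mathbb{P}}^n$), and applies a union bound over the $n$ coordinates; you instead telescope $\mathbb{P}^n - \tilde{\mathbb{P}}^n$ into $n$ single-coordinate differences, use the fact that tensoring a signed measure with a fixed probability measure preserves its TV norm, and close by induction. Both arguments are short and standard. The coupling proof is slightly more probabilistic in flavor and leans on the existence of a maximal coupling (true on any measurable space, via the Radon–Nikodym construction against $\mathbb{P} + \tilde{\mathbb{P}}$); your telescoping proof is purely measure-theoretic and makes the source of the factor $n$ transparent as the number of terms in a telescope. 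One small stylistic note: since the inductive step already exhibits the bound as a sum of $n$ equal terms, the induction can be dispensed with entirely by writing the full telescope
\begin{align*}
\mathbb{P}^n - \tilde{\mathbb{P}}^n = \sum_{k=1}^{n}\tilde{\mathbb{P}}^{k-1}\otimes(\mathbb{P} - \tilde{\mathbb{P}})\otimes\mathbb{P}^{n-k}
\end{align*}
and applying the triangle inequality and tensorization once, which is marginally cleaner. Also worth recording, as you note, that the tensorization step must be justified against a common dominating measure such as $\mathbb{P} + \tilde{\mathbb{P}}$ since absolute continuity between $\mathbb{P}$ and $\tilde{\mathbb{P}}$ is not assumed.
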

    \begin{proof}
        Recall that, for any measures $\P, \tilde{\P}$,
        \begin{align}\label{coupl_tv}
            ||\mathbb{P} - \tilde{\P}||_{TV} = \inf_{\mathcal{M}} \mathcal{M}(X \ne Y),
        \end{align} where the infinum is taken over couplings $\mathcal{M}$ with marginals $\P$, $\tilde{\P}$. Moreover, it is well-known that the infinum in \eqref{coupl_tv} is attained by some optimal coupling $\mathcal{M}^*$. Consider the sequence of optimal couplings $\mathcal{M}_n^*$, such that
        \begin{align*}
            \mathcal{M}_n^*(X \ne Y) = ||\P_0 - \P_n||_{TV}.
        \end{align*}
        For each of them, we can construct the product measure $(\mathcal{M}^*_{n})^n$. It has $\P^n_0$ and $\P^n$ as its marginals for $X = (X_1, \dots X_n)$ and $Y = (Y_1, \dots Y_n)$ on the product-space. Therefore, by \eqref{coupl_tv},
        \begin{align}\label{couplings_bound}
            ||\P^n_0 - \P^n_n||_{TV} \leq (\mathcal{M}_n^*)^n(X \ne Y),
        \end{align}
     and we can write
     \begin{align}\label{final_bound}
         (\mathcal{M}_n^*)^n(X \ne Y) = \text{Pr}_{(\mathcal{M}_n^*)^n}[\cup_{i \in [n]} \{X_i \ne Y_i\}] \leq \sum_{i \in [n]} \mathcal{M}_n^*(X_i \ne Y_i) = n ||\P_0 - \P_n||_{TV},
     \end{align}
     where the $\leq$ is by union bound. Combining \eqref{couplings_bound} and \eqref{final_bound} yields the claim of the Lemma. 
    \end{proof}
\subsection{Proof of Lemma \ref{discontinuity_lemma}}
\begin{proof}
    $V(\cdot)$ is discontinuous at $\P_0$, so there exists a $\varepsilon > 0$, such that for any $0 < \vartheta < 1$ there exists a sequence $\{\mathbb{P}_n\} \subset \mathcal{P}$ such that, for all $n \in \mathbb{N}$,
    \begin{align}\label{eq160}
        ||\mathbb{P}_0 - \mathbb{P}_n||_{TV} < \vartheta n^{-1}, 
    \end{align}
    while $\rho(V(\mathbb{P}_0),V(\mathbb{P}_n)) > \varepsilon$. 
    Using \eqref{eq160} and Lemma \ref{lemma_c1}, we get
    \begin{align}\label{bound_on_tv}
        ||\mathbb{P}^n_0 - \mathbb{P}^n_n||_{TV} \leq \vartheta.
    \end{align}
    Combining \eqref{bound_on_tv} with the binary Le Cam's method\footnote{See inequality 15.14 in Chapter 15 of \citet{wainwright2019high}.}, one obtains, for any $n \in \mathbb{N}$,
    \begin{align}
        \underset{\hat{V}_n}{\inf ~} \underset{\mathbb{P} \in \mathcal{P}}{\sup ~} \mathbb{E}_\mathbb{P}[\rho(V(\mathbb{P}),\hat{V}_n(X(\mathbb{P}^n)))] \geq \frac{\varepsilon}{4}(1 -||\P_0^n - \P_n^n||_{TV}) \geq \frac{\varepsilon (1-\vartheta)}{4}.
    \end{align}
    Recalling that $0 < \vartheta < 1$ was arbitrary and taking the limit $\vartheta \to 0$ yields the claim of the Lemma. 
\end{proof}
\subsection{Proof of Proposition \ref{theor_j_star}}
\begin{proof}
Consider the problem and its associated Lagrangean:
\begin{align*}
    (P): \min_x ~ p'x \quad \text{s.t.}: Mx \geq c, \quad \mathcal{L} \equiv p'x + \lambda'(c - Mx)
\end{align*}
FOCs:
\begin{align*}
    [x]:& ~ p - M'\lambda = 0\\
    [\lambda]:& ~ c - Mx \leq 0\\
    [\text{CS}]:& ~ \lambda'(c - Mx) = 0\\
    [\text{POS}]: & ~ \lambda \geq 0
\end{align*}
Because $\mathcal{X}$ is a compact, whenever the problem has a solution, it must be that there is also a solution $\lambda^*, x^*$ at which $\exists J \subseteq \{1, 2, \dots, q\}$ with $|J| = k \geq d$:
\begin{align*}
    M_J x^* = c_J,
\end{align*}
where $M_J \in \mathbb{R}^{k \times d}$ is a matrix of full column rank: $\text{rk}(M^J) = d$. Define the set of inactive constraints $I \equiv \{1, 2, \dots, q\} \setminus J$ where:
\begin{align*}
    M_I x^* > c_I 
\end{align*}
It follows that $\lambda^*_I = 0$. 
Notice that the KKT condition that: 
    \begin{align*}
        p = M_J'\lambda_J
    \end{align*}
    for some $\lambda_J \geq 0$ means that $p \in \text{Cone}(M_J')$. By the conical hull version of Caratheodory's Theorem, it follows that $\exists J^* \subseteq J$ such that $|J^{*}| = r \leq d$ and $p \in \text{Cone}(M'_{J^*})$ and, moreover, the columns of $M_{J^{*}}'$ are linearly independent. If the Caratheodory number $r$ is strictly smaller than the dimension of $x$, i.e. $r < d$, then we shall complement $J^{*}$ with $d - r$ vectors from $M_{J}'$ such that we obtain $\text{rk}(M_{J^{*}}) = d$, setting the appropriate $\lambda_i^*$ to $0$. By necessity and sufficiency of KKT for LP problems, this constitues a solution. 
\end{proof}
\subsection{Proof of Theorem \ref{uniform_penalty_theorem}}
\begin{proof}
    We first establish a well-known Lemma.
\begin{lemma}\label{lemma_singval1}
    For any $A \in \mathbb{R}^{l \times m}$ and $b \in \mathbb{R}^{m}$ the following inequality holds:
\begin{align*}
    ||Ab||_\infty \leq ||A||_2||b|| = \sigma_1(A) ||b||
\end{align*}
\end{lemma}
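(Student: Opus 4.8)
The plan is to chain three elementary facts, none of which requires any work beyond definitions. First, for any vector $v \in \mathbb{R}^l$ one has $||v||_\infty \leq ||v||_2$, since $\max_i v_i^2 \leq \sum_i v_i^2$; applying this with $v = Ab$ gives $||Ab||_\infty \leq ||Ab||_2$. Second, by the definition of the spectral norm already recalled in \eqref{cauchy_matrix}, $||Ab||_2 \leq ||b|| \cdot \sup_{||y|| \leq 1} ||Ay|| = ||b|| \cdot ||A||_2$ (the case $b = 0$ being trivial). Third, $||A||_2 = \sigma_1(A)$ by the definition of the largest singular value. Stringing the three bounds together yields $||Ab||_\infty \leq ||A||_2 \, ||b|| = \sigma_1(A) \, ||b||$, which is the claim.

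There is essentially no obstacle here; the statement is purely a bookkeeping lemma invoked later in the uniform-consistency argument. The only point worth stating carefully is which definition of $||A||_2$ is in force, but the paper has fixed $||A||_2 = \sup_{||y|| \leq 1} ||Ay||$ in \eqref{cauchy_matrix} and noted it equals $\sigma_1(A)$, so the identification $||A||_2 = \sigma_1(A)$ may be taken as known and the proof reduces to the one-line inequality $||v||_\infty \leq ||v||_2$ combined with sub-multiplicativity of the operator norm.
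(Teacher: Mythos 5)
Your proof is correct, and it takes a slightly different route than the paper's. You chain the elementary vector-norm comparison $\|v\|_\infty \leq \|v\|_2$ with sub-multiplicativity of the operator norm: $\|Ab\|_\infty \leq \|Ab\|_2 \leq \|A\|_2 \|b\|$. The paper instead bounds $\|Ab\|_\infty = \max_i |a_i'b| \leq \|b\| \max_i \|a_i\|$ by applying Cauchy--Schwarz row by row, and then shows $\max_i \|a_i\| \leq \|A'\|_2 = \|A\|_2$ using transpose-invariance of the spectral norm. The paper's route passes through the sharper intermediate quantity $\|b\|\max_i \|a_i\|$ (which can be strictly smaller than $\sigma_1(A)\|b\|$), whereas yours passes through $\|Ab\|_2$. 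For the purpose of the lemma as stated, both are equally good; your version is the more standard textbook chain and avoids invoking transpose-invariance, while the paper's exposes a tighter row-norm bound that happens not to be needed downstream.
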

\begin{proof}
Suppose $a_i, ~ i \in [l]$ are rows of $A$ . Then,
    \begin{align}\label{eq176}
        ||Ab||_\infty = \max_i \{|(Ab)_i|\} = \max_i \{|a'_i b|\} \leq ||b||\max_i ||a_i||
    \end{align}
    Recall that the operator norm is transpose-invariant, and can be written as:
    \begin{align}\label{eq177}
        ||A||_2 = ||A'||_2 = \sup_{||y|| \leq 1} ||A'y|| \geq \max_{i} ||A'e_i||= \max_i ||a_i||
    \end{align}
    Combining \eqref{eq176} and \eqref{eq177} yields the result.
\end{proof}
    We now prove the Theorem. We write $M(\P), c(\P)$ for components of $\theta_0(\P)$. Fix $\delta > 0$. By definition of $\mathcal{P}^\delta$ and using Proposition \ref{theor_j_star}, for any $\P \in \mathcal{P}^\delta$ there exists $J^* = J^*(\P, \delta) \subseteq [q]$ and the associated KKT vector $\lambda^*  = \lambda^{*}(\P, \delta) \in \Uplambda(\theta_0(\P))$, such that $M_{J^*} = M(\P)_{J^*(\P, \delta)}$ is invertible, and
    \begin{align*}
        \lambda^*_{J^*} = M_{J^*}^{-1\prime}p, \quad 
        \sigma_1(M^{-1\prime}_{J^*}) = \sigma^{-1}_d(M_{J^*}) < \delta^{-1}.
    \end{align*}
    Using Lemma \ref{lemma_singval1}, one observes that
    \begin{align*}
        ||\lambda^*||_\infty \leq \delta^{-1} ||p||.
    \end{align*}
    One concludes that for any $\P \in \mathcal{P}^\delta$,
    \begin{align}
        E_n \equiv \{w_n > \delta^{-1}||p||\} \subseteq \{\tilde{B}(\theta_0(\P),w_n) = B(\theta_0(\P))\}.
    \end{align}
    In what follows, we denote $B = B(\theta_0(\P))$, $\tilde{B} = \tilde{B}(\theta_0(\P); w_m(\P))$ and $\tilde{B}_m = \tilde{B}(\hat{\theta}_m(\P); w_m(\P))$, and $\tilde{r}_m \equiv \frac{r_m}{w_m}$. Furthermore, let $F_n \equiv \{\inf_{m \geq n} 1_{E_m} = 1\}$. Consider
    \begin{align}\label{decomp_bounds}
        \P\left[\sup_{m \geq n} \tilde{r}_m\left|\tilde{B}_m - B\right| > \varepsilon \right] =         \P\left[\sup_{m \geq n} \tilde{r}_m\left(\mathds{1}_{E_m}\left|\tilde{B}_m - \tilde{B}\right| + \mathds{1}_{E'_m}|\tilde{B}_m - B|\right) > \varepsilon \right] = \\\notag
        \P\left[\sup_{m \geq n} \tilde{r}_m\left|\tilde{B}_m - \tilde{B}\right| > \varepsilon, F_n \right] +\\\notag
        \P\left[\sup_{m \geq n} \tilde{r}_m\left(\mathds{1}_{E_m}\left|\tilde{B}_m - \tilde{B}\right| + \mathds{1}_{E'_m}|\tilde{B}_m - B|\right) > \varepsilon, F'_n \right] \leq \\\notag
        \P\left[\sup_{m \geq n} \tilde{r}_m \left|\tilde{B}_m - \tilde{B}\right| > \varepsilon \right]  + \P\left[F'_n \right].
    \end{align}
    Using \eqref{decomp_bounds}, the fact that for any sequences $g_k, h_k$, $\sup_{k} ~ g_k + h_k \leq \sup_k g_k + \sup_k h_k$ and the fact that $\sup A \leq \sup B$ whenever $A \subseteq B$, we get
    \begin{align}\label{eq180}
        \sup_{\P \in \mathcal{P}^\delta} \P\left[\sup_{m \geq n} \tilde{r}_m\left|\tilde{B}_m - B\right| > \varepsilon \right] \leq \sup_{\P \in \mathcal{P}} \P\left[\sup_{m \geq n} \tilde{r}_m \left|\tilde{B}_m - \tilde{B}\right| > \varepsilon \right] +1 - \inf_{\P \in \mathcal{P}} \P\left[F_n \right]
    \end{align}
    Observe that 
    \begin{align*}
        \P[F_n] = \P [\cap_{m \geq n} E_m] \geq \P[\inf_{m \geq n} w_m > \delta^{-1}||p||].
    \end{align*}
    Using this and condition ii), taking limits in \eqref{eq180} yields
    \begin{align}\label{eq181}
        \lim_{n \to \infty} \sup_{\P \in \mathcal{P}^\delta} \P\left[\sup_{m \geq n} \tilde{r}_m\left|\tilde{B}_m - B\right| > \varepsilon \right] \leq \lim_{n \to \infty}\sup_{\P \in \mathcal{P}} \P\left[\sup_{m \geq n} \tilde{r}_m \left|\tilde{B}_m - \tilde{B}\right| > \varepsilon \right].
    \end{align}
    From the bound in \eqref{eq50} and condition i) it follows that
    \begin{align}\label{eq182}
        \lim_{n \to \infty}\sup_{\P \in \mathcal{P}} \P\left[\sup_{m \geq n} \tilde{r}_m \left|\tilde{B}_m - \tilde{B}\right| > \varepsilon \right] = 0
    \end{align}
    Combining \eqref{eq181}, \eqref{eq182} and recalling that $\delta > 0$ was arbitrary, one can take suprema on both sides, as
    \begin{align*}
        \sup_{\delta > 0}\lim_{n \to \infty} \sup_{\P \in \mathcal{P}^\delta} \P\left[\sup_{m \geq n} \tilde{r}_m\left|\tilde{B}_m - B\right| > \varepsilon \right] = 0 
    \end{align*}
    This concludes the proof of the Theorem. 

\end{proof}
\subsection{Proof of Proposition \ref{redefinition_condition}} 
\begin{proof}
    For any $k-$face with $k \in \{0\} \cup [d-1]$, define $I(f) \equiv \{j \in [q]: \tilde{M}'_j x = \tilde{c}_j ~ \forall x \in f\}$.

    Fix any $\overline{k} \in [d-1]$, and pick any $f$ that is a $\overline{k}-$face of $\Theta$. There exists a vertex of $\Theta$, $f^*$, such that $f^* \subseteq f$ (see p. 31 in \citet{grunbaum1967convex}). It follows that $I(f) \subseteq I(f^*)$.
    
    Consider any $B \subseteq I(f)$ with $\text{rk}(\tilde{M}_B) = d - k$. Since $B \subseteq I(f^*)$ and $\text{rk}(\tilde{M}_{I(f^*)}) = d$, there exists a set $\tilde{B}$ such that $B \cup \tilde{B} \subseteq I(f^*)$, $\tilde{B} \cap B = \emptyset$, $|\tilde{B}| = k$ and $\text{rk}(\tilde{M}_{B \cup \tilde{B}}) = d$. Recall that
    \begin{align}\label{opt1}
        \sigma_{d-k}(\tilde{M}_{B}) = \sigma_{d-k}(\tilde{M}'_{B}) =\min_{r\in \mathbb{R}^{|B|}, ||r|| = 1} ||\tilde{M}'_{B}r||,
    \end{align}
    and 
    \begin{align}\label{opt2}
        \sigma_{d}(\tilde{M}_{B\cup \tilde{B}}) = \sigma_{d}(\tilde{M}'_{B\cup \tilde{B}}) = \min_{r\in \mathbb{R}^{|B| + k}, ||r|| = 1} ||\tilde{M}'_{B \cup \tilde{B}}r||
    \end{align}
    Comparing the optimization problems \eqref{opt1} and \eqref{opt2}, one concludes that
    \begin{align*}
        \sigma_{d}(\tilde{M}_{B\cup \tilde{B}}) \leq \sigma_{d-k}(\tilde{M}_B).
    \end{align*}
    Because $\overline{k}$, $f$ and $B$ were arbitrary, the claim of the lemma follows.
\end{proof}
\subsection{Proof of Lemma \ref{lemma_polytope_anticoncentration}}
\begin{proof}
    If $x \in \Theta$, the inequality holds trivially. Consider $x$ such that $d(x,\Theta) = \varepsilon > 0$. We construct a projection of $x$ onto the polytope by solving
    \begin{align*}
        \min_{y \in \Theta} \frac{1}{2}(y - x)'(y-x)
    \end{align*}
    The Lagrangean is given by
    \begin{align*}
        \mathcal{L} = (y-x)'(y-x) + \lambda'(\tilde{c} - \tilde{M}y),
    \end{align*}
    and the FOCs are
    \begin{align}\label{foc_1}
        y- x - \tilde{M}'\lambda &= 0\\\label{foc_2}
        \lambda_j(\tilde{M}_j'y  - \tilde{c}_j) &= 0,~  j \in [q]\\\label{foc_3}
        \tilde{M}y &\geq \tilde{c}
    \end{align}
    This problem is convex and has a unique global minimum (by Hilbert Projection Theorem) characterized by the KKT conditions. Let that minimum be $y^*$, with $||y^* - x|| = d(x, \Theta)$. Denote the subset of binding equalities at $y^*$ as
    \begin{align*}
        J \equiv \{j \in [q]| \tilde{M}_j'y^* = \tilde{c}_j\} 
    \end{align*}
    The smallest face to which $y^*$ belongs is
    \begin{align*}
        f^* = \bigcap_{f - \text{face of $\Theta_I$}: ~ y \in f}f,
    \end{align*}
    and $f^*$ is characterized by $J$. Suppose $f^*$ has dimension $k^* \in \{0\}\cup [d-1]$, with $\text{rk}(\tilde{M}_{J}) = d - k^*$ (see Chapters 2,3 in \citet{grunbaum1967convex}). Suppose the KKT vector associated with $y^*$ is $\lambda^*$. From \eqref{foc_2},
    \begin{align*}
        \lambda^{*\prime}\tilde{c} = \lambda^{*\prime}\tilde{M}y^*,
    \end{align*}
    which implies
    \begin{align}\label{first_step_anticonc}
        \lambda^{*\prime}(\tilde{c}-\tilde{M}x) = \lambda^{*\prime}\tilde{M}(y^*-x) = d(x,\Theta)^2
    \end{align}
    where the last equality follows from \eqref{foc_1}. By the spectral property of singular values,
    \begin{align}\label{up_bound_norm_lam}
        ||\tilde{M}'\lambda^*|| =  ||\tilde{M}'_{J}\lambda_{J}^*|| \geq ||\lambda_{J}^*|| \sigma_{d-k^*}(\tilde{M}_J) \geq ||\lambda_{J^*}^*|| \kappa(\Theta)
    \end{align}
    Combining \eqref{foc_1} and \eqref{up_bound_norm_lam}, we get
    \begin{align*}
        ||\lambda_{J^*}^*|| \leq d(x,\Theta) \kappa^{-1}(\Theta),
    \end{align*}
    and because $||\cdot|| \geq ||\cdot||_{\infty}$ for sequences,
    \begin{align}\label{inf_norm_bound}
        ||\lambda^*||_\infty \leq d(x,\Theta) \kappa^{-1}(\Theta).
    \end{align}
    Let us rewrite 
    \begin{align}\label{eq174}
        \lambda^{*\prime}(\tilde{c}-\tilde{M}x) \leq  \lambda^{*\prime}(\tilde{c}-\tilde{M}x)^+ \leq  ||\lambda^*||_\infty \iota'(\tilde{c}-\tilde{M}x)^+.
    \end{align}
    Combining \eqref{first_step_anticonc}, \eqref{inf_norm_bound}, \eqref{eq174} and rearranging,
    \begin{align*}
        \iota' (\tilde{c} - \tilde{M}x)^+ \geq d(x, \Theta) \kappa(\Theta).
    \end{align*}
\end{proof}
\subsection{Proof of Theorem \ref{debiased_uniform_consistency}}
\begin{proof}
    Consider any measurable $x^*_n \in \arg \max_{x \in \tilde{\mathcal{A}}(\hat{\theta}_n;w_n)}~  p'x$ and define $\tilde{x}_n$ to be its projection onto $\Theta_I$. By Cauchy-Schwartz,
    \begin{align*}
        |p'x_n^* - p'\tilde{x}_n| \leq ||p|| d(x^*_n, \Theta_I).
    \end{align*}
    Using this and noting that $p'\tilde{x}_n \geq B(\theta_0)$, because $\tilde{x}_n \in \Theta_I$, we get
    \begin{align}\label{lb}
        p'x_n^* - B(\theta_0) \geq -||p||d(x^*_n, \Theta_I).
    \end{align}
    Combining \eqref{lb} with Lemma \ref{lemma_polytope_anticoncentration} and using the definitions of $\tilde{B}$ and $\tilde{\mathcal{A}}$, it follows that
    \begin{align*}
        \tilde{B}(\hat{\theta}_n; w_n) - B(\theta_0) = 
        p'x^*_n - B(\theta_0) + w_n\iota'(\hat{c}_n - \hat{M}_nx^*_n) \geq \\
        w_n \left(1 - \frac{1}{w_n}\frac{||p||}{\kappa(\Theta_I)}\right)\iota'(c - Mx^*_n)^+ + w_n \iota'\left((\hat{c}_n - \hat{M}_n x^*_n)^+ - (c - M x^*_n)^+\right).
    \end{align*}
    Rewriting and using the same arguments as in the proof of \eqref{first_penalty_theorem}, we get
    \begin{align}\label{ineq_pen_term}
        \left(1-\frac{||p||}{w_n\kappa(\Theta_I)}\right) \iota'(c-Mx_n^*)^+ \leq  \\ \notag
        \frac{1}{w_n}\left|\tilde{B}(\hat{\theta}_n; w_n) - B(\theta_0)\right| + q \cdot (||\hat{c}_n - c|| + ||\hat{M}_n - M|| \cdot ||x||_\infty)
    \end{align}
    Observing that $\tilde{B}(\cdot) \geq \hat{B}(\cdot)$, and combining \eqref{lb} with \eqref{ineq_pen_term}, we get
    \begin{align}\label{main_result_unif_deb}
        \tilde{B}(\hat{\theta}_n;w_n) - B(\theta_0) \geq \hat{B}(\hat{\theta}_n; w_n) - B(\theta_0) \geq \\\notag
        \frac{\frac{-||p||}{w_n \kappa(\Theta_I)}}{1-\frac{-||p||}{w_n \kappa(\Theta_I)}} \left(\tilde{B}(\hat{\theta}_n;w_n) - B(\theta_0) + q w_n ||\hat{c}_n - c|| + ||\hat{M}_n - M|| \cdot ||x||_\infty \right),
    \end{align}
    combining \eqref{main_result_unif_deb} with Theorem \ref{uniform_penalty_theorem} and observing that $\mathcal{P}^\delta_{p} \subseteq \mathcal{P}^\delta$ yields the claim of the Theorem.

\end{proof}
    \subsection{Proof of Proposition \ref{plug_in_incons}}\label{ap_proof_prop_21}
\begin{proof}
    In what follows, $e_i$ for $i \in [d]$ denotes the $i'$th standard basis vector in $\mathbb{R}^d$. For i), consider
    \begin{align*}
        p \equiv -e_1, \quad M \equiv \begin{pmatrix}
            I_d\\
            \iota_d'\\
            - (be_1 + e_2)'
        \end{pmatrix},\quad  c \equiv \begin{pmatrix}
            0_{d}\\
            1\\
            0
        \end{pmatrix}.
    \end{align*}
    It is straightforward to observe that in this case $B(b) = -\mathds{1}\{b \leq 0\}$. This is because $x = 0$ is always feasible, and, if $b > 0$, from inequalities $1,2$ and $d+2$ it follows that $x_1 = 0$. If $b \leq 0$, then $x^* = e_1$ is feasible, yielding the minimum of $-1$. Using $b_0 = 0$ and $\hat{b}_n$ defined in Example \ref{example_incons} establishes the claim.
    For ii), consider the following example
    \begin{align*}
        p \equiv e_1, \quad M \equiv \begin{pmatrix}
            I_d\\
            -\iota_d'
        \end{pmatrix},\quad  c \equiv \begin{pmatrix}
            0_{d}\\
            b
        \end{pmatrix},
    \end{align*}
    which yields $B(b) = +\infty$ whenever $b > 0$, because the last inequality implies $\sum^d_{i = 1} x_i \leq - b$, while $x_i \geq 0$ for $i \in [d]$ from the first $d$ inequalities. Clearly, at $b \leq 0$, one has $B(b) = 0$, attained at $x^* = 0_d$. Considering $b_0 = 0$ estimated via $\hat{b}_n$ from Example \ref{example_incons} establishes the claim of the Proposition.
\end{proof}
\section{Identification under AICM}
\subsection{Proof of Theorem \ref{theor_identif}}

Throughout the proof of the Theorem, we assume that, if any bounds on the support $\mathcal{Y}$ are known, these are incorporated into the conditional inequalities using the representation \eqref{mean_r}, so that $x_{j} \in [\min(\mathcal{Y});\max(\mathcal{Y})]$ for all $j \in [q]$ and any $x \in \Theta_I \equiv \{x \in \mathbb{R}^q: Mx \geq c\}$. 

Inclusion \eqref{one_side} follows directly from the  definition of $\mathcal{P^*}$ and by construction of $M, c, p, \overline{p}$. We now consider the inverse inclusion (sharpness), starting from the case of no almost sure inequalities. 

Let $\mathfrak{C} \equiv \{(t,d,z) \in \mathcal{T}^2\times \mathcal{Z}: z \in \mathcal{Z} \land (t, d \in \mathcal{T}: t \ne d \lor t, d \in \mathcal{U}: t = d)\}$.


\paragraph{No almost sure inequalities, $\tilde{M} = 0, \tilde{b} = 0$}\label{no_as_ineq}
\begin{proof}
    Consider any $x^* \in \mathbb{R}^d$ that satisfies $x^* \in \Theta_I(\tilde{P})$ for some $\tilde{P} \in \mathcal{P}^*$. We wish to show that there exists a measure $P \in \mathcal{P}^*$ that yields $m(P) = P_m x^* + \overline{P}_m \overline{x}$. To that end, construct the map $f: \mathcal{T}^2 \times \mathcal{Z} \to \mathbb{R}$ that maps $(t,d,z)$ to the component of $x^*$ corresponding to the conditional moment $\E[Y(t)|T=d, Z = z]$, if $(t,d,z) \in \mathfrak{C}$, and to $0$ otherwise (if the moment is observed). Consider the measure $P$ with
    \begin{align}
        Y(t) = \mathds{1}\{t \in \mathcal{O}\}(\mathds{1}\{T \ne t\}f(t, T, Z) + \mathds{1}\{T = t\}\eta(t)) + \mathds{1}\{t \in \mathcal{U}\}f(t, T, Z),
    \end{align}
    where $\eta(t)$ is some random variable that aligns with $Y(t)$ across the observed dimension, $F_{\eta(t)| T = t, Z = z}(y) = F_{Y(t)| T = t, Z  = z}(y), ~ \forall y \in \mathbb{R}$ and $\forall t \in \mathcal{O}$, $\forall z \in \mathcal{Z}$. Moreover, suppose that $P$ generates the observed distribution $F_{T,Z}$. It is straightforward to observe that such $P$ satisfies Assumption I0, and so $P \in \mathcal{P}$. 

    We also have, by construction, 
    \begin{align*}
        x(P) = (\mathbb{E}_P[Y(t)|T = d, Z = z])_{(t,d,z) \in \mathfrak{C}} = (f(t,d,z))_{(t,d,z) \in \mathfrak{C}} = x^*,
    \end{align*}
    Because $\psi$ is identified, $\psi(\tilde{P}) = \psi(\P) = \psi(P)$, we have $x(P) = x^* \in \Theta_I(\tilde{P}) = \Theta_I(\P) = \Theta_I(P)$, so $x \in \Theta_I(P)$. Then, by definition of $\Theta_I(P)$ and because there are no a.s. inequalities, $P \in \mathcal{P}^*$.

    The reverse inclusion in Theorem \ref{theor_identif} is then established by showing that the identified set is indeed an interval, a ray, or the whole line. This follows, since if $\beta_0, \beta_1 \in  \mathcal{B}^*$ with $\beta_0 < \beta_1$, then $\exists x_0, x_1 \in \Theta_I$ such that $\beta_i = p'x_i + \overline{p}'\overline{x}$ for $i = 0,1$. Because $\Theta_I$ is convex, for arbitrary $\beta \in [\beta_0, \beta_1]$ setting $\alpha = \frac{\beta_1 - \beta}{\beta_1 - \beta_0}$, one obtains $\alpha x_0 + (1-\alpha) x_1 \in \Theta_I \implies \beta \in  \mathcal{B}^*$.

\end{proof}
\paragraph{Almost sure inequalities}
\begin{proof}
The following Lemma will prove useful in what follows.
\begin{lemma}\label{lemma_d1}
    Fix $K_0, \mu_v, \mu_w, K_1 \in \mathbb{R}$: $K_0 \leq \mu_v \leq \mu_w \leq K_1$ and $F_w(\cdot)$ that is a valid c.d.f. with expectation $\mu_w$. Suppose the probability space $(P, \Omega, \mathcal{S})$ can support a $U[0;1]$ random variable, and $P[W \leq w] = F_w(w)$. Then, there exists a random variable $V$ s.t. $K_0 \leq V \leq W \leq K_1$ a.s. and $\mathbb{E}[V] = \mu_v$.
\end{lemma}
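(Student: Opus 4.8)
The plan is to realize $V$ as a mixture that, conditional on $W$, puts mass on the lower endpoint $K_0$ and on $W$ itself, with a mixing weight chosen so that the overall mean equals $\mu_v$. Since the conclusion demands $K_0 \le V \le W \le K_1$ a.s., the lemma is invoked in the regime where $W$ itself lies in $[K_0, K_1]$ a.s. (equivalently, $F_w$ is supported on $[K_0,K_1]$); I work under that understanding, so in particular $\mu_w = \mathbb{E}[W] \in [K_0,K_1]$.

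First I would dispose of the degenerate case $\mu_w = K_0$: then $W \ge K_0$ a.s. together with $\mathbb{E}[W]=K_0$ forces $W = K_0$ a.s., while $K_0 \le \mu_v \le \mu_w = K_0$ forces $\mu_v = K_0$, so $V \equiv K_0$ works. In the main case $\mu_w > K_0$, set $\lambda^{*} \equiv (\mu_w - \mu_v)/(\mu_w - K_0)$, which lies in $[0,1]$ precisely because $K_0 \le \mu_v \le \mu_w$. Taking a $U[0;1]$ variable $U$ independent of $W$ (available by hypothesis), I would define $V \equiv K_0\,\mathds{1}\{U \le \lambda^{*}\} + W\,\mathds{1}\{U > \lambda^{*}\}$. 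On $\{U \le \lambda^{*}\}$ we have $V = K_0 \le W$, and on $\{U > \lambda^{*}\}$ we have $V = W$; in either case $K_0 \le V \le W \le K_1$ a.s. By independence of $U$ and $W$, $\mathbb{E}[V] = \lambda^{*} K_0 + (1-\lambda^{*})\mathbb{E}[W] = \lambda^{*} K_0 + (1-\lambda^{*})\mu_w = \mu_v$, which is the required identity. (Equivalently, one may take the deterministic $V = \lambda^{*} K_0 + (1-\lambda^{*})W$; the randomized form is convenient because it leaves $V$ available for further splitting in the subsequent steps of the proof of Theorem~\ref{theor_identif}.)

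There is no real obstacle here: the only points needing care are verifying that the mixing weight $\lambda^{*}$ is a genuine probability — immediate from the ordering $K_0 \le \mu_v \le \mu_w$ — and isolating the boundary case $\mu_w = K_0$, where the formula for $\lambda^{*}$ is a $0/0$ expression and is handled separately as above. Both are routine bookkeeping rather than genuine difficulties.
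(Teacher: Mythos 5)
Your proof is correct, and it takes a genuinely different route from the paper's. The paper proves the lemma via a quantile-function coupling: it realizes $W = F_w^{-1}(U)$ for some $U\sim U[0,1]$, forms the mixture c.d.f. $F_v \equiv \alpha^* \mathds{1}\{\cdot \ge K_0\} + (1-\alpha^*)F_w$ with $\alpha^* = (\mu_w - \mu_v)/(\mu_w - K_0)$ (the same weight as your $\lambda^*$), and sets $V = F_v^{-1}(U)$ with the \emph{same} $U$; the desired a.s. ordering then follows because $F_v \ge F_w$ pointwise, so $F_v^{-1} \le F_w^{-1}$. You instead work directly on the outcome space, either randomizing between $K_0$ and $W$ with an independent uniform, or taking the deterministic convex combination $V = \lambda^* K_0 + (1-\lambda^*)W$. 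Both approaches use the identical mixing weight, and all three constructions satisfy the lemma. The deterministic convex combination is arguably the cleanest of the lot: it is a measurable function of $W$ alone, so it needs no auxiliary uniform at all; and since $V \le W$ holds pointwise by construction, iterating it to build the monotone chain $Y(1)\le\cdots\le Y(N_T)$ in the proof of Theorem~\ref{theor_identif} (case 3) works immediately, without the paper's remark that the chain can be built by reusing the same $U$ at every step. One small caveat on your randomized version: it needs a $U[0,1]$ \emph{independent of} $W$, which is a mildly stronger richness requirement than the paper's hypothesis literally grants (a space may support a $U[0,1]$ r.v.\ without supporting one independent of a given $W$); the deterministic alternative you already offer sidesteps this entirely, so there is no real gap.
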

\begin{proof}
    Suppose $\mu_w > K_0$ as otherwise the statement is trivial. $W$ can be represented as:
    \begin{align}
        W = F^{-1}_{w}(U)
    \end{align}
    Where $F^{-1}_{w}(t) \equiv \inf\{w: F_w(w) \geq t\}$ is a generalized inverse. Consider a CDF $G(x) \equiv \mathds{1}\{x \geq K_0\}$ on $[K_0; K_1]$. Notice that by definition:
    \begin{align}
        \int x dG(x) = K_0
    \end{align}
    Moreover, by linearity of the Lebesgue integral $\forall \alpha \in [0;1]$ we have:
    \begin{align}
        \int x d(\alpha G(x) + (1-\alpha) F_w(x)) = \alpha K_0 + (1-\alpha) \mu_w 
    \end{align}
    Let $F_v(x) \equiv \alpha^* G(x) + (1-\alpha^*) F_w(x)$ where $\alpha^* \equiv \frac{\mu_w - \mu_v}{\mu_w - K_0}$. Then, notice that:
    \begin{align}
        V = F_{v}^{-1}(U)
    \end{align}
    Yields the required random variable. 
\end{proof}
    We begin by noting that from the first step,
    \begin{align*}
    \{\beta \in \mathbb{R}|\exists P \in \mathcal{P}: \beta = \mu^{*\prime}m(P) \land b^{**} + M^{**}m(P) \geq 0\} =\\
    \{\beta \in \mathbb{R}| \exists x: Mx \geq b: \beta = p'x + \overline{p}'\overline{x}\}, 
    \end{align*}
    where
    \begin{align}\label{defs_p}
        \overline{p} \equiv \overline{P}'_m\mu^*, \quad p \equiv P'_m\mu^*\\
        M \equiv M^{**}P_m \quad b \equiv - b^{**} - M^{**} \overline{P}_m \overline{x}
    \end{align}
    Therefore proving the inclusion consists in finding such data-consistent $\mathbb{Y}$ (or, equivalently, the measure $P \in \mathcal{P}$) for any given $x: Mx \geq b$ that it generates $m(P) = p'x + \overline{p}'\overline{x}$ with $M^{**} m(P) + b^{**} \geq 0$, and $\tilde{M}\mathbb{Y} \geq \tilde{b}$ $P$ - a.s.  
    \paragraph{1) Bounds}
    For any $x: Mx \geq b$ we can once again construct the d.g.p. $P$ from the Proof of Theorem 1:
    \begin{align}\label{det_dgp}
        Y(t) = \mathds{1}\{t \in \mathcal{O}\}(\mathds{1}\{T \ne t\}f(t, T, Z) + \mathds{1}\{T = t\}\eta(t)) + \mathds{1}\{t \in \mathcal{U}\}f(t, T, Z),
    \end{align}
    Where $f(t, d, z), \eta(t)$ are defined as in the proof of Theorem 1 and the distribution of $T, Z$ is as observed. Clearly, $b^{**} + M^{**}m(P) \geq 0$ and $P \in \mathcal{P}$ for this $P$ holds by construction, and: $Y(t) \in [K_0; K_1]$ $\forall t \in \mathcal{T}$ a.s., therefore $\tilde{M}\mathbb{Y} \geq \tilde{b}$ a.s. by construction.
    \paragraph{2) MTR}
    In this case it is clear that \eqref{det_dgp} fails, because it does not necessarily satisfy monotonicity almost surely. Consider:
    \begin{align}\label{constr_mtr}
        \mathbb{Y} = (\mathds{1}\{t \in \mathcal{O}\}(\mathds{1}\{T \ne t\}f(t, T, Z) + \mathds{1}\{T = t\}\eta(t)) + \mathds{1}\{t \in \mathcal{U}\}f(t, T, Z))_{t \in \mathcal{T}} + \\\notag + \sum_{t \in \mathcal{O}} (\iota_{N_T} - e_{t})\mathds{1}\{T = t\} (\eta(t) - \mathbb{E}[Y(t)|T = t, Z])
    \end{align}
    Where $e_t$ is the standard basis vector with $1$ in the position of the potential outcome corresponding to $t$ in $\mathbb{Y}$. Notice that the process in \eqref{constr_mtr} has the same conditional means as the deterministic process of form \eqref{det_dgp}, and therefore the corresponding $m(P)$ satisfies $M^{**}m(P) + b^{**} \geq 0$. Furthermore, by construction of $M^{**}$ it must be that $\forall t \in \mathcal{O}$ and $\forall d \in \mathcal{T}: d \ne t$, we have:
    \begin{align}\label{eq108}
        \mathbb{E}[Y(d)|T = t, Z] = f(d,t,Z) \leq \mathbb{E}[Y(t)|T = t, Z] ~ \text{iff} ~ d < t 
    \end{align}
    and for $d_0, d_1 \in \mathcal{T}\setminus \{t\}:d_0 < d_1$:
    \begin{align}\label{eq109}
        \mathbb{E}[Y(d_0)|T = t, Z] = f(d_0,t,Z) \leq f(d_1,t,Z) = \mathbb{E}[Y(d_1)|T = t, Z]
    \end{align}
    Consider $\mathbb{Y}$ constructed in \eqref{constr_mtr} over some element of the partition of $\Omega$ induced by $T$, where $T = t$. 
    \begin{enumerate}[i)]
    \item If $t \in \mathcal{U}$, it is simply:
    \begin{align}
        \mathbb{Y} = \begin{pmatrix}\label{eq110}
            f(1, t, Z)\\
            f(2, t, Z)\\
            \dots\\
            f(N_T, t, Z)
        \end{pmatrix}
    \end{align}
    Which satisfies $\tilde{M} \mathbb{Y} + \tilde{b} \geq 0$ over this element of the partition a.s., by construction of $f$. 
        \item If $t \in \mathcal{O}$:
    \begin{align}
        \mathbb{Y} = \begin{pmatrix}
            f(1, t, z) + \eta(t) - \mathbb{E}[Y(t)|T = t, Z]\\
            \dots\\
            f(t-1, t, z) + \eta(t) - \mathbb{E}[Y(t)|T = t, Z]\\
            \mathbb{E}[Y(t)|T = t, Z] + \eta(t) - \mathbb{E}[Y(t)|T = t, Z]\\
            f(t+1, t, z) + \eta(t) - \mathbb{E}[Y(t)|T = t, Z]\\
            \dots\\
            f(N_T, t, z) + \eta(t) - \mathbb{E}[Y(t)|T = t, Z]
        \end{pmatrix}
    \end{align}
    Notice that by \eqref{eq108} and \eqref{eq109} the MTR is then satisfied, i.e. $\tilde{M} \mathbb{Y} + \tilde{b} \geq 0$.
    \end{enumerate}
    
    \paragraph{3) MTR + Bounds}
    It is clear that the process given in \eqref{constr_mtr} does not necessarily satisfy boundedness. We therefore resort to a different constructive argument. Consider the element of the partition wrt to $T$ corresponding to $T = t$. 
    For $t \in \mathcal{U}$ we can again set $\mathbb{Y}$ as in \eqref{eq110}. Because each $f(d, t, Z)$ satisfies MTR and boundedness by construction, we have $\tilde{M} \mathbb{Y} + \tilde{b} \geq 0$ over this element of the $T$-partition. 
    \\
    Suppose $t \in \mathcal{O}$. The solution of the linear programming results in some moments that are given by our map $f(d,t,Z)$ that satisfies \eqref{eq108} and \eqref{eq109}. Observe that constructing $\mathbb{Y}$ over the considered element of partition consists in constructing the counterfactual $Y(d)$ s.t. $d \in \mathcal{T}: d \ne t$ such that:
    \begin{align}\label{eq112}
        \mathbb{E}[Y(d)|T = t, Z] = f(d,t,Z) ~ \forall d  \in \mathcal{T}\setminus\{t\}\\\label{eq113}
        Y(1) \leq Y(2) \leq \dots \leq Y(t) \leq \dots \leq Y(N_T) ~ a.s. 
    \end{align}
    Where the distribution of $Y(t)$ over this element of the partition is identified. Repeated application of Lemma \ref{lemma_d1} yields this result. To construct the variables on the left, one starts from $Y(t - 1)$, invokes Lemma \ref{lemma_d1} to construct it given the cdf of $Y(t)$ (which is identified over this element of the partition), and proceeds to use the obtained cdf to construct $Y(t - 2)$, etc., descending to $Y(1)$. For the variables 'above' $Y(t)$, the Lemma is simply applied with the negative sign. All of the variables can be constructed using the same $U$ random variable in the proof of Lemma \ref{lemma_d1}, which yields that there exists a probability space such that \eqref{eq112}-\eqref{eq113} hold jointly a.s. This concludes the proof of the Theorem.
\end{proof}
\subsection{Failure of the converse inclusion for almost sure inequalities}\label{ap_fail_conv}
Consider a binary treatment $T \in \{0, 1\}$ and suppose we estimate the sharp lower bound for $\mathbb{E}[Y(1)|T = 0]$. Suppose that conditional on $T = 0$, $Y(0)$ is $1$ and $-1$ with equal probability. Assume that there is the only conditional restriction that $\mathbb{E}[Y(1)|T = 0] \geq 0$. Further suppose that there is an almost sure restriction:
\begin{align}
    \begin{pmatrix}
        1 & 1\\
        -2 & 1
    \end{pmatrix}\begin{pmatrix}
        Y(0)\\
        Y(1)
    \end{pmatrix}  \geq \begin{pmatrix}
        0\\
        0
    \end{pmatrix}
\end{align}
Note that this restriction defines the lower bound on $Y(1)$ of $2$ if $Y(0) = 1$ and $1$ if $Y(0) = -1$, and thus $\mathbb{E}[Y(1)|T = 0] \geq 1.5$. Taking the expectation of this system conditional on $T = 0$, however, yields that $\mathbb{E}[Y(1)|T = 0] = 0$ is a solution. Therefore, although $0$ is a lower bound, it is not sharp. 
\subsection{MTR}\label{ap_mtr}
If $\mathbb{Y} = \begin{pmatrix}
    Y(t_1)\\
    Y(t_2)\\
    \dots\\
    Y(t_{N_T})
\end{pmatrix}$ is written in the ascending order, $t_i > t_{i-1}$ for $i = 2, 3, \dots N_{T}$, then
\begin{align*}
    \tilde{M}_{MTR} \equiv \begin{pmatrix}
                -1 & 1 & 0 & \dots & 0 & 0 \\
                0 & -1 & 1 & \dots & 0 & 0\\
                \vdots & \vdots & \ddots & \ddots &\vdots &\vdots\\
                0 & 0 & 0 & \dots & -1 & 1
    \end{pmatrix}, \quad \tilde{b} = 0_{N_{T} - 1}
\end{align*}
\section{Additional simulation evidence}\label{ap_sim}
\subsection{Sharpness of the rate in Theorem \ref{debiased_uniform_consistency}}
Our theoretical results show that under a polytope $\delta$-condition the debiased penalty function estimator is at least $\sqrt{n}/w_n$ uniformly consistent. We now attempt to see if that rate is sharp uniformly, or whether the pointwise rate of $\sqrt{n}$ is achievable. This subsection describes the design of simulations that allow us to study the uniform rate of convergence of the debiased penalty function estimator.

The proof of pointwise $\sqrt{n}-$consistency of the debiased penalty function estimator relies on the fact that the value $L(x;\theta,w)$ at $x$ outside the argmin set $\tilde{\mathcal{A}}(\theta;w)$ is sufficiently well-separated from the optimal value $B(\theta)$. While at any fixed measure, including those that result in `flat faces', there exists some `separation constant' for a given distance from the argmin, this statement becomes problematic uniformly. In particular, around some $\overline{\theta}$ at which there occurs a flat face, there exist sequences $\theta_n$, along which for any given distance of $x$ from the argmin the difference between objective functions grows arbitrarily small.

It is worth emphasizing that the situation of an exact flat face is not problematic by itself, which is easy to see by drawing the picture of the example below at $a = 0$. Instead, the issue seems to occur when the measure grows arbitrarily close to a flat face. However, it seems that this is also not enough to undermine uniform $\sqrt{n}-$consistency: Slater's condition must also fail. Intuitively, if Slater's condition holds in the vicinity of $\overline{\theta}$, the estimator eventually becomes insensitive to $w_n$ and delivers $\sqrt{n}-$consistency. 

We consider the following linear program:
\begin{align}
    B(a,b,c,d) \equiv \min_{x,y} y - (1+a) x, \quad \text{s.t.:} \begin{cases}
        y\leq (1+b)x + d\\
        y \geq (1+c)x\\
        x \in [-1;1]
    \end{cases},
\end{align}
Where we take $a$ to be fixed and indexing a probability measure. $b = 0, c = 0, d=0$ are estimated via $b_n, c_n, d_n$ as sample averages of independent $U[-0.5,0.5]$ random variables. We now describe the design of our simulations:
\begin{enumerate}
    \item We set $w_n = \frac{\ln n}{\ln 100} (\delta/1.5)^{-1}$, where $\delta$ is the biggest value for which the delta condition is satisfied over $a \in [-0.1,0.1]$. 
    \item For any fixed $n$, we take the grid of 9 points:
    \begin{align*}
        \mathcal{G}_n \equiv \{-0.1, 0, 0.1\} \cup \{-0.1C_1n^{-1/2}, 0.1C_1n^{-1/2}\} \cup \\\{-0.1C_2w_nn^{-1/2}, 0.1C_2w_nn^{-1/2}\}\cup\{-0.1C_3w_n^{-1}, 0.1C_3w_n^{-1}\},
    \end{align*}
    where $C_i$ are chosen so that each point is equal to $-0.1$ at $n = 100$.
    \item At each $n$, we run $N_{sim} = 10000$ simulations, each time computing $b_n, c_n, d_n$ and plugging in to obtain:
    \begin{align}\label{sup_quant}
        \sup_{a \in \mathcal{G}_n} |\tilde{B}(a,b_n,c_n,d_n;w_n) - B(a,0,0,0)|
    \end{align}
    \item We then compute the standard deviation of \eqref{sup_quant} across simulations at each $n$
    \item We consider multiplying the resulting standard deviations by two rates: $\sqrt{n}$ and $\sqrt{n}/w_n$.
\end{enumerate}
In all figures below the level of the red curve is equated to the level of the blue one at the smallest $n$ to illustrate the growth rate. \\
\begin{figure}[h]
    \centering
    \includegraphics[width=0.8\linewidth]{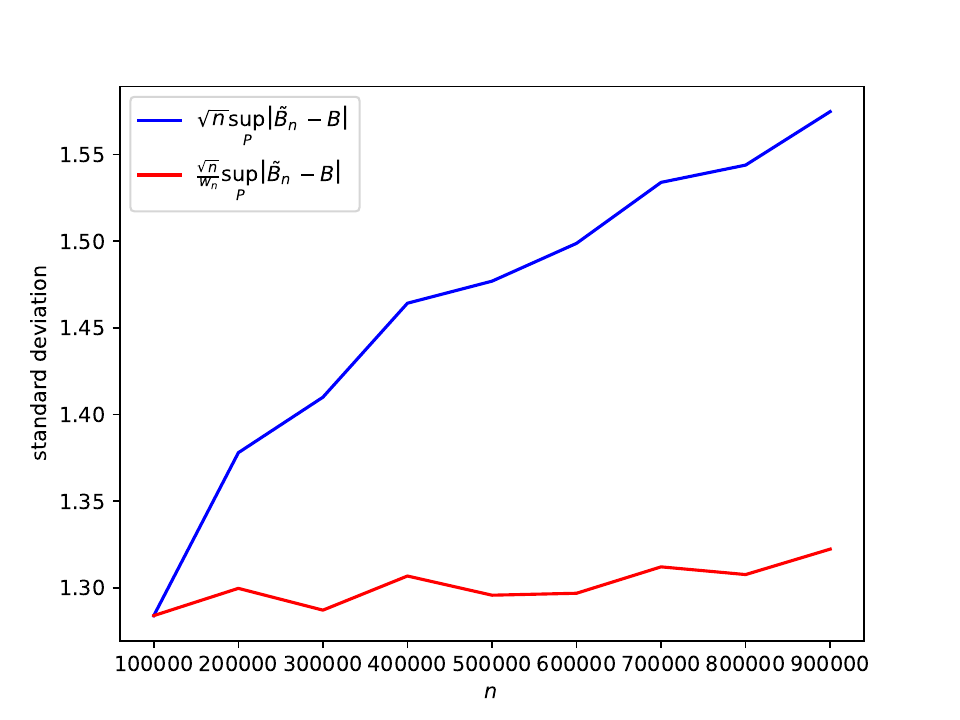}
    \caption{Uniformity of the penalized estimator: continuous vicinitiy of a flat face}
    \label{fig_uniformity}
\end{figure}
From Figure \ref{fig_uniformity}, it appears that standard deviations multiplied by $\sqrt{n}$ are indeed exploding, although very slowly, while those multiplied by $\sqrt{n}/w_n$ are stable. It may be the case that the rate of $\sqrt{n}/w_n$ is sharp uniformly.

We next consider the grid that includes the flat face itself, but restricts the measures from approaching it from the left and right. In other words, we conduct the same simulation exercise with:
\begin{align*}
    \mathcal{G}_n \equiv \{-0.1, 0, 0.1\} \cup \{-0.05(1+C_1n^{-1/2}), 0.05(1+C_1n^{-1/2})\} \cup \\
    \{-0.05(1+C_2w_nn^{-1/2}), 0.05(1+C_2w_nn^{-1/2})\}\cup\{-0.05(1+C_3w_n^{-1}), 0.05(1+C_3w_n^{-1})\}
\end{align*}
\begin{figure}[h]
    \centering
    \includegraphics[width=0.8\linewidth]{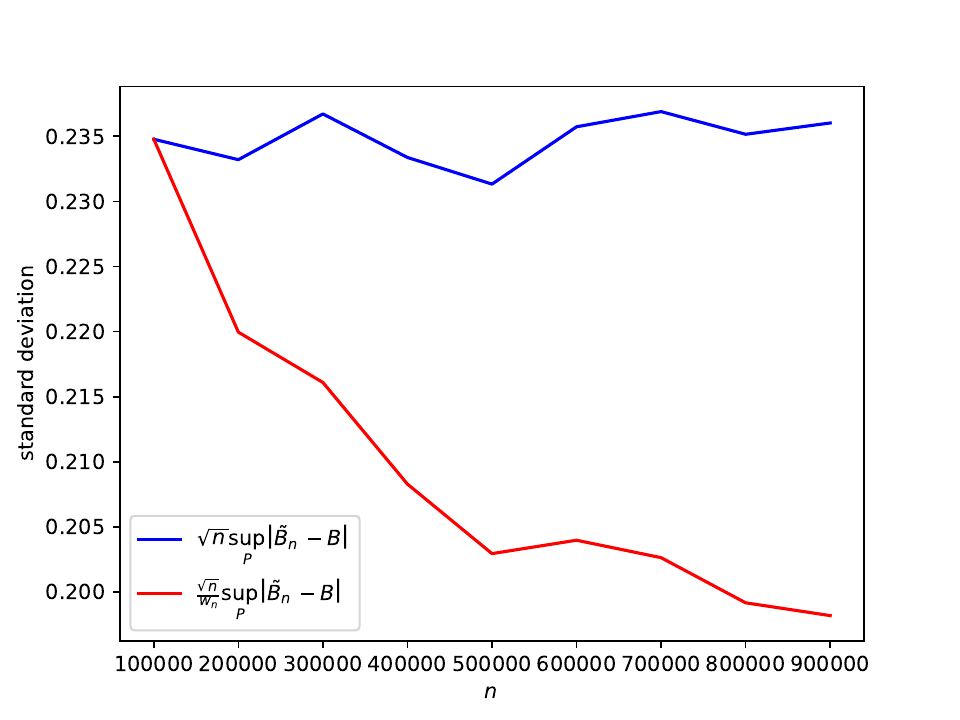}
        \caption{Uniformity of the penalized estimator: restricted vicinitiy of a flat face, flat face included.}
    \label{fig_uniformity_restricted}
\end{figure}
In this case, Figure \ref{fig_uniformity_restricted} suggests that uniform $\sqrt{n}$-consistency is achieved. 

Finally, we return to the original grid $\mathcal{G}_n$, but consider the case in which Slater's condition holds. For that reason, we take the true value of $d = 0.5$ by sampling $d_n$ from $U[0,1]$ instead.
\begin{figure}[H]
    \centering
    \includegraphics[width=0.8\linewidth]{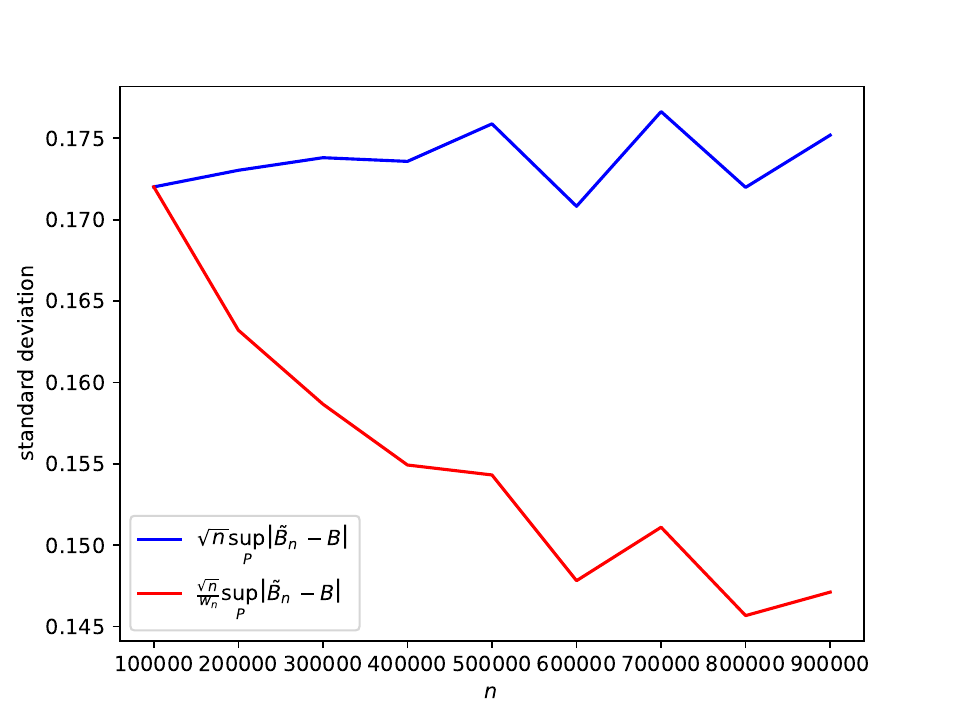}
    \caption{Uniformity of the penalized estimator: continuous vicinitiy of a flat face, Slater's condition holds.}
    
\end{figure}
Once again, it appears that we obtain uniform $\sqrt{n}-$consistency. 

Our simulation evidence thus suggests that while our estimator is only $\sqrt{n}/w_n$ uniformly consistent in general, it is $\sqrt{n}$-consistent uniformly apart from the sequences of probability measures, along which both Slater's condition fails and where a flat-face is `approached' monotonically. It appears possible to rule out the latter scenario by considering a uniform condition similar to the $\delta-$condition we imposed before. This condition would restrict the set of measures under consideration to those at which the `distance' from a flat face is either $0$ or bounded away from $0$ in some metric. Accordingly, it would likely cover the unrestricted set of measures in the limit. These considerations, however, are the topic of a separate exploration, and space does not permit us to include them in this paper.
\subsection{Irrelevance of  \texorpdfstring{$\overline{v}$}{v bar}}
\label{ap_ir_v}
\begin{figure}[H]
    \centering
    \begin{subfigure}[b]{\linewidth}
    \centering
        \includegraphics[width=0.9\linewidth]{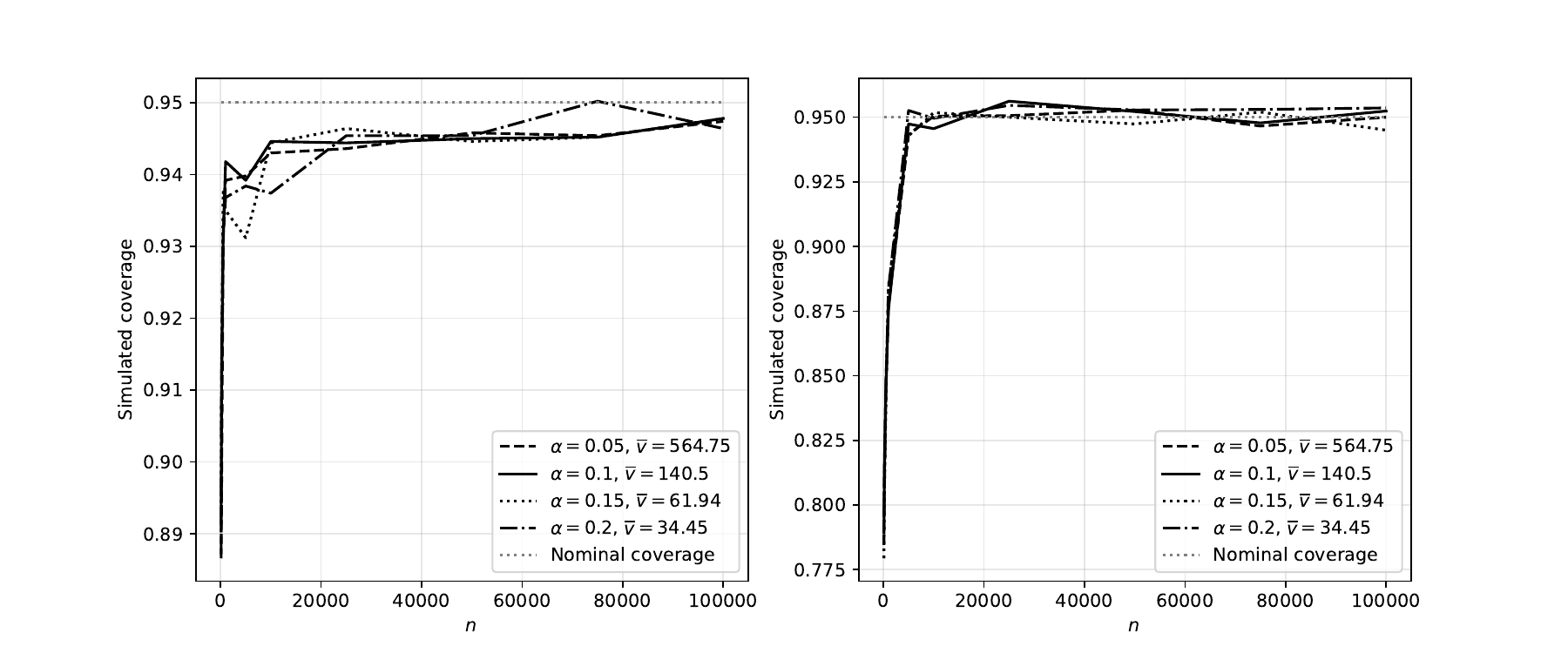}
        \caption{Average coverage}
    \end{subfigure}\\
    \begin{subfigure}[b]{\linewidth}
    \centering
        \includegraphics[width=0.9\linewidth]{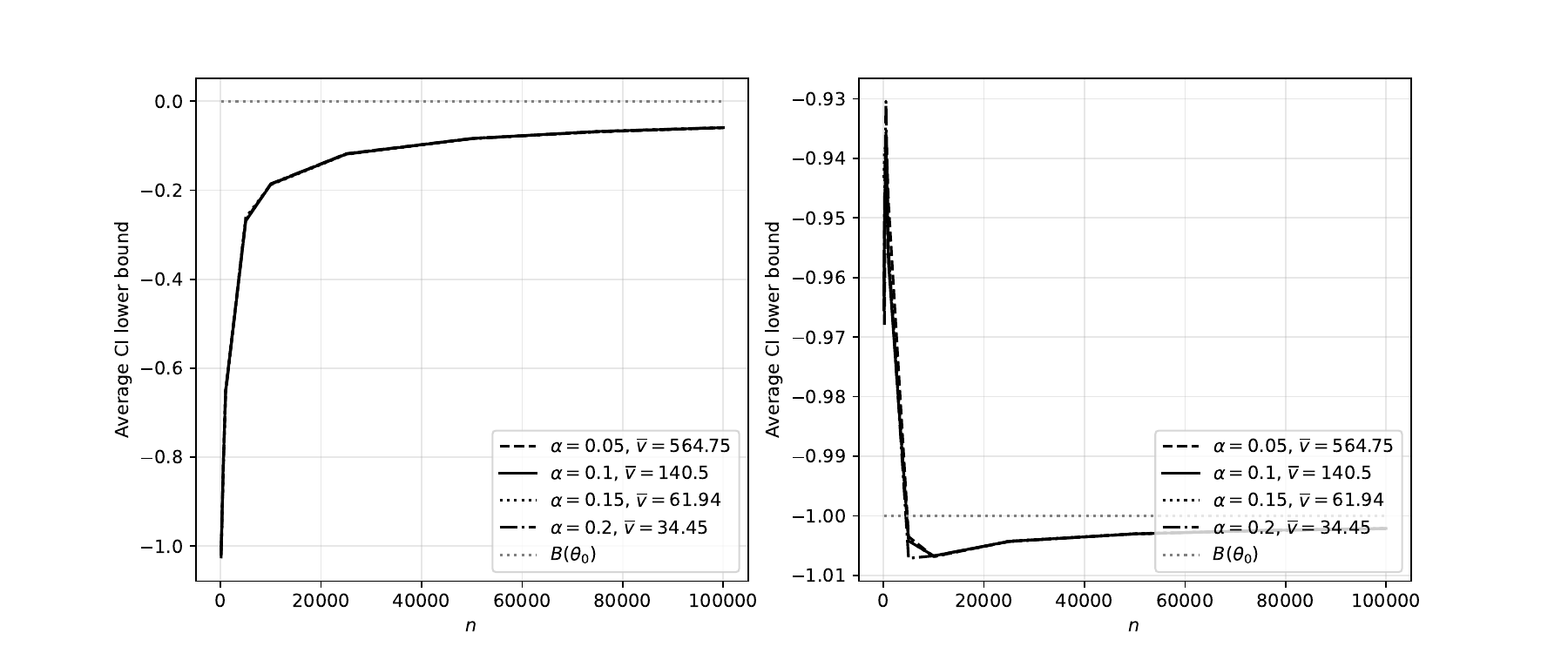}
        \caption{Average CI lower bound}
    \end{subfigure}
    \caption{Debiased estimator's performance for \eqref{description_of_theta}, simulated for $b = -0.05$ and $b = 0$, left to right, different $\overline{v}$ corresponding to inverse $\alpha-$quantiles in Theorem \ref{tao_vu_th}. $N_{sim} = 5000$.}
    \label{figure_irrev}
\end{figure}
Observe that by non-emptiness of $\mathcal{S}_A$, there exists a $v \in \mathcal{S}_A$ such that $v = M_A^{\prime\dagger} p$. Thus, the scale of $v$ depends on the smallest singular value of $M'_A$. It therefore seems reasonable to bound $v$ from above similarly to $w(\cdot)$, see Appendix \ref{ap_pensel}. We hence recommend setting $\overline{v} = \frac{d||p||}{\min_{i}||\hat{M}_{i\cdot}||\delta_{\alpha}}$ for some $\alpha$. Figure \ref{figure_irrev}, however, suggests that a selection of $\overline{v}$ is not relevant, so long as it is large enough. We thus take $\alpha = 0.1$ in our simulations, as it corresponds to a large $\overline{v}$.

\section{Alternative approach to inference}\label{ap_alt_inf}
\subsection{LP under Slater's condition}\label{ap_inf_lp_sc}
One way to obtain a consistent estimator for the plug-in under SC is to employ the procedure developed in \citet{hong2015numerical}. Let
\begin{align}
    \tilde{B}'_n(\mathbb{Z}^*_n) \equiv \frac{B(\hat{\theta}_n + \epsilon_n \mathbb{Z}^*_n) - B(\hat{\theta}_n)}{\epsilon_n}
\end{align}
For $\epsilon_n \to 0$ with $r_n\epsilon_n \to \infty$, we have the following proposition:
\begin{prop}\label{prop_f1}
    If SC and Assumption B1 hold, and the bootstrapped $\mathbb{Z}^*_n$ satisfies the measurability conditions in \citet{hong2015numerical}:
    \begin{align}
        \underset{f \in BL_1(\mathbb{R})}{\sup} |\mathbb{E}[f(\tilde{B}'_n(\mathbb{Z}^{*}_n)))|\{X_i\}_{i=1}^n] - \mathbb{E}[f(B'_{\theta_0}(\mathbb{G}_0))]| = o_p(1)
    \end{align}
\end{prop}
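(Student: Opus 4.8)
\textbf{Proof plan for Proposition \ref{prop_f1}.}

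The plan is to obtain this as a direct application of the Numerical Delta Method (NDM) of \citet{hong2015numerical}, using the Hadamard directional differentiability of $B(\cdot)$ that is already available. First, recall from Lemma \ref{hdd_of_lp} that under SC the map $B(\cdot)$ is Hadamard directionally differentiable at $\theta_0$ tangentially to all of $\mathbb{R}^S$, with directional derivative $B'_{\theta_0}(h)$ given by the minimax expression in \eqref{hdd_lp}; this derivative is continuous in $h$. Second, Assumption B1 supplies $\sqrt{n}(\hat{\theta}_n - \theta_0) \xrightarrow{L} \mathbb{G}_0 \sim \mathcal{N}(0,\Sigma)$, so $r_n = \sqrt{n}$ and $\hat{\theta}_n$ is asymptotically linear in the sense required by \citet{hong2015numerical}. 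Third, the bootstrap draw $\mathbb{Z}^*_n$ is assumed to consistently estimate the law of $\mathbb{G}_0$ conditionally on the data (in the $BL_1$ sense) and to satisfy the measurability conditions of \citet{hong2015numerical} — this is exactly the input their Theorem on numerical differentiation requires.

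With these three ingredients in hand, the argument is essentially a citation: Theorem 3.2 (numerical delta method for directionally differentiable functionals) in \citet{hong2015numerical} states that if $\phi$ is Hadamard directionally differentiable at $\theta_0$, $r_n(\hat\theta_n - \theta_0)\xrightarrow{L}\mathbb{G}_0$, the bootstrap $\mathbb{Z}^*_n$ is consistent for $\mathbb{G}_0$, and the step size satisfies $\epsilon_n \to 0$ with $r_n\epsilon_n \to \infty$, then
\begin{align*}
    \frac{\phi(\hat\theta_n + \epsilon_n \mathbb{Z}^*_n) - \phi(\hat\theta_n)}{\epsilon_n} \quad\text{is consistent (in }BL_1\text{) for }\phi'_{\theta_0}(\mathbb{G}_0).
\end{align*}
Setting $\phi = B$ and invoking Lemma \ref{hdd_of_lp} for the required differentiability, together with Proposition \ref{dconv_of_lp} for the limiting law $B'_{\theta_0}(\mathbb{G}_0)$, yields precisely the claimed display. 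I would write this out by (i) verifying the hypotheses of the cited NDM theorem are met one-by-one, pointing to SC $\Rightarrow$ H.d.d. (Lemma \ref{hdd_of_lp}), B1 $\Rightarrow$ the required stochastic expansion, and the stated conditions on $\mathbb{Z}^*_n$ and $\epsilon_n$; and (ii) quoting the conclusion.

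The only genuine content beyond bookkeeping is checking that the regularity/measurability conditions of \citet{hong2015numerical} are compatible with the LP setting — in particular that the limit $\mathbb{G}_0$ is tight (immediate, being Gaussian on $\mathbb{R}^S$) and that $B(\cdot)$ is continuous on a neighborhood of $\theta_0$ (which holds under SC and A0, as noted in the text and in Appendix \ref{ap_inf_lp_sc}), so the difference quotient is well-defined with probability approaching one. None of this presents a real obstacle; the main ``difficulty'' is simply that the NDM does not require — and does not deliver — full Hadamard differentiability, so the limit $B'_{\theta_0}(\mathbb{G}_0)$ is in general non-Gaussian, which is why (as emphasized in Proposition \ref{prop_4} and the surrounding discussion) this consistent estimator of the \emph{distribution} is needed rather than a naive bootstrap of $B(\hat\theta_n)$ itself.
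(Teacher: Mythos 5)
Your plan matches the paper's own treatment: the paper states Proposition~\ref{prop_f1} without a separate proof environment, implicitly treating it as a direct consequence of the numerical delta method of \citet{hong2015numerical} applied with $\phi = B$, using Lemma~\ref{hdd_of_lp} (H.d.d.\ under SC) and Assumption B1 to verify the hypotheses, exactly as you lay out. Your version is slightly more explicit about the checklist (tightness of $\mathbb{G}_0$, continuity of $B$ near $\theta_0$, the rate condition $\epsilon_n \to 0$, $r_n\epsilon_n \to\infty$), which is fine, but there is no substantive difference in route.
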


Assumption SC is rather strong, and one may not be comfortable imposing it directly. This is especially true in cases where many inequality restrictions are involved, such as under cMIV-s, because one would be concerned that the defined system may be close to point-identification. An even more serious problem in practice is that, even if an open ball is contained in $\Theta_I$ at $\theta_0$, the radius of that ball is not inconsequential in finite samples. A thinner identified set leads the bootstrap iterations of the N.D.M. to fail more often, as the constraint set turns empty at perturbed parameter values. Dropping the failed iterations introduces an unknown bias to the estimates, and so is not advised. 

One potential solution would be to use the set-expansion estimator as in Section 4.2. Indeed, as long as the true system is feasible, expanding the set from the RHS renders the Slater's condition true, and the procedure described in this section becomes applicable. The bias of such expansion would be controlled as follows:
\begin{align}
    \underset{\Theta_I}{\min} ~ p'x - ||p||d_H(\Theta_I, \tilde{\Theta}_I) \leq \underset{\tilde{\Theta}_I}{\min} ~ p'x \leq \underset{\Theta_I}{\min} ~ p'x 
\end{align}
Moreover, by Lipschitz continuity of systems of linear inequalities, $d_H(\Theta_I, \tilde{\Theta}_I) \leq C|\kappa|$ for some $C > 0$ depending on $\theta_0$, where the vector $\kappa > 0$ is the RHS-expansion.

This estimator, however, would still be problematic both because it is conservative even in terms of the convergence rate, and because it relies on an arbitrarily selected set expansion. Since a larger expansion leads to a more conservative lower bound, in applied work the researcher would be tempted to select the minimal value that ensures the bootstrap iterations do not fail. The statistical properties of that approach are unclear. 

\subsection{Inference for the biased penalty}\label{ap_bias_pen}
Observe that we can write $\tilde{B} \equiv \phi \circ \tilde{L}$, where $\tilde{L}(\theta) \equiv L(\cdot;\theta)$ is a map $\tilde{L}: \mathbb{R}^S \to \ell^{\infty}(\mathcal{X})$, and $\phi: \ell^{\infty}(\mathcal{X}) \to \mathbb{R}$ is given by: 
\begin{align*}
    \phi(q) \equiv \underset{x \in \mathcal{X}}{\inf} ~ q(x),
\end{align*}
and where we equip $\ell^{\infty}(\mathcal{X})$ with the sup norm. By Lemma S.4.9 in the Online Appendix of \citet{santos2019}, $\phi$ is Hadamard directionally differentiable. It is therefore tempting to apply the chain rule to find the derivative of $\tilde{B}$, which only requires that $\tilde{L}$ is H.d.d. However, in the spirit of the example from \citet{hansen2017regression}, this is not the case. The following remark illustrates that issue.
\begin{remark}
    $g(y)(x) \equiv (x + y)^+$ viewed as a map $g: \mathbb{R} \to \ell^\infty(A)$ for $x \in A \equiv [-C; C]$ for some $C > 0$ is \textbf{not} Hadamard directionally differentiable for any fixed $y \in [-C/2; C/2]$:
    \begin{align*}
        \underset{t_n \to 0^+, h_n \to h}{\lim} ||\frac{(y + x + t_n h_n)^+ - (y+x)^+}{t_n} - f(h)(x)||_{\infty} \ne 0
    \end{align*}
    for any continuous $f(h)(x)$. To see that,  note that the first term converges pointwise to $\mathds{1}\{y + x = 0\}h^+  + \mathds{1}\{y + x > 0\}h$. Suppose that $h<0$ and consider: $x_n = - y - \frac{t_n}{2} h_n$, we have:
    \begin{align*}
        |\frac{(y + x_n + t_n h_n)^+ - (y+x_n)^+}{t_n} - \mathds{1}\{y + x_n = 0\}h^+  + \mathds{1}\{y + x_n > 0\}h| =\\
        = o(1) - \frac{h}{2} \ne o(1)
    \end{align*}
\end{remark}
In light of this finding, it should be almost surprising that $\tilde{B}(\cdot)$ is still Hadamard directionally differentiable, as we now demonstrate. Instead of using the chain rule, which is of course only a sufficient condition for differentiability, we notice that $\tilde{B}$ can be rewritten as a new linear program that has a non-empty interior of the constraint set\footnote{Clearly, this new LP is not equivalent to the original one point-by-point, as that would mean that the plug-in, $B(\cdot)$, is always H.d.d., contradicting discontinuity of LP.}.
\begin{prop}\label{prop_penalty_hdd}
    The penalty function estimator, $\tilde{B}(\theta; w)$ is Hadamard directionally differentiable in $\theta$ at $\theta_0$ if either i) $\mathcal{X}$ is a polytope with $Int(\mathcal{X}) \ne \emptyset$, or ii) $\exists x \in \tilde{\mathcal{A}}(\theta_0;w) \cap Int(\mathcal{X})$. The H.d.d. is given by:
    \begin{align}\label{hdd_penalty}
        \tilde{B}'_{\theta_0}(h; w) = \underset{x \in \tilde{\mathcal{A}}(\theta_0;w)}{\inf} \underset{\lambda \in \tilde{\Uplambda}(\theta_0;w)}{\sup} h_p'x + \sum_{j = 1}^{2^q} \lambda_j \sum_{i \in \Pi_j} w_i (h_{c_i} - h'_{M_i}x)
    \end{align}
    where $h = (h'_p, h_{c_1},\dots,h_{c_q}, h'_{M_{1}},\dots, h'_{M_{q}})$ is the direction and an upper-hemicontinuous correspondence $\tilde{\Uplambda}: \mathbb{R}^S \to 2^{[2^{q}]}$ is as defined in the proof.
\end{prop}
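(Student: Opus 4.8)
The plan is to exploit the key structural observation hinted at in the statement: although the penalty function $L(x;\theta,w) = p'x + w'(c-Mx)^+$ involves the non-smooth map $(\cdot)^+$, the minimization $\tilde B(\theta;w) = \inf_{x\in\mathcal X} L(x;\theta,w)$ can be rewritten as the value of a \emph{new} linear program whose constraint set has nonempty interior, and then apply the known Hadamard directional differentiability of LP values under Slater's condition (Lemma \ref{hdd_of_lp}, due to \citet{duan2020hadamard}) together with the chain rule for H.d.d.\ maps. First I would linearize the penalty: for each subset $\Pi \subseteq [q]$, on the polyhedral region where exactly the constraints in $\Pi$ are violated, $w'(c-Mx)^+ = \sum_{i\in\Pi} w_i(c_i - M_i x)$, so $w'(c-Mx)^+ = \max_{\Pi\in 2^{[q]}} \sum_{i\in\Pi} w_i(c_i-M_i x)$ as the pointwise maximum of $2^q$ affine functions. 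Introducing an epigraph variable $t$ (bounded to a compact interval $[\underline t,\overline t]$ containing $B(\theta_0)$ in its interior, which is legitimate since $\mathcal X$ is compact), $\tilde B(\theta;w)$ equals the value of
\begin{align*}
    \min_{t,x}\; t \quad \text{s.t.:}\quad t\in[\underline t,\overline t],\; x\in\mathcal X,\; p'x + \sum_{i\in\Pi_j} w_i(c_i - M_i x) \le t \;\; \forall j\in[2^q].
\end{align*}
This is precisely the LP representation used in \eqref{lp_rep_penalty} in the proof of Theorem \ref{root_n_pointwise}.

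Next I would verify Slater's condition for this reformulated LP under either hypothesis (i) or (ii). Under (i), $\mathcal X$ is a polytope with nonempty interior; picking any $x_0\in Int(\mathcal X)$ and any $t_0$ strictly above $\max_j\big(p'x_0+\sum_{i\in\Pi_j}w_i(c_i-M_ix_0)\big)$ and strictly inside $[\underline t,\overline t]$ gives an interior point of the new feasible region — here one must be a little careful because the constraints $x\in\mathcal X$ themselves are inequality constraints, so I would phrase $\mathcal X = \{x: Ax\ge b\}$ and check that all inequalities can be made strict simultaneously, which holds by the choice of $x_0, t_0$. Under (ii), there is $x^*\in\tilde{\mathcal A}(\theta_0;w)\cap Int(\mathcal X)$; then $(B(\theta_0),x^*)$ is feasible, and since $x^*$ is interior to $\mathcal X$ and $B(\theta_0)\in Int([\underline t,\overline t])$, the only constraints that can bind are the epigraph inequalities — but a small perturbation of $t$ upward keeps those strict while $x^*$ stays feasible, so again the relative interior (and in fact, after a standard argument, the interior in the relevant affine hull) is nonempty. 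In either case Lemma \ref{hdd_of_lp} applies to the reformulated LP, whose coefficient data depend affinely (hence smoothly, in particular Hadamard differentiably) on $\theta = (p',\mathrm{vec}(M)',c')'$: the objective gradient is $e_{t}$, and the constraint rows are affine in $\theta$ with directional increments $h\mapsto (h_p - \text{(linear in } h_M)\dots)$. Composing the smooth map $\theta\mapsto(\text{LP data})$ with the H.d.d.\ LP-value map and invoking the chain rule for Hadamard directional derivatives (Proposition 3.6 in \citet{shapiro1990}, or Lemma 2.1 in \citet{fang2019inference}) yields that $\tilde B(\cdot;w)$ is H.d.d.\ at $\theta_0$, and substituting the dual formula \eqref{hdd_lp} specialized to this LP produces exactly \eqref{hdd_penalty}, with the dual correspondence $\tilde\Uplambda$ being the optimal-multiplier set of the reformulated program (which is u.h.c.\ by Berge).

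The main obstacle I anticipate is \emph{not} the chain rule itself but making the Slater verification and the identification of the dual set $\tilde\Uplambda$ clean. Specifically: (a) one must confirm that restricting $t$ to a compact interval $[\underline t,\overline t]$ does not change the value or the argmin — this needs $B(\theta_0)\in Int([\underline t,\overline t])$ and a neighborhood argument so that perturbed $\theta$ near $\theta_0$ still has optimal $t$ in the interior (true because $B(\cdot)$ restricted to this reformulated feasible problem is continuous — itself a consequence of Slater — so small $\theta$-perturbations move the value little); (b) one must translate the abstract dual multipliers from Lemma \ref{hdd_of_lp} into the explicit sum over subsets $\Pi_j$ appearing in \eqref{hdd_penalty}, keeping track of the fact that the multiplier on the epigraph constraint normalizes to sum to one (the $[t]$ first-order condition $\sum_j\lambda_j = 1$ noted in Step 3 of the proof of Theorem \ref{root_n_pointwise}) so that the $h_p'x$ term comes out with coefficient one rather than a free multiplier. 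I would also need the remark/example preceding the proposition to justify why the naive chain-rule route through $\tilde L:\mathbb R^S\to\ell^\infty(\mathcal X)$ fails — i.e., $\theta\mapsto L(\cdot;\theta,w)$ is \emph{not} H.d.d.\ into $\ell^\infty(\mathcal X)$ because of the $(\cdot)^+$ nonsmoothness evaluated along adversarial sequences $x_n$ — which is exactly why the LP-reformulation detour is necessary; this is already spelled out in the remark and I would simply cite it.
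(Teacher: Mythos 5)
Your approach matches the paper's almost exactly: linearize the $(\cdot)^+$ penalty as the pointwise maximum of $2^q$ affine pieces, pass to the epigraph LP \eqref{lp_rep_penalty} with $t$ confined to a compact interval containing $B(\theta_0)$ in its interior, verify Slater's condition for that reformulated LP, and invoke Lemma \ref{hdd_of_lp}. The paper applies the lemma ``directly'' where you make the composition through the affine $\theta\mapsto\text{LP-data}$ map explicit, but that is the same step. Your discussion of the two anticipated obstacles --- ensuring that compactifying $t$ does not affect the value or argmin near $\theta_0$, and reading off the $\sum_j\lambda_j=1$ normalization from the first-order condition in $t$ so that $h_p'x$ enters with unit coefficient --- also mirrors the paper's Step 3 and Step 4.

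The one genuine gap is your treatment of case (ii). There the hypothesis only supplies $x^* \in \tilde{\mathcal{A}}(\theta_0;w)\cap Int(\mathcal{X})$ for a general compact $\mathcal{X}$, which need not be polyhedral. Your Slater verification for (ii) proceeds as though the epigraph reformulation were already a linear program, but Lemma \ref{hdd_of_lp} applies only to LPs, so the argument as written does not reduce to the right setting. The paper closes this first: since the penalized objective is convex and $x^*$ is an interior local --- hence global, unconstrained --- minimizer, one may replace $\mathcal{X}$ w.l.o.g.\ by any compact polyhedron containing it without changing either the value or the argmin (and by continuity of $\tilde{B}(\cdot;w)$ this persists for $\theta$ near $\theta_0$). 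You need to state this replacement explicitly before writing $\mathcal{X}=\{x:Ax\ge b\}$ and running the Slater check; otherwise case (ii) never enters the LP framework in which your argument is valid.
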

\begin{proof}
Throughout this proof $w$ is taken to be fixed, therefore some dependencies on it are omitted in notation for brevity.
We proceed in four steps:
\begin{enumerate}
    \item Notice that $L(x; \theta, w)$ is a convex piecewise-linear function and it has the following representation:
    \begin{align}
        L(x; \theta, w) = \underset{j \in [2^q]}{\max} ~ \left\{p'x + \sum_{i \in \Pi_j} w_i (c_i - M_i'x)\right\},
    \end{align}
    where $\{\Pi_j\}^{2^q}_{j = 1} = 2^{[q]}$, so that $\Pi_j$ for different $j$ contain indices of all possible combinations of positive penalty term. At a given $x$ these can be interpreted as the sets of violated constraints. Let $g_j(x , \theta) \equiv p'x + \sum_{i \in \Pi_j} w_i (c_i - M_i'x)$ for $j \in [2^q]$.
    
    The initial estimator can then be represented as:
    \begin{align}
        \tilde{B}(\theta; w) = \underset{x \in \mathcal{X}}{\min} \underset{j \in [2^q]}{\max} ~ g_j(x , \theta)
    \end{align}
    \item Assumptions i) or ii) allow us to impose w.l.g. that the known compact set $\mathcal{X}$ is a fixed, non-empty and bounded polyhedron. To see that for ii), note that the program is convex and therefore the sets of local and global minima coincide. If there exists an interior local minimum, it means that expanding the constraint set does not change the value, and therefore we can set $\mathcal{X}$ to be some compact and non-empty polyhedron that contains the original set.  Then, another representation of the considered problem follows:
    \begin{align}\label{penalty_lp}
        \tilde{B}(\theta; w) = \underset{t,x}{\min} ~ t \quad \text{s.t.: } \begin{cases} t \in [\underline{t}; \overline{t}]\\ x \in \mathcal{X}\\ g_j(x, \theta) \leq t, j \in [2^q] \end{cases}
    \end{align}
    For some sufficiently wide $[\underline{t}, \overline{t}]$, given $\theta$ is close to $\theta_0$ and such that $\tilde{B}(\theta_0;w) \in (\underline{t}, \overline{t})$. This is justified because $\tilde{B}(\theta;w)$ is continuous in $\theta$, as shown in the proof of Proposition 5. 
    \item Note that the constraint set of \eqref{penalty_lp} is compact, non-empty at $\theta = \theta_0$ and, moreover, it contains an open set. To see that, consider some pair $x(\theta_0), t(\theta_0)$ from the argmin of the problem, where $x(\theta_0) \in \tilde{A}(\theta_0;w) \subseteq \mathcal{X}$ and $t(\theta_0) \equiv \tilde{B}(\theta_0;w)$. Consider $\varepsilon \equiv \overline{t} - t(\theta_0)$ and take $t^* \equiv t(\theta_0) + \frac{\varepsilon}{2}$. Note that by definition $t(\theta_0) \geq \max_j g_j(x(\theta_0))$. By continuity of $g_j(x,\theta_0)$ in $x$ for all $j \in [2^q]$, $\exists \delta > 0$ such that $t \geq \max_j g_j(x)$ $\forall t \in (t^*-\frac{\varepsilon}{4};t^* + \frac{\varepsilon}{4}), \forall x \in \mathbb{B}_\delta(x(\theta_0))$. By either i) or ii) $Int(\mathcal{X}) \ne \emptyset$ and as $x(\theta_0) \in \mathcal{X}$ it follows that $Int(\mathcal{X}) \cap \mathbb{B}_\delta(x(\theta_0))$ is non-empty. It is also open as an intersection of two open sets. Therefore, the open set $\mathcal{O} \equiv (t^*-\frac{\varepsilon}{4};t^* + \frac{\varepsilon}{4}) \times \left(\mathbb{B}_\delta(x(\theta_0))\cap Int(\mathcal{X})\right)$ is contained in the constraint set of the induced LP at $\theta_0$. That is, the problem at $\theta_0$ satisfies the Slater's condition and Lemma 6 applies. 
    \item Suppose $\check{\Uplambda}(\theta_0)$ is the set of Lagrange multipliers of \eqref{penalty_lp} at $\theta = \theta_0$, and $\tilde{\Uplambda}(\theta_0)$ is its projection on the coordinates corresponding to the constraints of form $g_{j}(x; \theta_0) \leq t$ for all $j \in [2^q]$. A typical element of $\tilde{\Uplambda}(\theta_0)$ will be written as $\lambda = (\lambda_j)^{2^q}_{j = 1}$. Recall that for $\theta$ in some small open neighbourhood of $\theta_0$ the value function of \eqref{penalty_lp} is equal to $\tilde{B}(\theta;w)$ and, moreover, the problems are equivalent, so if $\check{\mathcal{A}}(\theta)$ is the $\arg\min$ of \eqref{penalty_lp}, then $\check{\mathcal{A}}(\theta) = \{\tilde{B}(\theta;w)\} \times \tilde{\mathcal{A}}(\theta;w)$. Using the conclusion of Step 3, direct application of Lemma 6 to \eqref{penalty_lp} yields:
    \begin{align}
        \tilde{B}'_{\theta_0}(h; w) = \underset{x \in \tilde{\mathcal{A}(\theta_0;w)}}{\inf} \underset{\lambda \in \tilde{\Uplambda}(\theta_0)}{\sup} \sum^{2^q}_{j = 1} \lambda_j \left(h_p'x + \sum_{i \in \Pi_j} w_i (h_{c_i} - h'_{M_i}x)\right),
    \end{align}
    where note that there are no terms corresponding to the objective function and the constraints $t \in [\underline{t}, \overline{t}]$ and $x \in \mathcal{X}$, because there are no corresponding increments. Moreover, differentiating the Lagrangean of \eqref{penalty_lp} and recalling that $t(\theta_0) \in (\underline{t}; \overline{t})$, so the constraints $t \in [\underline{t}, \overline{t}]$ do not bind and the corresponding multipliers are $0$, one gets that $\forall \lambda \in \tilde{\Uplambda}(\theta_0)$, we have $\sum^{2^q}_{j = 1} \lambda_j = 1$, establishing \eqref{hdd_penalty}.
\end{enumerate}   
\end{proof}
\begin{remark}
    By Lemma \ref{lemma_penalty}, Assumption A1 ensures ii) in the Proposition above if $\Theta_I \subseteq Int(\mathcal{X})$.
\end{remark}

Assuming A1 holds, exact pointwise inference is then obtained via Proposition \ref{prop_f1}. It is also straightforward to show that if A1 does not hold, but conditions i) or ii) in the Proposition above are otherwise satisfied, this inference is asymptotically conservative. 

Computational considerations may be important in practice, especially as bootstrap is involved. In Appendix we further show that the penalty function estimator may be computed as a value of a simple LP. If there are $k$ constraints defining $\mathcal{X}$ and $q$ constraints for $\Theta_I$, with $d$ variables, the penalty-induced LP will feature $d + q$ variables and $2q + k$ constraints, which makes it almost as simple computationally as the usual plug-in estimator with $d$ variables and $q + k$ constraints.  

\subsection{Proof of Proposition \ref{prop_4}}\label{ap_proof_prop4}
    \begin{definition}[SMFCQ]
        We say that Strict Mangasarian-Fromovitz Constraint Qualification (SMFCQ) holds at $\theta_0$ if $\exists x \in \mathcal{A}(\theta_0)$ and the corresponding $\lambda \in \Uplambda(\theta_0)$, such that i) $M_{\text{Supp}(\lambda)}$ has full row rank, and ii) there exists $z \in \mathbb{R}^d$ s.t. $M'_{\text{Supp}(\lambda)}z = 0$, and $M'_{J(x;\theta_0)\setminus\text{Supp}(\lambda)} z < 0$.
    \end{definition}
    It is well-known that SMFCQ is equivalent to $|\Uplambda(\theta_0)| = 1$ (see Proposition 1.1 in \citet{kyparisis1985uniqueness} and Corollary 2.X in \citet{mangasarian1978uniqueness}). We now prove Proposition \ref{prop_4}.
    \begin{proof}
    We first establish the following simple Lemma. Consider the extended real line $\overline{\mathbb{R}} \equiv \mathbb{R}\cup \{-\infty, + \infty\}$.
    \begin{lemma}\label{lemma_lin_fun}
        Let $G:\mathbb{R}^{d_y} \to \overline{\mathbb{R}}$ be such that $G(y) \equiv \inf_{v \in V} y'v$, where $V \subseteq \mathbb{R}^{d_Y}$ is non-empty. $G$ is linear iff $|V| = 1$.
    \end{lemma}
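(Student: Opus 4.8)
\textbf{Proof plan for Lemma \ref{lemma_lin_fun}.}

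The plan is to prove both directions. The ``if'' direction is immediate: if $V = \{v_0\}$ is a singleton, then $G(y) = y'v_0$ is a linear functional on $\mathbb{R}^{d_Y}$ (and in particular is finite-valued everywhere, so the codomain issue does not arise). This takes one line.

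For the ``only if'' direction, I would argue by contraposition: suppose $|V| \geq 2$, so there exist distinct $v_1, v_2 \in V$, and I want to show $G$ is not linear. A linear functional $L:\mathbb{R}^{d_Y}\to\overline{\mathbb{R}}$ must in particular be finite-valued, additive, and homogeneous. First note $G(y) = \inf_{v\in V} y'v \leq \min\{y'v_1, y'v_2\}$. The key observation is that $G$ is a concave, positively homogeneous function (an infimum of linear functions), so if it were linear it would have to coincide with each of its ``supporting'' behaviors. Concretely, pick a direction $y_0$ such that $y_0'v_1 \neq y_0'v_2$; such $y_0$ exists since $v_1 \neq v_2$ (take $y_0 = v_1 - v_2$, giving $y_0'v_1 - y_0'v_2 = \|v_1-v_2\|^2 > 0$). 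Then $G(y_0) \leq \min\{y_0'v_1, y_0'v_2\} < \max\{y_0'v_1, y_0'v_2\}$, while $G(-y_0) \leq \min\{-y_0'v_1,-y_0'v_2\} = -\max\{y_0'v_1,y_0'v_2\}$. Adding: $G(y_0) + G(-y_0) \leq \min\{y_0'v_1,y_0'v_2\} - \max\{y_0'v_1, y_0'v_2\} = -|y_0'v_1 - y_0'v_2| < 0$. But linearity (additivity plus $G(0)=0$, or homogeneity with scalar $-1$) forces $G(y_0) + G(-y_0) = 0$, a contradiction. Hence $G$ is not linear.

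I should handle the possibility that $G$ takes the value $-\infty$: if $G(y_0) = -\infty$ or $G(-y_0) = -\infty$ for the chosen $y_0$, then $G$ is certainly not a (finite) linear functional, so we are done in that case too; otherwise the finite computation above applies. The main (minor) obstacle is just being careful about the extended-real-valued codomain and stating precisely what ``linear'' means here — presumably $G(\alpha y + \beta z) = \alpha G(y) + \beta G(z)$ for all scalars and all $y,z$, which in particular requires $G$ to be real-valued; once that is pinned down, the argument is a short convexity/homogeneity computation. No deep ingredient is needed.
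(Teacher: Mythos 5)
Your proof is correct and follows essentially the same route as the paper's: both pick two distinct elements $v_1, v_2 \in V$, choose a direction $y_0$ separating them, use $G(y) \leq \min\{y'v_1, y'v_2\}$ at $y_0$ and $-y_0$, and derive $G(y_0) + G(-y_0) < 0 = G(0)$, contradicting linearity. Your explicit choice $y_0 = v_1 - v_2$ and your handling of the $-\infty$ case are minor presentational variants of the paper's argument, not a different method.
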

    \begin{proof}
        The $\impliedby$ direction is obvious. For $\implies$, by contradiction, suppose that there exist $v_0, v_1 \in V$ such that $v_0 \ne v_1$. Note that $G(0) = 0$, and take some $y_0 \in \mathbb{R}^{d_y}$ such that $(v_0 - v_1)'y_0 < 0$. This can be done, because $v_0 \ne v_1$. Because $G(y) \leq \min \{v_0'y ,v_1'y\}$, we have $G(y_0) \leq \min \{v_0'y_0 ,v_1'y_0\}$, and by construction of $y_0$ we have $G(y_0) < v_1'y_0$. Similarly, $G(-y_0) \leq \min \{v_0'(-y_0), v_1'(-y_0)\} \leq v_1'(-y_0)$. Noting that by the above inequalities $G(y_0)$ and $G(-y_0)$ cannot be $+\infty$, we can add the inequalities getting $G(y_0) + G(-y_0) < v_1'y_0 +  v_1'(-y_0) = 0 = G(0)$, which contradicts linearity of $G$. 
    \end{proof}

    To prove the Proposition, we first observe that, under SC, i) and ii) are equivalent to $|\mathcal{A}(\theta_0)| = |\mathcal{\Uplambda}(\theta_0)| = 1$. Note that i) is equivalent to $|\mathcal{A}(\theta_0)| = 1$ by definition. Suppose conditions i), ii) hold. Then, Proposition 1.1 in \citet{kyparisis1985uniqueness} establishes $|\Uplambda(\theta_0)| = 1$. Conversely, suppose $|\mathcal{A}(\theta_0)| = |\mathcal{\Uplambda}(\theta_0)| = 1$. Then, Proposition 1.1 in \citet{kyparisis1985uniqueness} implies that SMFCQ must hold.  

    Thus, by Theorem 3.1 in \citet{santos2019}, to prove the Proposition it suffices to show that, under SC, $B_{\theta_0}'(h)$ is linear iff $|\mathcal{A}(\theta_0)| = |\mathcal{\Uplambda}(\theta_0)| = 1$. Sufficiency is immediate.
    
    To show necessity, denote the direction of the increment of $\theta$ by $h = (h_p, h_M, h_c)$, where $h_M$ and $h_c$ have the shapes of $M$ and $c$ respectively. To see that it is necessary for the set of primary solutions to be a singleton, consider $h = (h_p, 0, 0)$, i.e. set all components that correspond to $M$ and $c$ to $0$,
    \begin{align*}
        B_{\theta_0}'(h) = \inf_{x \in \mathcal{A}(\theta_0)} h'_px,
    \end{align*}
    and apply Lemma \ref{lemma_lin_fun}. Similarly, consider $h = (0, 0, h_c)$, 
    \begin{align*}
        B_{\theta_0}'(h) = \sup_{\lambda \in \Uplambda(\theta_0)} \lambda'h_c,
    \end{align*}
    and apply Lemma \ref{lemma_lin_fun} to observe that that $\Uplambda(\theta_0)$ must be a singleton. This concludes the proof of the Proposition.


    

\end{proof}
\subsection{LICQ is equivalent to $|\Uplambda(\theta_0)| = 1$ uniformly in $p$}\label{ap_licq}
\begin{theor}
    Consider $\theta_0 = (p', \text{vec}(M)', c')'$ satisfying A0 and fix some $\overline{x} \in \Theta_I(\theta_0)$. Suppose $\tilde{\Theta} \equiv \{\tilde{\theta} \in \mathbb{R}^S|\tilde{\theta} = (\tilde{p}', \text{vec}(M)',c')', \text{ and } \overline{x} \in \mathcal{A}(\tilde{\theta}))\}$. Then, 
    \begin{align*}
        |\Uplambda(\tilde{\theta})| = 1 ~ \forall \tilde{\theta} \in \tilde{\Theta} \iff \rk M_{J(\overline{x};\theta_0)} = |J(\overline{x};\theta_0)|, \text{i.e. LICQ holds at $\overline{x}$.}
    \end{align*}
    The claim above is also true if $\theta_0$ satisfies both A0 and SC. 
\end{theor}
\begin{proof}
    The implication $\impliedby$ is a well-known fact (see e.g. \citet{gafarov2024simple}). The other direction holds, because, setting the class of functions $\mathcal{F}$ in \citet{WACHSMUTH201378} to be the class of linear functions parametrized by $\tilde{\Theta}$, one observes that for $\tilde{p} = - \sum_{i \in J(\overline{x};\theta_0)} M_i$, we have the corresponding $\tilde{\theta} \in \tilde{\Theta}$. The rest of the argument is identical to the proof of Theorem 2 in \citet{WACHSMUTH201378}.
\end{proof}
\section{Set expansions approach}\label{ap_setex}
Proposition \ref{plug_in_incons} highlights that the plug-in estimator fails whenever the constraint set has an empty interior. For completeness of our argument, we develop a natural alternative to the penalty function estimator - the \textit{set-expansion approach}. The idea here is to enlarge $\Theta_I$ by relaxing each inequality constraint with a sequence $\kappa_n$\footnote{In the presense of `true equality' constraints $Ax = b$, the corresponding inequalities need not be expanded.}. The resulting estimator has the flavor of the approach in \citet{CHT}. Intuitively, it enforces SC at the cost of producing a potentially conservative estimate. We show that, in general, this estimator can indeed have a conservative rate, and thus we do not advocate its use in practice.

The approach in this section is first to prove that the appropriately extended identified set converges to the population identified set in Hausdorff distance, and then use uniform continuity of the criterion function as well as its resemblance to the support function to establish the convergence of the estimator itself. 

Consider the following criterion function and its sample analogue:
\begin{align*}
    Q(x) \equiv ||(M x - c)^{-}||^2, \quad \hat{Q}_n(x) \equiv ||(\hat{M}_n x - \hat{c}_n)^{-}||^2
\end{align*}
Denote the identified set as $\Theta_I \equiv \{x \in \mathcal{X}|Q(x) = 0\} = \{x \in \mathcal{X}|Mx - c \geq 0\}$.

\begin{lemma}\label{lemma_setexp}
    $||\hat{Q}_n(x) - Q(x)||_{\infty} \xrightarrow{p} 0$, where $||\cdot||_\infty$ is over $\Theta_I$.
\end{lemma}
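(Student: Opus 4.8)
The statement is a uniform law of large numbers for the criterion function $\hat{Q}_n$ over the \emph{fixed} compact set $\Theta_I$. The plan is to exploit the fact that $\hat{Q}_n$ and $Q$ differ only through the estimated parameters $(\hat{M}_n, \hat{c}_n)$, which converge to $(M,c)$ at the $\sqrt{n}$-rate by Assumption A0.iii, and that both maps $x \mapsto Q(x)$ and $x \mapsto \hat{Q}_n(x)$ are continuous and restricted to the compact set $\Theta_I \subseteq \mathcal{X}$. First I would note the elementary inequality $\|v^- - u^-\| \le \|v - u\|$ for $v,u \in \mathbb{R}^q$ (the negative-part map is $1$-Lipschitz), which is the exact analogue of \eqref{ineq_plus} used for $(\cdot)^+$ in the proof of Theorem \ref{first_penalty_theorem}. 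Since $a^2 - b^2 = (a-b)(a+b)$, for any $x$ we can write $|\hat{Q}_n(x) - Q(x)| = \big| \|(\hat{M}_nx - \hat{c}_n)^-\| - \|(Mx - c)^-\|\big| \cdot \big(\|(\hat{M}_nx - \hat{c}_n)^-\| + \|(Mx - c)^-\|\big)$.

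The second step is to bound each factor uniformly over $x \in \Theta_I$. For the first factor, the reverse triangle inequality together with the $1$-Lipschitz property of $(\cdot)^-$ gives $\big| \|(\hat{M}_nx - \hat{c}_n)^-\| - \|(Mx - c)^-\|\big| \le \|(\hat{M}_n - M)x - (\hat{c}_n - c)\| \le \|\hat{M}_n - M\|_2 \|x\| + \|\hat{c}_n - c\|$, using \eqref{cauchy_matrix} and \eqref{frob_spect} exactly as in the proof of Theorem \ref{first_penalty_theorem}; with $\|x\| \le \|x\|_\infty := \sup_{x\in\mathcal{X}} \|x\| < \infty$ by A0.ii, this factor is $O_p(1/\sqrt n)$ uniformly in $x \in \Theta_I$ (indeed uniformly in $x \in \mathcal{X}$). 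For the second factor, observe that on $\Theta_I$ we have $Mx - c \ge 0$, hence $(Mx-c)^- = 0$ and $\|(Mx-c)^-\| = 0$ identically on $\Theta_I$; moreover $\|(\hat{M}_nx - \hat{c}_n)^-\| \le \|(\hat{M}_n - M)x - (\hat{c}_n - c)\| + \|(Mx-c)^-\| = \|(\hat{M}_n - M)x - (\hat{c}_n - c)\|$ by the same Lipschitz bound, which is again $O_p(1/\sqrt n)$ uniformly. So the second factor is itself $O_p(1/\sqrt n)$ uniformly over $\Theta_I$, and the product of the two bounds is $O_p(1/n) = o_p(1)$, which gives $\sup_{x \in \Theta_I} |\hat{Q}_n(x) - Q(x)| \xrightarrow{p} 0$ and in fact at rate $1/n$.

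I do not anticipate a genuine obstacle here: the key simplification is that the supremum is taken over the \emph{population} set $\Theta_I$, on which $Q \equiv 0$, so one of the two factors in the difference-of-squares identity vanishes and the problem reduces to controlling $\|(\hat{M}_n - M)x - (\hat{c}_n - c)\|$ uniformly, which is immediate from A0 and compactness of $\mathcal{X}$. The only mild care needed is to make the $O_p$ statements uniform in $x$, but since the bounds factor as (estimation error)$\times$(bounded quantity) with the bound on $\|x\|$ not depending on $x$, this is routine. (If one wanted to avoid the compactness/Lipschitz route entirely, an alternative is to apply a uniform LLN for the class $\{x \mapsto \|(\hat{M}x - \hat c)^-\|^2 : x \in \Theta_I\}$, but the direct argument above is cleaner and yields the sharper $1/n$ rate.)
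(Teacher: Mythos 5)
Your proof is correct and rests on the same two ingredients as the paper's: the difference-of-squares factoring $a^2 - b^2 = (a-b)(a+b)$ and the $1$-Lipschitz property of the negative-part map $v \mapsto v^-$. The paper applies these componentwise (factoring each $([\hat{M}_nx-\hat{c}_n]_j^-)^2 - ([Mx-c]_j^-)^2$, then summing and using Cauchy--Schwarz), bounds the ``sum'' factor $\max_j [\hat{M}_nx - \hat{c}_n]_j^- + [Mx-c]_j^-$ uniformly over $\mathcal{X}$ by a constant $K + o_p(1)$, and multiplies by the $O_p(n^{-1/2})$ ``difference'' factor to conclude $o_p(1)$. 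You instead factor the norm-squared difference directly, and---more substantively---you exploit that $Q(x) \equiv 0$ on $\Theta_I$: since $(Mx-c)^- = 0$ there, the second factor $\|(\hat{M}_nx - \hat{c}_n)^-\| + \|(Mx-c)^-\| \le \|(\hat{M}_n - M)x - (\hat{c}_n - c)\|$ is itself $O_p(n^{-1/2})$ uniformly, so the product is $O_p(1/n)$. This is a genuine refinement: the paper's lemma proof does not use that the supremum is taken over $\Theta_I$ rather than all of $\mathcal{X}$ (it bounds everything over $\mathcal{X}$ and delivers only the $o_p(1)$ conclusion), and the $O_p(1/n)$ rate for $\sup_{\Theta_I}\hat{Q}_n$ is asserted only afterward, by reference to an ``analogous'' argument. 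Your version establishes that sharper rate directly inside the lemma and is cleaner---it also avoids the paper's somewhat opaque constant $|\mathcal{F}_t|$ appearing in the Cauchy--Schwarz step.
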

\begin{proof}
    \begin{align}
        |\hat{Q}_n(x) - Q(x)| = \left|\sum_{j} \left([\hat{M}_n x - \hat{c}_n]_j^{-}\right)^2 - \left([M x - c]_j^{-}\right)^2\right| = \\
        = \left|\sum_{j} ([\hat{M}_n x - \hat{c}_n]_j^{-} - [M x - c]_j^{-})([\hat{M}_n x - \hat{c}_n]_j^{-} + [M x - c]_j^{-})\right| \leq \\ \label{eq58}
        \leq \sum_{j} |[\hat{M}_n x - \hat{c}_n]_j^{-} - [M x - c]_j^{-}| \cdot |[\hat{M}_n x - \hat{c}_n]_j^{-} + [M x - c]_j^{-}| \leq \\ \label{eq59}
        \leq \left(\max_j ~[\hat{M}_n x - \hat{c}_n]_j^{-} + [M x - c]_j^{-} \right)\sum_{j} \left|\left[(\hat{M}_n - M) x + c - \hat{c}_n\right]_j\right|
    \end{align}
    Where \eqref{eq59} uses the fact that $|(y_0)^- - (y_1)^-| = |\max\{0,-y_0\} - \max\{0,-y_1\}| \leq |y_0 - y_1| ~ \forall y_0, y_1 \in \mathbb{R}$.
    We now show that the last line converges to $0$ is supremum over $x \in \mathcal{X}$. Note that, since $\hat{M}_n \xrightarrow{p} M, ~ \hat{c}_n \xrightarrow{p} c$, the estimator asymptotically lies in any $\delta$-vicinity of the true population parameter. In other words, $\forall \delta>0$, we have $(\text{vec}(\hat{M}_n)', \hat{c}_n')' \in B_\delta((\text{vec}(M)',c')')$ w.p. $1$ asymptotically. 

    Since $\mathcal{X}$ is a compact and because of the former result, both $\hat{M}_n x - \hat{c}_n$ and $M x - c$ are bounded w.p. $1$ asymptotically, so there exists $K > 0$ - large enough:\footnote{In the cMIV setup all terms of $\hat{M}_n$, $\hat{c}_n$ are known to be bounded, so asymptotic arguments are not necessary. We consider a more general case here.}:
    \begin{align}\label{eq60}
        \underset{x \in \mathcal{X}}{\sup} \max_j ~[\hat{M}_n x - \hat{c}_n]_j^{-} + [M x - c]_j^{-} ~ \leq K + o_p(1)
    \end{align}
    Note that by Cauchy-Schwarz, $\sum_{j} \left|\left[(\hat{M}_n - M) x + c - \hat{c}_n\right]_j\right| \leq |\mathcal{F}_t| \cdot ||(\hat{M}_n - M) x + c - \hat{c}_n||$. Further using \eqref{eq59}, \eqref{eq60} and noting that for nonnegative $f,g$ one has $\sup_A f g \leq \sup_A f \cdot \sup_A g$, we get:
    \begin{align}
        ||\hat{Q}_n(x) - Q(x)||_\infty \leq (K + o_p) \cdot |\mathcal{F}_t| \cdot \underset{x \in \mathcal{X}}{\sup}||(\hat{M}_n - M) x + c - \hat{c}_n|| \leq\\
        \leq (\tilde{K} + o_p) \cdot \left(||\hat{M}_n - M|| \cdot ||x||_\infty +||c - c_n||\right) = o_p(1)
    \end{align} 
    The proof is complete.
\end{proof}

Analogously to the proof of Lemma 3, one shows that because both $\hat{c}_n$ and $\hat{M}_n$ are $\sqrt{n}$-consistent from A0, we have:
\begin{align*}
    \sup_\mathcal{X}(Q - \hat{Q}_n)^{+} = O_p(1/\sqrt{n}), \quad \sup_{\Theta_I} \hat{Q}_n = O_p(1/n).
\end{align*}
The plug-in estimator of the identified set, $\{x \in \mathcal{X}|\hat{Q}_n = 0\} = \Theta_I(\hat{\theta}_n)$, may not `cover' the true asymptotically, as discussed in \citet{CHT} (CHT). To address that, consider the following class of set estimators:
\begin{align*}
    \{x \in \mathcal{X}| n \hat{Q}_n(x) \leq \kappa_n\}
\end{align*}
Fix $\kappa_n$ such that $P[\kappa_n \geq \sup_{\Theta_I} n \hat{Q}_n] \to 1$ and $\frac{\kappa_n}{n} \xrightarrow{p} 0$. Let $\hat{\Theta}_n \equiv \{x|\hat{M}_n x - \hat{c}_n \geq - \frac{\sqrt{\kappa_n}}{\sqrt{n}}\iota\}$. It is the set that we want to prove consistent for the population identified set. 

The issue is that the set $\hat{\Theta}_n$ is not a criterion-based set, so the results in CHT is not directly applicable. However, we can define $\underline{\Theta}_n \equiv \{x|\hat{Q}_n(x) \leq  \frac{\kappa_n}{n}\} \subseteq \hat{\Theta}_n$ and $\overline{\Theta}_n \equiv \{x|\hat{Q}_n(x) \leq  q\frac{\kappa_n}{n}\} \supseteq \hat{\Theta}_n$. 

We then wish to 'sandwich' $\hat{\Theta}_n$ between a smaller set that asymptotically covers $\Theta_I$ and a bigger set that is asymptotically covered by $\Theta_I$. The following simple lemma is an analogue of `sandwich theorem' for sets.

\begin{lemma}\label{sandwich_lemma}
    Consider $\Theta_I \subseteq \mathcal{X}$ and suppose the random set $\hat{\Theta}_n \subseteq \Theta$ can be sandwiched between two sets: $\underline{\Theta}_n \subseteq \hat{\Theta}_n \subseteq \overline{\Theta}_n$, such that:
    \begin{align*}
        \underset{x \in \overline{\Theta}_n}{\sup} d(x, \Theta_I) = o_p(1)\\
        \underset{x \in \Theta_I}{\sup} d(x, \underline{\Theta}_n) = o_p(1)
    \end{align*}
    Then:
    \begin{align*}
        d_H(\hat{\Theta}_n, \Theta_I) = o_p(1)
    \end{align*}
\end{lemma}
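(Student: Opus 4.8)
\textbf{Proof plan for Lemma \ref{sandwich_lemma}.}

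The plan is to prove the two one-sided bounds that together make up the Hausdorff distance. Recall that
\[
d_H(\hat{\Theta}_n, \Theta_I) = \max\left\{\sup_{x \in \hat{\Theta}_n} d(x, \Theta_I), ~ \sup_{x \in \Theta_I} d(x, \hat{\Theta}_n)\right\},
\]
so it suffices to show that each of the two terms on the right-hand side is $o_p(1)$. First I would handle the term $\sup_{x \in \hat{\Theta}_n} d(x, \Theta_I)$. Since $\hat{\Theta}_n \subseteq \overline{\Theta}_n$ by hypothesis, any $x \in \hat{\Theta}_n$ also lies in $\overline{\Theta}_n$, so $\sup_{x \in \hat{\Theta}_n} d(x, \Theta_I) \leq \sup_{x \in \overline{\Theta}_n} d(x, \Theta_I) = o_p(1)$ by the first assumption. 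This is the easy inclusion and is essentially immediate from monotonicity of the supremum over a larger set.

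Next I would handle the term $\sup_{x \in \Theta_I} d(x, \hat{\Theta}_n)$. Here the key observation is that $d(x, \cdot)$ is \emph{antitone} in its set argument: if $\underline{\Theta}_n \subseteq \hat{\Theta}_n$, then for any fixed $x$ we have $d(x, \hat{\Theta}_n) = \inf_{y \in \hat{\Theta}_n} \|x - y\| \leq \inf_{y \in \underline{\Theta}_n} \|x - y\| = d(x, \underline{\Theta}_n)$, since the infimum over the smaller set $\underline{\Theta}_n$ can only be larger. Taking the supremum over $x \in \Theta_I$ on both sides yields $\sup_{x \in \Theta_I} d(x, \hat{\Theta}_n) \leq \sup_{x \in \Theta_I} d(x, \underline{\Theta}_n) = o_p(1)$ by the second assumption. (One should note here that $\underline{\Theta}_n$ must be nonempty w.p.a.1 for $d(x, \underline{\Theta}_n)$ to be finite, but this is implicit in the hypothesis that $\sup_{x \in \Theta_I}d(x,\underline\Theta_n)=o_p(1)$, which presupposes the distances are well-defined; alternatively it follows since $\Theta_I$ itself is contained in $\underline\Theta_n$ with probability approaching one in the intended application.)

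Combining the two bounds, $d_H(\hat{\Theta}_n, \Theta_I) \leq \max\{o_p(1), o_p(1)\} = o_p(1)$, which is the claim. I do not anticipate any genuine obstacle here: the lemma is purely a formal consequence of the monotonicity/antitonicity of the map $A \mapsto \sup_{x} d(x,A)$ under set inclusion, and the only mild point to be careful about is the direction of the inequality in the second step (a larger set gives a smaller distance) and the standing nonemptiness of the sandwiching sets so that all quantities are finite.
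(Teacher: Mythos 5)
Your proof is correct and matches the paper's argument, which is stated as a one-liner ("Writing out the definitions and applying CMT yields the result"): you simply unfold the definition of $d_H$ into the two directed Hausdorff distances and use monotonicity of $\sup$ under set inclusion and antitonicity of $d(x,\cdot)$, each in the appropriate direction. The remark about nonemptiness of $\underline{\Theta}_n$ is a sensible safeguard and is indeed guaranteed in the intended application since $\Theta_I \subseteq \underline{\Theta}_n$ w.p.a.1 by construction.
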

\begin{proof}
    Writing out the definitions and applying CMT yields the result.
\end{proof}
The only thing that remains to show consistency of the set-estimator is to prove that the inequalities in Lemma \ref{sandwich_lemma} hold in our case. The derivation below follows the usual CHT logic. The first equality is established through:
\begin{align}\label{bigset}
    &P[\underset{x \in \overline{\Theta}_n}{\sup} d(\theta, \Theta_I) \leq \varepsilon] = P[\overline{\Theta}_n \subseteq \Theta_I^{\varepsilon}] = \\\notag &P[\overline{\Theta}_n \cap \mathcal{X} \setminus \Theta_I^{\varepsilon}=\emptyset]  
    \geq P[\underset{x \in \overline{\Theta}_n}{\sup}Q(\theta) < \underset{x \in \mathcal{X}\setminus \Theta^{\varepsilon}_I}{\inf}Q(\theta)]
\end{align}
Then, by uniform continuity and by the construction of $\overline{\Theta}_n$:
\begin{align*}
    \underset{x \in \overline{\Theta}_n}{\sup}Q(\theta) = \underset{x \in \overline{\Theta}_n}{\sup}\hat{Q}_n(\theta) + o_p(1) = q\frac{\kappa_n}{n} + o_p(1) = o_p(1)
\end{align*}
By construction of $\Theta_I$ and continuity of $Q(\theta)$, $\exists \delta>0$: $\underset{x \in \mathcal{X}\setminus \Theta^{\varepsilon}_I}{\inf}Q(\theta)>\delta$. Thus, the RHS of \eqref{bigset} goes to 1. So, $\underset{x \in \overline{\Theta}_n}{\sup} d(x, \Theta_I) = o_p(1)$.\\

The other side follows, as by construction $\underset{x \in \Theta_I}{\sup} \hat{Q}_n(x) \leq \frac{\kappa_n}{n} \implies \Theta_I \subseteq \underline{\Theta}_n$. So, 
\begin{align*}
    P[\underset{x \in \Theta_I}{\sup} d(\theta, \underline{\Theta}_n) \leq \varepsilon] \geq  P[\underset{x \in \Theta_I}{\sup} \hat{Q}_n(x) \leq \frac{\kappa_n}{n}] \xrightarrow{p} 1
\end{align*}
Therefore, using Lemma 3, we conclude that: 
\begin{align*}
    d_H(\hat{\Theta}_n, \Theta_I) \xrightarrow{p} 0
\end{align*}
The next step is to recall that if we have two convex, compact sets, $A, B$, the following holds:
\begin{align*}
    d_H(A, B) = \underset{||y|| \leq 1}{\max}|s(y, A) - s(y, B)|,
\end{align*}
where $s(y, S) \equiv \underset{t \in S}{\max} ~ y't$ - the support function.

Using uniform convergence of the value function and combining all the results:
\begin{align*}
    |\underset{x \in \hat{\Theta}_n}{\min} \hat{p}_n'x - \underset{x \in \Theta_I}{\min} p'x| = |\underset{x \in \hat{\Theta}_n}{\min} p'x - \underset{x \in \Theta_I}{\min} p'x| + o_p(1) = \\
    = |s(-p,\Theta_I) - s(-p,\hat{\Theta}_n)| + o_p(1) \leq ||p||d_H(\Theta_I, \hat{\Theta}_n) + o_p(1) \xrightarrow{p} 0
\end{align*}
This establishes the following proposition:
\begin{prop}
    Let $\kappa_n$ : $P[\kappa_n \geq \sup_{\Theta_I} n \hat{Q}_n] \to 1$ and $\frac{\kappa_n}{n} \xrightarrow{p} 0$. Then the following estimator is consistent for the sharp lower bound:
    \begin{align*}
        \check{B}_n \equiv \underset{\hat{M}_n x - \hat{c}_n \geq - \sqrt{\frac{\kappa_n}{n}}\iota }{\min} \hat{p}'_n x \xrightarrow{p} \underset{M x - c \geq 0 }{\min} p' x 
    \end{align*}
\end{prop}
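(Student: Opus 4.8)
The plan is to follow the "sandwich" strategy already laid out in the excerpt, filling in the three ingredients that make it work. First I would fix $\kappa_n$ with $\P[\kappa_n \geq \sup_{\Theta_I} n\hat Q_n] \to 1$ and $\kappa_n/n \xrightarrow{p} 0$, and recall the definitions $\hat\Theta_n \equiv \{x \in \mathcal{X} : \hat M_n x - \hat c_n \geq -\sqrt{\kappa_n/n}\,\iota\}$, $\underline\Theta_n \equiv \{x \in \mathcal{X} : \hat Q_n(x) \leq \kappa_n/n\}$, and $\overline\Theta_n \equiv \{x \in \mathcal{X} : \hat Q_n(x) \leq q\,\kappa_n/n\}$. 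The elementary norm inequality $\|v^-\|^2 \leq \sum_j (v_j^-)^2 \leq q \max_j (v_j^-)^2 \leq q\|v^-\|_\infty^2$ together with $\|v\|_\infty \le \|v\|$ gives the componentwise-to-Euclidean comparison needed to verify $\underline\Theta_n \subseteq \hat\Theta_n \subseteq \overline\Theta_n$: if $\hat Q_n(x) \le \kappa_n/n$ then each coordinate violation is at most $\sqrt{\kappa_n/n}$, so $x \in \hat\Theta_n$; conversely if $x \in \hat\Theta_n$ then $\hat Q_n(x) = \|(\hat M_n x - \hat c_n)^-\|^2 \le q\,\kappa_n/n$, so $x \in \overline\Theta_n$.

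Next I would establish the two Hausdorff-type bounds required by Lemma \ref{sandwich_lemma}. For the outer bound, $\sup_{x \in \overline\Theta_n} d(x, \Theta_I) = o_p(1)$: on the event $\{x \in \overline\Theta_n\}$ one has $Q(x) = \hat Q_n(x) + (Q(x) - \hat Q_n(x)) \le q\kappa_n/n + \sup_{\mathcal{X}}|Q - \hat Q_n|$, and by Lemma \ref{lemma_setexp} (extended to a uniform bound over $\mathcal{X}$ by the same Cauchy–Schwarz argument) the second term is $o_p(1)$ while the first is $o_p(1)$ by the choice of $\kappa_n$; since $\mathcal{X}$ is compact and $Q$ is continuous with $Q^{-1}(0) = \Theta_I$, for every $\varepsilon > 0$ we have $\inf_{x \in \mathcal{X}\setminus\Theta_I^\varepsilon} Q(x) > \delta > 0$, so $\overline\Theta_n \subseteq \Theta_I^\varepsilon$ w.p.a.1. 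For the inner bound, $\sup_{x \in \Theta_I} d(x, \underline\Theta_n) = o_p(1)$: on $\{\kappa_n \ge \sup_{\Theta_I} n\hat Q_n\}$ we have $\Theta_I \subseteq \underline\Theta_n$ exactly, hence the distance is zero with probability approaching one. Feeding these into Lemma \ref{sandwich_lemma} gives $d_H(\hat\Theta_n, \Theta_I) \xrightarrow{p} 0$.

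Finally I would pass from set convergence to value convergence. Both $\Theta_I$ and $\hat\Theta_n$ are convex compact sets (convexity of $\hat\Theta_n$ because it is defined by linear inequalities; nonemptiness w.p.a.1 because $\Theta_I \subseteq \underline\Theta_n \subseteq \hat\Theta_n$ and $\Theta_I \ne \emptyset$ by Assumption A0.i). Using the identity $d_H(A,B) = \max_{\|y\|\le 1}|s(y,A) - s(y,B)|$ for convex compacts and noting $\min_{x \in S} p'x = -s(-p, S)$, I get $|\min_{\hat\Theta_n} p'x - \min_{\Theta_I} p'x| = |s(-p,\Theta_I) - s(-p,\hat\Theta_n)| \le \|p\|\,d_H(\Theta_I,\hat\Theta_n) = o_p(1)$. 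Replacing $p$ by $\hat p_n = p + O_p(1/\sqrt n)$ costs an extra $o_p(1)$ term uniformly over the (bounded) feasible sets, which yields $\check B_n \xrightarrow{p} B(\theta_0)$.

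The main obstacle I anticipate is the outer bound $\sup_{x \in \overline\Theta_n} d(x,\Theta_I) = o_p(1)$: it requires the \emph{uniform} (over $\mathcal{X}$, not just over $\Theta_I$) version of the criterion-convergence in Lemma \ref{lemma_setexp}, plus the compactness argument giving a strictly positive gap $\inf_{\mathcal{X}\setminus\Theta_I^\varepsilon} Q > 0$. This gap argument is the crux — it is where non-emptiness of $\Theta_I$ (Assumption A0.i) is genuinely used, since otherwise $Q$ need not vanish anywhere and the "identified set" is empty. Everything else (the $\ell_2$-vs-$\ell_\infty$ sandwiching, the support-function step, the perturbation in $\hat p_n$) is routine.
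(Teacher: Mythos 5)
Your proposal is correct and follows the same sandwich strategy laid out in the paper's Appendix F: verify $\underline{\Theta}_n \subseteq \hat{\Theta}_n \subseteq \overline{\Theta}_n$, apply the sandwich lemma together with uniform criterion convergence to obtain $d_H(\hat{\Theta}_n,\Theta_I)\xrightarrow{p}0$, and pass to value convergence via the support-function identity and the Lipschitz bound $|s(-p,A)-s(-p,B)|\le\|p\|\,d_H(A,B)$. The only thing you add over the paper is explicitly verifying the $\ell_2$-versus-$\ell_\infty$ inclusions and flagging that Lemma \ref{lemma_setexp} must hold uniformly over $\mathcal{X}$ (its proof in fact already gives this, even though the statement says $\Theta_I$); your remark tying the positive-gap argument to non-emptiness of $\Theta_I$ is slightly misplaced, since the gap $\inf_{\mathcal{X}\setminus\Theta_I^\varepsilon}Q>0$ follows from compactness and continuity regardless, while A0.i is really needed so that the target value and the inner inclusion are well-defined at all.
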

In practice, \citet{CHT} suggest to select some $\kappa_n$ that diverges sufficiently slowly with the sample size. We use $\sqrt{\kappa_n} \propto  \ln{\ln{n}}$ in the simulations in Section \ref{monte-carlo}. Under the Slater's condition the naive estimator is consistent, i.e. one could set $\kappa_n = 0$. 

Although it seems intuitive that $\check{B}_n$ should converge at the rate $\sqrt{n\kappa^{-1}_n}$, deriving that result is outside the scope of this paper, because we do not advocate its use. It is immediate to see, however, that $\check{B}_n$ \textit{can converge as slowly as $\sqrt{n\kappa^{-1}_n}$}. For that, consider \eqref{example_prop1} without the inequality $x_2 \leq x_1$ and setting $\hat{b}_n = 0$. The minimum is attained at $-1 - \sqrt{\frac{\kappa_n}{n}}$. Our simulation evidence suggests that the set-expansion estimator can be quite conservative in practice, see Section \ref{monte-carlo}.

\begin{remark}
    Under SC, Hadamard directional differentiability of the LP value implies continuity, and so setting $\kappa_n = 0$ yields a consistent estimator. 
\end{remark}

\section{Penalty parameter selection}\label{ap_pensel}

In this section, we discuss the fundamental tradeoff involved in selecting the penalty parameter and propose an expression for it. Let us first take stock of the results in Section \ref{section_lp}. Throughout this section, we interchangeably refer to $w$ as either a vector penalty, or a constant that induces the vector-penalty $w \iota$.

In general, for a fixed $w$, both versions of the penalty estimator estimate the population parameter $\tilde{B}(\theta_0; w)$, and not necessarily $B(\theta_0)$. Lemma \ref{lemma_penalty} guarantees that the two values coincide, if $w$ is component-wise larger than some KKT vector $\lambda^* \in \Uplambda(\theta_0)$. It may actually be shown that this conclusion is sharp in a sense that there exists an example of $\theta_0$, such that if $w_j < \lambda_j^*$ for some $j \in [q]$ and all $\lambda^* 
\in \Uplambda(\theta_0)$, then $\tilde{B}(\theta_0;w) < B(\theta_0)$. 

Proposition \ref{theor_j_star} establishes that at any measure at least one $\lambda^* \in \Uplambda(\theta_0)$ can be written as
\begin{align*}
\lambda_{J^*}^* = M'^{-1}_{J^*} p, \quad \lambda^*_{[q]\setminus J^*} = 0,
\end{align*}
for a subset $J^* \subseteq [q]$ with $M_{J^*}$ being a square invertible matrix. 

Assumption U1 then defines a class of measures $\mathcal{P}^\delta$, over which the smallest singular value of $M_{J^*}$ for at least one such $J^*$ is bounded away from $0$ by $\delta > 0$. Proposition \ref{delta_properites} illustrates that the family $\{\mathcal{P}^\delta\}_{\delta}$ defines a covering of the unrestriced set $\mathcal{P} = \bigcup_{\delta > 0}\mathcal{P}^\delta$. As we show later, U1 ensures that for any measure from $\mathcal{P}^\delta$ there exists a KKT vector $\lambda^*$ whose largest coordinate is bounded from above by $||p||\delta^{-1}$. Therefore, over the class $\mathcal{P}^\delta$, Assumption A1 holds uniformly for $w = w^* \equiv \iota ||p||\delta^{-1}$. 

On the one hand, a greater fixed $w$ ensures that the equality $\tilde{B}(\theta_0(\P);w) = B(\P)$ holds over a larger subset $\mathcal{P}^\delta$ of the unrestricted set of measures $\mathcal{P}$. In practice, a larger value of $w$ improves the performance of the penalty approach at `sharp' measures, i.e. at those measures where all $\lambda^*$ are large, or where the (maximal across $J^*$) smallest singular value of $M_{J^*}$ is small. To see this, let us return to the example \eqref{example_prop1} with $b < 0$. For large $n$, one estimates $\hat{b}_n \approx b < 0$, resulting in a (unique) sample KKT vector $\hat{\lambda}_n \in \Uplambda(\hat{\theta}_n)$ that is proportional to $-\hat{b}^{-1}_n$. If $w > \hat{\lambda}_n$, then the penalty estimator selects the correct optimum $x^* = (0,0)$ with $\tilde{B}(\hat{\theta}_n;w) =  \hat{B}(\hat{\theta}_n;w) = B(\theta_0)$. Thus, if the true $b < 0$ is closer to $0$, a larger $w$ is required to establish the above equality in finite samples. This defines one side of the tradeoff involved in selecting $w$. 

On the other hand, a greater fixed $w$ may worsen the performance of the penalty function approach at `blunt' measures, at which SC also fails. By `blunt' we mean the measures at which there exists a relatively small $\lambda^* \in \Uplambda(\theta_0(\P))$, or where the (maximal across $J^*$) smallest singular value of $M_{J^*}$ is large, so that a relatively small $w$ is required to ensure that $\tilde{B}(\theta_0(\P);w) = B(\P)$. This issue is manifested in our theoretical results in the form of the rate requirement for $w_n \to \infty$, namely $w_n/\sqrt{n} \to 0$. To illustrate this, suppose in \eqref{example_prop1} the true $b$ is fixed at $0$. Then, one can show that $w^* = \frac{-1 + \sqrt{5}}{2}\iota \approx 0.62 \iota$ suffices for condition A1 to hold. Suppose instead that the penalty is set at $\Vec{w} = w \iota$. For any sample size $n$, with probability $0.5$ one estimates a negative $\hat{b}_n < 0$, which results in a (unique) sample KKT vector $\hat{\lambda}_n \in \Uplambda(\hat{\theta}_n)$ that is proportional to $-\hat{b}_n$. If $w > \hat{\lambda}_n$, then the penalty function estimators coincide with the plug-in $\tilde{B}(\hat{\theta}_n;w) = \hat{B}(\hat{\theta}_n;w) = B(\hat{\theta}_n) = 0 \ne B(\theta_0) = -1$ by Lemma \ref{lemma_penalty}. In this case, the incorrect minimum,  $(0,0)$, is selected. 

It is impossible to find any sequence $w_n$ that would allow for consistent estimation at all $b < 0$ measures, as well as at $b = 0$, i.e. over $b \in [\underline{b}; 0]$ for any $\underline{b} < 0$. Lemma \ref{discontinuity_lemma} shows that this problem is fundamental, and cannot be circumvented by any estimator. 

We still wish to develop an estimator that is uniformly valid `in most cases'. In line with the previous discussion, we must impose an ad-hoc restriction on the class of measures under consideration. We impose such restriction in the form of a $\delta-$condition. This choice is motivated by the ad-hoc presumption that, in most economic applications, `sharp measures' in the form of $b \approx 0^-$ in Example \ref{example_incons} appear more paradoxical than, for example, point-identified measures that correspond to $b = 0$. 

Consider a row-normalized version of the matrix $M$, denoted by $\tilde{M}$, that satisfies
\begin{align}
    \tilde{M} = D^{-1} M,
\end{align}
where $D = \text{diag}(\frac{||M_{1\cdot}||}{\sqrt{d}}, \frac{||M_{2\cdot}||}{\sqrt{d}}, \dots, \frac{||M_{q\cdot}||}{\sqrt{d}})$. The normalized matrix then has rows $\tilde{M}_{j\cdot} = \frac{\sqrt{d}M_{j\cdot}}{||M_{j\cdot}||}$. Notice that by considering the pair $\tilde{M},\tilde{c}$, where $\tilde{c} \equiv D^{-1} c$, we are not changing the true polytope $\Theta_I = \{x \in \mathbb{R}^d|Mx \geq c\} = \{x \in \mathbb{R}^d|\tilde{M}x \geq \tilde{c}\}$. Let us assume, in addition to U0, that over $\mathcal{P}$ the norms of rows of $M$ are uniformly bounded away from $0$, i.e. $\min_{j \in [q]} ||M_{j\cdot}|| > \underline{m} > 0$. 

Consider $\lambda^*$ from Proposition \ref{theor_j_star},
\begin{align*}
    \lambda^*_{J^*} = M_{J^*}'^{-1}p = D_{J^*}^{-1}\underbrace{\tilde{M}'^{-1}_{J^*} p}_{\tilde{\lambda}}. 
\end{align*}
Using Lemma \ref{lemma_singval1}, recalling that $\sigma_k(A) = \sigma_k(A')$, as well as using $\sigma_1(\tilde{M}^{-1}_{J^*}) = \sigma^{-1}_d(\tilde{M}_{J^*})$, we get the bound
\begin{align*}
     \tilde{\lambda} \leq ||\tilde{\lambda}||_\infty \iota  \leq   \iota \frac{||p||}{\sigma_d(\tilde{M}_{J^*})}
\end{align*}
Because $D_{J^*}^{-1}$ simply rescales each row by a positive constant, it follows that
\begin{align*}
    \lambda^*_{J^*} \leq \frac{||p||}{\sigma_d(\tilde{M}_{J^*})} D_{J^*}^{-1} \iota.
\end{align*}


So, the following penalty vector satisfies A1 uniformly over $\mathcal{P}$,
\begin{align}
    w^*(\P, n) \equiv \frac{w_n  ||p||}{\sigma_d(\tilde{M}_{J^*})} D^{-1} \iota = 
\left(\frac{w_n \sqrt{d}||p||}{\sigma_d(\tilde{M}_{J^*}) ||M_{j\cdot}||}\right)_{j \in [q]}
\end{align}
where $w_n \in \mathbb{R}$ with $w_n \geq 1$, and possibly $w_n \to \infty$ (albeit not necessarily) with $w_n/\sqrt{n} \to 0$.

If one knew $\sigma_d(\tilde{M}_{J^*(\P)}(\P))$ for all $\P \in \mathcal{P}$, the otherwise infeasible $\tilde{B}(\hat{\theta}_n; w^*(\P, n))$ would yield a uniformly $\sqrt{n}-$consistent estimator over $\mathcal{P}$. Similarly, due to the continuity of $\tilde{B}(\theta;\cdot)$, if one could uniformly consistently estimate 
$\sigma^{-1}_d(\tilde{M}_{J^*(\P)}(\P))$ over $\mathcal{P}$, using $\tilde{B}(\hat{\theta}_n; \hat{w})$ with penalty $\hat{w} \equiv \frac{w_n  ||p||}{\hat{\sigma}_d(\tilde{M}_{\hat{J}^*})} \hat{D}^{-1} \iota$ for some $w_n > 1$ subject to the above restrictions, would yield a uniformly consistent estimator of $B(\P)$. However, the impossibility result in Theorem\footnote{It straightforwardly extends to the case when U0 is augmented with the restriction above, that $\min_{j \in [q].} ||M_{j\cdot}|| \geq \underline{m} > 0$.} \ref{no_unif_cons_est} implies that there can be no such estimator, and therefore, over $\mathcal{P}$,
\begin{align*}
    \text{one cannot uniformly consistently estimate } \sigma^{-1}_d(\tilde{M}_{J^*(\P)}(\P)).
\end{align*}

We are thus forced to impose an ad-hoc lower bound for $\sigma_d(\tilde{M}_{J^*(\P)}(\P))$. For that, we need an assessment of what is `reasonable' for the smallest singular value of a $d \times d$ matrix that is normalized by row. Consider the following result:
\begin{theor}[\citet{tao2010random}]\label{tao_vu_th}
    Let $\Xi_d$ be a sequence of $d \times d$ matrices with $[\Xi_{d}]_{ij} \sim \xi_{ij}$, independently across $i,j$ where $\xi_{ij}$ are such that $\mathbb{E}[\xi] = 0$, $Var(\xi) = 1$ and $\mathbb{E}[|\xi|^{C_0}] < \infty$ for some sufficiently large $C_0$, then,
    \begin{align*}
        \sqrt{d}\sigma_d(\Xi_d) \xrightarrow{d} \aleph,
    \end{align*}
    where $\aleph$ has the cdf
    \begin{align*}
        \mathbb{P}[\aleph \leq t] = 1 - e^{-t/2 - \sqrt{t}}.
    \end{align*}
\end{theor}
\begin{remark}
    The distribution of mean-zero, unit-variance $\xi_{ij}$ in Theorem \ref{tao_vu_th} is arbitrary, possibly discrete, and not necessarily identical across $i,j$.
\end{remark}
\begin{remark}
    Observe that for any $\Xi_{d}$ satisfying the hypothesis of Theorem \ref{tao_vu_th} we have $\E[||(\Xi_d)_{j\cdot}||^2] = d = ||\tilde{M}_{j\cdot}||^2$.
\end{remark}

We therefore propose to lower-bound $\sqrt{d}\sigma_d(\tilde{M}_{J^*})$ by an $\alpha-$quantile of the distribution $\Sigma$, which we denote by $\delta_{\alpha}$. Specifically, our recommendation for the penalty parameter is
\begin{align*}
    w(\mathcal{D}_n, n) \equiv \left(\frac{w_n d||p||}{\delta_\alpha||\hat{M}_{j\cdot}||}\right)_{j \in [q]},
\end{align*}
where 
\begin{align*}
    \delta_\alpha = \left(\sqrt{1-2 \ln(1-\alpha)} - 1\right)^2.
\end{align*}
We find that the pair $\alpha = 0.2$ and $w_n = \frac{\ln \ln n}{\ln \ln 100}$ works well in our simulations. 

\begin{remark}
    If the value function was added to the constraints using the `trick' from \citet{gafarov2024simple}, the resulting matrix $\overline{M}$ and vector $\overline{c}$ are
    \begin{align*}
        \overline{M} = \begin{pmatrix}
            1& -p'\\
            0 & M
        \end{pmatrix}, \overline{c} = \begin{pmatrix}
            0\\
            c
        \end{pmatrix}
    \end{align*}
    In the corresponding LP in population
    \begin{align}\label{new_lp}
        \min_{t, x} (t ~~ x')e_1, \quad \text{s.t.: } t\geq p'x, ~ Mx \geq c
    \end{align}
    under A0 it is straightforward to observe all $\overline{\lambda} \in \Uplambda(\theta_0)$ that correspond to \eqref{new_lp} with Proposition \ref{theor_j_star} applied to it, will have the form
    \begin{align*}
        \overline{\lambda} = \begin{pmatrix}
            1 &  0\\
            -p & M'_{J^*}
        \end{pmatrix}^{-1}e_1 = \begin{pmatrix}
            1\\
            M'^{-1}_{J^*}p
        \end{pmatrix},
    \end{align*}
    where $J^*$ is the set from Proposition \ref{theor_j_star} applied to the original LP and the last equality follows by direct computation. So, when applying the penalty function approach to the problem \eqref{new_lp}, one should use 
    \begin{align*}
        \overline{w}(\mathcal{D}_n, n) = \begin{pmatrix}
            1\\
            w(\mathcal{D}_n, n)
        \end{pmatrix}.
    \end{align*}
\end{remark}

\section{Identification under cMIV}\label{ap_cmiv}
Sharp identification results for cMIV conditions follow from Theorem \ref{theor_identif}. cMIV-w, however, allows for a more explicit characterization of the bounds, which may better illustrate the source of the identifying power of cMIV-w relative to MIV. This characterization is also useful in binary settings, when cMIV assumptions coincide. For didactic purposes, in this section we also show how to construct the restriction matrix $M$ and vector $c$ under cMIV-s, cMIV-p and MIV. While we focus on bounding potential outcomes, other choices of $\beta^*$ can be accommodated by applying Theorem \ref{theor_identif}. 

In what follows, $I_k$ stands for the identity matrix of dimension $k$, and $\iota_k$ is the vector of ones of size $k$. These subscripts may be dropped in what follows without further notice. All vectors are column vectors, and $\mathbb{R}^{n \times m}$ refers to the space of real-valued $n \times m$ matrices. Notice that we can consider each $t \in \mathcal{T}$ separately, because cMIV conditions do not impose any restrictions across potential outcomes.

\subsection{Recursive bounds under cMIV-w}\label{ap_cmiv_w}
Construct the ordering on the support of $Z$: $\mathcal{Z} = \{z_1, z_2, \dots, z_{N_Z}\}$, s.t. $z_i <  z_j$ for $i < j$. Denote by $l_i(t), ~ u_i(t)$ the sharp lower and upper bounds for the conditional moment over the whole treatment support, $\mathbb{E}[Y(t)|Z = z_i]$. Similarly, let $l^{-t}_i(t), ~ u^{-t}_i(t)$ be the sharp upper and lower bounds for the counterfactual subset, $\mathbb{E}[Y(t)|T \ne t, Z = z_i]$. We shall suppress the dependence on $t$ whenever it does not cause confusion. 


The only bound of interest is the bound on unconditional expectation, $l_i$. However, it turns out to be instructive to also consider the bound for the counterfactual subset, $l^{-t}_i$. 

\begin{prop}
    If i) cMIV-w holds or ii) treatment is binary and cMIV-s or cMIV-p hold, the sharp bounds for $\mathbb{E}[Y(t)|Z = z_j]$ are obtained through the following recursion for $j \geq 2$:
\begin{align}
    l_j &= l_{j-1} + \Delta_j\\
    l^{-t}_j &= l^{-t}_{j-1} + \Delta^{-t}_j
\end{align}
Where $\Delta_j, \Delta_j^{-t} \geq 0$ are defined as follows:
\begin{align}\label{iron1}
    \Delta_j &\equiv \left(\underbrace{\frac{\Delta P[T\ne t|Z = z_j]}{P[T \ne t|Z = z_{j-1}]}\left(l_{j-1} - P[T = t|Z = z_{j-1}]\mathbb{E}[Y(t)|T=t, Z=z_{j-1}]\right)}_\text{$\Delta P[T \ne t | Z = z_j] l^{-t}_{j-1}$} + \delta_j \right)^{+}\\\label{iron2}
    \Delta^{-t}_{j} &\equiv \frac{1}{P[T \ne t|Z = z_j]}\left( - \Delta P[T \ne t | Z = z_j] l^{-t}_{j-1} -  \delta_j \right)^{+}\\
    \delta_j &\equiv \Delta(P[T = t|Z = z_j]\mathbb{E}[Y(t)|T = t, Z = z_j])
\end{align}

Sharp upper bounds $u_i, u_i^{-t}$ are obtained analogously. Moreover,
    \begin{align}
         \sum_{i = 1}^N P[Z = z_i] l_i(t) \leq \mathbb{E}[Y(t)] \leq \sum_{i = 1}^N P[Z = z_i] u_i(t)
    \end{align}
In the absence of additional information, these bounds are sharp.
\end{prop}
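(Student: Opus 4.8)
The plan is to prove the recursion in two directions: first that the displayed values $l_j, l_j^{-t}$ (and the analogous upper bounds) are valid lower bounds, i.e. attained by no admissible measure below them, and second that they are sharp, i.e. a measure in $\mathcal{P}^*$ achieves them simultaneously. Throughout I would work with a single fixed $t$ (cMIV-w places no restrictions across potential outcomes, as noted), and I would use the decomposition
\begin{align*}
\mathbb{E}[Y(t)\mid Z=z_j] = P[T=t\mid Z=z_j]\,\mathbb{E}[Y(t)\mid T=t, Z=z_j] + P[T\ne t\mid Z=z_j]\,\mathbb{E}[Y(t)\mid T\ne t, Z=z_j],
\end{align*}
in which the first summand is point-identified and the second is the only unknown. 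So the whole problem reduces to bounding the sequence $g_j \equiv P[T\ne t\mid Z=z_j]\,\mathbb{E}[Y(t)\mid T\ne t, Z=z_j]$, equivalently $l_j^{-t}$, subject to: (a) the no-assumptions bound $\mathbb{E}[Y(t)\mid T\ne t, Z=z_j]\in[K_0,K_1]$; (b) MIV on the unconditional moment, $\mathbb{E}[Y(t)\mid Z=z_{j-1}]\le \mathbb{E}[Y(t)\mid Z=z_j]$; and (c) the cMIV-w conditional-monotonicity restriction on the counterfactual subgroup, $\mathbb{E}[Y(t)\mid T\ne t, Z=z_{j-1}]\le \mathbb{E}[Y(t)\mid T\ne t, Z=z_j]$.

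For validity I would proceed by induction on $j$. The base case $j=1$ is the no-assumptions bound. For the inductive step, suppose the lower envelope values $l_{j-1}, l_{j-1}^{-t}$ hold. Any admissible configuration at node $j$ must have $g_j \ge g_{j-1}$ (this is condition (c) written in terms of $g$, after multiplying the monotone conditional mean by the probability $P[T\ne t\mid Z=z_j]$ and adding/subtracting the term $\Delta P[T\ne t\mid Z=z_j]\,l_{j-1}^{-t}$ — this is exactly the algebra that produces the first candidate $\Delta P[T\ne t\mid Z=z_j]\,l_{j-1}^{-t}$ inside the $(\cdot)^+$ in \eqref{iron1}), and simultaneously $\mathbb{E}[Y(t)\mid Z=z_j]\ge l_{j-1}$, i.e. $g_j \ge l_{j-1} - P[T=t\mid Z=z_j]\,\mathbb{E}[Y(t)\mid T=t,Z=z_j]$. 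Combining the two lower constraints on $g_j$ and translating back from $g_j$ to $\mathbb{E}[Y(t)\mid Z=z_j]$ and to $l_j^{-t}$ gives precisely $l_j = l_{j-1} + \Delta_j$ and $l_j^{-t} = l_{j-1}^{-t} + \Delta_j^{-t}$ with $\Delta_j, \Delta_j^{-t}$ as defined; the role of $\delta_j$ is to bookkeep the change in the point-identified piece $P[T=t\mid Z=z]\mathbb{E}[Y(t)\mid T=t,Z=z]$ across adjacent nodes. The upper bounds follow by the symmetric argument (replace $\min/\max$, flip signs, use $(\cdot)^-$), which I would state rather than re-derive. The final two-sided bound on $\mathbb{E}[Y(t)] = \sum_i P[Z=z_i]\mathbb{E}[Y(t)\mid Z=z_i]$ is then immediate by taking the $P[Z=z_i]$-weighted sum.

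The harder direction — and the main obstacle — is sharpness: producing one measure $P\in\mathcal{P}^*$ for which every $\mathbb{E}[Y(t)\mid Z=z_i]$ equals $l_i(t)$ (respectively $u_i(t)$) simultaneously. Here I would invoke Theorem \ref{theor_identif}: cMIV-w is an instance of the AICM family \eqref{identif2} with bounded outcomes (special case 2), so the sharp identified set for $\mathbb{E}[Y(t)]$ is the LP interval, and by the remark following Theorem \ref{theor_identif} the polytope $\Theta_I$ is the sharp identified set for the vector $x$ of unobserved conditional moments. So it suffices to exhibit a point $x^\star\in\Theta_I$ whose components are the $l_i^{-t}$ (and then the $l_i$ follow) — equivalently, to verify the recursion never overshoots feasibility. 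I would argue that the greedy assignment "push each $g_j$ down to the larger of its two active lower constraints, recursively" stays in $[K_0 P[T\ne t\mid Z=z_j], K_1 P[T\ne t\mid Z=z_j]]$: the $(\cdot)^+$ truncations guarantee monotonicity of the conditional mean upward, and the MIV clamp $l_j \ge l_{j-1}$ prevents the unconditional mean from decreasing, while boundedness from above is free because we are constructing the lower envelope. Then Lemma \ref{lemma_d1} (or, in the binary case, a direct two-point construction of $Y(t)$ on $\{T\ne t, Z=z_j\}$ with the prescribed mean in $[K_0,K_1]$) furnishes an actual random variable realizing these conditional means, and stacking over $j$ and over the observed cell gives the measure. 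For part (ii) — binary treatment under cMIV-s or cMIV-p — I would simply note that with $\mathcal{T}=\{0,1\}$ the subgroup $\{T\ne t\}$ is the only counterfactual cell and $A=\mathcal{T}\setminus\{t\}=\{T\ne t\}$, so all three cMIV variants reduce to the same pair of inequalities, hence the same bounds. The delicate bookkeeping will be making the $(\cdot)^+$/truncation cases line up exactly with the stated formulas \eqref{iron1}–\eqref{iron2}, and checking that the recursion for $l^{-t}$ and the recursion for $l$ remain mutually consistent (the identity $\Delta P[T\ne t\mid Z=z_j]\,l_{j-1}^{-t} = l_{j-1} - P[T=t\mid Z=z_{j-1}]\mathbb{E}[Y(t)\mid T=t,Z=z_{j-1}]$ scaled appropriately, highlighted by the underbrace in \eqref{iron1}, is the key bridge and I would verify it explicitly).
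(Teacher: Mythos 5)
Your plan matches the paper's proof in all essential respects. The paper's argument for validity is exactly the induction you describe: decompose $\mathbb{E}[Y(t)\mid Z=z_j]$ into the observed piece $P[T=t\mid Z=z_j]\,\mathbb{E}[Y(t)\mid T=t,Z=z_j]$ plus the unknown piece, rewrite the MIV and cMIV-w constraints at node $j$ as a single lower constraint on $\mathbb{E}[Y(t)\mid T\ne t, Z=z_j]$, and use the fact that $\Delta_j,\Delta_j^{-t}\ge 0$ (so the $l_i, l_i^{-t}$ are non-decreasing) to reduce the ``for all $i<j$'' comparisons to the single adjacent comparison with node $j-1$. Your shorthand ``$g_j\ge g_{j-1}$'' is slightly imprecise — cMIV-w bounds the conditional means, not the products $g_j$, so the actual lower constraint on $g_j$ is $\tfrac{P[T\ne t\mid Z=z_j]}{P[T\ne t\mid Z=z_{j-1}]}g_{j-1}$ — but you flag the needed algebra explicitly, so this is cosmetic.

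The one genuine divergence is the sharpness step. The paper does not route through Theorem \ref{theor_identif}: it directly writes down a process with $Y(t)=l_j^{-t}$ deterministically on each counterfactual cell $\{T=d,Z=z_j\}$, $d\ne t$, and $Y(t)=\eta(t)$ on the observed cell (with $\eta(t)$ matching the observed conditional distribution), then notes cMIV-w holds by construction and the bounds are attained. Your proposed route — show the greedy vector $(l_i^{-t})_i$ lies in $\Theta_I$ and then invoke the sharpness of the LP characterization from Theorem \ref{theor_identif}, special case 2 — is also valid and arguably cleaner since it outsources the measure construction, but it carries an obligation you only gesture at: one must verify $l_j^{-t}\le K_1$ for all $j$. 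This follows because the population model is feasible, so $\Theta_I\ne\emptyset$ and the greedy $l_j^{-t}$, being the coordinatewise infimum over $\Theta_I$ of a nonempty subset of $[K_0,K_1]^{N_Z}$, cannot exceed $K_1$; the paper sidesteps this by simply asserting its process is valid, which inherits the same implicit feasibility assumption. Either route works; the paper's is more self-contained, yours leans more heavily on the machinery already built in Section 3. Your observation that binary treatment collapses cMIV-s, cMIV-p and cMIV-w to the same pair of inequalities — giving part (ii) — is exactly what the paper relies on (noted in the remark that precedes the proposition rather than inside the proof).
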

\begin{proof}
Note that $l^{-t}_1 = K_0$ and $u^{-t}_N = K_1$. Moreover, $l_1 = \mathbb{P}[T = t|Z = z_1]\mathbb{E}[Y(t)|T = t, Z = z_1]  + \mathbb{P}[T \ne t|Z = z_1] K_0$, $u_N = \mathbb{P}[T = t|Z = z_N]\mathbb{E}[Y(t)|T = t, Z = z_N]  + \mathbb{P}[T \ne t|Z = z_N] K_1$.
    First, we note that the equations above may be rearranged to yield: 
    \begin{align}
        l^{-t}_j &= \max \left\{\frac{1}{P[T \ne t| Z = z_j]}\left(l_{j-1}
        - \mathbb{E}[Y(t)|T = t, Z = z_j] P[T = t|Z = z_j] \right), ~ l^{-t}_{j-1} \right\}\\
        l_j &= \mathbb{E}[Y(t)|T = t, Z = z_j] P[T = t| Z = z_j] + l^{-t}_j P[T \ne t |Z = z_j]
    \end{align}

    We consider the sharp lower bounds and proceed by induction on $j$. The proof for the sharp upper bounds is identical. \\
    Consider $j = 2$. The only information about lower bounds provided by assumption cMIV-w at $j = 2$ is\footnote{Note that we can ignore the information that $Y(t) \geq K_0$, as it will be implied by the bound $l_{1}^{-t}$ and $l_{1}$}:
    \begin{align*}
    &\begin{cases}
        \mathbb{E}[Y(t)|Z = z_2] \geq \mathbb{E}[Y(t)|Z = z_1] \\
        \mathbb{E}[Y(t)|T \ne t, Z = z_2] \geq \mathbb{E}[Y(t)|T \ne t, Z = z_1] 
    \end{cases}
    \end{align*}
Which can be rewritten as a single condition on $\mathbb{E}[Y(t)|T \ne t, Z = z_2]$: 
\begin{align*}
    &\mathbb{E}[Y(t)|T \ne t, Z = z_2] \geq   \max\Big\{\mathbb{E}[Y(t)|T \ne t, Z = z_1],\\
    &P[T \ne t|Z = z_2]^{-1}\big(\mathbb{E}[Y(t)|Z = z_1]  - P[T=t|Z = z_2]\mathbb{E}[Y(t)|T = t, Z = z_2]\big)\Big\}
\end{align*}
Because $l_1^{-t}$ is a sharp lower bound on $\mathbb{E}[Y(t)|T \ne t, Z = z_1]$, we get:
\begin{align*}
    l_2^{-t} &= \max\Big\{l_1^{-t}, ~ P[T \ne t|Z = z_2]^{-1}\big(l_1 - P[T=t|Z = z_2]\mathbb{E}[Y(t)|T = t, Z = z_2]\big)\Big\}\\  
    l_2 &=  P[T = t|Z = z_2] \mathbb{E}[Y(t)|T = t, Z = z_2]  + P[T \ne t|Z = z_2] l_2^{-t}
\end{align*}
The base is thus proven. Now suppose that for some $j \geq 2$, and sharp lower bounds for $i < j$ are defined. The information we have at $j$ is:
    \begin{align*}
    &\begin{cases}
        \mathbb{E}[Y(t)|Z = z_j] \geq \mathbb{E}[Y(t)|Z = z], ~ z < z_j \\
        \mathbb{E}[Y(t)|T \ne t, Z = z_j] \geq \mathbb{E}[Y(t)|T \ne t, Z = z], ~ z < z_j 
    \end{cases}
    \end{align*}
    Or, equivalently, 
\begin{align*}
    &\mathbb{E}[Y(t)|T \ne t, Z = z_j] \geq   \max\Big\{\underset{i<j}{\max} \left\{ \mathbb{E}[Y(t)|T \ne t, Z = z_i] \right\},\\
    &P[T \ne t|Z = z_j]^{-1}\big(\underset{i<j}{\max}\left\{\mathbb{E}[Y(t)|Z = z_i]\right\}  - P[T=t|Z = z_j]\mathbb{E}[Y(t)|T = t, Z = z_j]\big)\Big\}
\end{align*}
Because $l_i, l_i^{-t}$ are sharp and non-decreasing in $i$ by inductive hypothesis, it follows that sharp lower bounds at $j$ are given by:
\begin{align*}
    l_j^{-t} &=  \max\Big\{l^{-t}_{j-1}, P[T \ne t|Z = z_j]^{-1}\big(l_{j-1}  - P[T=t|Z = z_j]\mathbb{E}[Y(t)|T = t, Z = z_j]\big)\Big\}\\
    l_j &= \mathbb{E}[Y(t)|T = t, Z = z_j] P[T = t| Z = z_j] + l^{-t}_j P[T \ne t |Z = z_j]
\end{align*}
The characterization in the proposition is obtained by rearranging these two equations. 

To see that these bounds are indeed sharp, consider a process, for which $\mathbb{E}[Y(t)|T = d, Z = z_j] = l^{-t}_{j}, d \ne t, j \in [N]$. For such process cMIV-w will hold by construction and $l_j$ and $l_{j}^{-t}$ are both attained for all $j$. An example of such process is given by:
\begin{align}
    Y(w) = \sum_{t} \mathds{1}\left\{t = w\right\}\left( \sum_{ j} \left\{\mathds{1}\left\{Z = z_j , T = t\right\} \eta(t) + \sum_{d \ne t}\mathds{1}\left\{Z = z_j , T = d\right\} l_j^{-t}\right\} \right)
\end{align}
Where $\eta(t)$ is as defined in the proof of Theorem 1.
\end{proof}
The intuition for Proposition 2 is that MIV bounds are obtained by 'ironing' the bounds on the population moment $\mathbb{E}[Y(t) | Z = z]$, which can be seen in equation \eqref{iron1}. cMIV-w additionally 'irons' the counterfactual moments $\mathbb{E}[Y(t) |T \ne t, Z = z]$, as evident from \eqref{iron2}. Figure 1 plots the derived sharp bounds as well as the benchmark MIV sharp bounds for a simulation exercise.




\subsection{Constructing  \texorpdfstring{$M$ and $c$}{M and c} for cMIV-s and cMIV-p}\label{ap_cmiv_ps}
We derive LP representations for cMIV family of assumptions. 

Let $\mathcal{F} \equiv 2^{\mathcal{T}} \setminus \{\{t\}, \emptyset\}$ and its cardinality, $Q \equiv |\mathcal{F}| = 2^{N_T}-2$. Fix an ordering on $\mathcal{F}$, so that $\mathcal{F}=\{A^1, A^2, \dots A^{Q}\}$. 

Then all information under cMIV-s can be written as:
\begin{align}\label{cMIV-s}
    \mathbb{E}[Y(t)|T \in A^k, Z = z_j] &\geq \mathbb{E}[Y(t)|T \in A^k, Z = z_{j-1}],~  k = 1, \dots, Q, ~  j = 2, \dots N_Z\\\label{eq16}
    \mathbb{E}[Y(t)|T = d, Z = z_N] & \leq K_1, d \in \mathcal{T}\setminus\{t\}\\\label{eq17}
    \mathbb{E}[Y(t)|T = d, Z = z_1] & \geq K_0, d \in \mathcal{T}\setminus\{t\}
\end{align}
Where notice that the LHS of \eqref{eq16} is the largest marginal moment due to monotonicity in $Z$, while the LHS of \eqref{eq17} is the smallest marginal moment. Therefore, once almost sure bounds for these two moments are imposed $\forall d  \in \mathcal{T}\setminus\{t\}$, these are also implied for all other moments through equation \eqref{cMIV-s} and the law of total probability. 

We now rewrite the expectations in \eqref{cMIV-s} in terms of pointwise conditional moments. Let the vector of unobserved treatment responses be $x^j \equiv \left(\mathbb{E}[Y(t)|T = d, Z = z_j]\right)'_{d \ne t}$ and $p^j \equiv \left(P[T = d|Z = z_j]\right)'_{d \ne t}$ be the vector of respective probabilities at $Z = z_j$. Denote the element of $x^j$ corresponding to $T = d$ as $x^j_d = \mathbb{E}[Y(t)|T = d, Z = z_j]$.

For $k = 1, \dots, Q$ and  $j =2, \dots, N_Z$, we can rewrite inequality \eqref{cMIV-s} as follows:
\begin{align*}
    \sum_{d \ne t} \mathds{1}\left\{d \in A^k\right\} \frac{P[T = d|Z = z_j]}{P[T \in A^k|Z = z_j]} x^j_d  +  \\
    + \mathds{1}\left\{t \in A^k\right\}\frac{P[T = t|Z = z_j]}{P[T \in A^k|Z = z_j]}\mathbb{E}[Y(t)|T = t, Z = z_j] \geq \\\notag
    \geq \sum_{d \ne t} \mathds{1}\left\{d \in A^k\right\} \frac{P[T = d|Z = z_{j-1}]}{P[T \in A^k|Z = z_{j-1}]} x^{j-1}_d +  \\
    + \mathds{1}\left\{t \in A^k\right\}\frac{P[T = t|Z = z_{j-1}]}{P[T \in A^k|Z = z_{j-1}]}\mathbb{E}[Y(t)|T = t, Z = z_{j-1}]
\end{align*}
Inequalities \eqref{eq16}-\eqref{eq17} are just $x^N_d \leq K_1, d \ne t$ and $x^1_d \geq K_0, d \ne t$.
This can be written succinctly in matrix notation. Introdude the following:
\begin{align}\label{cMIV-s_G}
        G_j &\equiv \left(\mathds{1}\left\{d \in A^k\right\} \frac{P[T = d|Z = z_j]}{P[T \in A^k|Z = z_j]}\right)_{k \in [Q], d\ne t}  \in \mathbb{R}^{Q \times N_T - 1}\\\label{cMIV-s_c}
        c_j &\equiv \left(\mathds{1}\left\{t \in A^k\right\} \frac{P[T = t|Z = z_j]}{P[T \in A^k|Z = z_j]}\mathbb{E}[Y(t)|T = t, Z = z_j]\right)_{k \in [Q]} \in \mathbb{R}^{Q}
\end{align}
The whole set of information given by cMIV-s can be represented as follows:
\begin{align}\label{eq24}
    &G_j x^j - G_{j-1}x^{j-1} \geq  -\Delta c_{j}, j = 2, \dots, N_Z\\\label{eq25}
    &x^N \leq K_1 \iota\\\label{eq26}
    &x^1 \geq K_0 \iota 
\end{align}
The procedure for cMIV-p is similar. First, we note that all the information under it is given by:
\begin{align}\label{cMIV-s_l1}
    \mathbb{E}[Y(t)|Z = z_j] &\geq \mathbb{E}[Y(t)|Z = z_{j-1}], ~  j = 2, \dots N_Z\\\label{cMIV-s_l2}
    \mathbb{E}[Y(t)|T = d, Z = z_j] &\geq \mathbb{E}[Y(t)|T = d, Z = z_{j-1}], ~ d \in \mathcal{T}\setminus\{t\}, ~  j = 2, \dots N_Z\\
    \mathbb{E}[Y(t)|T = d, Z = z_N] & \leq K_1, d \in \mathcal{T}\setminus\{t\}\\
    \mathbb{E}[Y(t)|T = d, Z = z_1] & \geq K_0, d \in \mathcal{T}\setminus\{t\}
\end{align}
Where \eqref{cMIV-s_l1} is just MIV and \eqref{cMIV-s_l2} is the monotonicity of the pointwise conditional moments. In this case, we can once again represent all information in the matrix form \eqref{eq24}-\eqref{eq26} with the following matrices:
    \begin{align}\label{cMIV-p_G}
        G_j &\equiv \begin{pmatrix}
            p^{j\prime}\\
            I_{N_T - 1} 
        \end{pmatrix} \in \mathbb{R}^{N_T \times N_T - 1}\\\label{cMIV-p_c}
        c_j &\equiv \begin{pmatrix}
            P[T = t|Z = z_j]\mathbb{E}[Y(t)|T = t, Z = z_j]\\
            0_{N_T - 1}
        \end{pmatrix} \in \mathbb{R}^{N_{T}-1}
    \end{align}

\begin{cor}
Under cMIV-s, cMIV-p and MIV, sharp bounds on $\mathbb{E}[Y(t)]$ take the form:
    \begin{align*}
    &\underset{Mx \geq c}{\min}\left\{\sum^N_{j = 1} P[Z = z_j] \cdot p^{j\prime} x^j\right\} + \sum_{j = 1}^N P[T = t, Z = z_j] \mathbb{E}[Y(t)|T = t, Z = z_j]  \leq \mathbb{E}[Y(t)] \leq  \\\notag
    \leq &\underset{Mx \geq c}{\max}\left\{\sum^N_{j = 1} P[Z = z_j] \cdot p^{j\prime} x^j\right\} + \sum_{j = 1}^N P[T = t, Z = z_j] \mathbb{E}[Y(t)|T = t, Z = z_j], 
\end{align*}
where
\begin{align} 
M \equiv \left[\begin{array}{c c c c} 
- I_{N_T - 1} & \dots & 0 & 0\\
  G_N & -G_{N-1} & \dots & 0 \\ 
  \vdots & \ddots& \ddots & \vdots \\
  0 & \dots & G_2 & -G_1\\
  0 & \dots & 0 & I_{N_T - 1}
  \end{array}
\right],
\quad 
c \equiv \left(\begin{array}{c}
         - K_1  \cdot \iota_{N_T - 1} \\
         - \Delta c_N\\
         \vdots \\
         -\Delta c_2\\
         K_0 \cdot \iota_{N_T - 1}
    \end{array}
    \right),
\quad
x = \left(\begin{array}{c}
    x^N\\
    \vdots\\
    x^1
    \end{array}
\right),
\end{align}
and $G_j$ and $c_j$ are given by \eqref{cMIV-s_G} and \eqref{cMIV-s_c} for cMIV-s and by \eqref{cMIV-p_G} and \eqref{cMIV-p_c} for cMIV-p. Under MIV,  $G_j = p^{j\prime}$ and $c_j = P[T = t|Z = z_j]\mathbb{E}[Y(t)|T = t, Z = z_j]$. 
\end{cor}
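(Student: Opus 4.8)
The statement is a corollary of Theorem \ref{theor_identif}, so the plan is to cast each of cMIV-s, cMIV-p, and MIV into the AICM template \eqref{identif2} and then simply read off the LP representation, being careful about the role of the observed-moment contribution. The first step is to recall from Appendix \ref{ap_cmiv} that under each of these assumptions the target $\mathbb{E}[Y(t)]$ decomposes, by the law of total probability conditioning on $Z$, as
\begin{align*}
    \mathbb{E}[Y(t)] = \sum_{j=1}^{N_Z} P[Z = z_j]\,\mathbb{E}[Y(t)\mid Z = z_j]
    = \sum_{j=1}^{N_Z} P[Z=z_j]\, p^{j\prime} x^j + \sum_{j=1}^{N_Z} P[T=t, Z = z_j]\,\mathbb{E}[Y(t)\mid T=t, Z=z_j],
\end{align*}
where the second sum is a point-identified quantity. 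This identifies the objective vector $p$ in the notation of Theorem \ref{theor_identif} (acting on $x = (x^{N_Z\prime},\dots,x^{1\prime})'$) with $(P[Z=z_j] p^j)_j$ up to the ordering convention, and $\overline{p}'\overline{x}$ with the identified sum.

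The second step is to verify that the constraint set $\{Mx \geq c\}$ with the $M$, $c$ given in the corollary is exactly the image, under the map $P \mapsto x(P)$, of the admissible measures $\mathcal{P}^*$. For cMIV-s this is the content of \eqref{eq24}--\eqref{eq26}: stacking the blocks $G_j x^j - G_{j-1} x^{j-1} \geq -\Delta c_j$ for $j = 2,\dots,N_Z$, together with $x^{N_Z} \leq K_1 \iota$ and $x^1 \geq K_0 \iota$, produces precisely the block-bidiagonal $M$ and the vector $c$ displayed, once one adopts the ordering $x = (x^{N_Z},\dots,x^1)'$ (which is why the $G_j$'s appear in decreasing index order). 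For cMIV-p the same stacking works with the alternative blocks \eqref{cMIV-p_G}--\eqref{cMIV-p_c}, and for MIV with the scalar-row specialization $G_j = p^{j\prime}$, $c_j = P[T=t\mid Z=z_j]\mathbb{E}[Y(t)\mid T=t,Z=z_j]$. The key observation making this legitimate is that, as argued around \eqref{eq16}--\eqref{eq17} and \eqref{cMIV-s_l1}--\eqref{cMIV-s_l2}, the finitely many inequalities written down exhaust \emph{all} the restrictions these assumptions impose on the vector of unobserved pointwise conditional moments — monotonicity on every other subgroup and the boundedness of every other moment follow from the retained inequalities by the law of total probability, so nothing is lost by keeping only this reduced system.

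The third step is to invoke Theorem \ref{theor_identif} directly: since each of these models is of the form \eqref{identif2} with $\tilde M, \tilde b$ in the bounded-outcomes special case (case 2, with $-K_0, K_1$ finite) — or with no a.s. inequalities at all in the pure-MIV reading, since boundedness has already been folded into $M,c$ — the reverse inclusion in \eqref{one_side} holds, and the sharp identified set for $\beta^* = \mathbb{E}[Y(t)]$ is exactly the interval $[\,\overline{p}'\overline{x} + \min_{Mx\geq c} p'x,\ \overline{p}'\overline{x} + \max_{Mx\geq c} p'x\,]$. Substituting the explicit $p$, $\overline{p}'\overline{x}$, $M$, $c$ from steps one and two yields the displayed formula. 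The main obstacle is really bookkeeping rather than mathematics: one must check that the index/ordering conventions for $x$, the signs in $-\Delta c_j$ (coming from moving $G_{j-1}x^{j-1}$ to the left and the known observed-moment differences to the right), and the placement of the $\pm K$ boundary rows all match the matrix $M$ and vector $c$ as written, and that applying Theorem \ref{theor_identif}'s sharpness clause is justified (which it is, via case 2 or the no-a.s.-inequality clause, under the maintained boundedness assumption of Section \ref{cmiv_assumptions}). No genuinely hard step arises because Theorem \ref{theor_identif} has already done the work of constructing data-consistent potential-outcome distributions.
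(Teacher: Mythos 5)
Your argument tracks the paper's own (implicit) derivation in Appendix \ref{ap_cmiv_ps} step-for-step: the law of total probability isolates the point-identified term $\overline{p}'\overline{x} = \sum_j P[T=t,Z=z_j]\mathbb{E}[Y(t)\mid T=t,Z=z_j]$ and the partially identified objective $\sum_j P[Z=z_j]\,p^{j\prime}x^j$; the constraint matrix $M$ and vector $c$ are read off by stacking \eqref{eq24}--\eqref{eq26} in the ordering $x = (x^{N_Z\prime},\dots,x^{1\prime})'$; and sharpness is delivered by Theorem \ref{theor_identif}. For cMIV-s and cMIV-p this is exactly right, and the reduction you emphasize --- replacing the full family of bound-moment inequalities $K_0\iota \leq x^j \leq K_1\iota$ by the two boundary rows $x^1 \geq K_0\iota$, $x^{N_Z}\leq K_1\iota$ --- is legitimate precisely because under those assumptions $G_j$ contains singleton rows that enforce pointwise monotonicity $x^{j-1}_d \leq x^j_d$ for every $d\neq t$ (the sets $\{d\}\in\mathcal{F}$ under cMIV-s, the $I_{N_T-1}$ block of $G_j$ under cMIV-p), so the interior moment bounds follow by sandwiching.

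That reduction is the step that breaks under MIV, and neither your write-up nor the corollary's statement flags it. With $G_j = p^{j\prime}$ the only cross-$j$ restriction is monotonicity of the scalar aggregate $y_j \equiv p^{j\prime}x^j + c_j = \mathbb{E}[Y(t)\mid Z=z_j]$, which does not bound the coordinates $x^j_d$ for $2\leq j\leq N_Z-1$. Writing $w_j \equiv P[Z=z_j]$ and $\ell_j \equiv K_0 P[T\neq t\mid Z=z_j] + c_j$ for the no-assumptions lower bound (and $u_j$ analogously), the LP feasible set projected onto $(y_1,\dots,y_{N_Z})$ is exactly $\{y_1\geq \ell_1,\ y_j\geq y_{j-1},\ y_{N_Z}\leq u_{N_Z}\}$, so the LP lower bound is attained at $y_j\equiv \ell_1$ and equals $\ell_1$. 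The sharp \citet{MP2000} MIV lower bound is the ironed sum $\sum_j w_j \max_{i\leq j}\ell_i \geq \ell_1$, with strict inequality whenever $\ell_i$ is not non-increasing in $i$. So the displayed $M$ and $c$ do not give the sharp bound under MIV unless rows $K_0\iota \leq x^j \leq K_1\iota$ are added for every interior $j$. Your sentence claiming that ``the boundedness of every other moment follows from the retained inequalities by the law of total probability'' asserts exactly the missing step; a careful proof should either restrict that claim to cMIV-s and cMIV-p, or amend the MIV specialization with the interior bound rows.
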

\subsection{Proof of Proposition \ref{cMIV_test}}\label{ap_proof_test_cmiv}
    Let $\Gamma(z) \equiv \sum_{d \in \mathcal{T}} P[T = d | Z = z]\mathbb{E}[\psi(z,\eta)|Z = z]$.
    \begin{enumerate}[a)]
        \item Let $\tilde{g}(t) \equiv \mathbb{E}[g(t, \xi)|T = d, Z = z] = \mathbb{E}[g(t, \xi)|Z = z]$, where we use independence of $\xi$ and $T, Z$. \\ 
    MIV implies:
    \begin{align*}
        \mathbb{E}[Y(t)|Z = z] = \tilde{g}(t) + h(t)\Gamma(z) ~ - ~ \text{increasing}
    \end{align*}
    Since inequality is strict for some $z, z'$, it follows that $h(t) \ne 0$ and $h(t)/h(d) > 0$. 
    Note that:
    \begin{align}
        \mathbb{E}[Y(t)|T = d, Z = z] - \tilde{g}(t) = \frac{h(t)}{h(d)}\left(\mathbb{E}[Y(d)|T = d, Z = z] - \tilde{g}(d)\right)
    \end{align}
    Therefore, cMIV-p holds iff all observed moments are monotone.
    \item Let $\tilde{g}(t,d) \equiv \mathbb{E}[g(t, \xi)|T = d, Z = z]$,  where we use independence of $\xi$ and $T, Z$. We can write:
    \begin{align}
        \mathbb{E}[Y(t)|T = d, Z = z] - \tilde{g}(t,d) = \frac{h(t)}{h(d)}\left(\mathbb{E}[Y(d)|T = d, Z = z] - \tilde{g}(d,d)\right)
    \end{align}
    Using b): ii) yields the result. 
    \end{enumerate}

\section{Empirical analysis}\label{ap_emp}
The upper bound is estimated by considering the problem
\begin{align}\label{problem_max}
    -B_u(\theta_0) \equiv \min_{x \in \mathcal{X}} -p'x, ~\quad  \text{s.t.}\quad  Mx\geq c.
\end{align}
Suppose the estimator for \eqref{problem_max}, obtained from Algorithm \ref{alg:debiasing}, is $-\breve{B}_u$. By Theorem \ref{main_inference},
\begin{align*}
    \P\left[-B_u(\theta_0) < -\breve{B}_u + \frac{\hat{q}_{1-\alpha/2}(Z_u)}{\sqrt{n_2}}\right] = \alpha/2 + o(1),
\end{align*}
where $Z_u \equiv \sqrt{n_2}(-\breve{B}_u +B_u(\theta_0))$, and $\hat{q}_{1-\alpha/2}(Z_u)$ is a consistent estimate of its $1-\alpha/2$ quantile, which we obtain according to Remark \ref{remark_bootstrap}. Similarly defining $Z \equiv \sqrt{n_2}(\breve{B} -  B(\theta_0))$ and $\hat{q}_{1-\alpha/2}(Z)$, we have
\begin{align*}
    \P\left[B(\theta_0) < \breve{B} - \frac{\hat{q}_{1-\alpha/2}(Z)}{\sqrt{n_2}}\right] = \alpha/2 + o(1).
\end{align*}
In line with the notation of Section \ref{section_aicm}, suppose the target parameter is $\beta^*$, and observe that for any confidence interval given by $(lb_n, ub_n)$, we have
\begin{align*}
\mathbb{P}[\beta^* \notin (lb_n, ub_n)] = \mathbb{P}[\beta^* < lb_n \cup \beta^* > ub_n] \leq \mathbb{P}[\beta^* < lb_n] + \P[\beta^* > ub_n] \leq \\
\P[B(\theta_0) < lb_n] + \P[B_u(\theta_0) > ub_n], 
\end{align*}
where the penultimate inequality is a union bound, while the last inequality follows from $\beta^* \in [B(\theta_0); B_u(\theta_0)]$. Setting $ub_n \equiv \breve{B}_u - \frac{\hat{q}_{1-\alpha/2}(Z_u)}{\sqrt{n_2}}$ and $lb_n \equiv \breve{B} - \frac{\hat{q}_{1-\alpha/2}(Z)}{\sqrt{n_2}}$ thus yields a valid two-sided CI for $\beta^*$, as
\begin{align*}
    \mathbb{P}[\beta \notin (lb_n, ub_n)] \leq \alpha + o(1).
\end{align*}
We drop outliers based on the first and the last percentiles by wage and set the bounds on the outcomes to be $K_0 = 4$ and $K_1 = 13$. By comparison, the observed range of values in the processed dataset is $(7.36;10.81)$.

\begin{table}[H]
    \centering
\begin{tabular}{l|ccc}
\toprule
 & $ATE$(3, 2) & $ATE$(2, 1) & $ATE$(1, 0) \\
\midrule
cMIV-s & (0.055, 3.781) & (0.089, 3.777) & (0.101, 3.75) \\
 & \{0.046, 3.798\} & \{0.079, 3.799\} & \{0.093, 3.77\} \\
cMIV-p & (0.035, 4.165) & (0.041, 4.157) & (0.055, 4.054) \\
 & \{0.026, 4.174\} & \{0.033, 4.194\} & \{0.049, 4.077\} \\
cMIV-w & (-0.004, 4.166) & (0.01, 4.08) & (-0.004, 4.101) \\
 & \{-0.014, 4.174\} & \{0.002, 4.094\} & \{-0.012, 4.118\} \\
MIV & (0.002, 4.169) & (0.001, 4.24) & (0.001, 4.117) \\
 & \{-0.008, 4.178\} & \{-0.005, 4.251\} & \{-0.009, 4.125\} \\
ETS & 0.092 & 0.012 & 0.017 \\
\bottomrule
\end{tabular}

    \caption{Estimation results under various assumptions. See the code for more details.}
    \label{tab:my_label}
\end{table}

\end{document}